\pdfoutput=1
\documentclass[11pt,letterpaper]{article}
\usepackage[T1]{fontenc}
\usepackage[utf8]{inputenc} 
\usepackage[letterpaper, left=1in, right=1in, top=0.9in, bottom=0.9in]{geometry}
\usepackage[american]{babel}
\usepackage[normalem]{ulem}
\usepackage{amsmath, amssymb, cases, amsthm}
\usepackage{thmtools}
\usepackage[shortlabels]{enumitem}
\usepackage{mdframed}
\usepackage{bbm}
\usepackage{bm}
\usepackage{microtype}
\usepackage{xcolor}
\usepackage{makecell}
\usepackage{mathtools}
\usepackage{algorithmic}
\usepackage[procnumbered,ruled,vlined,linesnumbered]{algorithm2e}
\usepackage{float}
\usepackage{varwidth}
\usepackage{modletter}
\usepackage{tcolorbox}
\usepackage{empheq}

\newtcolorbox{construction}[2][]
{
	colframe = gray!50,
	colback  = gray!10,
	coltitle = gray!10!black,
	left*=0mm, 
	before skip = 10pt,
	after skip = 10pt,
	title    = \textbf{\space\space #2},
	#1,
}

\SetKwInput{KwData}{Input}
\SetKwInput{KwResult}{Output}
\SetKwInput{KwGlobalVar}{Global variables}

\usepackage[bookmarks,colorlinks,breaklinks]{hyperref}
\hypersetup{urlcolor=blue, colorlinks=true, citecolor=green!50!black, linkcolor=blue}
\usepackage[capitalize,nosort,nameinlink]{cleveref}

\declaretheorem[numberwithin=section,refname={Theorem,Theorems},Refname={Theorem,Theorems}]{theorem}

\declaretheorem[numberlike=theorem]{lemma}

\declaretheorem[numberlike=theorem]{corollary}
\declaretheorem[numberlike=theorem]{definition}
\declaretheorem[numberlike=theorem]{claim}

\theoremstyle{definition}

\def\final{1}  
\ifnum\final=0  
\newcommand{\todo}[1]{{\color{red}[{\tiny TODO: \bf #1}]\marginpar{\color{red}*}}}
\newcommand{\yonggang}[1]{{\color{blue}[{\tiny Yonggang: \bf #1}]\marginpar{\color{blue}*}}}
\newcommand{\thatchaphol}[1]{{\color{purple}[{\tiny Thatchaphol: \bf #1}]\marginpar{\color{purple}*}}}
\else 
\newcommand{\thatchaphol}[1]{}
\newcommand{\yonggang}[1]{}
\newcommand{\todo}[1]{}
\fi  

\newcommand{\eps}{\varepsilon}
\newcommand{\reals}{\mathbb{R}}

\newcommand{\poly}{\mathrm{poly}}
\newcommand{\polylog}{\mathrm{polylog}}
\newcommand{\dist}{\mathrm{dist}}

\newcommand{\res}{\mathrm{res}}
\renewcommand{\neg}{\mathrm{neg}}

\newcommand{\vf}{v^{\mathrm{first}}}
\newcommand{\vl}{v^{\mathrm{last}}}
\newcommand{\first}[1]{#1^{\mathrm{first}}}
\newcommand{\last}[1]{#1^{\mathrm{last}}}

\newcommand{\diam}{\operatorname{diam}}
\newcommand{\spa}{\operatorname{spars}}
\newcommand{\sep}{\operatorname{sep}}

\newcommand{\leng}{\operatorname{leng}}
\newcommand{\conge}{\operatorname{cong}}

\newcommand{\ball}{\operatorname{ball}}
\newcommand{\vol}{\operatorname{vol}}

\newcommand{\degi}{\mathrm{deg}^{\mathrm{in}}}
\newcommand{\dego}{\mathrm{deg}^{\mathrm{out}}}

\newcommand{\PDMCF}{{approximate PD-MCF}}

\newcommand{\OPDMCF}[1]{\cO^\mathrm{PDMCF}_{#1}}

\global\long\def\poly{\mathrm{poly}}%
\global\long\def\vol{\mathrm{vol}}%
\global\long\def\eps{\epsilon}%
\global\long\def\reals{\mathbb{R}}%
\global\long\def\cN{{\cal N}}%
\global\long\def\cS{{\cal S}}%
\global\long\def\cT{{\cal T}}%
\global\long\def\cost{\mathrm{cost}}%
\global\long\def\sep{\mathrm{sep}}%
\global\long\def\val{\mathrm{val}}%
\global\long\def\dist{\mathrm{dist}}%

\global\long\def\deg{\mathrm{deg}}%
\global\long\def\diam{\mathrm{diam}}%
\global\long\def\cov{\mathrm{cov}}%
\global\long\def\ball{\mathrm{ball}}%
\global\long\def\supp{\mathrm{supp}}%

\newcommand{\tl}{\tilde{\ell}}
\newcommand{\Proj}{flow-cost-preserving DAG projection}
\global\long\def\Dem{\mathrm{Dem}}
\global\long\def\cong{\mathrm{cong}}
\global\long\def\path{\mathrm{path}}
\global\long\def\vvalue{\mathrm{value}}

\begin{document}
\sloppy

\title{
Combinatorial Minimum Cost Flow in Almost-Linear Time \\
on Dense Graphs
}

\author{Bernhard Haeupler\thanks{
        INSAIT, Sofia University ``St.~Kliment Ohridski'' and ETH Zürich,
        \texttt{bernhard.haeupler@insait.ai}.
        Partially funded by the Ministry of Education and Science of Bulgaria's support for INSAIT as part of the Bulgarian National Roadmap for Research Infrastructure and through the European Research Council (ERC) under the European Union's Horizon 2020 research and innovation program (ERC grant agreement 949272).} \and 
Yonggang Jiang\thanks{MPI-INF and Saarland University, Germany, \texttt{yjiang@mpi-inf.mpg.de}. Part of this work was done while visiting INSAIT. Supported by Google PhD fellowship.} \and   
Thatchaphol Saranurak\thanks{
        University of Michigan,
        \texttt{thsa@umich.edu}.
        Supported by NSF Grant CCF-2238138 and a Sloan Fellowship. Part of this work was done at INSAIT. Partially funded by the Ministry of Education and Science of Bulgaria's support for INSAIT as part of the Bulgarian National Roadmap for Research Infrastructure. }
}
\date{}
\maketitle
\pagenumbering{gobble}
\begin{abstract}
We give a combinatorial algorithm that computes an exact minimum-cost flow in $n^{2+o(1)}$ time. This is the first combinatorial improvement since Goldberg--Tarjan~\cite{goldberg1987solving}, and is almost optimal in dense graphs. All previous faster algorithms, including the almost-linear-time algorithm of Chen et al.~\cite{ChenKLPGS22}, rely on interior-point methods.

Our main conceptual contribution is a black-box reduction from exact minimum-cost flow on general graphs to its \emph{approximation} on DAGs, with only an $n^{o(1)}$ loss on the running time. This reduction allows future work on combinatorial minimum-cost flow to focus on DAGs, typically much easier to handle than general graphs. Our $n^{2+o(1)}$-time algorithm follows by combining this reduction with a simple augmenting-path-based approximation algorithm on DAGs.

The reduction is based on a new structural result, called a \emph{flow-cost-preserving DAG projection}. Given an arbitrary directed graph, this construction produces a DAG that approximately preserves both the congestion of flows and their cost up to a small loss.
\end{abstract}

\newpage

\tableofcontents
\newpage
\pagenumbering{arabic}
\section{Introduction}

Minimum-cost flow (MCF) is one of the most fundamental problems in combinatorial optimization. Given a directed graph $G=(V,E)$ with $n$ vertices and $m$ edges, capacities $U\in\mathbb{Z}_{>0}^{E}$, costs $\ell\in\mathbb{Z}^{E}$, and a demand vector $D\in\mathbb{R}^{V}$ satisfying $\sum_{v\in V} D(v)=0$, the goal is to find a flow $f\in\mathbb{R}^{E}$ that respects capacities, routes $D$, and minimizes the total cost $\sum_{e\in E}\ell(e)f(e)$. MCF captures shortest paths, transshipment, bipartite matching, maximum flow, and many other basic primitives.

Fast algorithms for MCF have been studied since the 1950s; see, e.g., \cite{dantzig1951application,fulkerson1961out,klein1967primal,edmonds1972theoretical,tardos1985strongly,orlin1988faster}. These classical algorithms share a common combinatorial principle: they repeatedly improve the current solution by augmenting flow along a negative cycle or a shortest augmenting path in the \emph{residual graph}. This line culminated in push-relabel-based algorithms \cite{goldberg1987solving,goldberg1990finding,ahuja1992finding} with $\tilde{O}(mn)$ running time.\footnote{We use $\tilde{O}$ to hide $\polylog(n,U_{\max},\ell_{\max})$ factors, where $U_{\max}=\max_{e}U(e)$ and $\ell_{\max}=\max_{e}|\ell(e)|$.}

Later work improved this long-standing $\tilde{O}(mn)$ bound via interior-point methods (IPMs) \cite{daitch2008faster,lee2019solving,van2021minimum,axiotis2022faster} and culminated in almost-linear $m^{1+o(1)}$-time algorithms \cite{ChenKLPGS22,van2024almost}. These algorithms achieve spectacular running times, but they do so by bypassing the classical directed residual graph: their progress is described through barrier functions and the current position in the polytope, and the key subroutines are on \emph{undirected} graphs.

This leaves a striking gap in our understanding of minimum-cost flow. The fastest exact algorithms are almost-linear, but the residual-graph-based approach for exact MCF appeared stuck at the classical $\tilde{O}(mn)$ bound. This raises a natural question: can one beat the classical $\tilde{O}(mn)$ bound for exact minimum-cost flow while staying within the residual-graph paradigm?

Very recent work showed that this barrier can be broken for closely related special cases while staying entirely within the directed residual graph. For bipartite matching \cite{chuzhoy2024maximum,chuzhoy2024faster} and maximum flow \cite{BernsteinBST24,BernsteinBLST25}, new residual-graph-based algorithms achieved $\tilde{O}(n^2)$ running time, which is near-optimal in dense graphs. Besides the running-time improvement, working with residual graphs leads to new primitives native to directed graphs. For example, \cite{BernsteinBST24} gives a simple and teachable approximate maximum flow algorithm on DAGs. Directed expander hierarchies introduced in \cite{BernsteinBST24,BernsteinBLST25} found applications to directed graph sparsifiers \cite{hoppenworth2025near}, global minimum cut and arborescence packing \cite{jiang2025approximating}, and black-box reductions for maximum flow to approximation on DAGs \cite{haeupler2026dag}. Unfortunately, these recent algorithms do not handle edge costs.

Thus, for exact minimum-cost flow, the fastest residual-graph-based algorithm has remained the $\tilde{O}(mn)$-time algorithm of Goldberg and Tarjan \cite{goldberg1987solving} for nearly forty years.

\paragraph{Our result: the first combinatorial improvement since Goldberg--Tarjan.}

We answer the question above affirmatively: within the residual-graph paradigm, we give the first improvement for exact minimum-cost flow since \cite{goldberg1987solving}.

\begin{restatable}{theorem}{main}\label{thm:main}
    There is a combinatorial algorithm computing a minimum-cost flow in $n^{2+o(1)}$ time.
\end{restatable}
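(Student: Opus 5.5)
The plan is to follow the structure the abstract announces. First, reduce exact minimum-cost flow to $\polylog$ many instances of an \emph{approximate} primal--dual MCF problem (the \PDMCF{} oracle) on general residual graphs, using cost scaling. Then reduce each of these approximate instances to approximate MCF on DAGs via a flow-cost-preserving DAG projection. Finally, solve the DAG problem combinatorially in $n^{2+o(1)}$ time with augmenting paths.

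\textbf{Step 1: exact to approximate.} Run a Goldberg--Tarjan style cost-scaling outer loop. We maintain a flow $f$ and potentials $\phi$ that are $\epsilon$-optimal, meaning every residual edge has reduced cost $\ell_\phi(e)\ge -\epsilon$, and we halve $\epsilon$ in each phase. After $O(\log(n\ell_{\max}))$ phases, integrality and standard rounding give an exact optimum. Within one phase, the residual graph with reduced costs has only mildly negative edges. Restoring $\epsilon/2$-optimality amounts to routing a residual demand at near-minimum cost and producing a dual certificate. This is exactly a primal--dual approximate MCF instance: find a flow and potentials whose duality gap is a small fraction of the remaining excess times $\epsilon$. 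We solve it by iterating the oracle $\OPDMCF{}$ $O(\log)$ times, each call shrinking the unrouted demand by a constant factor. The remaining $1/\poly(n)$ fraction is finished by a few shortest-path augmentations.

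\textbf{Step 2: general graph to DAG.} This is the step I expect to be the main obstacle. We construct a flow-cost-preserving DAG projection, which has two parts. The first is a DAG $H$ on $n^{1+o(1)}$ vertices and $n^{2+o(1)}$ edges, built from copies of $V$ arranged in layers. The second is a projection map sending flows on $H$ to flows on $G$ with congestion and cost distorted by at most $n^{o(1)}$ (cost additively relative to $\epsilon\cdot$volume). Conversely, every near-optimal flow in $G$ must lift to $H$ with similar loss. To build it, I would first use a directed expander hierarchy or low-diameter decomposition with respect to the reduced lengths $\ell_\phi$ (these are nonnegative up to $\epsilon$ after shifting). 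Strongly connected clusters of small length-diameter are then unrolled into $n^{o(1)}$ layers, so any cheap cycle or path in $G$ becomes a monotone path in $H$. Negative cycles are harmless because reduced costs are nearly nonnegative. The delicate part is controlling the cost blow-up across the levels of the hierarchy so that it stays $n^{o(1)}$. Once the projection exists, a DAG approximation yields a $G$ approximation with only $n^{o(1)}$ loss in approximation quality. That loss is then boosted back to a constant factor by $n^{o(1)}$ rounds of the multiplicative-weights or residual-repetition framework of Step 1.

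\textbf{Step 3: the DAG algorithm.} On a DAG with $n^{2+o(1)}$ edges, we obtain a $(1+\epsilon)$ or $n^{o(1)}$ approximate min-cost flow with potentials by a simple augmenting-path scheme. Shortest paths in a DAG are computed by topological order in linear time. We augment along blocking-flow-like batches of near-shortest paths and adjust edge lengths multiplicatively, in the style of Garg--K\"onemann, for $\tilde O(1)$ or $n^{o(1)}$ rounds. Each round is linear in $|E(H)|=n^{2+o(1)}$.

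Multiplying the costs gives $\polylog\cdot n^{o(1)}\cdot n^{2+o(1)}=n^{2+o(1)}$ total time, and every component stays combinatorial. This proves \cref{thm:main}.
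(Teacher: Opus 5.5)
Your three-stage outline matches the paper's architecture, but the way you fill in each stage does not work, and the central problem is the parameter regime. In Step~1, Goldberg--Tarjan cost scaling needs every phase to restore $\epsilon/2$-optimality, which is a per-edge ($\ell_\infty$) condition on reduced costs. An approximate PD-MCF oracle only bounds the aggregate primal--dual gap. Its potentials may leave individual residual edges with very negative reduced cost, as long as the capacity-weighted total $\sum_{e}U_f(e)\,|\ell_\phi(e)|$ over negative edges is small; this is exactly the quantity in \Cref{lem:smallresidual}. The paper uses that $\ell_1$ quantity directly: it reverses the negative reduced-cost residual edges to get an instance whose optimum is at most the gap, and recurses.

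A second problem in Step~1: the oracle returns a flow with congestion $\kappa$. Scaling it by $1/\kappa$ gives a feasible flow for a $1/\kappa$ fraction of the demand. However, the dual no longer certifies it within $1+\epsilon$, because the $\sum_e U(e)\tilde\ell(e)$ term does not scale. So ``each call shrinks the unrouted demand by a constant factor'' does not come for free. The paper needs MWU boosting to bounded-cost flow, and then a search over budgets $(1+\alpha)^i$ for a scale where the routed value jumps (\Cref{lem:weak-to-strong-via-cost-scaling}).

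More fundamentally, you let the DAG projection distort cost by $n^{o(1)}$ and plan to boost this away, but neither of your tools can do that. The MWU template (\Cref{thm:boosting-template}) removes only congestion slack; its output is only as good as the length slack $s$. Residual repetition only guarantees that the residual optimum is at most the gap, which for a $\lambda$-approximate pair can be $(\lambda-1)\mathrm{OPT}$, so there is no contraction once $\lambda\ge 2$. The reduction works precisely because the projection has congestion slack $n^{o(1)}$ but length slack $1+o(1)$.

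Your Step~2 does not contain the ideas that achieve length slack $1+o(1)$. Unrolling a strongly connected cluster of small length-diameter into $n^{o(1)}$ layers does not make every cheap path monotone, since a path inside the cluster can have polynomially many hops. Giving every copy full capacity also multiplies congestion by the number of copies. The paper instead:
\begin{itemize}
\item takes a path cover of bounded width;
\item recurses on light SCCs (fewer vertices) and on heavy SCCs (one level down in an LC-expander hierarchy);
\item adds hub vertices for each heavy SCC, paying for them by routing the resulting deferred demands through the expander hierarchy;
\item connects heavy SCCs by bridge edges at geometric costs $(1+\epsilon)^{r}$, with capacities computed from approximate MCF calls.
\end{itemize}
The bridges are what keep the cost loss at a multiplicative $1+O(\epsilon)$. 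You also do not address the circularity that building the hierarchy and the bridge capacities itself requires an MCF oracle. The paper breaks this with $h$-length projections, reducing $h$ to $h/z$ via \Cref{lem:concatenating}.

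Finally, Step~3 is not an $n^{2+o(1)}$ algorithm as stated. Garg--K\"onemann-type MWU needs $\tilde O(m/\epsilon^2)$ augmentations (one bottleneck edge per augmentation), not $n^{o(1)}$ linear-time rounds. ``Blocking-flow-like batches'' would require a fast approximate max flow on the admissible subgraph and a bound on the number of phases, and you supply neither. The paper uses weighted push--relabel with topological weights $w(v_i,v_j)=|j-i|$. Each edge is saturated $O(n/(\epsilon\,w(e)))$ times and $\sum_e 1/w(e)=\tilde O(n)$ (\Cref{lem:sum-inv-w}), giving $\tilde O(n^2/\epsilon)$ time on the DAG together with a dual certificate.
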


In particular, \Cref{thm:main} is almost optimal in dense graphs. 
Our algorithm is randomized and correct with high probability.

\paragraph{Our approach: exact MCF via weak approximation on DAGs.}

The conceptual core of the paper is a reduction from exact MCF on general directed graphs to a very weak approximate primal-dual version of MCF on DAGs. Let $f$ be a flow and $(\phi,\tilde\ell)$ be a dual solution. We say that $(f,\phi,\tilde\ell)$ is a \emph{$(\kappa,\lambda)$-approximate PD-MCF solution} for demand $D$ if $f$ routes $D$ with congestion at most $\kappa$, and the cost of $f$ is at most $\lambda$ times the value of the dual solution $(\phi,\tilde\ell)$; see \Cref{subsec:MCF} for the formal definition.

We prove \Cref{thm:main} in three steps.

\begin{enumerate}
    \item \textbf{Exact-to-Approximate Reduction.}
    We reduce the problem of computing an exact MCF solution to $n^{o(1)}$ calls to an oracle for computing $(n^{o(1)},1+1/n^{o(1)})$-approximate PD-MCF. Thus, exactness can be bootstrapped from a surprisingly weak approximate solution. This reduction is much flexible than that of the successive approximation framework in \cite{goldberg1990finding}.

    \item \textbf{Flow-Cost-Preserving DAG Projection.}
    We further reduce $(n^{o(1)},1+1/n^{o(1)})$-approximate PD-MCF on general directed graphs to the same problem on DAGs. Informally, a flow-cost-preserving DAG projection replaces a general directed graph by a DAG while preserving both the congestion and the cost of flows up to a small loss. This DAG generalizes the recent DAG projections in \cite{haeupler2026dag} that preserve only congestion or only costs, and is our key structural contribution.

    \item \textbf{Weighted Push-Relabel for MCF on DAGs.}
    We give a simple augmenting-path-based algorithm for computing approximate PD-MCF on DAGs in $n^{2+o(1)}$ time, generalizing a similar weighted push-relabel algorithm in \cite{BernsteinBST24} for the maximum flow problem, a special case of the minimum-cost flow problem.
\end{enumerate}

Combining these three ingredients yields \Cref{thm:main}. More importantly, the first two ingredients already give a black-box reduction from exact MCF on general graphs to approximate MCF on DAGs.

\begin{theorem}
\label{thm:main-reduction}
There is an algorithm computing a minimum-cost flow in time complexity proportional to making oracle calls for computing $(n^{o(1)},1+1/n^{o(1)})$-approximate PD-MCF \emph{on DAGs} with cumulative total size $m^{1+o(1)}$ and cumulative total number of vertices $n^{1+o(1)}$.
\end{theorem}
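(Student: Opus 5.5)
Theorem~\ref{thm:main-reduction} follows by composing the first two ingredients listed above, and I would prove it in that order. First, I would establish the exact-to-approximate reduction. Given an instance $(G,U,\ell,D)$, I would run a cost-scaling / successive-approximation outer loop. In each phase the current flow is $\epsilon$-optimal with respect to potentials $\phi$, and the phase shrinks $\epsilon$ by an $n^{o(1)}$ factor, so only $O(\log(n\ell_{\max})/\log n^{o(1)})$ phases are needed. Within a phase, I would call the $(n^{o(1)},1+1/n^{o(1)})$-approximate PD-MCF oracle on the residual graph with reduced costs $\ell-\nabla\phi$, for the residual demand. The flow returned has cost within a $(1+1/n^{o(1)})$ factor of the dual value it certifies, and its congestion is at most $n^{o(1)}$. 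I would handle the congestion by scaling the flow down by $\kappa$, which routes a $1/\kappa$ fraction of the demand. Repeating $\kappa\cdot O(\log(nU_{\max}))$ times within a phase routes all the demand. The returned duals $(\phi,\tilde\ell)$ are added to the running potentials, and the $\tilde\ell$ component bounds how far any edge can violate complementary slackness. Once $\epsilon<1/n$ with integral costs, I would round to an exact optimal flow via a standard final cleanup: cancel cycles in the support of fractional flow, or run a single shortest-path augmentation round. The total number of oracle calls is $n^{o(1)}$, each on a graph of size $O(m)$.

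Second, I would replace each general-graph oracle call by a call on a DAG using the flow-cost-preserving DAG projection. For the residual graph $H$, I would build a DAG $\hat H$ with $m^{1+o(1)}$ edges and $n^{1+o(1)}$ vertices, together with a projection map sending each DAG vertex to an original vertex. Two properties are needed. First, every flow in $H$ lifts to a flow in $\hat H$ with the same cost up to $(1+1/n^{o(1)})$ and congestion up to $n^{o(1)}$. Second, every DAG flow projects back to $H$ with cost no larger and congestion blown up by at most $n^{o(1)}$. Dually, DAG potentials must be pulled back to a valid dual on $H$ whose value is no smaller, for instance by taking, for each original vertex, a min or max over its copies, with the $\tilde\ell$ slack absorbing the differences. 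The demand is placed on designated copies, such as the first layer for sources and the last layer for sinks. Composing the two maps preserves the $(\kappa,\lambda)$ guarantees up to $n^{o(1)}$ and $1+1/n^{o(1)}$ losses, which the first step tolerates.

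The main obstacle is constructing the projection itself. Cost-preserving projections (layered by distance, as for shortest paths) and congestion-preserving projections (layered via expander hierarchies) already exist separately. Making both hold at once requires a hierarchical layering: decompose $H$ via a directed expander hierarchy, and within each level index copies by discretized reduced distance, scaled by powers of $(1+1/n^{o(1)})$. Cycles in $H$ then unroll into increasing layers, because their reduced cost is nonnegative after the first step's potentials. Cycles of zero or near-zero reduced cost would be routed through expander embeddings with bounded congestion. I expect controlling the blow-up across $O(\log n/\log n^{o(1)})$ levels, while keeping the total size at $m^{1+o(1)}$, to be the delicate part. I would try a geometric choice of layer granularity per level so that losses multiply to $n^{o(1)}$.
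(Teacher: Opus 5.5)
Your two-step outline (exact to approximate on general graphs, then general graphs to DAGs via a flow-cost-preserving projection) matches the paper's, but the first step fails as described. A $(\kappa,1+\eps)$ answer controls only aggregate quantities. The slack $\tilde\ell$ enters the guarantee only through $\sum_e U(e)\tilde\ell(e)$, so there is no edgewise bound on reduced-cost violations. Nothing forces the most negative reduced cost to shrink by an $n^{o(1)}$ factor per phase, which is exactly why the paper rejects Goldberg--Tarjan successive approximation. The oracle also accepts only non-negative lengths, so you cannot call it on $\ell-\nabla\phi$ as is.

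Scaling the congested flow by $1/\kappa$ also throws away the certificate. For the demand $D/\kappa$, the same dual is worth $\frac{1}{\kappa}\mathrm{Dual}_D(\phi,\tilde\ell)-(1-\frac{1}{\kappa})\sum_e U(e)\tilde\ell(e)$, which can be negative. For example, take parallel $s$--$t$ edges with (cost, capacity) $(0,1)$ and $(1,\kappa)$, and demand $\kappa$. Even the optimal flow scaled by $1/\kappa$ pays $(\kappa-1)/\kappa$ for one unit that is routable at cost $0$. The unique optimal dual has $\tilde\ell=1$ on the cheap edge, which becomes a full-scale residual violation once that edge is unsaturated.

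The paper proceeds differently in two ways.
\begin{enumerate}
\item For a \emph{congestion-free} primal and a feasible dual, the capacity-weighted negative reduced cost in the residual graph is at most the primal--dual gap (\Cref{lem:smallresidual}). Reversing those edges gives a non-negative instance with optimum at most the gap, and the paper recurses on it (\Cref{lem:residual,lem:ApxflowApxcostBoosting}); the optimum drops by an $O(\eps)$ factor per level.
\item It manufactures such congestion-free pairs on a constant fraction of the demand. It solves the bounded-cost-flow LP by MWU flow boosting, plus one extra oracle call on $G-C$ to complete the moving cut (\Cref{lem:weakLPboosting}). It then searches budgets $(1+\eps^{1/3})^i$ for a scale where the routed value grows by a factor $1+\eps^{2/3}$ (\Cref{lem:weak-to-strong-via-cost-scaling}).
\end{enumerate}

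Your second step leaves two essential pieces open. The first is circularity. Computing a directed expander hierarchy (its cut-matching game needs length-constrained flows and sparse moving cuts) is itself an approximate MCF problem on general graphs. So is deciding how much flow can cross between two large clusters at a given average cost. Your construction therefore presupposes the oracle you are building, whereas the theorem charges all work to DAG-oracle calls of total size $m^{1+o(1)}$. The paper breaks this loop by bootstrapping over length scales. \Cref{lem:hlengthprojection} builds an $h$-length projection, including the hierarchy and the bridge capacities (the latter via \Cref{cor:PDMCFtoDAG}), using only the DAG oracle and an $h/z$-length projection. That projection is upgraded to length $hs$ by concatenation (\Cref{lem:concatenating}). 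Iterating with an $n^{o(1)}$ growth factor reaches all lengths, and concatenation inflates only running time, not the final DAG.

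The second missing piece is the dual pull-back. Taking the min or max of potentials over copies does not give a feasible dual of comparable value. Part of the DAG's $\tilde\ell$ sits on edges with no counterpart in $H$: interface edges of capacity $\deg_{C_i}(u)$ into hubs that project to $\bot$, and bridges of capacity $2\mu_{A,B,r}/\eps$. You need an explicit map sending the DAG length increase $C'$ to a moving cut $C$ on $H$ of no larger size with $\dist_{\hat H-C'}(u^{\mathrm{first}},v^{\mathrm{last}})\le\lambda\,\dist_{H-C}(u,v)$. This is \Cref{lem:cutprojection_lengthincrease}, which spreads interface increases along $C_i$ and replaces bridge increases by the stored $s_A$--$s_B$ cuts. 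Only then does $\psi(x)=\min_{u:\,D(u)>0}\{\phi(u^{\mathrm{first}})/\lambda+\dist_{H-C}(u,x)\}$ work (\Cref{lem:algorithmimpliesprojection}).

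Finally, layering copies by discretized distance does not bound the number of copies of a vertex, and that number is what controls congestion. The paper obtains this bound from new bounded-width path covers, combined with the light/heavy SCC recursion, star gadgets, and cost-bucketed bridges.
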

\Cref{thm:main-reduction} is the conceptual core of our framework, and is useful beyond merely implying our main result, \Cref{thm:main}.

\begin{enumerate}
    \item First, it isolates the main difficulty of exact MCF on general directed graphs into weak approximation on DAGs. Note that a very weak $n^{o(1)}$ approximation on congestion is allowed.
    Thus, future work on \emph{combinatorial} minimum-cost flow can focus on the substantially simpler setting of approximation on DAGs.

    \item Second, it is quite plausible to strengthen \Cref{thm:main-reduction} to work in the deterministic and/or parallel setting. This is because our two main ingredients, DAG projections and directed expander hierarchies, are naturally compatible with these settings, as shown in \cite{abs-2511-08551,haeupler2026dag}. Although we do not pursue these directions here, this suggests a promising route toward deterministic and parallel algorithms for minimum-cost flow, again by reducing attention to DAGs.

\end{enumerate}

\section{Overview}

In this section, we sketch the three ingredients behind our $n^{2+o(1)}$ combinatorial algorithm for minimum-cost flow:
\begin{enumerate}
    \item in \Cref{subsec:overviewreduction}, we reduce exact MCF to the approximate primal-dual version of MCF;
    \item in \Cref{subsec:overviewprojection}, we reduce this approximate problem on general directed graphs to the same problem on DAGs via a \emph{flow-cost-preserving DAG projection};
    \item in \Cref{subsec:overviewdagalgo}, we give a simple augmenting-path-based algorithm for approximate PD-MCF on DAGs.
\end{enumerate}

For a directed graph $G=(V,E)$ and an edge $e\in E$, we write $\ell_G(e)$ for its length and $U_G(e)$ for its capacity. For a vertex $v\in V$, we write  $\deg_G(v)\stackrel{\mathrm{def}}{=}\sum_{v\in e} U_G(e)$
for its capacity-weighted degree. When the graph is clear from context, we omit the subscript $G$. For a flow $f$, we write its value $\val(f)$ as the amount of flow it sends, and $\cost(f)$ as the cost of the flow defined as $\sum_{e\in E}f(e)\cdot \ell(e)$, i.e., we call $\ell(e)$ both the \emph{length} of an edge and also the \emph{cost} of the edge.\footnote{Although mixing cost and length might be confusing, if one considers minimum cost flow as a combination of shortest path and maximum flow, then the notation of `length' makes sense.}

The demands $D$ in this section are all (single-commodity) demands, where each vertex gets a value: positive means flow should leave this node, and negative means flow should enter this node. A flow routing a demand is defined in the natural way. We write $|D|=\sum_{v\in V,D(v)>0}D(v)$.

We say that an algorithm makes $g$-efficient oracle calls to an oracle $\cO$ if:
\begin{itemize}
    \item the total number of vertices over all instances given to $\cO$ is at most $g\cdot n$,
    \item the total number of edges over all instances given to $\cO$ is at most $g\cdot m$, and
    \item the additional running time outside the oracle calls is at most $g\cdot m$.
\end{itemize}

\subsection{Exact-to-Approximation Reduction}\label{subsec:overviewreduction}

Throughout this section, all MCF instances are assumed to have non-negative edge lengths. For \emph{exact} MCF, this is without loss of generality: if an edge $(u,v)$ has negative length, we may reverse it to $(v,u)$, negate its length, and modify the demands by adding $U(u,v)$ to $v$ and $-U(u,v)$ to $u$. This folklore reduction transforms the instance into one with non-negative lengths. For multiplicative approximation problems, non-negative lengths are the natural setting.

We first explain how to reduce exact MCF to the approximate primal-dual version of MCF. The full proof is given in \Cref{sec:boosting}.

\begin{restatable}{theorem}{boosting}\label{thm:boosting}
    For $\eps=o(1/\log^9 n)$, there is an algorithm solving exact minimum-cost flow using
    \[
    \tO{n^{O(1/\log(1/\eps))}\cdot \kappa^2/\eps^3}
    \]
    -efficient calls to a $(\kappa,1+\eps)$-approximate PD-MCF oracle.
\end{restatable}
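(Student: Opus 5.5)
Our plan is to run a cost-scaling outer loop together with a capacity-type inner loop, using each oracle call to shrink a potential gap by a factor $n^{\Omega(1/\log(1/\eps))}$ rather than a constant. By the folklore reduction we may assume nonnegative integer lengths. We maintain a feasible flow $f$ and potentials $\phi$. Let $\Delta$ denote the optimality gap $\cost(f)-\mathrm{OPT}$, which starts at most $\poly(n,U_{\max},\ell_{\max})$. The goal is to reduce $\Delta$ below $1$; at that point standard integrality rounding (cycle canceling on the support, or a final exact shortest-path/potential fix) yields an exact optimum. We would also need an initial feasible flow, which we obtain by adding a high-cost auxiliary edge from a super source/sink. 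Each phase works in the residual graph $G_f$ with reduced lengths $\ell_\phi(u,v)=\ell(u,v)+\phi(u)-\phi(v)$, truncated at $0$. This gives the nonnegative-length instance on which the oracle is called.

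\textbf{Single phase.} Given the residual instance, we consider the residual demand, namely $f^*-f$, which is a circulation of reduced cost $-\Delta$. We split it into a part with negative reduced lengths and a part with positive ones. Concretely, we fix a threshold $\delta\approx\Delta/(nU)$. Edges whose negative reduced length exceeds $\delta$ in absolute value are saturated, as in Goldberg--Tarjan refinement, which creates an excess demand $D$. We then call the $(\kappa,1+\eps)$ oracle to route $D$ in the remaining nonnegative-length graph. The oracle returns a flow $g$ with congestion $\kappa$, together with a dual $(\phi',\tilde\ell)$ certifying $\cost(g)\le(1+\eps)\cdot\mathrm{dual}$.

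Congestion $\kappa$ is handled by scaling $g$ by $1/\kappa$. This routes a $1/\kappa$ fraction of $D$ feasibly, so we iterate on the remaining demand. Each iteration removes a $1/\kappa$ fraction, so after $O(\kappa\log n)$ calls the remaining demand is $1/\poly$ and is routed cheaply. Meanwhile we use the dual $\phi'$ to update $\phi$. The $(1+\eps)$ guarantee ensures that the new flow is $\eps$-approximately complementary-slack. Concretely, the total reduced cost violation is at most $\eps$ times the dual value, so the new gap satisfies $\Delta'\le O(\eps\kappa)\Delta$ up to polynomial slack. The $\tilde\ell$ component of the dual is absorbed as edge-wise length relaxation, handled by one extra saturation step.

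\textbf{Boosting and the main obstacle.} A single step thus improves the gap by a factor of roughly $\eps$, not a polynomial factor. To reach $\log(nU\ell)$ bits we would need $\log(n)/\log(1/\eps)$ levels, and errors would compound across levels. To control this we would use a recursive scheme. Each level calls the next level $O(\kappa/\eps)$ times on subinstances: truncated length scales, and the demand split into $1/\eps$ granularity classes. The depth is $\log_{1/\eps} n$, so the total number of calls is $(\kappa^2/\eps^3)^{\text{depth}}$-like. Balancing the per-level multiplicative size blowup $(1+O(\eps))$ against the depth is what yields the $n^{O(1/\log(1/\eps))}$ factor. The condition $\eps=o(1/\log^9 n)$ is needed so that the accumulated $(1+\eps)^{\mathrm{polylog}}$ losses stay constant.

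The main obstacle is the efficiency requirement. Instance sizes must stay $\tilde O(\cdot)$ of the original across calls, which forces us to contract edges whose reduced costs are already large, a fixing argument in the style of Tardos. We would prove that every edge with reduced cost $\gg\Delta/\eps$ has flow fixed in all optimal solutions, and so can be removed from subsequent calls. Making this contraction compatible with the $\kappa$-congestion relaxation is the delicate step. We would first try bounding the flow on high-reduced-cost edges by $\cost$ over the threshold, i.e. by $\eps\Delta\kappa/\text{threshold}$.
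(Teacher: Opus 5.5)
\textbf{There is a genuine gap.} It is at the step where you scale the oracle's flow $g$ by $1/\kappa$ and claim the returned dual makes it ``$\eps$-approximately complementary-slack'', so that $\Delta'\le O(\eps\kappa)\Delta$. The guarantee $\cost(g)\le(1+\eps)\mathrm{Dual}(\phi',\tl)$ is for the full demand $D$ under capacities $U$. By the gap identity of \Cref{lem:pd-gap-identity}, that gap equals $\sum_e g(e)\gamma(e)+\sum_e (U(e)-g(e))\tl(e)$, where $\gamma\ge 0$ is the dual slack. The second term can be as negative as $-(\kappa-1)\sum_e U(e)\tl(e)$ when $g$ is congested, so $\eps$ controls neither term. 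For $(g/\kappa,\phi',\tl)$ viewed as a pair for $D/\kappa$, the gap contains $\sum_e (U(e)-g(e)/\kappa)\tl(e)$. This is of the order of $\sum_e U(e)\tl(e)$ and unrelated to $\eps$.

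Concretely, take two parallel $s$--$t$ edges: one of length $0$ and capacity $|D|/\kappa$, one of length $1$ and capacity $|D|$. If $\eps\ge 1/(\kappa-1)$, a valid oracle answer is to send everything on the length-$1$ edge, together with the optimal dual ($\phi(t)-\phi(s)=1$, $\tl=1$ on the free edge). Adding a tiny zero-length edge on which $g$ has congestion exactly $\kappa$ changes nothing. Yet $g/\kappa$ costs $|D|/\kappa$ while $\mathrm{OPT}(D/\kappa)=0$. The negative reduced-length mass left in the residual graph equals the whole routed cost, not an $\eps$ fraction of it. In the paper's terms, you are treating a $(\kappa,1+\eps)$-approximate answer as if it were strongly approximate, which does not follow.

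Even granting your inequality, $\Delta'\le O(\eps\kappa)\Delta$ is no contraction unless $\eps\kappa$ is a small constant. The theorem imposes no relation between $\kappa$ and $\eps$, and \Cref{thm:MCFreductiontoDAGs} invokes it with congestion slack $\kappa^{O(1/\delta)}n^{\delta}(1/\eps)^{\poly(1/\delta)}\gg 1/\eps$.

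\textbf{The missing idea} is to turn congestion into an \emph{additive}, absorbable loss before any residual boosting. The paper runs the MWU flow-boosting template (\Cref{thm:boosting-template}) on the maximum bounded-cost flow LP, calling the oracle on unit $s$--$t$ demands, after rescaling so the optimum is about $1/\eps'$ with $\eps'=o(\eps/\kappa)$. One extra oracle call on $G-C$ then completes the moving-cut dual at additive cost $\kappa=o(\eps)/\eps'$. This yields a congestion-free $(1+\eps)$ primal--dual MBCF pair using $\tilde O(\kappa^2/\eps^3)$ calls (\Cref{lem:weakLPboosting}).

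Next, \Cref{lem:weak-to-strong-via-cost-scaling} queries budgets $B_i=(1+\eps^{1/3})^{i-2}$. By pigeonhole it finds an index where the flow value grows by $1+\eps^{2/3}$ while the budget grows only by $1+\eps^{1/3}$. The cut at $B_a$ becomes an MCF dual certifying a \emph{feasible} flow that routes half the demand within $1+O(\eps^{1/3})$. Only this is fed into \Cref{lem:ApxflowApxcostBoosting}, where \Cref{lem:smallresidual} (total negative reduced mass $\le$ primal--dual gap) makes the residual optimum shrink by $O(\eps^{1/3})$ per level.

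\textbf{Your complexity accounting also does not reproduce the bound.} Your $(\kappa^2/\eps^3)^{\log_{1/\eps}n}$ is at least $n^3$, and $(1+O(\eps))^{\log_{1/\eps}n}=1+o(1)$. In the paper, $\kappa^2/\eps^3$ is paid once per MBCF call, and $n^{O(1/\log(1/\eps))}=2^{O(\log_{1/\eps^{1/3}}n)}$ comes from the recursion depth. The hypothesis $\eps=o(1/\log^9 n)$ is what guarantees $\eps^{1/3}/\log n\ge\eps^{2/3}$ in the pigeonhole over $O(\log n/\eps^{1/3})$ budgets; it is not a bound on accumulated $(1+\eps)$ losses. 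Finally, the Tardos-style edge fixing you flag as the main obstacle is neither proved nor needed.
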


By setting $\kappa=n^{o(1)}$ and $\eps=1/n^{o(1)}$, \Cref{thm:boosting} implies that exact MCF can be solved using only $n^{o(1)}$ calls to a $(\kappa,1+\eps)$-approximate PD-MCF oracle.

\paragraph{Comparison with successive approximation.}
The closest prior exact-to-approximate framework for minimum-cost flow is the classical
successive-approximation / cost-scaling framework of Goldberg and Tarjan
\cite{goldberg1990finding}. However, that framework is not a
black-box reduction to an arbitrary approximate MCF oracle. Instead, given that the most negative cost is $-\eps$, the oracle in this framework must make the most negative reduced-cost edge less negative, from $-\eps$ to $-\eps/2$, over $O(\log n)$ phases.

This is not suitable for our purposes, since our oracle does not provide any direct control
over the most negative cost edge. In contrast, \Cref{thm:boosting} gives a genuinely
black-box reduction from exact minimum-cost flow to a much weaker approximate
primal-dual oracle. To the best of our knowledge, this is the first black-box reduction of this flavor for
minimum-cost flow. In particular, it can be viewed as the minimum-cost-flow analogue of
well-known exact-to-approximate boosting algorithms such as
shortest paths \cite{RozhonHMGZ23} and maximum flow (repeat in residual graphs). Moreover, our reduction unifies several natural
approximate formulations of minimum-cost flow by showing that they are equivalent for
the purpose of computing an exact solution.

\paragraph{Two intermediate problems.}
To prove \Cref{thm:boosting}, we introduce two intermediate problems.
\begin{description}
    \item[$(\kappa,1+\eps)$-strongly-approximate PD-MCF.]  
    Given a demand $D$, output a $(1\pm\eps)$-approximate primal-dual pair of the MCF LP $(f,\phi,\tilde\ell)$, where the primal flow routes a sub-demand $D'\preceq D$ with $|D'|\ge |D|/\kappa$ and has no congestion. The MCF LP is standard and defined formally in \Cref{subsec:MCF}. \footnote{This notion is stronger than $(\kappa,1+\eps)$-approximate PD-MCF, since by enlarging the primal flow by a factor of $\kappa$, we get a flow with congestion at most $\kappa$ and cost at most $(1+\eps)$ times the optimal, as required by the $(\kappa,1+\eps)$-approximate PD-MCF. The opposite direction generally does not hold.}

    \item[$(1-\eps)$-approximate PD-MBCF.]  
    Given a cost budget $B$, output a $(1\pm\eps)$-approximate primal-dual pair of the maximum bounded-cost flow (MBCF) LP. The MBCF LP can be found in \Cref{subsec:flowboosting}. It outputs the approximate maximum flow given a bounded cost.

\end{description}

The proof of \Cref{thm:boosting} is obtained by combining the following three reductions.

\paragraph{Step 1: exact PD-MCF to $(\kappa,1+\eps)$-strongly-approximate PD-MCF.}
We begin with the warm-up case $\kappa=1$, namely an oracle that routes \emph{all} of the demand and returns an approximate primal-dual pair for the MCF LP. Suppose the oracle returns a flow $f$ together with a dual pair $(\phi,\tilde\ell)$. We define a residual instance
\[
(G_{\res},D_{\res})=(G,D)_{f,\phi}
\]
in the following way.

Let $G_f$ be the residual graph of $G$ with respect to $f$. Recall that reversed edge of $(u,v)$ has length $-\ell(u,v)$. Let $(G_f)_\phi$ be the graph obtained by reweighting residual edges using the potential $\phi$, i.e., for every residual edge $(u,v)$,
\[
\ell_{(G_f)_\phi}(u,v)=\ell_G(u,v)-(\phi(v)-\phi(u)).
\]
Let $E_{\neg}$ be the set of negative-length edges in $(G_f)_\phi$. We define $G_{\res}$ by reversing every edge in $E_{\neg}$ and negating its length, so that all edge lengths in $G_{\res}$ are non-negative. The demand $D_{\res}$ is chosen so that for each edge $(u,v)$ getting reversed to $(v,u)$, we put $U(u,v)$ amount of source demand on $u$, and sink demand on $v$.

We first show a simple lemma saying that that solving this residual instance exactly lets us repair the current approximate solution.

\begin{lemma}\label{lem:overviewresidual}
    Suppose $f,\phi,\tilde\ell$ is a feasible primal-dual pair for the minimum-cost flow problem $(G,D)$, and $f^*_{\res}$ is a minimum-cost flow of $(G,D)_{f,\phi}$. Then $f+f^*_{\res}$ is a minimum-cost flow of $(G,D)$.
\end{lemma}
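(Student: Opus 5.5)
The plan is to show that $(G,D)_{f,\phi}$ is, up to a bijective relabeling of flows and an additive constant in the cost, just the problem of finding a minimum-cost circulation in the residual graph $G_f$. The standard fact that $f$ plus a minimum-cost residual circulation is optimal then finishes the argument. From the hypothesis that $(f,\phi,\tilde\ell)$ is a feasible primal-dual pair, I will use only that $f$ routes $D$ and respects capacities. The dual is needed only to define the reweighting, and the reweighting turns out to be cost-neutral on circulations.

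\textbf{Step 1 (reversal is a bijection).} Let $H=(G_f)_\phi$ with residual capacities $U_f$, and let $E_{\neg}$ be its negative-length edges. Write $\bar e$ for the reversal in $G_{\res}$ of an edge $e\in E_{\neg}$. Given any $g$ on $H$ with $0\le g\le U_f$, define $h$ on $G_{\res}$ by $h(e)=g(e)$ for $e\notin E_{\neg}$ and $h(\bar e)=U_f(e)-g(e)$ for $e\in E_{\neg}$. This map is an affine bijection between capacity-feasible flows on the two graphs.

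I will check on a single reversed edge $(u,v)$ that changing $g$ to $h$ shifts the net outflow at $u$ by $-U_f(u,v)$ and at $v$ by $+U_f(u,v)$. This matches exactly the demand $D_{\res}$ that the construction places on $u$ and $v$, with source demand $U(u,v)$ at $u$ and sink demand at $v$; I read $U(u,v)$ as the residual capacity. Hence $g$ routes the zero demand in $H$ if and only if $h$ routes $D_{\res}$ in $G_{\res}$. The zero demand is the right one because $f$ already routes all of $D$.

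For costs, $\ell_{G_{\res}}(\bar e)=-\ell_H(e)$, so $\cost_{G_{\res}}(h)=\cost_H(g)-\sum_{e\in E_{\neg}}\ell_H(e)U_f(e)$. The subtracted term is a constant independent of $g$. Therefore minimizers correspond to minimizers.

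\textbf{Step 2 (potentials are cost-neutral).} For a circulation $g$ in $G_f$,
\[
\cost_H(g)=\cost_{G_f}(g)-\sum_{(u,v)}g(u,v)\bigl(\phi(v)-\phi(u)\bigr)=\cost_{G_f}(g).
\]
The potential sum telescopes to $\sum_v \phi(v)\cdot(\text{net inflow of } g \text{ at } v)=0$. So a minimum-cost flow $f^*_{\res}$ of $(G,D)_{f,\phi}$, pulled back through Step 1, is a minimum-cost circulation $g^*$ in $G_f$. Here $f+f^*_{\res}$ is understood as $f+g^*$, where a residual reverse edge is interpreted as a negative change on the original edge.

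\textbf{Step 3 (standard optimality).} Let $f'$ be any feasible flow routing $D$ in $G$. Then $f'-f$, decomposed into forward and backward residual edges, is a feasible circulation in $G_f$ with cost $\cost(f')-\cost(f)$. Conversely, $f+g$ is feasible for $(G,D)$ for every feasible residual circulation $g$. Hence $\cost(f+g^*)=\cost(f)+\cost_{G_f}(g^*)\le\cost(f)+\cost_{G_f}(f'-f)=\cost(f')$.

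I expect the main obstacle to be Step 1's bookkeeping rather than anything conceptual. The signs and demand placement must line up exactly with the construction of $D_{\res}$, antiparallel residual edges must be handled, and the meaning of ``$f+f^*_{\res}$'' for reversed edges must be fixed. I will handle this by working per edge and treating parallel or antiparallel residual edges as distinct edges throughout.
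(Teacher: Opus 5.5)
Your argument is correct once one sign is fixed (see below), but it is organized differently from the paper's.

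\textbf{The paper's route.} The paper's proof is two lines:
\begin{itemize}
\item optimality of $f^*_{\res}$ in $(G,D)_{f,\phi}$ means its residual graph has no negative cycle;
\item that residual graph is identified with $G_{f+f^*_{\res}}$, implicitly using that reversing a pre-saturated edge leaves the residual graph unchanged and that reweighting by $\phi$ preserves cycle lengths;
\item the negative-cycle optimality criterion then certifies the feasible flow $f+f^*_{\res}$.
\end{itemize}

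\textbf{Your route.} You make that identification explicit as an affine bijection between feasible flows of the residual instance and feasible circulations in $G_f$. Reversal shifts the cost by the constant $\sum_{e\in E_{\neg}}\ell_{(G_f)_\phi}(e)U_f(e)$, and $\phi$ is cost-neutral on circulations. You then finish with a direct exchange argument, comparing against an arbitrary feasible $f'$ through the residual circulation $f'-f$, rather than citing the cycle criterion.

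\textbf{Comparison.} Your route is longer but self-contained. It needs no cycle decomposition. It pins down what ``$f+f^*_{\res}$'' has to mean, namely $f+g^*$: $f$, plus saturation of $E_{\neg}$, plus the translated residual flow. It also proves the slightly stronger fact that the residual instance is min-cost circulation in $G_f$ up to an additive constant. The paper's version is shorter because it leaves exactly this bookkeeping to the reader and leans on a standard theorem.

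\textbf{One correction to Step~1.} Your computed shift, net outflow $-U_f(u,v)$ at $u$ and $+U_f(u,v)$ at $v$ for a negative edge $(u,v)$ of $(G_f)_\phi$, is right. But it makes $u$ a \emph{sink} and $v$ a \emph{source}, not ``source at $u$, sink at $v$'' as you wrote. That wording follows \Cref{subsec:overviewreduction}, which has the endpoints swapped. The definition in \Cref{subsec:exactflowwithapproximatecost} (source demand on $v$, sink demand on $u$) is the one the lemma needs. It is also the convention under which saturating every reversed edge routes $D_{\res}$, as used in the proof of \Cref{lem:smallresidual}.

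Under the swapped convention, your map would send circulations to flows routing $-D_{\res}$, and the claimed ``if and only if'' would fail. So state $D_{\res}$ with the \Cref{subsec:exactflowwithapproximatecost} convention; the rest of your proof then goes through unchanged.
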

\begin{proof}
    Since $f^*_{\res}$ is a minimum-cost flow of $(G,D)_{f,\phi}$, the residual graph of $f+f^*_{\res}$ contains no negative cycle. Moreover, $f+f^*_{\res}$ is a feasible flow routing $D$. Hence it is optimal.
\end{proof}

The second key lemma is that if $(f,\phi,\tilde\ell)$ is a good approximate primal-dual pair, then the residual instance has small optimum.

\begin{lemma}\label{lem:overviewsmallresidual}
    Suppose $f,\phi,\tilde\ell$ is a feasible primal-dual pair of $(G,D)$ with primal-dual value difference $\Delta$. Then the optimal cost of the minimum-cost flow instance $(G,D)_{f,\phi}$ is at most $\Delta$.
\end{lemma}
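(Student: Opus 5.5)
Let $f^*$ be an optimal flow of $(G,D)$. The plan is to exhibit a feasible flow of $(G,D)_{f,\phi}$ whose cost is at most $\Delta$. The natural candidate is the one that "moves $f$ to $f^*$" inside the reweighted, reoriented residual graph $G_{\res}$.

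First I fix what the dual looks like; this should match the formal LP in \Cref{subsec:MCF}. The dual has potentials $\phi$ and nonnegative slack variables $\tilde\ell(e)\ge 0$ that pay for capacity, with feasibility $\ell(u,v)-(\phi(v)-\phi(u))+\tilde\ell(u,v)\ge 0$. The dual value is $\sum_v D(v)\phi(v)-\sum_e U(e)\tilde\ell(e)$, up to the sign convention for $D$. The primal-dual gap is then the standard complementary-slackness sum
\[
\Delta=\sum_{e} f(e)\,\ell_\phi(e)+\sum_e U(e)\tilde\ell(e),
\]
where $\ell_\phi(u,v)=\ell(u,v)-\phi(v)+\phi(u)$. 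Splitting by the sign of $\ell_\phi$, and using $\tilde\ell(e)\ge \max(0,-\ell_\phi(e))$, gives
\[
\Delta\ge \sum_{e:\ell_\phi(e)\ge0} f(e)\ell_\phi(e)+\sum_{e:\ell_\phi(e)<0}(U(e)-f(e))|\ell_\phi(e)|.
\]
Each term is exactly $|\ell_\phi|$ times the residual capacity of the residual edge that has negative reduced length. For $\ell_\phi\ge 0$ this is the backward edge, which has capacity $f(e)$ and reduced length $-\ell_\phi(e)$. For $\ell_\phi<0$ it is the forward edge, with capacity $U-f$. So $\Delta\ge \sum_{a\in E_{\neg}} U_{G_f}(a)\,|\ell_{(G_f)_\phi}(a)|$.

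Next I compute the cost of the candidate flow. Take $g=f^*-f$, decomposed as a nonnegative flow $g'$ in $G_f$ of cost $\cost(f^*)-\cost(f)$. Because $g$ is a circulation plus zero net demand change, reweighting by $\phi$ does not change its cost, so $g'$ has reduced cost $\cost(f^*)-\cost(f)\le 0$. Now translate $g'$ into $G_{\res}$. On each edge $a\in E_{\neg}$, saturating the reversal is built into $D_{\res}$, so the edge carries $U_{G_f}(a)-g'(a)$ on the reversed copy, whose length is $|\ell(a)|\ge 0$. On the other edges the flow is unchanged. The resulting flow $h$ routes $D_{\res}$ and respects capacities. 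Its cost is
\[
\cost(g')_\phi+\sum_{a\in E_{\neg}}U(a)|\ell(a)| \;\le\; 0+\Delta .
\]
Hence the optimum of $(G,D)_{f,\phi}$ is at most $\Delta$.

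The main obstacle is bookkeeping rather than ideas. The first delicate point is the demand convention: the constant shift between $h$ and $g'$ must be exactly $\sum_{a\in E_{\neg}}U(a)|\ell(a)|$. The second is that the gap expression uses the paper's exact dual form. If $\tilde\ell$ enters differently, for example as modified lengths $\tilde\ell\ge$ something, I would redo the complementary-slackness identity accordingly. A minor related point is to handle antiparallel residual edges, where both $e$ and its reverse appear, separately per original edge.
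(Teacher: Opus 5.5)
Your proof is correct and is essentially the paper's: both rest on the same complementary-slackness bound $\sum_{a\in E_{\neg}}U_{G_f}(a)\,|\ell_{(G_f)_\phi}(a)|\le\Delta$, obtained from dual feasibility, together with a feasible flow routing $D_{\res}$ in $G_{\res}$. The only difference is the witness flow. The paper simply saturates every reversed edge, which is your construction with $g'=0$, so the detour through $f^*$ and the telescoping of $\phi$ along the circulation $f^*-f$ is not needed; it only adds the nonpositive term $\cost(f^*)-\cost(f)$.
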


The proof is by a careful calculation of the
sum over all negative residual edges, and comparing it with the primal-dual difference. The full calculation appears in \Cref{lem:smallresidual}.

Thus, if the oracle returns a $(1\pm \eps)$-approximate primal-dual pair with, say, $\eps=0.1$, then the residual instance has only a constant fraction of the original optimum cost. Repeating this for $O(\log n)$ rounds and applying \Cref{lem:overviewresidual} yields an exact algorithm.

We next extend this to general $\kappa$. Now the oracle may route only a sub-demand $D'\preceq D$ with $|D'|\ge |D|/\kappa$. We call the oracle repeatedly. In the $i$-th round it returns a flow $f_i$ and dual $(\phi_i,\tilde\ell_i)$ on the remaining demand. After each round we form the corresponding residual graph, reweight by $\phi_i$, and simply round every negative edge length up to $0$ before making the next call (because we assume the oracle can only work on graphs with non-negative lengths). Since each call routes at least a $1/\kappa$ fraction of the remaining demand, after $O(\kappa\log n)$ calls the total routed flow already covers all of $D$.

The subtle point is to show that these repeated roundings do not destroy the reduction. Intuitively, each round contributes only an $O(\eps)$ fraction of its own routed cost to the final residual error, so the total accumulated rounding loss remains small. This is formalized in \Cref{lem:ApxflowApxcostBoosting}.

\paragraph{Step 2: $(\kappa,1+\eps)$-strongly-approximate PD-MCF from $(1-\eps)$-approximate PD-MBCF.}
We now explain how to use an approximate bounded-cost-flow oracle to obtain a strongly-approximate PD-MCF solution.

Fix a demand $D$ and, for simplicity, assume it is an $s$--$t$ demand of total value $1$. Suppose we are given an oracle for maximum bounded-cost flow: for any budget $B$, it returns an approximately maximum $s$--$t$ flow of cost at most $B$, together with an approximate dual certificate.

A first idea is to perform binary searches over the budget $B_i=(1+\eps)^i$ and find the first budget at which the oracle routes a constant fraction of the demand, say at least $1/2$. However, this fails if the cost-value tradeoff has a sharp jump: it may happen that routing value $\mu$ costs $B$, while routing value $(1+\eps)\mu$ already costs $100B$. In that case, the first budget that certifies a value of at least $\mu$ is not necessarily close to the true cost of routing $\mu$. The bad case is when the budget grows a lot but the flow value grows only slightly.

Our way around this is to look for a \emph{good scale} where the routed value grows noticeably while the budget only increases by a $(1+\alpha)$ factor. We query the MBCF oracle on a geometric sequence of budgets
\[
B_i=(1+\alpha)^{i-2},
\qquad \text{where } \alpha=\eps^{1/3},
\]
and let $q_i$ be the amount of flow returned at budget $B_i$. The “good point” we need is: the budget only changes from $B_a$ to $B_{a+1}=(1+\alpha)B_a$, but the routed flow value jumps by at least a multiplicative $(1+\beta)$ factor. We will set $\alpha\gg\beta$, so that if all the points are not good, then the cost will increase by too much (controlled by $(1+\alpha)$) until reaching the trivial upper bound $\sum_{e\in E}\ell(e)\cdot U(e)$, which we assume to be $\poly(n)$, while the flow value only increase by tiny (not exceeding $1$). This is a contradiction since we can definitely route $1$ unit of flow with some cost below $\sum_{e\in E}\ell(e)\cdot U(e)$. So there must be a good point, and we can also use binary search to find that.
The detailed proof is given in \Cref{lem:weak-to-strong-via-cost-scaling}.
\paragraph{Step 3: $(1-\eps)$-approximate PD-MBCF from $(\kappa,1+\eps)$-approximate PD-MCF.}
The final step is based on the flow-boosting template of \cite[Theorem 9.2]{HaeuplerH0RS24}, which uses multiplicative weight updates (MWU) to reduce bounded-cost flow to repeated calls to an approximate PD-MCF oracle.

The main extra work on our side is to recover a suitable dual solution for the bounded-cost-flow LP. The dual involves the concept of \emph{moving cut}, which serves as a continuous extension of the normal cut in a graph. The goal of the dual is to block every path from the sources to sinks using a small moving cut. 

The MWU template already constructs a moving cut that blocks a large fraction of the relevant paths, but not necessarily all of them. We show that the remaining uncovered part is small: it accounts for only an $O(\eps)$ fraction of the optimum. Therefore, one additional $(\kappa,1+\eps)$-approximate PD-MCF call (which actually gives a $O(\kappa)$-approximate optimal moving cut) suffices to cover the remaining part, at a cost that can be absorbed into the approximation factor by choosing $\eps$ sufficiently small (much smaller than $1/\kappa$). This yields the desired primal-dual pair for bounded-cost flow. The detailed proof can be found in \Cref{subsec:flowboosting}.

\subsection{Flow-Cost-Preserving DAG Projection}\label{subsec:overviewprojection}

We next explain how to reduce approximate PD-MCF on a general directed graph to approximate PD-MCF on a DAG. This is the main structural contribution of the paper.

For convenience, write $\OPDMCF{\kappa,\lambda}$ for an oracle that computes a $(\kappa,\lambda)$-approximate PD-MCF solution. A \emph{flow-cost-preserving DAG projection} of a directed graph $G$ is defined as follows.

\begin{definition}\label{def:DAGprojection}
    A \Proj{} of $G = (V,E)$ is a DAG $G'$ along with a projection map $\pi: V(G')\to V$. We call it $(\kappa,\lambda)$-\Proj{}, or say it has congestion slack $\kappa$ and length slack $\lambda$ if
\begin{enumerate}
\item For every demand $D'$ routable in $G'$ by a feasible flow with cost $B$, $\pi(D')$ is routable in $G$ with congestion $\kappa$ and cost $B$ and,
\item For every demand $D$ routable in $G$ by a feasible flow with cost $B$, $\hat\pi^{-1}(D)$ is routable in $G'$ by a feasible flow with cost at most $\lambda\cdot B$. 
\end{enumerate}
\end{definition}

$\pi(D')$ is defined in the natural way: $\pi(D')(v)=\sum_{\pi(v')=v}D'(v')$. $\hat\pi^{-1}(D)$ is a bit trickier: we fix a topological order of $G'$, if $D(v)>0$, then we choose the first vertex $v'$ in the topological order with $\pi(v')=v$ and let $\hat\pi^{-1}(D)(v')=D(v)$; if $D(v)<0$, then we choose the last vertex $v'$ in the topological order with $\pi(v')=v$ and let $\hat\pi^{-1}(D)(v')=D(v)$; otherwise we let $\hat\pi^{-1}(D)(v')=0$. i.e., we lift the demand to be a demand on the DAG from the first copy of sources to the last copy of sinks.

The following lemma shows the power of DAG projection.

\begin{lemma}[Informal version of \Cref{lem:MCFtoDAG}]\label{lem:overviewMCFtoDAG}
    Let $G$ be a directed graph, and suppose $G'$ is a $(\kappa',\lambda')$-flow-cost-preserving DAG projection of $G$. Then one call to $\OPDMCF{\kappa,\lambda}$ on $G'$ can be transformed into a solution to $\OPDMCF{\kappa\kappa',\lambda\lambda'}$ on $G$.
\end{lemma}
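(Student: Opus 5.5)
The plan is to run the oracle on the lifted instance and push both the primal and the dual back through $\pi$. Given a demand $D$ on $G$, we call $\OPDMCF{\kappa,\lambda}$ on $G'$ with the lifted demand $D'=\hat\pi^{-1}(D)$, using lengths $\ell_{G'}$ and capacities $U_{G'}$. The oracle returns $(f',\phi',\tilde\ell')$: $f'$ routes $D'$ with congestion $\kappa$, and $\cost(f')\le\lambda\cdot\val(\phi',\tilde\ell')$, where the dual value is at most $\mathrm{OPT}_{G'}(D')$ by weak duality. We then produce a flow $f$ on $G$ and a dual $(\phi,\tilde\ell)$ on $G$ separately.

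\textbf{Primal.} The flow $f'/\kappa$ is feasible in $G'$, routes $D'/\kappa$, and has cost $\cost(f')/\kappa$. By property 1 of \Cref{def:DAGprojection}, $\pi(D'/\kappa)=D/\kappa$ is routable in $G$ with congestion $\kappa'$ and cost at most $\cost(f')/\kappa$. Scaling back by $\kappa$ gives a flow $f$ routing $D$ with congestion $\kappa\kappa'$ and $\cost(f)\le\cost(f')$.

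For the algorithmic version, I would use the explicit form property 1 takes in the construction. Each edge of $G'$ maps to an edge or path of $G$ of no larger length, so $f$ is obtained by mapping $f'$ edge by edge. Property 1 then guarantees congestion $\kappa'$ for a unit of congestion in $G'$.

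\textbf{Dual.} We need $(\phi,\tilde\ell)$ on $G$ whose value is at least $\val(\phi',\tilde\ell')/\lambda'$. Property 2 gives $\mathrm{OPT}_{G'}(\hat\pi^{-1}(D))\le\lambda'\,\mathrm{OPT}_G(D)$, which would already yield $\cost(f)\le\lambda\lambda'\,\mathrm{OPT}_G(D)$. However, the $(\kappa,\lambda)$ notion is certified against an explicit dual, so we must transfer the dual.

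The natural attempt is to read the dual as a moving-cut/potential certificate. Define $\phi(v)=\phi'(v^{\mathrm{first}})$ for sources and $\phi'(v^{\mathrm{last}})$ for sinks, and scale everything by $1/\lambda'$. For feasibility, I would argue via LP duality applied to property 2: for every demand $D$, $\mathrm{OPT}_{G'}(\hat\pi^{-1}(D))/\lambda'\le\mathrm{OPT}_G(D)$. Then any dual of the lifted problem, restricted to the vertices $\hat\pi^{-1}$ uses and scaled by $1/\lambda'$, should be a feasible dual of $G$ of value $\val(\phi',\tilde\ell')/\lambda'$. The dual feasible region of $G$ is the set of potential/penalty pairs whose value lower bounds OPT on all demands, which is exactly what property 2 implies after scaling.

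\textbf{Main obstacle.} I expect the difficulty to lie here. The dual of MCF with capacities has edge-penalty terms $\tilde\ell$, and expressing the transferred dual explicitly on $G$ requires the construction's structure rather than only the black-box properties of \Cref{def:DAGprojection}. My first attempt would be to set $\tilde\ell(e)=\frac1{\lambda'}\sum_{e'\mapsto e}\tilde\ell'(e')$ (with the appropriate capacity normalization) and verify the reduced-cost constraint along each edge of $G$ using the fact that the first/last copies bound distances in $G'$.

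If that fails, the fallback is to recompute a dual directly on $G$. For example, take $\phi$ as the distances in $G$ under lengths $\ell+\tilde\ell$ for a suitable $\tilde\ell$ pulled back from $\tilde\ell'$, and lower bound its value via property 2 applied to the cut-type demands. This only loses the factor $\lambda'$ in length, and the combined guarantee becomes $\cost(f)\le\lambda\lambda'\cdot\val(\phi,\tilde\ell)$.
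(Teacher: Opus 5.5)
\textbf{Verdict: genuine gap in the dual half.} The primal half is fine.

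Your primal half works as an existence argument: lift $D$ to $\flpi(D)$, scale the oracle's flow by $1/\kappa$, apply property~1, and scale back. Your account of how this is done algorithmically is not right for the paper's construction, though. Bridge and interface edges touch vertices with $\pi=\bot$. They are realized by stored auxiliary flows of bounded \emph{average} cost and by expander routing of deferred demands, not edge by edge. The paper's formal version simply assumes a primal projection algorithm.

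The gap is the dual. Your main argument claims that the dual feasible region of $G$ is the set of pairs whose value lower-bounds the optimum on all demands. That is not what feasibility means. The certificate needs $\tl\ge 0$ on $E(G)$ and $\phi(y)-\phi(x)\le\ell(e)+\tl(e)$ for every $e=(x,y)\in E(G)$. Your ``restriction'' of $(\phi',\tl')$ defines neither a penalty on $E(G)$ nor a potential on non-terminal vertices.

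Property~2 also only compares optimal values, one demand at a time. For other demands the lift uses different copies, so it says nothing about your choice of $\first{v}$ versus $\last{v}$. Combined with strong duality, it shows that \emph{some} dual on $G$ of value at least $\Dual(\phi',\tl')/\lambda'$ exists. It gives no way to obtain that dual from $(\phi',\tl')$ short of solving MCF on $G$.

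Your edge-wise pullback $\tl(e)=\frac{1}{\lambda'}\sum_{e'\mapsto e}\tl'(e')$ fails for the same reason the primal map is not edge-wise: many edges of $G'$ do not map to edges of $G$. The inequality you would need along edges of $G$ is a distance comparison that the black-box definition does not supply.

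The missing ingredient is the dual half of a projection algorithm (\Cref{def:projectionalgorithm}), which is exactly why the paper's formal version assumes one. This is a map sending a moving cut $C'$ on $G'$ to a moving cut $C$ on $G$ with:
\begin{itemize}
\item the same size;
\item length $\leng(C')/\lambda'$;
\item $\dist_{G'-C'}(\first{u},\last{v})\le\lambda'\,\dist_{G-C}(u,v)$.
\end{itemize}
View $\tl'$ as a length-$1$ cut $C'$. The returned $C$ then induces the penalty $\tl:=C/\lambda'$, whose size is $|C'|/\lambda'$. Set
\[
\psi(x):=\min_{u:\,D(u)>0}\Bigl\{\tfrac{1}{\lambda'}\phi'(\first{u})+\dist_{G-C}(u,x)\Bigr\}.
\]
\begin{itemize}
\item $(\psi,\tl)$ is automatically feasible, since $\psi$ is a shortest-path potential for $\ell+\tl$.
\item At each source $u$, $\psi(u)\le\phi'(\first{u})/\lambda'$.
\item At each sink $v$ and for every source $u$, feasibility of $(\phi',\tl')$ gives $\phi'(\last{v})-\phi'(\first{u})\le\dist_{G'-C'}(\first{u},\last{v})\le\lambda'\dist_{G-C}(u,v)$. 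Hence $\psi(v)\ge\phi'(\last{v})/\lambda'$.
\end{itemize}
Summing gives $\Dual(\psi,\tl)\ge\Dual(\phi',\tl')/\lambda'$, and together with the primal half this yields the $(\kappa\kappa',\lambda\lambda')$ guarantee. Your ``fallback'' points in this direction. However, it leaves the pulled-back cut unspecified and attributes the key distance inequality to property~2, which is about routing demands and cannot provide it for a specific cut.
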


The formal statement also additionally requires a projection algorithm that efficiently maps primal and dual objects between $G$ and $G'$. For the overview, the key intuition is simple: if $G'$ approximately preserves all relevant flows and their costs, then solving PD-MCF on $G'$ is essentially as good as solving it on $G$.

We now sketch how such a DAG projection is constructed.

\subsubsection{Existence}

The construction relies on two primitives: a length-constrained expander hierarchy, and a bounded-width path cover.

\paragraph{Primitive 1: LC expander hierarchy.}
A \emph{moving cut} $C$ assigns a non-negative value to every edge. Each moving cut is associated with a length parameter denoted by $\leng(C)$. When we view $C$ as an $h$-length moving cut, we denote graph $G-C$ by the graph increasing the length of each edge $e$ in $G$ by $C(e)\cdot h$. The following definitions rigorously define LC expander hierarchy, which is similar to normal expander hierarchy using the language of moving cut: instead of cuting edges in an integer way, each layer of this hierarchy using moving cut (length increasing) to represent a cut.

For such a cut $C$, define  $\vol_C(v)\stackrel{\mathrm{def}}{=}\sum_{v\in e} C(e)\cdot U(e)$. 

We say that a demand $D$ is \emph{$\vol_C$-respecting} if $|D(v)|\le \vol_C(v)$ for every vertex $v$.

We say that a demand is \emph{$h$-length symmetrically constrained} if it can be decomposed into pieces, each supported on a strongly connected subgraph of diameter at most $h$.\footnote{The formal definition can be found in \Cref{sec:preliminray}. Here we only give a simplified version, which is not fully rigorous.}

An \emph{$(h,s)$-length $\phi$-expander with respect to $\vol_C$} is a graph in which every $\vol_C$-respecting $h$-length symmetrically constrained demand can be routed with congestion at most $1/\phi$ and average cost at most $h s$. Here $s$ is the \emph{length slack}.

A \emph{$\phi$-expander hierarchy with length slack $s$} is a sequence of moving cuts $C_1,\dots,C_t$ such that for every scale $h$ and every level $i$, if we interpret 
$C_{>i}\stackrel{\mathrm{def}}{=}\sum_{j>i} C_j$
as an $h$-length moving cut, then $G-C_{>i}$ is an $(h,s)$-length $\phi$-expander with respect to $\vol_{C_i}$.

We require $C_1(e)=1$ for all $e\in E$, and in our setting we have $s=n^{o(1)}$ and $\phi=1/n^{o(1)}$. Such hierarchies are known to exist, and in \Cref{sec:LCED} we also give an algorithm to construct them.

\paragraph{Primitive 2: bounded-width path cover.}
Path covers were introduced in \cite{abs-2511-08551} as a directed analogue of neighborhood covers on undirected graphs.

A \emph{$d_{\cov}$-path cover} of $G$ with diameter $d_{\diam}$ is a graph $G'$, together with a projection map $\pi:V(G')\to V(G)$, such that:
\begin{enumerate}
    \item every edge of $G'$ projects to an edge of $G$ (so $\pi$ is essentially a graph homomorphism from $G'$ to $G$);
    \item every path $p$ of length at most $d_{\cov}$ in $G$ lifts to a path in $G'$, i.e., there is a path $p'$ such that $\pi(p')=p$ (under the natural definition of $\pi(p')$); and
    \item every SCC of $G'$ projects to an induced subgraph of $G$ and has strong diameter at most $d_{\diam}$.
\end{enumerate}

The \emph{width} of the path cover is
\[
\omega \stackrel{\mathrm{def}}{=} \max_{v\in V(G)} |\pi^{-1}(v)|,
\]
that is, the maximum number of copies of a vertex. Bounding the width is crucial for controlling congestion.

The construction in \cite{abs-2511-08551} gives near-linear size but does not control the width. For our purposes this is not enough. We therefore design a new construction, inspired by multiplicative weight updates, that penalizes vertices that are copied too often. The result is a path cover of width $n^{o(1)}$ with parameters
\[
d_{\cov}=n^{o(1)}\cdot d_{\diam}.
\]
The full construction is given in \Cref{sec:pathcover}. This construction is of independent interest. One open problem is to get the width to be polylogarithmic.
\footnote{In \cite{abs-2511-08551}, they only construct a path cover satisfying $d_{\cov}<d_{\diam}$. In comparison, this relation $d_{\cov}=n^{o(1)}\cdot d_{\diam}$ might look surprising at first glance. However, we show that this is easy to achieve by concatenating many copies of the same path cover into a single one (as shown in \Cref{lem:concatenating}). }

\paragraph{One-layer hierarchy.}
To illustrate the main idea, first assume that the input graph $G$ itself is already an $(h,s)$-length $\phi$-expander with respect to $\deg_G$, for every scale $h$.

For each dyadic scale $h'$, we construct a DAG $\widehat G^{(h')}$ that is responsible for routing flow paths of cost roughly $h'\cdot n^{o(1)}$. The final DAG projection is the disjoint union over all dyadic scales.

We start from a bounded-width path cover 
$G_{\cov}^{(h')}$
with diameter parameter $h_{\diam}=h'$, covering parameter $h_{\cov}=h'\cdot n^{o(1)}$, and width $\omega=n^{o(1)}$. Similar to the idea of \cite{Cohen00},
we look at its SCCs and classify them as follows.
\begin{description}
    \item[Light SCCs.]  
    If $|S|<n/\sigma$, where $\sigma=n^{o(1)}$ is a parameter, we recurse on $G[S]$ and replace $S$ by its recursively constructed DAG projection.

    \item[Heavy SCCs.]  
    If $|S|\ge n/\sigma$, we do not recurse. Instead, we use the expander property: since $G[S]$ has strong diameter at most $h'$, any sufficiently local demand inside $S$ can be routed at low congestion and average cost $h's$. We therefore replace $S$ by a simple \emph{star gadget}: one copy of $S$ for entering the gadget, one central hub vertex $v_S$, and one copy of $S$ for exiting it. The edges incident to the hub have capacity $\deg_G(\cdot)$ and length $h's$.
\end{description}

After replacing every SCC in this way, the SCC structure disappears and we obtain a DAG $D^{(h')}$. The crossing SCC edges can also be defined naturally in $D^{(h')}$.

The final step is to add \emph{bridges} between heavy SCCs. We duplicate $D^{(h')}$ into two copies and, for every pair of heavy SCCs $(S_1,S_2)$, we add $O(\log n)$ edges from the hub of $S_1$ to the hub of $S_2$. The $i$-th bridge has capacity equal to the maximum flow that can be sent from $S_1$ to $S_2$ in $G$ with average cost at most $r_i=(1+\eps)^i$. 

\paragraph{Forward mapping.}
Given a feasible flow $F$ in $G$, we lift each flow path $P$ to the appropriate scale $\widehat G^{(h')}$ according to its cost. Because the path cover lifts all paths up to length $h_{\cov}$, the path $P$ lifts to a path in $G_{\cov}^{(h')}$, which decomposes into pieces inside SCCs.

For a light SCC, we can simply recurse. So the case that all SCCs are light is easy to solve. 

If the path passes through at least on heavy SCC, let the first one be $S_1$ and last one by $S_2$. For each $S_1,S_2$, we route through the corresponding star gadget. We connect the two hub vertices $v_{S_1},v_{S_2}$ using the appropriate bridge int the following way: We choose a bridge with length approximately equal to the cost of the corresponding subpath from $S_1$ to $S_2$ up to a factor of $(1+\eps)$, while the additional star cost is only $h's$, which can be made negligible compared to the total cost of the path by parameter choice. Thus, by replacing the path between the two heavy SCCs by the bridge, the total loss is only a $(1+O(\eps))$ factor.

Feasibility follows because star capacities are chosen according to vertex degree, and bridge capacities are chosen according to the maximum amount of flow that can be routed at the corresponding average-cost scale.

\paragraph{Backward mapping.}
Conversely, a flow in the DAG can be mapped back to $G$:
\begin{itemize}
    \item the light-SCC part is handled recursively: Remember that each $G[S]$ where $S$ is a light has been replaced by a recursive DAG projection construction, such a construction itself also gives a backward mapping algorithm;
    \item the star gadgets are implemented using the expander property inside the corresponding heavy SCC: Remember that a SCC in an expander has the property that demand can be routed efficiently, and
    \item each bridge is realized by the definition of its capacity because the bridge capacity is defined as the maximum capacity that can be routed using the lengths of the bridge.
\end{itemize}
This gives the backward projection.

\paragraph{Multiple layers.}
For a full LC-expander hierarchy, the construction recurses through the hierarchy levels. At a high level, instead of building a star over each heavy cluster, we replace each heavy cluster by a lower layer DAG projection, then build the star and bridges. The analysis follows a similar flavour but is much more complicated. We defer the full proof to \Cref{sec:expandertoembedding}. 

\paragraph{Connection to previous works \cite{haeupler2025parallel}.} The construction has several similar ideas to the \emph{LC-flow shortcut} construction in \cite{haeupler2025parallel}. In their work, a shortcut on \emph{undirected} graph is constructed so that every flow has a corresponding low-step flow on the graph after adding this shortcut. Our construction of DAG projection is \emph{not a shortcut} and is working on \emph{directed graphs}. This leads to several difficulties compared to the shortcut construction.

The construction of \cite{haeupler2025parallel} uses \emph{neighborhood cover} on an undirected graph, but there is no such object that covers every neighbor in a directed graph. There is an object called \emph{path cover} with bounded width that we have introduced, but it is weaker than coving neighborhoods, and specifically, it does not allow the argument of covering one low-length path by one SCC - a critical requirement needed for \cite{haeupler2025parallel}. Thus, a more careful analysis of cumulative SCC covering along a path is needed.

\subsubsection{Fast construction}

The existence proof above is not yet algorithmic; the main bottleneck is that it seems to require exactly the approximate MCF oracle we are trying to obtain. To be precise, there are two places that require an MCF oracle:
\begin{enumerate}
    \item constructing the LC-expander hierarchy usually requires a MCF oracle for the cut-matching game, and
    \item computing the capacities of the bridges between heavy SCCs requires at least an approximate MCF.
\end{enumerate}

This results in a 'chicken-and-egg' situation: we want to compute DAG projection to get a MCF algorithm, but it requires MCF as a subroutine. 

A similar issue also appears in \cite{haeupler2026dag}. For their purpose, they wish to construct a DAG projection that only preserves distance so that it can help the algorithm for single-source shortest path (SSSP), but the construction requires (1) low-diameter decomposition (LDD), which requires an SSSP oracle, (2) computing bridges similar to our case, which requires an SSSP oracle. Their solution is to introduce an \emph{$h$-length} version of the projection problem: instead of preserving all distances, the projection only needs to preserve distances up to scale $h$. They can also define the $h$-length version of LDD and SSSP. Then they try to reduce the $h$-length problem to a smaller-scale $h/z$-length problem for a suitable parameter $z=n^{o(1)}$, and repeat this $O(\log n)$ times to get the final algorithm.

We apply the same philosophy here. We define:
\begin{itemize}
    \item an $h$-length DAG projection, which only needs to preserve flows of average cost at most $h$;
    \item an $h$-length approximate MCF problem; and
    \item an $h$-length LC-expander hierarchy.
\end{itemize}
The formal definitions are more subtle and appear in \Cref{sec:DAGprojections}.

At a high level, we obtain the following cycle of reductions:
\begin{enumerate}
    \item reduce $h$-length DAG projection to $h$-length LC-expander hierarchy plus $h$-length approximate MCF (in \Cref{sec:expandertoembedding});
    \item reduce $h$-length LC-expander hierarchy to $h$-length approximate MCF (in \Cref{sec:LCED});
    \item reduce $h$-length approximate MCF to $h$-length DAG projection assuming the DAG MCF oracle (see \Cref{lem:MCFtoDAG});
    \item reduce $h$-length DAG projection to $(h/z)$-length DAG projection using concatenation (see \Cref{lem:concatenating}).
\end{enumerate}

Taken together, these reductions decrease the length scale from $h$ to $h/z$, and the recursion terminates once $h$ becomes small enough to handle directly.

A natural question is why we do not simply apply the last reduction repeatedly, since it already decreases the length scale. The reason is that the concatenation lemma \Cref{lem:concatenating} enlarges the size of the DAG by a factor of $z$ each time. Repeating this for $O(\log_z n)$ levels would lead to a polynomial blow-up. In contrast, the four-step recursion above avoids this: the parameter $z$ enters the \emph{running time} of the recursion: the concatenation makes the algorithm for computing approximate MCF larger, but not increasing the final size of the DAG projection.

\paragraph{LC-expander hierarchy algorithm.}
We also briefly mention the algorithmic novelties in \Cref{sec:LCED}. Previous work gave existence results for directed LC-expanders \cite{HaeuplerLSW25}, and algorithmic constructions only in undirected graphs \cite{HaeuplerHT24}. In order to combine the previous ideas into a directed LC-expander hierarchy algorithm, two important ingredients were missing before our work:
\begin{enumerate}
    \item a \emph{fat low-diameter decomposition} in directed graph, namely a decomposition whose clusters are well separated in round-trip distance compared to their diameter; and
    \item a suitable directed cut-matching game with length constraints.
\end{enumerate}
We obtain the first by a reduction to its undirected variant in \Cref{sec:FatLDD}. The second contains more details, and we leave the full proof to \Cref{sec:LCED}.

\subsection{Weighted Push-Relabel Algorithm for MCF on DAGs}\label{subsec:overviewdagalgo}

Given the first two ingredients, it remains to solve approximate PD-MCF on DAGs. We prove the following in \Cref{sec:dag-mcf}.

\begin{lemma}\label{lem:overview DAG algo}
    There exists an \textbf{augmenting-path-based} algorithm on DAGs that returns a $(3,1+\eps)$-approximate PD-MCF solution.\footnote{In fact, the weighted push-relabel algorithm on DAGs returns a $(3,1+\eps)$-strongly-approximate PD-MCF solution, which is stronger than what the rest of the framework requires.}
\end{lemma}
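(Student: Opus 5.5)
We will prove \Cref{lem:overview DAG algo} by running a weighted push-relabel algorithm on the DAG, generalizing the one of \cite{BernsteinBST24}. Its output is read off as a primal-dual pair for the MCF LP. Fix a topological order of the DAG $G$. After the standard reduction we may assume a single super-source $s$ and super-sink $t$, with source and sink edges of capacity $|D(v)|$ and length $0$.

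\textbf{Potentials and admissible edges.} We maintain vertex labels (potentials) $\phi(v)$, which only increase, starting from the shortest-path distances from $s$ computed in $O(m)$ time on the DAG. We call a residual edge $(u,v)$ admissible if its reduced length $\ell(u,v)-(\phi(v)-\phi(u))$ is at most $\eps\cdot w(u,v)$, where $w$ is a per-edge weight as in \cite{BernsteinBST24}. The natural choice of $w$ is proportional to the position gap in the topological order, scaled so that each path accumulates total slack at most $\eps$ times its length scale. Pushes send flow along admissible edges only. When a vertex with excess has no admissible out-edge, we relabel it, raising $\phi(v)$ by a multiplicative $(1+\eps)$ step in the relevant distance scale.

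\textbf{Complementary slackness and the dual.} Throughout, every saturated edge and every edge carrying flow satisfies approximate complementary slackness. Flow-carrying edges have reduced length at most $\eps w$, and saturated edges may have negative reduced length. We set $\tl(e)=\max(0,\phi(v)-\phi(u)-\ell(e))$, the overflow charged to saturated edges. The dual value is then $\sum_v \phi(v)D'(v)-\sum_e U(e)\tl(e)$. The primal cost exceeds this value by at most $\eps\sum_e f(e)w(e)\le \eps\cdot\cost(f)$. This is where the weights must be chosen so that the slack of each flow path is charged multiplicatively against the path's own length. Lengths are non-negative, so a path's length is at least its potential gap.

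\textbf{Congestion $3$ and the routed fraction.} As in \cite{BernsteinBST24}, vertices whose label exceeds a threshold of $O(\log n/\eps)$ relabel steps are deactivated. The excess left on them is absorbed by an argument that allows each edge to be overused by a constant factor: the flow is $1$ on the forward structure plus at most $2$ from rerouting the excess back. This gives congestion $3$ while routing all of $D$, or, in the strong version, routing a sub-demand with $|D'|\ge|D|/3$. Runtime follows from the DAG structure: each vertex is relabeled $O(\log(nU\ell_{\max})/\eps)$ times, and each relabel scans its edges, giving $\tO{m/\eps}\le n^{2+o(1)}$.

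\textbf{Main obstacle.} The hardest step will be reconciling the multiplicative cost guarantee with push-relabel's additive slacks. For maximum flow, labels only count hops. Here the admissibility slack must scale with distance, so that the $\eps$-approximate complementary slackness yields a $(1+\eps)$ ratio rather than an additive error. I would first try geometrically bucketed labels, in which relabels multiply $\phi(v)$ by $(1+\eps)$ once $\phi(v)$ exceeds $1$. I would then verify that the charge $\sum f(e)\cdot(\text{slack})$ telescopes along each flow path in the DAG to at most $\eps$ times that path's cost.
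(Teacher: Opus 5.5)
There are genuine gaps. The main one is how push--relabel's additive slack turns into a multiplicative $(1+\eps)$ guarantee.

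\textbf{Cost guarantee and dual.} Your bound $\eps\sum_e f(e)w(e)\le\eps\cdot\cost(f)$ is false for topological-gap weights. Every edge has $w(e)>0$ while $\ell(e)$ may be $0$, so the slack cannot be charged to each path's own cost. ``Geometric labels'' is not developed and does not fix this. The paper instead guesses, by powers of $2$, $q$ and $\alpha$ with $\val(f^*)\in[q,2q]$ and $\ell_G(f^*)\in[\alpha nq,2\alpha nq]$, and makes all slack additive at scale $\alpha$:
\begin{itemize}
\item relabels decrease $\phi$ by $\eps\alpha$;
\item an edge is admissible when $\ell_\phi(e)\le-2\eps\alpha w(e)$, checked lazily;
\item the invariant $\ell_\phi(e)\ge-3\eps\alpha w(e)$ holds on all residual edges.
\end{itemize}
Since an $s$--$t$ path in a DAG has $w$-weight at most $n$ (indices telescope), the final gap is at most $6\eps\alpha n\,\val(f)$. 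This is an $O(\eps)$ fraction of the cost scale $\alpha nq$ fixed by the guess.

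Your dual also does not give the claimed gap. With $\tl(e)=\max(0,\phi(v)-\phi(u)-\ell(e))$, the exact gap identity contains terms $(U(e)-f(e))\tl(e)$ for \emph{unsaturated} edges of negative reduced length. Nothing in your invariant rules these out, and they are charged to capacity, not to flow. The paper repairs the potentials first:
\begin{itemize}
\item it penalizes backward residual edges by $6\eps\alpha w(e)$, which leaves no negative cycle because on any residual cycle of a DAG the forward and backward $w$-weights coincide;
\item SSSP on this graph gives $\phi'$;
\item with $\phi+\phi'$, the slack $C$ is supported only on saturated edges, and flow-carrying edges have slack at most $6\eps\alpha w(e)$.
\end{itemize}

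\textbf{Admissibility, running time and the factor $3$.} Your threshold $+\eps w(e)$ is nonnegative, so an edge and its reverse can be admissible simultaneously. Flow can then bounce back and forth with no relabel, and nothing bounds the number of pushes. (Also, under your convention $\ell(u,v)-(\phi(v)-\phi(u))$, raising $\phi(v)$ makes $v$'s out-edges \emph{less} admissible, so the relabel direction is reversed.)

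Your $\tilde O(m/\eps)$ running time counts only relabel scans and omits pushes. Bounding pushes is exactly what the weights are for in the paper. With a strictly negative threshold, saturating $e$ leaves a reverse edge with $\ell_\phi\ge\eps\alpha w(e)$. That reverse edge can carry flow again only after its tail's potential drops by $2\eps\alpha w(e)$. Potentials move by $O(\alpha n)$ in total, so $e$ saturates $O(n/(\eps w(e)))$ times. Combined with $\sum_e 1/w(e)=O(n\log n)$, this gives $\tilde O(n^2/\eps)$.

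The factor $3$ does not come from deactivating vertices and returning excess with congestion $2$; you give no argument for that step. The paper's algorithm instead works as follows:
\begin{itemize}
\item it augments along whole admissible $s$--$t$ paths or cycles, so $f$ is always a feasible $s$--$t$ flow with no excess;
\item a DAG-specific lemma shows that following admissible edges from $s$ never reaches an inactive vertex ($\phi<-99\alpha n$);
\item it stops once $\phi(s)<-50\alpha n$;
\item it proves $\val(f)\ge0.9q\ge\val(f^*)/3$ by contradiction.
\end{itemize}
For the contradiction, suppose $\val(f)<0.9q$. Then the residual flow $g$ representing $f^*-f$ has value at least $0.1q$ and cost at most $2\alpha nq$. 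Yet telescoping with $\phi(t)=0$ and the invariant gives $\ell_{G_f}(g)\ge 50\alpha n\cdot 0.1q-9\eps\alpha nq>2\alpha nq$.
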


The algorithm extends the weighted push-relabel method for maximum flow on DAGs from \cite{BernsteinBST24} to the minimum-cost flow setting.

As in classical push-relabel for MCF \cite{goldberg1990finding}, we maintain a potential function $\phi:V\to \mathbb{R}$. For an edge $(u,v)$, its reduced length is
\[
\ell_\phi(u,v)\stackrel{\mathrm{def}}{=}\ell(u,v)-(\phi(u)-\phi(v)).
\]
The algorithm alternates between:
\begin{itemize}
    \item \emph{relabel} steps, which decrease vertex potentials, and
    \item \emph{push} steps along \emph{admissible} edges, i.e., edges whose reduced length is sufficiently negative.
\end{itemize}

The classical Goldberg--Tarjan algorithm takes $\tilde O(mn)$ time essentially because an edge can become admissible up to $\tilde O(n)$ different times. Our key idea is to define admissibility using a carefully chosen weight function $w:E\to \mathbb{R}_{>0}$.

Following \cite{BernsteinBST24}, on a DAG one can assign weights so that:
\begin{enumerate}
    \item every flow $f$ of interest satisfies
    \[
    \frac{w(f)}{\val(f)}=O(n),
    \qquad\text{where } w(f)\stackrel{\mathrm{def}}{=}\sum_{e\in E} f(e)\,w(e),
    \]
    and
    \item
    \[
    \sum_{e\in E}\frac1{w(e)}=\tO{n}.
    \]
\end{enumerate}

We then declare an edge $e$ admissible when its reduced length is roughly $-\Theta(w(e))$. Saturating such an edge changes the relevant potentials by at least $w(e)$. Since the total potential range is $O(n)$, edge $e$ can be saturated only
$O\!\left(\frac{n}{w(e)}\right)$
times. Summing over all edges gives
\[
\sum_{e\in E} O\!\left(\frac{n}{\,w(e)}\right)
=
O\!\left(n\sum_{e\in E}\frac1{w(e)}\right)
=
O(n^2),
\]
which is the desired bound.

Although this idea is morally similar to \cite{BernsteinBST24}, extending it from max flow to min-cost flow is not automatic. Two main difficulties arise:
\begin{enumerate}
    \item In max flow, admissible paths move monotonically down the level structure. In min-cost flow, the relevant notion of level is given by potentials, and admissible paths need not be monotone in any obvious sense.
    \item In max flow, once the source label becomes small enough, one can directly certify that the current flow is good. In the min-cost setting this argument is more delicate, and we exploit the special structure of DAGs to make it work.
\end{enumerate}

These issues are resolved in \Cref{sec:dag-mcf}, completing the third ingredient of the framework.
\section{Preliminaries}\label{sec:preliminray}

We always use $n$ and $m$ to denote the number of vertices and edges of the original graph in the relevant context (for example, for each main theorem, the original graph is its input graph). Throughout the paper, all input graphs have length (cost) and capacity functions, and we assume that all lengths and capacities are \emph{integers} upper bounded by $N = \poly(n)$ and lower bounded by $-N$. We further assume that capacities are positive.

\paragraph{Directed Graphs.} Let $G=(V,E)$ be a directed graph. In this paper, $G$ can be a multigraph containing parallel edges.
Let $\ell_G : E \rightarrow \bbR$ denote the edge length (or \emph{cost}, in the minimum-cost-flow terminology) function of $G$. A path from vertex $v$ to vertex $w$ is called a $(v,w)$-path. For any path $P$, let $\ell_G(P) = \sum_{e \in P} \ell_G(e)$. The \emph{distance} between vertices $v$ and $w$ is $\dist_G(v, w) = \min_{P: (v,w)\textrm{-path}} \ell_G(P)$.
A \emph{ball} of radius $r$ around a vertex $v$ is $\ball_G(v,r) = \{ w \mid \dist_G(v,w) \le r\}$.
We further define $\dist_G(v,w) = \infty$ if vertex $w$ cannot be reached from vertex $v$.

Let $U_G : E \rightarrow \reals_{>0}$ denote the edge capacity function of $G$. 
Sometimes we use $U(e)$ to denote the capacity of edge $e$ when $G$ is explicitly mentioned.
We further define the directed degrees of a vertex $v$ in $G$, denoted by $\degi_G(v)$ and $\dego_G(v)$, as $\sum_{(w,v) \in E}U_G(w,v)$ and $\sum_{(v,w) \in E}U_G(v,w)$, respectively.
Namely, they represent the total capacities of edges incoming to $v$ and outgoing from $v$.
We use $\deg_G(v) = \degi_G(v) + \dego_G(v)$ to denote the total capacity of all edges incident to $v$.

\paragraph{Flows.}
A \emph{flow/routing} in $G$ is a function $F:{\cal P}\to \mathbb{R}_{\geq 0}$ (where ${\cal P}$ denotes the set of simple paths in $G$) that assigns each simple path $P$ in $G$ a flow value $F(P) \geq 0$. We define $P$ to be a \emph{flow path} of $F$ if $F(P) > 0$. Further, $P$ is a \emph{$(v,w)$-flow path} of $F$ if $P$ is both a $(v,w)$-path and a flow path of $F$. Let $\path(F)$ denote the set of all paths $P$ with $F(P) > 0$. The \emph{value} of $F$ is denoted by $\vvalue(F) = \sum_{P\in\path(F)}F(P)$.

The \emph{edge representation} of a flow is a function from edges to the amount of flow passing through each edge, i.e., $F(e) = \sum_{P: P \ni e} F(P)$. Since a flow-path representation can usually have superlinear size, whenever we store a flow, we store it in edge representation. The \emph{cost} of a flow is defined as $\cost(F)=\sum_{e\in E}F(e)\cdot \ell(e)$. The \emph{congestion of $F$ on an edge $e$} is $\conge_F(e) = \frac{F(e)}{U_G(e)}$.
We define a flow to be feasible if its congestion does not exceed one.

It is possible to perform a \emph{flow-path decomposition} of any flow given in edge representation to obtain one of its path representations, as stated in the following theorem.

\begin{theorem}[Lemma C.1 of \cite{BernsteinBLST25}]\label{thm:flowpathdecomposition}
    Given a flow $F$ on an $m$-edge graph $G$ with edge capacities and lengths, there is an $O(m\log m)$-time algorithm that computes a representation ${(L_i,\lambda_i,s_i,t_i)}_{i\in[k]}$ such that $F$ can be decomposed into $k$ flow paths, each with value $\lambda_i$ from $s_i$ to $t_i$ and with length $L_i$. 

    Moreover, given any subset $I\subseteq [k]$, there is an algorithm that computes an edge representation routing $\lambda_i$ amount of flow from $s_i$ to $t_i$ for every $i\in I$ such that the total cost is $\sum_{i\in I}L_i\cdot \lambda_i$.
\end{theorem}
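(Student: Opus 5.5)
The plan is to follow the standard flow-path decomposition argument. Phase one peels off flow paths while recording only their summary data $(L_i,\lambda_i,s_i,t_i)$, never the paths themselves. Phase two rebuilds an edge representation for any chosen index set $I$.

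\textbf{Preprocessing.} First, we cancel all cycles of $F$ in $O(m\log m)$ time. This uses the standard dynamic-tree (link-cut tree) cycle-canceling procedure of Sleator--Tarjan. Cycles would not be captured by any path $s_i\to t_i$, and it is convenient to work with an acyclic support. Since we only need a decomposition of the path part, and the statement is about flow paths, we assume $F$ is acyclic after this step (in the intended use, cycles have nonnegative cost or are irrelevant). We then fix a topological order of the support.

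\textbf{Decomposition.} We build the decomposition with a link-cut tree over the support. Each vertex $v$ that still has outgoing flow keeps one current out-edge as its tree parent pointer. Each tree edge stores its residual flow value and its length. We repeatedly take a vertex $s$ that has positive excess (net out-flow) and follow parent pointers to the root $t$, which is a vertex with no remaining out-flow, i.e.\ a sink. Path-aggregate queries give, in $O(\log m)$ amortized time:
\begin{itemize}
\item the bottleneck $\lambda=\min(\text{excess}(s),\min_{e}F(e))$ along the path;
\item the path length $L=\sum_e \ell(e)$ along the path.
\end{itemize}
We then subtract $\lambda$ along the whole path with a lazy path-update. We output $(L,\lambda,s,t)$. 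Every edge that becomes zero is cut, and its tail links to its next out-edge. Each step either saturates an edge or exhausts an excess or deficit, so there are $k=O(m)$ steps, and the total time is $O(m\log m)$. A flow-conservation argument shows the recorded paths sum exactly to $F$.

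\textbf{Reconstruction for a subset $I$.} This part is the main obstacle, since we cannot afford to store the paths explicitly. My approach is to replay the same link-cut-tree process deterministically, which reproduces exactly the same tree states. During the replay we maintain a second lazy accumulator $g$ on tree edges. At step $i$, if $i\in I$, we add $\lambda_i$ along the current root path. When an edge is cut, we flush its accumulated $g$-value into an output array $f_I(e)$. At the end we flush the values on all remaining tree edges. This gives an edge representation $f_I$ in $O(m\log m)$ time. By construction it equals $\sum_{i\in I}\lambda_i\cdot\mathbbm{1}_{P_i}$, so it routes $\lambda_i$ from $s_i$ to $t_i$, and its cost is $\sum_{i\in I}\lambda_i L_i$. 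The delicate points are the following. The replay must be made deterministic, with the same choice of next excess vertex and next out-edge, so that the paths coincide. The lazy additions must also be pushed down correctly before each cut, which is the standard link-cut bookkeeping.
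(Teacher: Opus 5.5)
The paper gives no proof of this statement; it imports it as a black box from Lemma C.1 of \cite{BernsteinBLST25}, so there is no in-paper argument to compare against. Your proposal is a valid self-contained proof along the standard dynamic-tree lines, and it achieves the claimed bounds:
\begin{itemize}
\item Sleator--Tarjan cycle removal in $O(m\log m)$.
\item Peeling paths off a parent-pointer forest using \textsc{FindMin}, a path-sum of lengths, and a lazy path-subtract. Each step saturates an edge or exhausts an excess, so $k=O(m)$ and the total time is $O(m\log m)$.
\item A deterministic replay with a second lazy accumulator, which outputs exactly $\sum_{i\in I}\lambda_i\mathbbm{1}_{P_i}$ with cost $\sum_{i\in I}\lambda_iL_i$.
\end{itemize}

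One point needs to be stated precisely. After canceling cycles, your records decompose $F'=F-(\text{circulation})$, not $F$. So the sentence ``the recorded paths sum exactly to $F$'' holds only when $F$ is acyclic. Likewise $\sum_i\lambda_iL_i=\cost(F')$, which is at most $\cost(F)$ for nonnegative lengths but is not equal to it in general.

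If you want the literal statement for a cyclic $F$, handle each cycle found during cancellation differently. Such a cycle is a tree path $v\leadsto u$ closed by the edge $(u,v)$, with bottleneck $\lambda$. Record it as the two paths $(\ell(v\leadsto u),\lambda,v,u)$ and $(\ell(u,v),\lambda,u,v)$ instead of discarding it. This keeps $k=O(m)$ and preserves cost exactly.

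That variant, however, loses the property that every path runs from a positive to a negative vertex of $\Dem(F)$. The paper actually relies on that property. For example, in \Cref{lem:cutmatchoneiteration}, discarding the long paths must leave a flow routing some $D'\preceq D$, and the averaging argument there only needs $\sum_i\lambda_iL_i\le\cost(F)$. So your cycle-dropping version is the one that fits how the paper uses the statement; just state that scope explicitly rather than claiming an exact decomposition of $F$.
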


A flow is called \emph{$h$-length bounded} if there exists a path decomposition of the flow such that the length of every flow path is at most $h$. The triple $(\lambda_i,s_i,t_i)_{i\in[k]}$ is called an \emph{implicit flow-path decomposition} if the flow can be decomposed into $k$ flow paths, each from $s_i$ to $t_i$ with value $\lambda_i$.

\medskip\noindent{\textbf{Demands.}}
A (single-commodity) demand $D:V \rightarrow \bbR$ assigns a value to each vertex such that $\sum_{v\in V}D(v)=0$. 
We say the demand $D$ is
\begin{itemize}
    \item \emph{symmetric $h$-length} if for every $u,v$ with $D(u),D(v)\not=0$, we have $\dist_G(u,v)\le h$ and $\dist_G(v,u)\le h$,
    \item \emph{$A$-respecting}: if for any $v\in V$, $|D(v)|\leq A(v)$. 
    \item \emph{$h$-length symmetrically constrained} if there exists a partition of $D$ into $D=D_1+D_2+...D_k$ such that for every $i\in[k]$, $D_i$ is symmetric $h$-length. We call the partition $D_1,D_2,...,D_k$ a \emph{witness} that $D$ is an $h$-length symmetrically constrained demand. 
    \item \emph{$A$-respecting $h$-length symmetrically constrained}: For the witness sequence of demands $D_1,...,D_k$, if for any $v\in V$, $\sum_{i\in[k]}|D_i(v)|\le A(v)$, then we say $D$ is $A$-respecting $h$-length symmetrically constrained. Notice that this is different from saying $D$ is $A$-respecting, since different $D_i$ with different signs might cancel out.
    \item \emph{feasible} if there is a feasible flow routing this demand.
\end{itemize}

We write the size of a demand as $|D|=\sum_{v\in V,D(v)>0}D(v)$.

We call $D$ an \emph{$s$-$t$ demand} if only $D(s)>0,D(t)<0$ are nonzero. 

\paragraph{Moving cuts.} We now give the definition of moving cuts.

\begin{definition}[Length-Constrained Moving Cut]\label{def:movingcut}
An $h$-length moving cut $C: E \mapsto \bbR_{\ge 0}$ assigns to each edge $e$ a non-negative
integer length increase $\ell_C=h\cdot C(e)$.%
We use $\leng(C)=h$ to denote the length parameter of $C$ (as an inherited parameter of the moving cut).
The \emph{size} of $C$ is defined as
\[
|C|=\sum_{e\in E} U(e)\cdot C(e).
\]
We define the degree of a moving cut at a vertex $v$ to be
\[
\deg_{C}(v) = \sum_{e\ni v}U(e)\cdot C(e).
\]
The graph $G-C$ denotes the graph using $\ell_{G}+h\cdot C$ as the length function.
We write $C'\preceq C$ if $C'(u,v)\le C(u,v)$ for all $(u,v)\in E$.
\end{definition}

\subsection{Minimum cost flow.}\label{subsec:MCF}
Given a graph $G=(V,E)$ with edge capacities and lengths and a single-commodity demand $D$,
the \emph{minimum cost flow (MCF)} problem asks for a flow routing $D$ with minimum cost.

Minimum cost flow has the following primal-dual LP.

\begin{minipage}[t]{0.45\textwidth}
\begin{align*}
\text{(P)}\quad &\min \sum_{e\in E} \ell(e)\cdot f(e) \\
\text{s.t.}\quad &\sum_{e\in\delta^+(u)} f(e) - \sum_{e\in\delta^-(u)} f(e) = D(u)
  &&\forall\,u\in V, \\
&0 \le f(e) \le U(e)
  &&\forall\,e\in E.
\end{align*}
\end{minipage}%
\hfill
\begin{minipage}[t]{0.45\textwidth}
\begin{align*}
\text{(D)}\quad &\displaystyle\max \sum_{u\in V} (-D(u))\cdot \phi(u)
   - \sum_{e\in E} U(e)\cdot \tl(e) \\
\text{s.t.}\quad &\phi(v) - \phi(u)\le \ell(e)+\tl(e)
  \quad \forall\,e=(u,v)\in E, \\
&\tl(e) \ge 0\quad 
  \forall\,e\in E.
\end{align*}
\end{minipage}

\newcommand{\Dual}{\mathrm{Dual}}
We can view $\tl:E\to\bbR_{\ge 0}$ as a length increase. We can view $\phi$ as a distance
function (from source to sink). The goal of the dual is to enlarge the distance from source
to sink by applying a small length increase. We write
\[
\Dual(\phi,\tl)=\sum_{u\in V} (-D(u))\cdot \phi(u) - \sum_{e\in E} U(e)\cdot \tl(e).
\]

We say a primal-dual triple $(f,\phi,\tl)$ is a $(\kappa,\lambda)$-\PDMCF{} if
(i) $f$ has congestion at most $\kappa$, and
(ii) $\cost(f)\le \lambda\cdot \Dual(\phi,\tl)$.

The problem of $(\kappa,\lambda)$-\PDMCF{} is: given a graph $G$ with positive, polynomially
bounded integer capacities and lengths, find a $(\kappa,\lambda)$-\PDMCF{}. We denote an oracle solving the $(\kappa,\lambda)$-\PDMCF{} by $\OPDMCF{\kappa,\lambda}$. 

The goal of this paper is to reduce the exact minimum-cost-flow problem on general graphs to $\OPDMCF{\kappa,\lambda}$ on DAGs, and to solve $\OPDMCF{\kappa,\lambda}$ on DAGs. 

When an algorithm makes oracle calls, we say that they are $f$-efficient if 
\begin{itemize}
    \item the cumulative number of vertices in the graphs on which the oracle runs is at most $f\cdot n$,
    \item the cumulative number of edges in the graphs on which the oracle runs is at most $f\cdot m$,
    \item the additional running time is at most $f\cdot m$.
\end{itemize}

\subsection{Projection map.}
We will extensively use the concept of \emph{projections}, introduced in \cite{abs-2511-08551,haeupler2026dag}, throughout this paper.

A graph $G'=(V',E')$ can be assigned a projection map $\pi: V' \to V \cup \{\bot\}$ (where $V$ may be the vertex set of another graph $G=(V,E)$). We write $\pi^{-1} : V \to 2^{V'}$ for the inverse mapping, where $\pi^{-1}(u) = \{ v \in V' \mid \pi(v) = u \}$. For a set $S \subseteq V$, we write $\pi^{-1}(S) = \bigcup_{u \in S} \pi^{-1}(u)$.

For a set $S \subseteq V'$, we write
\[
\pi(S) = \{ \pi(s) \mid s \in S \} \setminus \{ \bot \}.
\]
For a path $(v_1,...,v_k)$ in $G'$, we write $\pi(p)$ for the (potentially valid) path in another graph given by $(\pi(v_1),...,\pi(v_k))$.

\newcommand{\flpi}{\hat\pi^{-1}}
When $G'$ is a DAG, for each $v\in V$, write $\vf,\vl$ for the first and last vertex $v'$ in the topological order of $G'$ with $\pi(v')=v$, respectively. For a demand $D'$ on $G'$, we write $D=\pi(D')$ for the demand $D(v)=\sum_{\pi(v')=v}D'(v')$. For a demand $D$ on $G$, we write the lifted demand from first to last vertex on $G'$ as $D'=\flpi(D)$ where $D'(\vf)=D(v)$ for all $D(v)>0$ and $D'(\vl)=D(v)$ for all $D(v)<0$, and all other nodes have $D'(v)=0$. 

In \cite{abs-2511-08551}, they introduced a useful notion of \emph{path cover}, which we state as follows.

\begin{definition}\label{def:pathcover}
    A \emph{$h_{\cov}$-path cover with diameter $h_\diam$} of a graph $G=(V,E)$ is a graph $G'$ along with a projection map $\pi$ such that
    \begin{itemize}
        \item for every $(u,v)\in E(G')$, we have $(\pi(u),\pi(v))\in E(G)$ with the same edge length, and $\pi(u)\not=\bot$ for every $u\in V(G')$,
        \item for every path $p$ of length at most $h_\cov$ in $G$, there is a path $p'$ in $G'$ with $\pi(p')=p$,
        \item for every strongly connected component (SCC) $C$ of $G'$, $G'[C]$ has diameter at most $h_\diam$.
    \end{itemize}
    The size of the path cover is $|E(G')|$. We say $G'$ is \emph{induced} if for every $(u,v)\in E(G')$ we have $(\pi(u),\pi(v))\in E(G)$, and every SCC of $G'$ maps to an induced subgraph of $G$ under $\pi$. The \emph{width} of the path cover is defined as $\max_{v\in V}|\pi^{-1}(v)|$.
\end{definition}

In \cite{abs-2511-08551}, they only show how to construct a path cover with bounded size, but with unbounded width. We will show in \Cref{sec:pathcover} how to extend their idea to obtain a bounded-width path cover. We summarize the result as follows.
\begin{restatable}{lemma}{pathcover}\label{lem:pathcover}
There exists a deterministic algorithm that, given a directed graph $G=(V,E)$ with non-negative edge weights, a diameter parameter $d$, a parameter $\eps\in (\log\log n/\log n,1)$ and a stretch parameter $q\in \bbN_+$, 
constructs a $qd$-path cover $H$ of $G$ such that
\begin{enumerate}
    \item every SCC of $H$ has strong diameter $d\cdot \exp(O(1/\eps))$,
    \item the width is at most $q\cdot n^{O(\eps)}$,
\end{enumerate}
in $q\cdot m^{1+o(\eps)}$ time.
\end{restatable}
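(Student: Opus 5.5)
The plan is to prove \Cref{lem:pathcover} in two stages. First we build a $d$-path cover (so $q=1$) with diameter $d\cdot\exp(O(1/\eps))$ and width $n^{O(\eps)}$. Then we boost the covering radius to $qd$ by concatenating $q$ copies; this is the concatenation idea of \Cref{lem:concatenating}.

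For the concatenation step, take the base cover $H_0$ with projection $\pi_0$ and form $q$ disjoint copies $H_0^{(1)},\dots,H_0^{(q)}$. Whenever $(\pi_0(x),\pi_0(y))\in E(G)$ with $x\in H_0^{(j)}$ and $y\in H_0^{(j+1)}$, add the edge $(x,y)$ from layer $j$ to layer $j+1$, with the same length. Any path of length at most $qd$ in $G$ splits into at most $q$ subpaths, each of length at most $d$, joined by single edges. Each subpath lifts into a successive copy, and the connecting edges are present, so the whole path lifts. Every inter-layer edge points forward, so the SCCs of the result are exactly the SCCs of the individual copies; the diameter bound and inducedness are therefore preserved. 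The width multiplies by $q$ and the running time by $q$. One caveat is that the number of inter-layer edges could blow up. To avoid this I would instead add, for each $G$-edge $(a,b)$, edges only between the canonical copies, for example the first copy of $b$ in topological order within the next layer, and rely on the base construction's property that from any copy one can reach the copies needed for the next segment. If that fails, I would fall back to a funnelling vertex per original vertex between layers, which costs only an extra factor $q$ in width.

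The core stage is the base cover, built by ball-growing in the style of \cite{abs-2511-08551} with $k=O(1/\eps)$ levels of radii $d_i=d\cdot c^{i}$, where $c=O(1)$. The recursion works as follows. In the current subgraph, pick a vertex $v$ and grow out-balls and in-balls at radii between $d_i$ and $d_{i+1}$. Choose a radius at which the boundary is sparse relative to the interior; a standard averaging argument over the $k$ scales guarantees a layer whose boundary-to-interior ratio is at most $n^{O(\eps)}$. Next, make a copy of the cluster $\ball^{\mathrm{out}}(v,r)\cap\ball^{\mathrm{in}}(v,r)$ together with a buffer of width $d$, so that every path of length at most $d$ touching the core lies entirely inside some copy. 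Arrange the copies into a DAG order: first the pieces cut away before the cluster, then the cluster, then the pieces after it. Finally recurse on the remainders. Each SCC is contained in a single cluster, so its diameter is at most $2d_k=d\exp(O(1/\eps))$, and we take the cluster induced to ensure inducedness.

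The main obstacle is bounding the width. A vertex gets copied whenever it lands in a buffer zone, and without further control this can happen along an unbounded recursion. Following the overview, I would run a multiplicative-weights scheme. Each vertex carries a weight $w(v)$, initially $1$, and volumes in the ball-growing are measured in $w$. Each time $v$ is duplicated, multiply $w(v)$ by $n^{\eps}$. Balls are grown so that the buffer's weight is at most an $n^{-\eps}$ fraction of the core's weight. A potential argument on the total weight, which grows only by a $(1+n^{-\eps}\cdot n^{\eps})$-type factor per level over $O(1/\eps)$ nesting levels, bounds the total weight by $n^{O(1)}$. Hence each vertex is duplicated at most $O(1)/\eps$ times per weight scale, giving width $n^{O(\eps)}$. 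The running time $m^{1+o(\eps)}$ then follows from charging each Dijkstra exploration to the core volume, using a standard balanced-search trick of exploring from both sides.
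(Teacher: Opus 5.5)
Your concatenation stage is fine and is essentially \Cref{lem:q-amplification}. The edge blow-up you worry about is harmless: each $G$-edge has at most $\omega^2$ lifts between consecutive layers, or you can connect only to the representative copy.

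The gap is in the base cover, where your diameter mechanism and your width mechanism cannot both hold.
\begin{itemize}
\item Averaging over $k=O(1/\eps)$ geometric radii $d\cdot c^i$ only guarantees a shell weighing up to $n^{O(\eps)}$ times the core, i.e.\ a thick boundary. Moreover, at small scales a shell contains only $O(1)$ width-$d$ sublayers, so no thin width-$d$ buffer need exist there.
\item Your width argument instead needs the duplicated buffer to weigh at most an $n^{-\eps}$ fraction of the core, to pay for the $n^{\eps}$ multiplier. With width-$d$ layers, each non-stopping layer only certifies mass growth by a factor $1+n^{-\eps}$. Finding such a buffer can therefore take $\Theta(n^{\eps}\log n)$ layers, giving diameter $n^{\Theta(\eps)}d$ rather than $d\exp(O(1/\eps))$.
\item The potential step does not go through as written. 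The recursion depth is not $O(1/\eps)$, so ``a factor $2$ per level over $O(1/\eps)$ levels'' bounds nothing. Also, $O(1/\eps)$ duplications with branching would give $2^{O(1/\eps)}$ copies, not $n^{O(\eps)}$, so the accounting is off.
\item In a directed graph, $G[\mathrm{Ball}^{\mathrm{out}}(v,r)\cap\mathrm{Ball}^{\mathrm{in}}(v,r)]$ need not have SCCs of strong diameter $O(r)$, since shortest paths between its vertices can leave it. The standard fix, taking unions of shortest-path-tree branches into and out of $M$, copies vertices far outside any width-$d$ buffer. So ``cluster plus width-$d$ buffer, always terminal'' undercounts duplication.
\end{itemize}

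The missing idea is the right trade-off together with a potential that survives it.
\begin{itemize}
\item The paper multiplies weights by $\rho=2^{1/\eps}$, not $n^{\eps}$. It grows balls with linear radii $i\cdot d$ and stops at the first $i$ with $\mu(B_i)\le(1+\tau)\mu(B_{i-1})$, where $\tau=1/(1000\rho^3)$. This gives $O(\rho^3\log n)$ layers and diameter $O(\rho^3\log n)\,d$.
\item It tracks $\Phi=\sum_{\text{active }(A,\mu)}\mu(A)^2$ rather than total weight.
\item In Case~1 (out-ball mass at most $(1-\gamma)\mu(A)$), it recurses on the ball and its complement. These overlap only in the thin ring, whose weights are multiplied by $\rho$. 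The ring is at most a $1/(10\rho^2)$ fraction of the smaller side, so the $-2xy$ cross term dominates the ring's contribution.
\item Only in Case~2 (the balls are almost everything) does it form the terminal strongly connected middle piece. Both remainders then have mass at most $2\gamma\mu(A)$, so even after multiplying \emph{all} their weights by $\rho$ they have mass at most $\mu(A)/50$.
\item Hence $\Phi\le n^2$ throughout (\Cref{lem:l2potential}), and every active vertex weight is at most $n$. A vertex is therefore reweighted at most $\log_\rho n=\eps\log_2 n$ times. Each reweighting at most doubles its copies, plus one terminal copy, so the width is $2^{O(\eps\log n)}=n^{O(\eps)}$ (\Cref{lem:width-bound}).
\end{itemize}
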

\section{Boosting Approximate Minimum Cost Flow}\label{sec:boosting}

In this section we will show how to get an exact minimum cost flow by oracle calls to approximate minimum cost flow.

\boosting*
In \Cref{subsec:exactflowwithapproximatecost}, we will show that if we have an oracle solving approximate primal-dual to a \emph{MCF LP}, then there is a simple way to get an exact algorithm. This boosting holds even when the MCF LP only routes a small fraction of the required demands.

However, a $(\kappa,1+\eps)$-\PDMCF{} oracle does not directly give a solution to the MCF LP. In \Cref{subsec:flowboosting}, we will show how to use MWU to turn the $(\kappa,1+\eps)$-\PDMCF{} oracle into a MCF LP solution and thus get the exact algorithm.

\begin{proof}[Proof of \Cref{thm:boosting}]
    By combining \Cref{lem:weak-to-strong-via-cost-scaling,lem:weakLPboosting}, we get the required complexity.
\end{proof}

\subsection{Exact Flow with Approximate Cost}\label{subsec:exactflowwithapproximatecost}
\paragraph{Primal-dual pair of minimum cost flow.} We call following primal dual LP for minimum cost flow the \emph{MCF LP}, as defined in the preliminary.

\begin{minipage}[t]{0.45\textwidth}
\begin{align*}
\text{(P)}\quad &\min \sum_{e\in E} \ell(e)\cdot f(e) \\
\text{s.t.}\quad &\sum_{e\in\delta^+(u)} f(e) - \sum_{e\in\delta^-(u)} f(e) = D(u)
  &&\forall\,u\in V,\quad \\
&0 \le f(e) \le U(e)
  &&\forall\,e\in E
\end{align*}
\end{minipage}%
\hfill
\begin{minipage}[t]{0.45\textwidth}
\begin{align*}
\text{(D)}\quad &\displaystyle\max \sum_{u\in V} (-D(u))\cdot \phi(u)
   - \sum_{e\in E} U(e)\cdot \tilde\ell(e) \\
\text{s.t.}\quad &\phi(v) - \phi(u)\le \tilde\ell(e)+\ell(e)
  \quad \forall\,e=(u,v)\in E, \\
&\tilde\ell(e) \ge 0\quad 
  \forall\,e\in E.
\end{align*}
\end{minipage}

\paragraph{Residual minimum cost flow problem.} Given a primal-dual pair $f,\phi,\tilde\ell$, we define the residual problem $(G_{\res},D_{\res})=(G,D)_{f,\phi}$ as follows (notice that $\tilde\ell$ should also be in the dual solution, but we omit it here because the residual problem only depends on $f$ and $\phi$). 

Let $G_f$ be the residual graph of $G$ on $f$. \footnote{Notice that $G_f$ might contain parallel edges: for an edge $e\in E$, if $f$ sends $x$ amount of flow on $e$, then the residual graph will create a reversed edge with cost $-\ell_G(e)$ and capacity $x$, which might be in parallel with some existing edges with different costs.}
Let $(G_f)_{\phi}$ be the graph $G_f$ applying $\phi$ as a potential to change the edge length (cost). To be precise, for every edge $(u,v)\in E((G_f)_{\phi})$, we have $\ell_{(G_f)_\phi}(u,v)=\ell_G(u,v)-(\phi(v)-\phi(u))$ (for convenience, if $(u,v)\not\in E$ but $(u,v)\in E(G_f)$, we define $\ell_G(u,v)=-\ell_G(v,u)$).

Suppose $E_{\neg}$ contains all edges in $(G_f)_\phi$ that has negative length, then $G_{\res}$ is defined as reversing all edges $(u,v)\in E_{\neg}$ (and change the weight to $-\ell_{(G_f)_\phi}(u,v)$). Notice that $G_{\res}$ has only non-negative lengths. $D_{\res}$ is defined as the demand such that every edge $(u,v)\in E_{\neg}$ contribute $U(u,v)$ unit source demands $+1$ to $v$ and unit sink demands $-1$ to $u$. 

The following lemma shows that the exact minimum cost flow can be solved by finding a flow plus an optimal flow for the residual problem.

\begin{lemma}\label{lem:residual}
    Suppose $f,\phi,\tilde\ell$ is a feasible primal-dual pair for the minimum cost flow problem $(G,D)$, and $f^*_{\res}$ is a minimum cost flow of $(G,D)_{f,\phi}$, then $f+f^*_{\res}$ is a minimum cost flow of $(G,D)$.
\end{lemma}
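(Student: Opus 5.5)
We prove the lemma through the standard optimality criterion for minimum cost flow: a feasible flow routing $D$ is optimal if and only if its residual graph has no negative-cost cycle. The plan has three steps. First, translate $f^*_{\res}$ into a flow $g$ on $G$. Second, show that $g = f+f^*_{\res}$ is feasible and routes $D$. Third, show that $G_g$ has no negative cycle, using the optimality of $f^*_{\res}$ in $G_{\res}$.

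\textbf{Making the sum precise.} We interpret $f+f^*_{\res}$ through the natural correspondence $G\to G_f\to (G_f)_\phi\to G_{\res}$. Each edge of $G_{\res}$ is either
\begin{itemize}
\item[(a)] a residual edge of nonnegative reduced length, kept unchanged, or
\item[(b)] the reversal $(v,u)$ of a negative residual edge $(u,v)\in E_{\neg}$.
\end{itemize}
Define a flow $h$ on $G_f$ as follows. On type (a) edges, $h$ equals $f^*_{\res}$. On each $(u,v)\in E_{\neg}$, set $h(u,v)=U(u,v)-f^*_{\res}(v,u)$. In words, we start by saturating every negative residual edge, and the residual solution may "un-saturate" it.

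The demand $D_{\res}$ places $+U(u,v)$ at $v$ and $-U(u,v)$ at $u$ for each $(u,v)\in E_{\neg}$. This is exactly the imbalance needed for $h$ to be a circulation in $G_f$. Since $0\le f^*_{\res}\le U$, we also have $0\le h\le U_{G_f}$. Pushing a feasible circulation $h$ of $G_f$ onto $f$ gives a flow $g=f\oplus h$ on $G$. This $g$ is feasible and routes $D$, because $f$ is feasible by assumption.

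\textbf{No negative cycle.} Potentials do not change the cost of cycles, so we may measure cycle costs with $\phi$-reduced lengths. Consider the residual graph $G_g$ and compare it to the residual graph $(G_{\res})_{f^*_{\res}}$, reduced by $\phi$. We claim the two coincide edge-by-edge as capacitated, weighted multigraphs. Two checks establish this:
\begin{itemize}
\item[(1)] A type (a) edge with residual capacity $x$ appears in both graphs with the same reduced length.
\item[(2)] A negative edge $(u,v)$ of reduced length $-c$, carrying $y=f^*_{\res}(v,u)$ in $G_{\res}$, leaves residual capacity $U-y$ on the forward direction $(v,u)$ with length $c$. The reverse residual edge $(u,v)$ then has capacity $y$ and length $-c$. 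Meanwhile, in $G_g$, the residual edge $(u,v)$ has capacity $U-h(u,v)=y$ and reduced length $-c$, and its reverse has capacity $h(u,v)=U-y$ and length $c$.
\end{itemize}
Hence every cycle in $G_g$ is a cycle of the same cost in the residual graph of $f^*_{\res}$ on $G_{\res}$. That graph has no negative cycle because $f^*_{\res}$ is a minimum-cost flow. Therefore $g$ is optimal.

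\textbf{Main obstacle.} The only delicate point is the bookkeeping of parallel and antiparallel edges. The graph $G_f$ may contain parallel copies, and an original edge $e$ gives rise to both a forward and a backward residual copy. We handle this by working edge-copy by edge-copy: each original edge $e$ with flow $f(e)$ is treated as two residual arcs, and composition is done on each arc separately. Finally, the dual part $\tilde\ell$ and the feasibility of the dual are not needed. The lemma only uses that $f$ is a feasible flow routing $D$.
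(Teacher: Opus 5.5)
Your proof is correct and follows the same route as the paper's. Namely, $f+f^*_{\res}$ is a feasible flow routing $D$, and its residual graph is the residual graph of $f^*_{\res}$ in $G_{\res}$ up to the potential $\phi$ (which does not change cycle costs), so it has no negative cycle and the negative-cycle optimality criterion applies. You also supply the bookkeeping the paper leaves implicit: pre-saturating $E_{\neg}$ to make the sum precise, and the arc-by-arc comparison, which is exact between $(G_f)_h$ and $(G_{\res})_{f^*_{\res}}$ and holds for $G_g$ after merging parallel arcs of equal length; and you are right that dual feasibility of $(\phi,\tilde\ell)$ is never used.
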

\begin{proof}
    Since $f^*_{\res}$ is a minimum cost flow of $(G,D)_{f,\phi}$, we have that $(G_f)_{f^*_{\res}}=G_{f+f^*_\res}$ has no negative cycle. Notice that $f+f^*_{\res}$ is a feasible flow routing $D$ according to definition, thus, $f+f^*_{\res}$ is optimal.
\end{proof}

The following lemma shows that the residual problem has a small optimal value if the primal-dual pair is a good approximation. 

\begin{lemma}\label{lem:smallresidual}
    Suppose $f,\phi,\tilde\ell$ is a feasible primal-dual pair of $(G,D)$ with primal-dual value difference $\Delta$, then the optimal cost of the minimum cost flow instance $(G,D)_{f,\phi}$ is at most $\Delta$. More precisely, we have that
\[
\sum_{\substack{
e \in (G_f)_\phi \\
\ell_{(G_f)_\phi}(e) < 0
}}\left|\ell_{(G_f)_\phi}(e)\right|\le \Delta
\]
\end{lemma}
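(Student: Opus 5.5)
The plan is to write the duality gap $\Delta$ as a sum of non-negative per-edge terms and match each term with the negative residual edges coming from that edge. I read the displayed sum as weighted by residual capacities: each edge $e\in E_{\neg}$ contributes $U_{G_f}(e)\cdot|\ell_{(G_f)_\phi}(e)|$. This weighted sum is also exactly the cost of a trivial solution to the residual instance $(G,D)_{f,\phi}$. Each reversed edge $(v,u)$ carries demand $U_{G_f}(u,v)$ from $v$ to $u$, and we can route that demand on the reversed edge itself. The edge has capacity $U_{G_f}(u,v)$ and length $|\ell_{(G_f)_\phi}(u,v)|$, so this flow is feasible. Hence the optimal residual cost is at most the weighted sum, and it suffices to prove the displayed inequality.

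\textbf{Step 1: rewrite $\Delta$.} For $e=(u,v)\in E$, write the reduced length as $r(e)=\ell(e)-(\phi(v)-\phi(u))$. Since $f$ routes $D$, flow conservation gives
\[
\sum_{u\in V} D(u)\phi(u)=\sum_{e=(u,v)\in E} f(e)\bigl(\phi(u)-\phi(v)\bigr).
\]
Substituting this into $\Delta=\cost(f)-\Dual(\phi,\tl)$ yields
\[
\Delta=\sum_{e\in E}\bigl(f(e)\,r(e)+U(e)\,\tl(e)\bigr).
\]
The dual constraints say $\tl(e)\ge \max(0,-r(e))$.

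\textbf{Step 2: per-edge comparison.} Each $e=(u,v)\in E$ produces two residual edges. The forward edge $(u,v)$ has capacity $U(e)-f(e)$ and reduced length $r(e)$. The backward edge $(v,u)$ has capacity $f(e)$ and reduced length $-r(e)$. There are two cases.
\begin{itemize}
\item If $r(e)\ge 0$, only the backward edge can be negative, with weighted contribution $f(e)\,r(e)$. The $\Delta$-term is at least $f(e)\,r(e)$ because $U(e)\tl(e)\ge 0$.
\item If $r(e)<0$, only the forward edge is negative, with weighted contribution $(U(e)-f(e))|r(e)|$. The $\Delta$-term is at least $f(e)r(e)+U(e)|r(e)|=(U(e)-f(e))|r(e)|$.
\end{itemize}
Feasibility $0\le f\le U$ ensures both residual capacities are non-negative. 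Summing over $e\in E$ gives the displayed inequality.

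\textbf{Main obstacle.} The argument is short; the only real risk is bookkeeping. I expect to spend the care on three points. First, the sign of $\phi$ must match the dual objective $\sum_u(-D(u))\phi(u)$ and the paper's convention $\ell_{(G_f)_\phi}(u,v)=\ell_G(u,v)-(\phi(v)-\phi(u))$. Second, parallel residual edges must be treated separately rather than merged. Third, the demand in $D_{\res}$ must be read with the residual capacity $U_{G_f}$ of the reversed edge, so that the trivial routing in the first paragraph is well defined.
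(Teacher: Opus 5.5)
Your proof is correct and is essentially the paper's argument: route $D_{\res}$ on the reversed negative edges themselves, so the residual optimum is at most the capacity-weighted sum $\sum_{e\in E_{\neg}} U_{G_f}(e)\,|\ell_{(G_f)_\phi}(e)|$ (this weighted sum is what the paper's proof actually bounds, matching your reading of the display), and then compare it with the gap using flow conservation and the dual constraint $\tilde\ell(e)\ge \max(0,-r(e))$. Your explicit per-edge case split is slightly more careful than the paper's write-up, which adds $\sum_{(u,v)\in E} f(u,v)\,r(u,v)$ over all edges and leaves implicit that, when $r(e)<0$, the negative term $f(e)r(e)$ is absorbed because $U(e)\tilde\ell(e)+f(e)r(e)\ge (U(e)-f(e))|r(e)|$.
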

\begin{proof}
    We use the same notation $G_f,(G_f)_\phi,E_{\neg},G_{\res},E_{\res}$ as in the definition of residual problem. For convenience, we write $\ell(\cdot),U(\cdot)$ as the length and capacity function of the original graph $G$. We write $\ell_\phi(u,v)$ as the length function of $(G_f)_\phi$. We write $U_f(\cdot )$ as the capacity function on the residual graph $G_f$. 
    
    By definition, we can route $D_{\res}$ by sending $U_f(v,u)$ units of flow for each edge $(u,v)\in E_{\neg}$. Thus, the optimal cost of $(G_{\res},D_{\res})$ is at most $\sum_{e\in E_{\neg}}(-U_f(e)\cdot \ell_\phi(e))$. For each $e\in E_{\neg}$, there are two cases,
    \begin{enumerate}[label=(\roman*)]
        \item if $e=(u,v)\in E$, then since $\phi,\tilde\ell$ is a feasible dual solution, we have $-\ell_\phi(e)=-(\ell(e)-\phi(v)+\phi(u))\le \tilde\ell(e)$,
        \item if $e=(v,u)\not\in E$ but $e\in E(G_f)$ (which means $f$ sends flows by $(u,v)$ in $G$ so $(u,v)\in E$), then we have $-\ell_\phi(v,u)=\ell(u,v)-(\phi(v)-\phi(u))$. Notice that in this case we must have $U_f(v,u)=f(u,v)=f(e)$. 
    \end{enumerate}
    Thus, we have

    \begin{align*}
        \sum_{e\in E_{\neg}}(-U_f(e)\cdot \ell_\phi(e)) & \le \sum_{e\in E}U(e)\cdot \tilde\ell(e)+\sum_{(u,v)\in E}f(u,v)\cdot (\ell(u,v)-\phi(v)+\phi(u))\\
        & = \sum_{e\in E}f(e)\cdot \ell(e)+\sum_{(u,v)\in E}f(u,v)\cdot (\phi(u)-\phi(v))+\sum_{e\in E}U(e)\cdot \tilde\ell(e)\\
        & = \sum_{e\in E}f(e)\cdot \ell(e)+\sum_{u\in V}\phi(u)\cdot \left(\sum_{e\in\delta^+(u)}f(e)-\sum_{e\in\delta^-(u)}f(e)\right)+\sum_{e\in E}U(e)\cdot \tilde\ell(e)\\
        & = \sum_{e\in E}f(e)\cdot \ell(e)-\left(\sum_{u\in V}\phi(u)\cdot (-D(u))-\sum_{e\in E}U(e)\cdot \tilde\ell(e)\right)
    \end{align*}
    Notice that this is exactly the difference between primal and dual values of $f,\phi,\tilde\ell$.
\end{proof}

\Cref{lem:smallresidual} implies that, if we have an algorithm that gives $1\pm\eps$ approximate primal dual pair for minimum cost flow, then their is an exact algorithm using $O(\log n)$ calls of the approximation algorithm whenever $\eps<0.5$.

In many situations, we cannot route $D$ perfectly. The following lemma shows how to deal with this case.

\newcommand{\OPT}{\mathsf{OPT}}

\begin{lemma}[Boosting fractional $(1\pm\eps)$ primal--dual to exact MCF]
\label{lem:ApxflowApxcostBoosting}

Suppose there is an algorithm $\cA$ which, given a non-negative minimum-cost flow instance $(G,D)$,
returns
(i) a feasible flow $f$ routing a sub-demand $D'\preceq D$ with $|D'|\ge |D|/\kappa$ (where $\kappa\ge 1$), and
(ii) a dual pair $(\phi,\tilde\ell)$
such that $(f,\phi,\tilde\ell)$ is a $(1\pm\eps)$ approximate primal--dual pair for routing $D'$ in $G$ with $0\le \eps\le 0.1$.
Then there is an algorithm that computes an \emph{exact} minimum-cost flow routing all of $D$ in $G$
using at most $\tO{\kappa}$-efficient calls to $\cA$.
\end{lemma}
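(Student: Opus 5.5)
The plan has an inner loop that covers the whole demand with approximate flows, and an outer loop that repairs the result using the residual lemmas.

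\textbf{Inner loop (one phase).} Start with $G^{(0)}=G$ and $D^{(0)}=D$. In round $i$, call $\cA$ on $(G^{(i)},D^{(i)})$. It returns $f_i$ routing $D_i'\preceq D^{(i)}$ with $|D_i'|\ge |D^{(i)}|/\kappa$, together with duals $(\phi_i,\tilde\ell_i)$. Form the residual graph of $G^{(i)}$ with respect to $f_i$ and reweight it by $\phi_i$. Then clamp every negative reduced length up to $0$ to obtain $G^{(i+1)}$, and set $D^{(i+1)}=D^{(i)}-D_i'$. By the potential shift, any flow $g$ routing $D^{(i+1)}$ satisfies $\cost_{G^{(i)}_{f_i}}(g)=\cost_{(G^{(i)}_{f_i})_{\phi_i}}(g)+\langle\text{potential terms}\rangle$. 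These potential terms depend only on the demand, so path costs change by demand-dependent constants only. Clamping then lowers costs by at most $\sum_e g(e)\,|\ell^-_{(G^{(i)}_{f_i})_{\phi_i}}(e)|$. Since $|D^{(i)}|$ shrinks by a factor $(1-1/\kappa)$ each round and capacities are integers bounded by $N=\poly(n)$, after $O(\kappa\log(nN))$ rounds the remaining demand is below $1/\poly(n)$. The final leftover can be handled directly, either by routing it in the residual or by rounding, using integrality. The composed flow $F=\sum_i f_i$ (each $f_i$ interpreted in the original residual graph) then routes all of $D$.

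\textbf{Error accounting.} This is the step I expect to be the main obstacle. I want to show that the total "defect" is small: the sum, over edges of the residual graph $G_F$ reweighted by a combined potential $\Phi=\sum_i\phi_i$, of capacity times negative reduced length should be at most $O(\eps)\cdot\OPT(G,D)$. Per round, \Cref{lem:smallresidual} bounds the negative mass created in round $i$ by the primal--dual gap $\Delta_i\le 2\eps\cdot\cost(f_i)$, measured in $G^{(i)}$. The clamping is what prevents earlier negative edges from being reused cheaply. So an edge's negative reduced length in the final reweighting is at most the sum of the negative parts it acquired across rounds, and the total defect is at most $\sum_i\Delta_i$. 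It remains to bound $\sum_i\cost_{G^{(i)}}(f_i)$ by $O(\OPT)$. The argument: $\OPT(G^{(i)},D^{(i)})$ is at most $\OPT$ of the previous instance minus $\cost(f_{i-1})$, plus the clamping loss, using exchange in residual graphs up to the potential constants. Summing gives $\sum_i\cost(f_i)\le (1+O(\eps))\OPT$, and therefore a defect of at most $O(\eps)\OPT$. Making this telescoping rigorous with the potential offsets is the delicate part. If it breaks, I would fall back to bounding each round's contribution against the cost of routing $D_i'$ optimally in $G$ and summing via $\sum_i |D_i'|=|D|$.

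\textbf{Outer loop.} With $\Phi$ as potential, the residual instance $(G,D)_{F,\Phi}$ has nonnegative lengths and optimum at most the defect, i.e., at most $O(\eps)\OPT\le \OPT/2$. This is exactly the bound of \Cref{lem:smallresidual} applied to the negative-edge sum. Recurse on this residual instance, and by \Cref{lem:residual} adding its optimal flow to $F$ gives an optimal flow for $(G,D)$. Since the optimum halves each phase and costs are integral and bounded by $\poly(n)$, after $O(\log n)$ phases the optimum is $0$. At that point a feasible flow of zero cost routing the residual demand can be found by a max-flow computation on the zero-length edges, which is itself one more call. Each phase makes $O(\kappa\log n)$ calls on graphs of size $O(m)$, and the bookkeeping outside the calls is linear time. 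Overall this gives $\tO{\kappa}$-efficient calls.
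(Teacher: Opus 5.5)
Your architecture matches the paper's: cover $D$ by oracle calls on residual, $\phi_i$-reweighted, clamped graphs, then recurse on $(G,D)_{F,\Phi}$, whose optimum is bounded by the clamping defect via \Cref{lem:smallresidual} and \Cref{lem:residual}. However, the step you flag as delicate is a genuine gap, not a technicality. Reweighting by $\phi_i$ adds $\sum_u \phi_i(u)D^{(i+1)}(u)$ to the cost of every flow routing $D^{(i+1)}$, and for a general multi-terminal demand nothing controls this term. The dual $(\phi_i,\tilde\ell_i)$ certifies only the routed sub-demand $D_i'$. So, for example, a source $b$ of $D$ with $D_i'(b)=0$ and no incoming edges can be given an arbitrarily large $\phi_i(b)$ without losing feasibility or dual value. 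Then every path leaving $b$ in $G^{(i+1)}$ is shifted by a huge constant $K$, and $\OPT_{G^{(i+1)}}(D^{(i+1)})$ is inflated accordingly. The next $(1\pm\eps)$ answer may then have a primal--dual gap, and hence a clamping defect, proportional to $\eps K$. Consequently neither $\sum_i \cost_{G^{(i)}}(f_i)\le (1+O(\eps))\OPT$ nor the $O(\eps)\OPT$ defect bound follows, and the outer loop need not make progress. Your fallback does not help either. The guarantee for $f_i$ is relative to the shifted, clamped graph $G^{(i)}$, not to $\OPT_G(D_i')$, and in a capacitated graph $\sum_i\OPT_G(D_i')$ need not be $O(\OPT_G(D))$.

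The missing idea is the paper's reduction to a single source and sink. Add a super-source $s$ and super-sink $t$, joined to the terminals by zero-length edges of capacity $|D(v)|$. Then every $D_i'$ and $D^{(i+1)}$ is a nonnegative multiple of $\mathbf{1}_s-\mathbf{1}_t$, so the offset equals $|D^{(i+1)}|\,(\phi_i(s)-\phi_i(t))$. This is $\le 0$, because nonnegative costs force the dual value $|D_i'|(\phi_i(t)-\phi_i(s))-\sum_e U(e)\tilde\ell_i(e)$ to be nonnegative. (The paper alternatively resets $\phi_i$ to distances from $s$ under $\ell+\tilde\ell_i$, which preserves feasibility and does not decrease the dual value.) Reweighting then never increases the cost of a residual $s$--$t$ flow, and your exchange argument yields $\OPT_{G^{(i+1)}}(D^{(i+1)})\le \OPT_{G^{(i)}}(D^{(i)})-(1-O(\eps))\cost_{G^{(i)}}(f_i)$. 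This is \Cref{lem:sumofflows}, after which the defect bound is \Cref{lem:recursivecost}.

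Separately, your termination step is not justified. The residual instance $(G,D)_{F,\Phi}$ has fractional residual capacities and real-valued reduced lengths, so its optimum is not integral and need not reach exactly $0$. Also, one call to the oracle routes only a $1/\kappa$ fraction of the demand, so it cannot serve as an exact max-flow computation. The paper instead stops once the residual optimum is below $1/\poly(n)$ and finishes with the rounding lemma of \cite{BrandLNPSS0W20}.
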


\begin{proof}
We describe a recursive procedure $\cB(G,D)$ that returns an exact optimal flow routing $D$ in $G$. We assume the demand $D$ contains only one source $s$ and one sink $t$ for technical reasons. This assumption is w.o.l.g. since we can add a super source $s$ and super sink $t$ connecting to every vertex with positive/negative demand with the corresponding capacities and length $0$. 

The algorithm initially sets $G^{(0)}=G$ and $D^{(0)}=D$. Starting from $i=0$, the algorithm will repeat the following procedure to compute flows.

For each $i$, the algorithm makes one call to $\cA$ to compute a flow $f_i$ routing $\bar{D}^{(i)}\preceq D^{(i)}$ and an approximate dual $(\phi_i,\tilde\ell_i)$. Set $G^{(i+1)}=\left((G^{(i)}_{f_i})_{\phi_i}\right)_{\ge 0}$, i.e., the residual graph after routing the flow $f_i$ and applying the potential function $\phi_i$, then round each negative weight edge up to $0$. Set $D^{(i+1)}=D^{(i)}-\bar{D}^{(i)}$. 

We repeat the above operation until $D^{(i)}$ has sufficiently small size (less than $1/\poly(n)$), in which case we let $f_i$ be an arbitrary flow routing $D^{(i)}$ and end the iterations.

Let $f^*=\sum_{i\ge 0}f_i$ and $\phi^*=\sum_{i\ge 0}\phi_i$. Consider the residual instance $(G_{\res},D_{\res})=(G,D)_{f^*,\phi^*}$, we recursively call $\cB(G_{\res},D_{\res})$ to find a flow $f'$ routing $D_{\res}$ on $G_{\res}$. The final flow we return is $f^*+f'$. 

The base case of the recursion is when the residual instance $\cB(G_{\res},D_{\res})$ can be routed by cost sufficiently small (less than $1/\poly(n)$), then we can let $f'$ be an arbitrary flow routing $D_{\res}$, and use a standard rouding lemma\footnote{See Lemma 8.10 of \cite{BrandLNPSS0W20}} to get an exact optimal solution.

\paragraph{Correctness.} The correctness for the non-base case follows from \Cref{lem:residual} since $f^*,\phi^*$ is a feasible primal and dual solution and $f'$ is an optimal solution for $(G_{\res},D_{\res})$. In the base case, the flow is a $(1+1/\poly(n))$ approximation to the optimal solution, and the correctness is by the rounding lemma in \cite{BrandLNPSS0W20}.

\paragraph{Complexity.} The complexity relies on the following lemma.

\begin{lemma}\label{lem:sumofflows}
    Let the optimal cost for routing $D$ on $G$ be $\OPT_G(D)$. We have that
    \[\sum_{i\ge 0}\ell_{G^{(i)}}(f_i)\le 2\cdot \OPT_G(D)\]
\end{lemma}
\begin{proof}
    We will prove by induction that 
    \begin{equation}\label{eq:optsize}
        \OPT_{G^{(i)}}(D^{(i)})\le \OPT_G(D)-0.9\cdot \sum_{j<i}\ell_{G^{(j)}}(f_j)
    \end{equation}
    Notice that if \Cref{eq:optsize} is correct, then we are done since $\OPT_{G^{(i)}}(D^{(i)})\ge 0$ ($G^{(i)}$ always contains positive edge lengths).

    When $i=0$, we have that $\OPT_{G^{(i)}}(D^{(i)})=\OPT_G(D)$, so \Cref{eq:optsize} holds.

    Suppose $i>0$. We have that 
    \[\OPT_{G^{(i-1)}}(D^{(i-1)})\le \OPT_G(D)-0.8\cdot \sum_{j<i-1}\ell_{G^{(j)}}(f_j)\]
    by induction hypothesis. Let $\hat{f}$ be the optimal flow routing $D^{(i-1)}$ in $G^{(i-1)}$. Recall that $f_{i-1}$ routes $\bar{D}^{(i-1)}$ in $G^{(i-1)}$. Thus, there is a flow $f=\hat{f}-f_{i-1}$ routing $D^{(i)}=D^{(i-1)}-\bar{D}^{(i-1)}$ in the residual graph $G^{(i-1)}_{f_{i-1}}$, with cost at most $\OPT_{G^{(i-1)}}(D^{(i-1)})-\ell_{G^{(i-1)}}(f_{i-1})$. Now consider the cost of $f$ on the graph $(G^{(i-1)}_{f_{i-1}})_{\phi_{i-1}}$, we claim that the cost does not increase after applying $\phi_{i-1}$: since we are working on single source $s$ and single sink $t$, the cost will not increase if $\phi_{i-1}(t)\ge \phi_{i-1}(s)$, which can be guaranteed w.o.l.g. if we set $\phi_{i-1}$ to be the distance function with source $s$ and non-negative edge weights $\ell_{G^{(i-1)}}+\tilde\ell_{i-1}$. This adjustment to $\phi_{i-1}$ makes the difference between $\phi(t)$ and $\phi(s)$ the maximum, and does not change $\tilde\ell_{i-1}$, so the dual solution does not decrease. Finally, we have that $G^{(i)}=\left((G^{(i-1)}_{f_{i-1}})_{\phi_{i-1}}\right)_{\ge 0}$, for which the total cost increase for $f$ is at most $2\eps\cdot \ell_{G^{(i-1)}}(f_{i-1})$ according to \Cref{lem:smallresidual}. Thus, we get \Cref{eq:optsize}.
\end{proof}

We also need the following lemma to bound the cost for the recursive call.

\begin{lemma}\label{lem:recursivecost}
    Let $E_{\neg}$ denote the negative edges of $G_{f^*,\phi^*}$, we have that
    \[\sum_{e\in E_{\neg}}\left|\ell_{G_{f^*,\phi^*}}(e)\right|\le 3\eps \cdot \sum_{i\ge 0}\ell_{G^{(i)}}(f_i)\]
\end{lemma}
\begin{proof}
    Let $G^{(k)}$ be the graph in the last iteration, we have that $G^{(k)}$ only contains non-negative cost edges. For each edge $e$ of $G^{(k)}$, the difference 
    \[U_{G^{(k)}}(e)\cdot |\ell_{G^{(k)}}(e)-\ell_{G_{f^*,\phi^*}}(e)|\]
    comes from a rounding up operation that rounds some negative edges to $0$ in some iterations. Consider the $i$-th iteration, the total amound of rounding up depends on the total size of the negative edges
    \[\sum_{\ell_{G^{(i)}_{\phi_i}}(e)<0}U_{G^{(i)}_{\phi_i}}(e)\cdot |\ell_{G^{(i)}_{\phi_i}}(e)|\]
    which according to \Cref{lem:smallresidual}, is at most 
    \[3\eps \cdot \ell_{G^{(i)}}(f_i)\]
    Summing over all iterations gives the lemma.
\end{proof}

\Cref{lem:recursivecost} combines with \Cref{lem:sumofflows} gives an upper bound on the optimal cost of the recursive instance to be $6\eps\cdot \OPT_G(D)$. Thus, the recursion has depth at most $O(\log _{1/\eps}n)$ given that $\OPT_G(D)=\poly(n)$ as our assumption. Moreover, the algorithm makes at most $O(\kappa\log n)$ calls to $\cA$ since each call to $\cA$ reduce the remaining demand $D^{(i)}$ by a fraction of $(1-1/\kappa)$. Thus, the total number of calls to $\cA$ is $\tO{\kappa}$ as long as $\eps<0.1$.

\end{proof}

\subsection{Flow Boosting}\label{subsec:flowboosting}
In this section, we show how to use flow boosting to obtain a $(1+\eps)$ approximation to the
following primal-dual form of minimum cost flow, using oracle calls to
$(\kappa,(1+\eps))$-\PDMCF{}.
Notice that the following formalization is weaker than the MCF LP stated in the last section
(for example, an approximation to this LP does not solve max flow exactly while the strong
version does). We call it the \emph{MBCF LP}. For convenience, we only consider $s$--$t$
flow, and let $\cF_{s,t}$ contain all $s$--$t$ feasible flows. Remember the definition of moving cut \Cref{def:movingcut}. We consider $C$ as a moving cut of length $h$.

\begin{align*}
\max \qquad & \val(F) \\
\text{s.t.}\qquad & \cost(F)\le B \\
& F\in \cF_{s,t}
\end{align*}
\begin{align*}
\min \qquad & \sum_{e\in E} U(e)\cdot C(e) + B/h \\
\text{s.t.}\qquad & \dist_{G-C}(s,t)\ge h \\
& C\ge0,\;h>0\\
&C\text{ is a }h\text{-length moving cut}
\end{align*}

We prove the following lemma.
\begin{lemma}\label{lem:weakLPboosting}
There is an algorithm that, given a graph $G$ with non-negative edge capacities and lengths,
computes a $(1+\eps)$-approximate primal-dual pair of the MBCF LP, making
$\tO{\kappa^2/\eps^3}$-efficient calls to a $(\kappa,o(\eps))$-\PDMCF{} oracle.
\end{lemma}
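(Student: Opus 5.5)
The plan is to run the multiplicative-weights flow-boosting scheme of \cite[Theorem 9.2]{HaeuplerH0RS24}, using the PD-MCF oracle as the inner best-response routine. I would follow that template for the primal side and then add one extra oracle call to complete the dual side.

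\textbf{Setup.} First guess the optimum value $\mu^*$ of the MBCF LP up to a factor $1+\eps$. This costs $O(\log n/\eps)$ guesses, or a binary search over them. For a fixed guess we work with the $s$--$t$ demand of size $\mu^*$.

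\textbf{MWU loop.} We maintain edge weights $w(e)$. Initially $w(e)=1/U(e)$, and we also keep a weight $w_B$ attached to the single budget constraint. In each round we call $\OPDMCF{\kappa,\lambda}$ on $G$ with combined lengths $\ell'(e)=w(e)+w_B\,\ell(e)/B$, suitably normalized and rounded to polynomially bounded integers. The oracle returns a flow $f_t$ of congestion at most $\kappa$, together with a dual $(\phi_t,\tilde\ell_t)$ certifying that $f_t$ is near-optimal with respect to $\ell'$.

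We then scale $f_t$ down by the width $\kappa$ (times the budget width) and add it to the running flow. The weights are updated multiplicatively by $\exp(\eps\cdot\mathrm{load})$. The standard width analysis gives $\tO{\kappa/\eps^2}$ rounds. The extra factor $\kappa$ in the final bound $\tO{\kappa^2/\eps^3}$ comes from the final fix-up step below, together with the guessing step. At the end, the averaged flow, scaled by $1/(1+O(\eps))$, is feasible, has cost at most $B$, and has value at least $(1-O(\eps))\mu^*$. This is the primal half of the pair.

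\textbf{Dual moving cut.} The natural candidate is the averaged weights $\bar w$, read as a moving cut. We set $C(e)=\bar w(e)$ and choose $h$ so that $B/h$ equals the budget weight's share.

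The MWU potential argument shows that $\sum_e U(e)C(e)+B/h$ is close to the primal value. The trouble is that the constraint $\dist_{G-C}(s,t)\ge h$ holds only "on average". Concretely, $h$-length paths whose total $\ell'$-length is too small carry only an $O(\eps)$ fraction of the optimum, because otherwise the oracle's dual certificate $(\phi_t,\tilde\ell_t)$ in some round would contradict the potential growth.

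To repair this, we make one more call to the oracle on the graph with lengths $\ell+h\cdot C$. We take its dual increase $\tilde\ell$, which is an $O(\kappa)$-approximate optimal moving cut, and add $\tilde\ell/h$ to $C$. By duality this restores $\dist\ge h$. The added size is at most $O(\kappa)$ times the deficit, that is, $O(\kappa\eps)$ of the optimum. We then rerun everything with $\eps$ replaced by $\eps/\kappa$; this is why we need the oracle's cost slack to be $o(\eps)$ and why the $\kappa$ factors appear.

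\textbf{Main obstacle.} The hardest step is proving that the uncovered deficit is $O(\eps)\cdot\OPT$. This requires relating the MCF duals returned in each round, which are potentials plus length increases, to the final averaged cut. The approach I would try first is to telescope the MWU potential $\Phi=\sum_e U(e)w(e)+w_B$: each round's oracle dual value lower-bounds the cheapest path in that round's weights, and convexity lets us transfer this bound to the averaged weights.
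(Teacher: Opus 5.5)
There is a genuine gap. Your argument rests on two claims, and neither is established; the second has no mechanism behind it.

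The first claim is that the averaged MWU cut leaves only an $O(\eps)\cdot\mathrm{OPT}$ deficit. This is the step you yourself flag as the main obstacle. It does not follow from what the MWU certifies, namely that every capacity-respecting flow of the family has cost at least $h$ in $G-C$ and that $|C|+B/h$ is near-optimal. Take two parallel $s$--$t$ edges: one of length $0$ and capacity $1/2$, one of length $L$ and capacity $1$, with budget $B=L/2$. The pair $C=0$, $h=L/2$ satisfies both guarantees with objective exactly $\mathrm{OPT}=1$. Yet $\mathrm{dist}_{G-C}(s,t)=0$, and half of $\mathrm{OPT}$ is uncovered.

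The second claim is that one more oracle call repairs the deficit at cost $O(\kappa)$ times the deficit. Suppose you call the oracle on $G-C$ with an $s$--$t$ demand $d$. The returned $(\phi,\tilde\ell)$ can be turned into a cut forcing $\mathrm{dist}\ge h'$ only if $\phi(t)-\phi(s)\gtrsim h$. That requires $\cost(f)\gtrsim hd$. You know this only when $f/d$ is a capacity-respecting member of the certified family, i.e.\ when $d$ is at least $\kappa$ units of it. The resulting cut then has size up to $d$, so at least $\kappa$ units, independent of the deficit. In your normalization each round routes the whole $\mu^*$, so one unit is $\mathrm{OPT}$, and the fix-up may cost $\kappa\cdot\mathrm{OPT}$.

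The missing idea is to make the unit of the flow family tiny relative to the optimum, rather than trying to bound the deficit. The paper proceeds as follows:
\begin{enumerate}
\item It guesses $\tilde P\in[P,2P]$ and scales capacities and $B$ down by $\tilde P\eps'$, so the optimum is about $1/\eps'$.
\item It runs \Cref{thm:boosting-template} with $\mathcal F$ the value-$1$ $s$--$t$ flows, so the deficit is automatically at most one unit.
\item A single oracle call on $G-C$ with demand $\kappa$ then gives $\cost_{G-C}(f)\ge h\kappa$, hence $\phi(t)-\phi(s)\ge h'=h/(1+o(\eps))$.
\item Rescaling $\tilde\ell$ gives an $h'$-length cut $C'$ with $|C'|\le\kappa$, such that $C\cdot(h/h')+C'$ forces $\mathrm{dist}\ge h'$.
\item With $\eps'=o(\eps/\kappa)$, the additive $\kappa$ is an $o(\eps)$ fraction of $\lambda\approx 1/\eps'$.
\end{enumerate}
The price is that the template's $\tilde O(\lambda\kappa\eps^{-2})$ calls become $\tilde O(\kappa^2/\eps^3)$, because $\lambda\approx\kappa/\eps$. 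That is exactly where the extra $\kappa/\eps$ factor comes from.

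Your accounting does not reproduce this bound. Rerunning your MWU with precision $\eps/\kappa$ costs $\tilde O(\kappa^3/\eps^2)$ rounds. It would also presumably need oracle slack $1+o(\eps/\kappa)$ for the deficit to shrink accordingly, which contradicts your stated reason for needing slack $1+o(\eps)$. The knob to turn is the scaling parameter $\eps'$, not the MWU precision.
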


One benefit of the MBCF LP is that it admits the following flow boosting template using
multiplicative weight update (MWU). The following flow boosting template is from
\cite{HaeuplerH0RS24}.

\begin{theorem}[Theorem 9.2 of \cite{HaeuplerH0RS24}]\label{thm:boosting-template}
Let $G=(V,E)$ be a graph with capacity function $U_G$ and cost function $\ell_G$, and let
$B\ge0$ be a cost budget. Let $\mathcal F$ be any convex set of flows in $G$ containing at least
one capacity-respecting flow, and let $\eps,s,\kappa\ge0$ be parameters. Suppose an algorithm
is given an oracle $\mathcal O$ that, given any integral edge length function
$\ell:E\to\{1,2,\ldots,O(m^{1/\eps}N/\eps)\}$, computes the edge representation of a
(not necessarily capacity-respecting) flow $F\in\mathcal F$ such that
\begin{itemize}
\item \textup{Length slack $s$:} $\sum_{e\in E}F(e)\cdot\ell(e)\le s\cdot\sum_{e\in E}F^*(e)\cdot\ell(e)$
for any flow $F^*\in\mathcal F$ that is also capacity-respecting, and
\item \textup{Congestion slack $\kappa$:} $F(e)\le\kappa U(e)$ for all $e\in E$, i.e., $F(e)/\kappa$ is
capacity-respecting.
\end{itemize}
Then, there is a deterministic algorithm that outputs the edge representation of a flow
$\bar F\in\mathcal F$ and scalar $\lambda\ge0$ such that
\begin{enumerate}
\item \textup{Feasibility:} The flow $\lambda\bar F$ is capacity-respecting with cost at most $B$, and
\label{item:feasibility}
\item \textup{Approximation factor $\approx s$:} Let $\lambda^*$ be the maximum value such that there
exists flow $F^*\in\mathcal F$ where $\lambda^*F^*$ is capacity-respecting with cost at most $B$.
Then, $\lambda\ge\frac{1-O(\eps)}s\lambda^*$.
\label{item:approx-factor}
\item \textup{Dual solution $h,C$:} A length value $h>1$ and an $h$-length moving cut $C$ such that
(i) for every capacity-respecting flow $F\in\mathcal F$ we have $\cost_{G-C}(F)\ge h$,
(ii) $|C|+B/h\le (1+\eps)\lambda$.
\end{enumerate}

The algorithm takes $\tO{\lambda \kappa\eps^{-2}}$ calls to the oracle and
$\tilde O(\lambda \kappa\eps^{-2}m)$ time outside of the oracle calls.
\end{theorem}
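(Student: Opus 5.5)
The statement immediately above is \Cref{thm:boosting-template}, quoted from \cite{HaeuplerH0RS24}, so no new argument is needed and we simply invoke it. For completeness, here is the standard route one would take to prove it. We run a Garg--Könemann style multiplicative weight update over the convex set $\mathcal F$, with the cost budget treated as one extra ``edge''. Maintain nonnegative weights $w(e)$ on every edge, initialized to $\delta/U(e)$, and a weight $w_B$ for the budget constraint, initialized to $\delta/B$. In each round, form the integral length function $\ell(e)\approx w(e)+w_B\,\ell_G(e)$. Rescale and round it into the range $\{1,\dots,O(m^{1/\eps}N/\eps)\}$; this range suffices because the MWU weights grow by at most a factor $m^{O(1/\eps)}$ overall. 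Then query $\mathcal O$ to obtain $F\in\mathcal F$.

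Next, scale $F$ by a step size $\eta$ chosen so that no constraint moves by more than an $\eps$ fraction of its capacity. Here we use the congestion slack $\kappa$: taking $\eta = \Theta(1/\kappa)$ times the bottleneck of the cost constraint keeps every step safe. Accumulate $\eta F$ into $\bar F$, and update each weight multiplicatively by $(1+\eps\,\eta F(e)/U(e))$, respectively by $(1+\eps\,\eta\,\cost(F)/B)$ for the budget. Stop once the potential $\sum_e U(e)w(e)+Bw_B$ exceeds $1$.

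Feasibility is the standard argument. Each constraint's weight grows by a factor $e^{\Theta(\eps)\cdot\text{load}/\text{capacity}}$, and the total potential stays bounded at termination, so dividing the accumulated flow by $\log_{1+\eps}(1/\delta)$ produces a $\lambda\bar F$ that is capacity-respecting with cost at most $B$. For the approximation factor, the length slack $s$ guarantees that $\sum_e F(e)\ell(e)\le s\cdot \min_{F^*}\sum_e F^*(e)\ell(e)$, which is at most $s$ times the current potential divided by $\lambda^*$. The usual potential-growth inequality then gives $\lambda\ge (1-O(\eps))\lambda^*/s$.

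The dual output is the point that needs the most care, and I expect it to be the main obstacle. Take the round $t$ that minimizes the ratio of the potential to the oracle value $\sum_e F_t(e)\ell_t(e)$. Set $h:=1/w_B$ and $C(e):=w(e)\,h$ after normalization. Then $\cost_{G-C}(F)=\sum_e F(e)(\ell_G(e)+hC(e))=h\sum_e F(e)\ell_t(e)$, so every capacity-respecting $F\in\mathcal F$ satisfies $\cost_{G-C}(F)\ge h$ by the choice of normalization. Moreover, $|C|+B/h$ equals the normalized potential, which the MWU analysis bounds by $(1+\eps)\lambda$. Two points require care here. First, relating the oracle's possibly $s$-approximate answer to the true minimum of $\ell$ over capacity-respecting flows must be done in the correct direction. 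Second, the integer rounding of lengths must not spoil the $(1+\eps)$ factor.

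Finally, the number of oracle calls follows from the step-size argument. Each call increases some constraint's load by a $\Theta(1/\kappa)$ fraction, and the total load is $O(\lambda\log_{1+\eps}(m/\delta)/\eps)$. This gives $\tO{\lambda\kappa\eps^{-2}}$ calls, each with $O(m)$ overhead.
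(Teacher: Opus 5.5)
Invoking \cite{HaeuplerH0RS24} is exactly what the paper does, since it gives no proof of its own, so your first sentence is fine as a proof. The self-contained sketch, however, has a genuine gap at precisely the step you flag as the main obstacle, and that step cannot be completed as written. With $h=1/w_B$, the identity $\cost_{G-C}(F)=h\sum_eF(e)\ell_t(e)$ requires $C=w$ (not $C=w\,h$). To make (i) hold for every capacity-respecting $F\in\mathcal F$, you must rescale $(w_t,w_{B,t})$ jointly by some factor $c$ so that the rescaled lengths have minimum at least $1$ over such $F$. The oracle only certifies that this minimum, for the unscaled lengths, is at least $\frac1s\sum_eF_t(e)\ell_t(e)$. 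So the rescaling factor must be $c=s/\sum_eF_t(e)\ell_t(e)$, which gives $|C|+B/h=s\,\Phi_t/\sum_eF_t(e)\ell_t(e)$ with $\Phi_t=\sum_eU(e)w_t(e)+Bw_{B,t}$. The Garg--K\"onemann growth argument bounds $\min_t\Phi_t/\sum_eF_t(e)\ell_t(e)$ by $(1+O(\eps))\lambda$. What you actually obtain is therefore $|C|+B/h\le(1+O(\eps))\,s\lambda$ rather than $(1+\eps)\lambda$; this is the same factor $s$ that your primal analysis correctly loses.

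This factor cannot be removed, so item 3(ii) as quoted is only valid when $s=1+O(\eps)$. If $\lambda^*\ge 1$, the optimal $F^*$ is capacity-respecting. Applying (i) to it and multiplying by $\lambda^*$ gives $\lambda^*h\le B+h|C|$, so (i) and (ii) together would force $\lambda\ge\lambda^*/(1+\eps)$. No black-box algorithm can guarantee that from an oracle with slack $s>1+\eps$. For instance, take a single edge $a$ with $U(a)=\ell_G(a)=B=1$, $s=2$, and the oracle that always returns $\mathbf{1}_a$. This oracle is valid both for $\mathcal F=\{\mathbf{1}_a\}$, where feasibility forces $\lambda\le 1$, and for $\mathcal F=\{\theta\mathbf{1}_a:\theta\in[\frac12,1]\}$, where (i) applied to $\frac12\mathbf{1}_a$ forces $|C|+B/h\ge 2$. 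The two instances produce identical transcripts and hence identical outputs, so (ii) must fail in one of them. Your corrected bound $(1+O(\eps))s\lambda$ is all the paper needs, since \Cref{lem:weakLPboosting} only uses $s=1+o(\eps)$.

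There are also two smaller slips. First, the step size should be a minimum such as $\eta_t=\min\{1/\kappa,\,B/\cost(F_t)\}$, not a product. Second, the $\tilde O(\lambda\kappa\eps^{-2})$ call bound does not come from each step loading some constraint by a $1/\kappa$ fraction, which need not happen. It comes from charging each step with $\eta_t=1/\kappa$ to $\sum_t\eta_t=\Theta(\lambda\log_{1+\eps}(1/\delta))$, plus at most $O(\log_{1+\eps}(1/\delta))$ budget-limited steps.
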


Now we are ready to prove \Cref{lem:weakLPboosting}.

\begin{proof}[Proof of \Cref{lem:weakLPboosting}]
Let $P$ be the optimal value for the MBCF LP. We assume we know an approximation
$\tP\in[P,2P]$, which can be guessed by powers of $2$. We scale down the capacities of every
edge and $B$ by a factor of $\tP\cdot \eps'$ where $\eps'$ is a sufficiently small parameter
which will be decided later. Now it remains to solve the MBCF LP on this scaled down graph
to get $F,C,h$ where $F$ has value roughly $1/\eps'$. After which we can get the solution to
the original graph by scaling up $F$ and the capacity $U$ both by $\tP\cdot \eps'$, which makes
the primal solution $\tP\cdot \eps'$ times larger and the dual solution
$\sum_{e\in E}U(e)\cdot C(e)+B/h$ also $\tP\cdot \eps'$ times larger, so it remains a
$(1+\eps)$-approximate primal-dual pair. Now we focus on the scaled down graph and assume that
the optimal solution has value roughly $1/\eps'$.

We first apply \Cref{thm:boosting-template} by setting $\mathcal F$ to contain all $s$--$t$ flows with
value $1$. To obtain the oracle $\mathcal O$ for \Cref{thm:boosting-template}, we apply the
$(\kappa,1+o(\eps))$-\PDMCF{} algorithm on $G$ and the single-commodity demand with $1$ unit of
demand from $s$ to $t$ (although the algorithm only works on integer-weighted graphs, we can
scale up the fractional capacity to make it integral). It will return a flow $F$ with length at
most $s=1+o(\eps)$ times the cost of the optimal capacity-respecting flow in $\mathcal F$, and
congestion at most $\kappa$, so it is a valid oracle.

After calling \Cref{thm:boosting-template}, we obtain $\bar{F}$ and $h,C$. Notice that $\bar{F}$
is a convex combination of flows in $\mathcal F$, so $\bar{F}$ has value $1$. Since $\lambda \bar{F}$ is
capacity-respecting with cost at most $B$ and we know that the largest such flow has value
$1/\eps'$, we have that $\lambda\le 1/\eps'$. Moreover, $\lambda\bar{F}$ is a
$(1-o(\eps))/(1+o(\eps))$ approximation to the optimal capacity-respecting flow with cost at most
$B$, by \Cref{thm:boosting-template}\Cref{item:approx-factor}. We conclude that $\lambda\bar{F}$
is a $(1-\eps)$-approximate primal solution to the MBCF LP.

It remains to compute a dual solution for the MBCF LP. Recall that
\Cref{thm:boosting-template} returns $h,C$ where $C$ is an $h$-length cut satisfying
(i) for every capacity-respecting flow $F$ with value $1$ we have $\cost_{G-C}(F)\ge h$, and
(ii) $|C|+B/h\le (1+o(\eps))\lambda$.

We apply the $(\kappa,(1+o(\eps)))$-\PDMCF{} algorithm on $G-C$ and the demand sending $\kappa$
from $s$ to $t$, and denote the returned primal-dual triple by $(f,\phi,\tilde\ell)$. Since $f$ has
congestion at most $\kappa$, $f/\kappa$ is a feasible flow with value $1$, which by property (i)
has cost at least $h$. Thus, $\cost_{G-C}(f)\ge h\cdot \kappa$. Moreover, by the definition of
$(\kappa,(1+o(\eps)))$-\PDMCF{}, the dual value satisfies
\[
\kappa\cdot (\phi(t)-\phi(s))-\sum_{e\in E}U(e)\cdot \tilde\ell(e)\ge h\cdot \kappa/(1+o(\eps)).
\]

Let $h'=h/(1+o(\eps))$. We have $\kappa\cdot (\phi(t)-\phi(s))\ge h'\cdot \kappa$, so
$\phi(t)-\phi(s)\ge h'$.

Let $C'$ be an $h'$-length cut defined as follows:
(i) if $\dist_{G-C}(s,t)\ge h'$, then $C'=0$; otherwise,
(ii) let $C'= (\tilde\ell/h')/T$ where $T$ is the largest value such that when we interpret $C'$ as an
$h'$-length cut, we get $\dist_{G-C-C'}(s,t)=h'$
(such a $T$ must exist since $\dist_{G-C}(s,t)<h'$ and $\dist_{G-C+\tilde\ell}(s,t)\ge \phi(t)-\phi(s)\ge h'$).
Moreover, we must have $\phi(t)-\phi(s)\le h'\cdot T$ since scaling up $C'$ by $T$
while keeping the edge length of $G$ the same does not increase the distance to more than $T$
times larger. Thus, we get
\[
|C'|=\sum_{e\in E}U(e)\cdot \tilde\ell(e)/(Th')\le
\frac{\kappa(\phi(t)-\phi(s))-h\cdot \kappa/(1+o(\eps))}{Th'}\le \kappa.
\]

Let $C^*$ be an $h'$-length cut defined as $C^*=C\cdot (h/h')+C'$ so that $G-C^*=(G-C)-C'$.
It follows from $|C|+B/h\le (1+o(\eps))\lambda$ that
\[
|C^*|+B/(h')\le (1+o(\eps))\cdot \bigl((1+o(\eps))\lambda +\kappa\bigr).
\]

Notice that $\lambda$ is roughly $1/\eps'$. By setting $\eps'=o(\eps/\kappa)$, we get that
\[
|C^*|+B/h'\le (1+\eps)\lambda
\qquad\text{and}\qquad
\dist_{G-C^*}(s,t)\ge h',
\]
so $C^*,h'$ is a valid $(1+\eps)$-approximate dual.

According to \Cref{thm:boosting-template}, the algorithm makes
$\tO{\lambda\kappa\eps^{-2}}=\tO{\kappa^2\eps^{-3}}$ calls to the oracle, and an additional
$\tO{\kappa^2\eps^{-3}m}$ time.
\end{proof}

\subsection{Converting MCF solution: From MBCF to MCF}

Now we show how to convert solutions to the MBCF LP into the one needed in the previous section.

\begin{lemma}\label{lem:weak-to-strong-via-cost-scaling}
Suppose there is an algorithm $\cA$ that, for every cost bound $B$, returns a primal-dual pair
$(f,C,h)$ for the MBCF LP with approximation ratio $(1+\eps)$, where $\eps<1/\log^9 n$.
Then there is an algorithm that makes $2^{O(\log_{1/\eps^{1/3}} n)}$ calls to $\cA$ and computes
an exact minimum cost flow routing $D$ in $G$, assuming that there exists a feasible flow routing
$D$ in $G$.
\end{lemma}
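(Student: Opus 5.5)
The plan is to reduce to \Cref{lem:ApxflowApxcostBoosting}. We will build from $\cA$ an algorithm that returns a feasible flow routing a sub-demand $D'\preceq D$ with $|D'|\ge |D|/\kappa$, together with a dual $(\phi,\tilde\ell)$ that is $(1\pm\eps_0)$-approximate for the MCF LP of $D'$. Here $\eps_0\le 0.1$ and $\kappa=2^{O(\log_{1/\alpha} n)}$. As in the proof of \Cref{lem:ApxflowApxcostBoosting}, we first add a super source and sink so that $D$ is an $s$--$t$ demand, normalized to value $1$. Feeding the resulting routine into \Cref{lem:ApxflowApxcostBoosting} then gives exactness. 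The recursion depth there is $O(\log_{1/\eps_0}n)$, so we keep $\eps_0$ polynomially related to $\alpha=\eps^{1/3}$, which is exactly what produces the stated number of calls.

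\textbf{Step 1: locating a good scale.} Fix $\alpha=\eps^{1/3}$ and $\beta=\alpha^2$, say, and query $\cA$ at the budgets $B_i=(1+\alpha)^{i-2}$. The range runs from $1/\poly(n)$ up to the trivial bound $\sum_e \ell(e)U(e)=\poly(n)$, so there are $O(\log_{1+\alpha} n)$ indices. Let $q_i$ be the returned flow values. The $q_i$ are nondecreasing up to $(1\pm\eps)$ errors, and by feasibility of $D$ the value at the top budget is at least $(1-\eps)$.

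Call $a$ \emph{good} if $q_{a+1}\ge(1+\beta)q_a$ while $q_a$ is at least some threshold $\mu_0$. Taking logarithms, the values can grow by a $(1+\beta)$ factor only $O(\log_{1+\beta}(1/\mu_0))$ times. We choose $\mu_0=2^{-O(\log_{1/\alpha}n)}$, which is the quantity that becomes $1/\kappa$. The counting is then arranged so that a good index exists inside every window of $K=O(\log_{1/\alpha} n)$ consecutive budgets. The argument is the overview's contradiction: if no index is good over many scales, the budget multiplies by $(1+\alpha)^K$ while the value multiplies by less than $(1+\beta)^K$, which eventually contradicts the existence of a cheap unit flow. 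We locate such an $a$ by binary search, or, since the number of calls is allowed to be large, by scanning.

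\textbf{Step 2: extracting an MCF dual at the good scale.} At index $a$ we have $f=f_{a+1}$ with value $q_{a+1}$ and cost at most $B_{a+1}$, and an MBCF dual $(C,h)$ at budget $B_a$ satisfying $|C|+B_a/h\le(1+\eps)q_a\le(1+\eps)q_{a+1}/(1+\beta)$. We set $D'=q_{a+1}\cdot(\mathbb 1_s-\mathbb 1_t)$, and define $\phi$ as the $\dist_{G-C}$ distance from $s$, truncated at $h$, with $\tilde\ell=h\cdot C$. This is dual-feasible for the MCF LP of $D'$, and its value is
\[
\Dual(\phi,\tilde\ell)\ge q_{a+1}h-h|C|\ge h\bigl(q_{a+1}-(1+\eps)q_a+B_a/h\bigr)\ge B_a+\Omega(\beta)\,q_{a+1}h .
\]
We need $\cost(f)\le B_{a+1}=(1+\alpha)B_a$ to be within $(1+\eps_0)$ of this dual value. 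That requires $\alpha B_a\lesssim\eps_0(B_a+\beta q_{a+1}h)$.

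\textbf{Main obstacle.} The danger is that $B_a\gg\beta q_{a+1}h$, in which case the $\alpha B_a$ slack is not absorbed. I expect this balancing to be the hardest part. My plan is to strengthen the definition of ``good'' to also require $q_{a+1}h_a\ge B_a$ (up to constants), or equivalently to exploit the relation $B_a/h\le(1+\eps)q_a$ from dual feasibility. Either way we then need $\alpha\le\eps_0\beta$, which forces $\eps_0\ge\alpha/\beta=\alpha^{-1}$. This is the wrong direction, so the parameters have to be reversed: take $\beta\gg\alpha$, for instance $\alpha=\eps^{1/3}$ and $\beta=\alpha^{1/2}$, so that $\eps_0=\alpha/\beta=\alpha^{1/2}\le 0.1$. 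We then re-verify in the Step 1 counting that budgets growing by $(1+\alpha)$ per step against value jumps of $(1+\beta)$ still force a good index within $O(\log_{1/\alpha} n)$ scales. The plan is to run this counting on the potential $\log(\text{budget})$ versus $\log(\text{value})$ across levels, and to recover $\kappa=2^{O(\log_{1/\alpha}n)}$ from how far the value can fall before a good index must appear.
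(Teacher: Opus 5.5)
\textbf{Gap.} There is a genuine gap, and it is in the step where you reverse the parameters. With $\beta=\alpha^{1/2}\gg\alpha$, a good index never exists. Write $q(B)$ for the MBCF optimum at budget $B$. Scaling an optimal budget-$(1+\alpha)B$ flow by $1/(1+\alpha)$ gives a budget-$B$ flow, so $q((1+\alpha)B)\le(1+\alpha)q(B)$. Primal feasibility and weak duality for the returned certificate give $q_i\le q(B_i)\le |C_i|+B_i/h_i\le(1+\eps)q_i$. Together these yield $q_{i+1}\le(1+\alpha)(1+\eps)q_i<(1+\alpha^{1/2})q_i$ for every $i$. So your planned re-verification of Step~1 must fail: $\beta$ can be at most about $\alpha$, and the existence argument in fact only certifies $\beta\approx\alpha/\log n$.

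Step~1 also overclaims as written. There need not be a good index in every window of $O(\log_{1/\alpha}n)$ consecutive budgets, because the values can stay essentially flat over long stretches of budgets; this is the sharp-cost-jump scenario from the overview. Your count of how often the value can jump by $(1+\beta)$ above $\mu_0$ is an upper bound on the number of good indices and says nothing about existence. Only a global product-of-ratios argument works.

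\textbf{The obstacle is not real.} Your own display already gives
\[
\Dual(\phi,\tilde\ell)\ \ge\ B_a+h_a\bigl(q_{a+1}-(1+\eps)q_a\bigr)\ \ge\ B_a
\]
as soon as $\beta\ge\eps$. Hence $\cost(f)\le B_{a+1}=(1+\alpha)B_a\le(1+\alpha)\Dual(\phi,\tilde\ell)$. So just take $\eps_0=O(\alpha)=O(\eps^{1/3})\le 0.1$; there is no need to absorb $\alpha B_a$ into the $\beta q_{a+1}h_a$ term.

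This is the paper's route, with your original choice $\beta=\eps^{2/3}=\alpha^2$. That choice satisfies both constraints, $\eps<\beta$ for the dual and $\beta\le c\,\alpha/\log n$ for existence, because $\eps<1/\log^9 n$.

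For existence, the paper lets $i_0$ be the last index with $q_{i_0}\le|D|/2$ and picks $a$ maximizing $q_{i+1}/q_i$ over $i_0\le i<I$. Since $q_I/q_{i_0}\ge 2/(1+\eps)$ over at most $I=O(\log n/\alpha)$ steps, we get $q_{a+1}/q_a\ge 1+\Omega(\alpha/\log n)\ge 1+\beta$, and moreover $q_{a+1}>|D|/2$. So $\kappa=2$ suffices, and your threshold $\mu_0$ with $\kappa=2^{O(\log_{1/\alpha}n)}$ is unnecessary.
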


\begin{proof}
We describe an oracle $\cA'$ that can be plugged into \Cref{lem:ApxflowApxcostBoosting}.
For convenience, we first assume $D$ is a single-source single-sink demand from $s$ to $t$. 
\paragraph{Budget schedule and the index $a$.}
Let
\[
\alpha \stackrel{\mathrm{def}}{=} \eps^{1/3}
\qquad\text{and}\qquad
\beta \stackrel{\mathrm{def}}{=} \eps^{2/3}.
\]
We will call $\cA$ on a geometric sequence of budgets. Define
\[
B_i \stackrel{\mathrm{def}}{=} (1+\alpha)^{i-2}
\qquad\text{for } i=0,1,2,\ldots,I,
\]
where $I$ is the smallest index such that $B_I \ge B_{\max}$ and
\[
B_{\max}\stackrel{\mathrm{def}}{=} \sum_{e\in E} U(e)\cdot \ell(e).
\]
Note that $B_{\max}$ is a valid a priori upper bound on the cost of \emph{any} feasible
(capacity-respecting) flow.

For each $i\in\{0,1,\ldots,I\}$, run $\cA$ with budget $B_i$ and obtain $(f_i,C_i,h_i)$.
Let $q_i \stackrel{\mathrm{def}}{=} \val(f_i)$.

Since $B_I\ge B_{\max}$ and the instance is feasible, the MBCF optimum at budget $B_I$ equals
$|D|$, hence by the approximation guarantee of $\cA$ we have
\[
q_I \ge \frac{1}{1+\eps}\cdot |D| > |D|/2.
\]
Moreover, we can assume $q_0\le |D|/2$. Otherwise, it implies that with a cost less than $1/4$, we can already route half of the demands. We can assume edge lengths and capacities, the demand $D$ are all non-negative integral by scaling. Thus, this implies that by routing through only $0$ length edge, we can already route at least $3/4$ fraction of the demands, which finishes the oracle.

Define
\[
i_0 \stackrel{\mathrm{def}}{=} \max\{\, i\in\{0,1,\ldots,I-1\} : q_i \le |D|/2\,\}.
\]
This $i_0$ is well-defined and satisfies $q_{i_0}\le |D|/2$ and $q_{i_0+1}>|D|/2$.

Now choose an index
\[
a \in \arg\max_{i\in\{i_0,i_0+1,\ldots,I-1\}} \frac{q_{i+1}}{q_i},
\]
with the convention that $\frac{q_{i+1}}{q_i}=+\infty$ if $q_i=0$.
Then $q_{a+1}>|D|/2$ (since $a\ge i_0$) and moreover we claim that
\begin{equation}\label{eq:ratio-claim}
q_{a+1} \;\ge\; (1+\beta)\,q_a.
\end{equation}

To see \eqref{eq:ratio-claim}, note that
\[
\frac{q_I}{q_{i_0}}
=\prod_{i=i_0}^{I-1}\frac{q_{i+1}}{q_i}
\le \left(\max_{i\in\{i_0,\ldots,I-1\}} \frac{q_{i+1}}{q_i}\right)^{I-i_0}
= \left(\frac{q_{a+1}}{q_a}\right)^{I-i_0}.
\]
Rearranging gives
\[
\frac{q_{a+1}}{q_a}
\;\ge\;
\left(\frac{q_I}{q_{i_0}}\right)^{\!1/(I-i_0)}
\;\ge\;
\left(\frac{(1+\eps)^{-1}|D|}{|D|/2}\right)^{\!1/I}
\;\ge\;
\left(\frac{2}{1+\eps}\right)^{\!1/I}.
\]
Since $B_{\max}$ is polynomially bounded (lengths and capacities are polynomially bounded integers),
we have $I=\Theta(\log B_{\max}/\alpha)=O(\log n/\alpha)$.
Therefore,
\[
\left(\frac{2}{1+\eps}\right)^{1/I}
\ge
\exp\!\left(\frac{\ln(3/2)}{I}\right)
\ge
1+\Omega\!\left(\frac{1}{I}\right)
\ge
1+\Omega\!\left(\frac{\alpha}{\log n}\right).
\]
Finally, since $\eps<1/\log^9 n$, we have $\alpha=\eps^{1/3}\le 1/\log^3 n$ and thus
$\alpha/\log n \ge \eps^{2/3}=\beta$ for all sufficiently large $n$.
This implies \eqref{eq:ratio-claim}.

\paragraph{Output primal and dual.}
We set the primal flow output by $\cA'$ to be
\[
f \stackrel{\mathrm{def}}{=} f_{a+1},
\]
so $\val(f)=q_{a+1}>|D|/2$ and $\cost(f)\le B_{a+1}$.

It remains to produce a dual $(\phi,\tilde\ell)$ for an $s$--$t$ demand of value $\val(f)=q_{a+1}$
satisfying $\cost(f)\le 1.1\Dual(\phi,\tilde\ell)$.

Let
\[
\tilde\ell(e) \stackrel{\mathrm{def}}{=} h_a\cdot C_a(e)\qquad\forall e\in E,
\]
and let $\phi(\cdot)$ be the shortest-path distance from $s$ in the graph with length function
$\ell+\tilde\ell$ (equivalently, the distance function in $G-C_a$).
Then $(\phi,\tilde\ell)$ is feasible for the MCF dual, and
\[
\phi(t)-\phi(s) = \dist_{G-C_a}(s,t) \ge h_a
\]
by dual feasibility of $(C_a,h_a)$ in the MBCF LP.

For the $s$--$t$ demand of value $q_{a+1}$, the MCF dual objective equals
\[
\Dual(\phi,\tilde\ell)
= q_{a+1}\cdot (\phi(t)-\phi(s)) - \sum_{e\in E} U(e)\cdot \tilde\ell(e)
\ge q_{a+1}\cdot h_a - h_a\cdot |C_a|.
\]
Using \eqref{eq:ratio-claim} and the MBCF LP primal-dual guarantee of $\cA$ at budget $B_a$,
\[
|C_a| + \frac{B_a}{h_a} \le (1+\eps)\, q_a,
\]
we obtain
\[
\Dual(\phi,\tilde\ell)
\ge (1+\beta) q_a h_a - |C_a| h_a
\ge (1+\beta)\bigl(q_a h_a - |C_a|h_a\bigr)
\ge \frac{1+\beta}{1+\eps}\, B_a.
\]
Since $\cost(f)\le B_{a+1}=(1+\alpha)B_a$, it follows that
\[
\cost(f)
\le (1+\alpha)B_a
\le \frac{(1+\alpha)(1+\eps)}{1+\beta}\,\Dual(\phi,\tilde\ell).
\]
With $\alpha=\eps^{1/3}$ and $\beta=\eps^{2/3}$, the multiplicative factor above is
$1+O(\eps^{1/3})$, hence it is at most $1.1$ for all sufficiently large $n$
(using $\eps<1/\log^9 n$). Therefore,
\[
\cost(f)\le 1.1\,\Dual(\phi,\tilde\ell),
\]
as required. This completes the construction of $\cA'$.

\paragraph{General demands.}
For a general demand vector $D$, let $|D| \stackrel{\mathrm{def}}{=}\sum_{u: D(u)>0}D(u)$.
Add a super-source $s$ and edges $(s,u)$ of capacity $D(u)$ and length $0$ for each $u$ with
$D(u)>0$, and add a super-sink $t$ and edges $(v,t)$ of capacity $-D(v)$ and length $0$ for each
$v$ with $D(v)<0$. Then routing $D$ in $G$ is equivalent to routing $|D|$ units from $s$ to $t$
in the augmented graph; restricting any $s$--$t$ flow to the original edges gives a feasible
routing of $D$. The primal/dual objects produced above transfer to the original instance by
restriction (primal) and by restricting $(\phi,\tilde\ell)$ to the original vertices/edges.

Finally, plugging $\cA'$ into \Cref{lem:ApxflowApxcostBoosting} yields the claimed algorithm and
the stated bound on the number of calls to $\cA$.
\end{proof}
\section{Flow-Cost-Preserving DAG Projections}\label{sec:DAGprojections}
Below, we start by introducing the definition of flow-cost-preserving DAG projections, state the main construction algorithm for our DAG projection in \Cref{thm:DAGprojectionconstruction}, which will be proved later via the constructions in \Cref{sec:LCED,sec:expandertoembedding}.
Instead, we show how to use our DAG projection to reduce computing a minimum-cost flow on general graphs to solving it on DAGs.

We take the definition of DAG projections from \cite{haeupler2026dag} and extend it to preserve both distance and congestions. We call it \emph{\Proj{}}.

\begin{definition}\label{def:overviewDAGprojection}
    A \Proj{} of $G = (V,E)$ is a DAG $G'$ along with a projection map $\pi: V(G')\to V$. We call it $(\kappa,\lambda)$-\Proj{}, or say it has congestion slack $\kappa$ and length slack $\lambda$ if
\begin{enumerate}
\item For every demand $D'$ routable in $G'$ with congestion $1$ and cost $B$, $\pi(D')$ is routable in $G$ with congestion $\kappa$ and cost $B$ and,
\item For every demand $D$ routable in $G$ with congestion $1$ and cost $B$, $\flpi(D)$ is routable in $G'$ with cost $\lambda\cdot B$ and congestion $1$. 
\end{enumerate}
\end{definition}

When $h=\infty$, write \emph{$h$-length \Proj{}} simply as \emph{\Proj{}}.

Often, we want not only the existential property above, but also an explicit and efficient way to project flows and moving cuts from $D$ back to $G$. This is given by the following definitions. We also add a $h$-length bound for technical reasons.

\begin{definition}[Projection Algorithm]\label{def:projectionalgorithm}
    Let $G'$ be a DAG projection of $G = (V,E)$. A \emph{$h$-length projection algorithm} with congestion slack $\kappa$ and length slack $\lambda$ associated with $G'$ (or $(\kappa,\lambda)$-projection algorithm) is an algorithm that, given either
    \begin{itemize}
        \item a flow $F'$ on $G'$ with congestion $1$ and cost $B$, returns a flow $F$ on $G$ with congestion $\kappa$ and cost $B$ such that $\pi(\Dem(F')) = \Dem(F)$; or
        \item a moving cut $C'$ on $G'$, returns a moving cut $C$ on $G$ with the same size and length $\leng(C')/\lambda$ such that for every $u,v\in V$ with $\dist_{G-C}(u,v)\le h$, we have $\dist_{G'-C'}(\first{u},\last{v})\le \lambda\cdot \dist_{G-C}(u,v)$.
    \end{itemize}
    If $h=\sum_{e\in E}U(e)\cdot \ell(e)$ is sufficiently large, we simply say that the algorithm is a \emph{projection algorithm} without the $h$-length bound.

    When we say $D$ is associated with a projection algorithm, we assume the projection is surjective, i.e. $\pi^{-1}(v) \neq \emptyset$ for every $v \in V(G)$.
\end{definition}

Note that the second property of the projection algorithm is stronger than the second property in the definition of \Proj{}. This stronger form is by design and is necessary for our reduction algorithm later in the paper.

\newcommand{\OProj}[1]{\cO^{\mathrm{Proj}}_{#1}}
\newcommand{\OProjAlg}[1]{\cO^{\mathrm{ProjAlg}}_{#1}}
We denote the oracle of computing a flow-cost-preserving DAG projection associated with a $h$-length $(\kappa,\lambda)$-projection algorithm as $\OProj{h,\kappa,\lambda}$. With a bit of overload of notations, we also use $\OProj{h,\kappa,\lambda}$ to denote its associated $h$-length $(\kappa,\lambda)$-projection algorithm.

In \Cref{sec:expandertoembedding}, we will show the following main theorem about computing a DAG projection.

\begin{restatable}{theorem}{projectionmain}\label{thm:DAGprojectionconstruction}
    There is an algorithm given a directed graph $G=(V,E)$ with non-negative polynomially bounded integral edge capacities and lengths, parameters $\eps_f\in(0,1/\log n),\delta\in(1/\log n,1)$ and $\kappa_f\in \bbN_+$, with $n^{\delta}\cdot (1/\eps_f)^{\poly(1/\delta)}$-efficient oracle calls to $\OPDMCF{\kappa_f,1+\eps_f}$, output a DAG projection of size 
        \[m^{1+\delta}\cdot (1/\eps_f)^{\poly(1/\delta)}\]
    with a $(\kappa,\lambda)$-projection algorithm where
        \[\kappa=\kappa_f^{1/\delta}\cdot n^{\delta}\cdot (1/\eps_f)^{\poly(1/\delta)}  \qquad\qquad \lambda=1+\eps_f\cdot (1/\delta)^{O(1/\delta)}\]

    The projection algorithm takes time $m^{1+\delta}\cdot (1/\eps_f)^{\poly(1/\delta)}$. 
\end{restatable}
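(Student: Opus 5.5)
We plan to prove \Cref{thm:DAGprojectionconstruction} by the length-scale bootstrapping outlined in the overview. The central object is an $h$-length DAG projection $\OProj{h,\kappa,\lambda}$, and we induct on $h$ over geometric scales $h_j=z^j$ with $z=n^{\delta}$ (or $n^{\poly(\delta)}$). There are $O(1/\delta)$ scales, which is where the $\kappa_f^{1/\delta}$ and $(1/\delta)^{O(1/\delta)}$ factors come from. For the base case $h\le z$ we use a trivial construction. For example, we can concatenate $O(h)$ copies of a width-bounded path cover from \Cref{lem:pathcover} with diameter $d<1$ on positive-length edges, via \Cref{lem:concatenating}. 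Every SCC then consists of zero-length edges only and can be contracted into a star. This yields an exact DAG for paths of length at most $z$ with size blow-up $z\cdot n^{O(\eps)}$.

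The inductive step, from $h/z$ to $h$, has four parts. First, we apply concatenation (\Cref{lem:concatenating}) to the $h/z$-length projection, which gives an $h$-length projection of $z$ times larger size but the same slacks; call it $\widehat G$. Second, we plug $\widehat G$ into \Cref{lem:MCFtoDAG} together with the DAG oracle $\OPDMCF{\kappa_f,1+\eps_f}$. This gives an $h$-length approximate PD-MCF oracle on $G$ with congestion $\kappa_f\kappa_{j-1}$ and length slack $(1+\eps_f)\lambda_{j-1}$. Third, we use this oracle to build an $h$-length LC-expander hierarchy via \Cref{sec:LCED}; the fat LDD and the directed cut-matching game both call only $h$-length MCF. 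Fourth, we run the construction of \Cref{sec:expandertoembedding} on this hierarchy, again using the $h$-length oracle only to compute bridge capacities. This produces a fresh $h$-length projection whose \emph{size} depends only on the path-cover width and the hierarchy depth, namely $m\,n^{O(\delta)}(1/\eps_f)^{\poly(1/\delta)}$, and not on the inflated size of $\widehat G$.

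Since $\widehat G$ appears only inside oracle calls, the size never compounds. The congestion slack, however, does compound: each round multiplies it by $\kappa_f$ and by the expander factor $1/\phi=n^{O(\delta^2)}$. Over $1/\delta$ rounds this gives $\kappa_f^{1/\delta}n^{\delta}$. The length slack accumulates additively as $1+O(\eps_f)$ per round, times the $(1/\delta)^{O(1)}$ loss from the hierarchy levels. We will then check that the number of vertices and edges passed to oracle calls per round is $z\cdot m^{1+o(1)}$, which gives the claimed $n^{\delta}(1/\eps_f)^{\poly(1/\delta)}$-efficiency. Finally, setting $h=\sum_e U(e)\ell(e)=\poly(n)$ removes the length bound.

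The main obstacle is the fourth part, the existence-plus-algorithm argument in \Cref{sec:expandertoembedding}. There are two difficulties. The forward map must lift each $G$-flow path through a path cover whose SCCs do not individually cover the path; we plan to handle this with the first-heavy/last-heavy SCC bridging argument and a light-SCC recursion of depth $\log_\sigma n$. The backward projection algorithm for moving cuts must keep the length distortion at $1+O(\eps_f)$ rather than a constant. Both require that the error introduced by approximate bridge capacities computed with the $h$-length oracle is charged only multiplicatively to lengths. I would first try to prove this for the one-layer hierarchy with exact bridge capacities, and then absorb the approximation error through the geometric bridge classes $r_i=(1+\eps_f)^i$.
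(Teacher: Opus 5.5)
Your plan is essentially the paper's proof. You bootstrap over $O(1/\delta)$ length scales via \Cref{lem:hlengthprojection}: concatenate the $h/z$-length projection (\Cref{lem:concatenating}), combine it with the DAG oracle via \Cref{lem:MCFtoDAG} to drive the LC-expander hierarchy and bridge computations, and rely on the fact that the output size does not compound while congestion does, which the paper controls by taking $q=n^{\sqrt\delta}$ and reparametrizing $\delta$. Two corrections to your bookkeeping: the length-slack error is not additive per round, since \Cref{lem:hlengthprojection} gives $\lambda-1=O(\eps+\eps_p+\eps_f)/\delta^2$, so the previous error is amplified by the $O(1/\delta^2)$ recursion depth each round, which is exactly what produces $(1/\delta)^{O(1/\delta)}$; and the inductive hypothesis must be an $h/z$-length projection oracle for \emph{every} graph, not one projection of $G$, because the hierarchy and bridges query MCF on length-modified subgraphs and auxiliary star graphs.
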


The power of a DAG projection above is that it allows us to reduce approximate primal-dual
minimum-cost flow on a general graph to approximate one on a DAG.
\begin{restatable}{lemma}{MCFtoDAG}\label{lem:MCFtoDAG}
Let $G$ be a directed graph with positive polynomially bounded integer edge
capacities and lengths, let $D$ be a demand vector on $G$, and let $G'$ be a
DAG equipped with a surjective projection map to $G$ and associated with a
$(\kappa',\lambda')$-projection algorithm. Then one can compute a
$(\kappa\kappa',\lambda\lambda')$-\PDMCF{} solution for $(G,D)$ using $O(1)$-efficient calls
to $\OPDMCF{\kappa,\lambda}$ and the associated projection algorithm. 
\end{restatable}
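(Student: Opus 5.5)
The plan is to call the oracle once on the DAG instance $(G',\flpi(D))$ and then transport both the primal and the dual back to $G$ with the projection algorithm. Concretely, call $\OPDMCF{\kappa,\lambda}$ on $G'$ with demand $D'=\flpi(D)$. This is a legitimate instance since $\pi$ is surjective, so $\vf,\vl$ exist for every $v$. The call returns a triple $(f',\phi',\tl')$ with $f'$ routing $D'$ at congestion at most $\kappa$, and $\cost(f')\le\lambda\Dual_{G'}(\phi',\tl')$.

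\textbf{Primal.} Scale $f'/\kappa$, which has congestion $1$. Apply the flow part of the projection algorithm to get a flow on $G$ with congestion $\kappa'$, cost $\cost(f')/\kappa$, and demand $\pi(D')/\kappa=D/\kappa$. Multiplying back by $\kappa$ gives a flow $f$ routing $D$ with congestion $\kappa\kappa'$ and $\cost(f)=\cost(f')$.

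\textbf{Dual.} First normalize the dual on $G'$ as in \Cref{lem:ApxflowApxcostBoosting}: take $\phi'$ to be the distance function in $G'$ with lengths $\ell+\tl'$. This does not decrease $\Dual$, since only the $\phi'$ term changes and it improves. View $\tl'$ as a moving cut $C'$ with $\leng(C')=h'$ for a convenient $h'$, namely $C'=\tl'/h'$, so that $G'-C'$ has lengths $\ell+\tl'$. The projection algorithm returns a moving cut $C$ on $G$ with the same size, $\sum_e U(e)C(e)=\sum_{e'}U(e')C'(e')$, and length $h'/\lambda'$. Set $\tl=(h'/\lambda')C$, so that $\sum_e U\tl=\frac1{\lambda'}\sum U'\tl'$. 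Let $\phi$ be shortest-path distances in $G-C$ from a super source, or via the single-source reduction used earlier. Then dual feasibility in $G$ is automatic. For every pair $u,v$, the guarantee $\dist_{G'-C'}(\first u,\last v)\le\lambda'\dist_{G-C}(u,v)$ lets us lower bound the $\phi$-term. Route the demand $D$ through an optimal transport between sources and sinks: $\sum_v(-D(v))\phi(v)$ equals the cost of that transport in the metric $\dist_{G-C}$, which is at least $\frac1{\lambda'}$ times the corresponding transport cost of $D'$ in $\dist_{G'-C'}$ from first copies to last copies. That in turn is at least $\frac1{\lambda'}\sum(-D'(v'))\phi'(v')$. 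Hence $\Dual_G(\phi,\tl)\ge\Dual_{G'}(\phi',\tl')/\lambda'$, and therefore $\cost(f)\le\lambda\lambda'\Dual_G(\phi,\tl)$.

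\textbf{Main obstacle.} The delicate step is the dual comparison. The potential term is not pairwise, so I will reduce to $s$--$t$ demands first: add a super source and a super sink with zero-length edges, as in the proof of \Cref{lem:weak-to-strong-via-cost-scaling}. Then $\Dual$ becomes $|D|(\phi(t)-\phi(s))-\sum U\tl$ with $\phi$ a distance function, and a single pairwise distance comparison suffices. The super source and super sink must be added consistently on both $G$ and $G'$, attaching to $\vf$ and $\vl$ respectively, which keeps $G'$ a DAG.

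A second concern is the $h$-length restriction. We use the unrestricted projection algorithm, with $h$ exceeding $\sum U\ell$, so every finite distance qualifies. Infinite distances only help the dual. Integrality of $\tl$ is handled by scaling.

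Running time: one oracle call on $G'$ plus two projection calls and shortest-path computations, hence $O(1)$-efficient.
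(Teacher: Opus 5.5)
Your architecture is the paper's: a single oracle call on $(G',\flpi(D))$, the primal pulled back by the flow part of the projection algorithm, and $\tl'$ pushed through the cut part with the $1/\lambda'$ size bookkeeping. Your rescaling $f'/\kappa$ is also correct, and the paper's proof glosses over it. The gap is the choice of potential on $G$.

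Shortest-path distances in $G-C$ from a super source attached by zero-length edges throw away the per-source offsets $\phi'(\first{u})$ returned by the oracle. Your chain then fails at its first link. For any feasible $\phi$ and any transport plan $\gamma$ between the positive and negative parts of $D$,
\[
\sum_v(-D(v))\phi(v)=\sum_{u,v}\gamma(u,v)\bigl(\phi(v)-\phi(u)\bigr)\le\sum_{u,v}\gamma(u,v)\dist_{G-C}(u,v).
\]
So the potential term is only \emph{upper}-bounded by the transport cost. Equality holds only for an optimal (Kantorovich) potential. Plain distances are not such a potential, and you could not compute one within an $O(1)$-efficient budget anyway. For the same reason, your normalization step (replacing $\phi'$ by a distance function) is dual-nondecreasing only when there is a single source.

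A concrete counterexample: let $D(u_1)=D(u_2)=1$ and $D(v_1)=D(v_2)=-1$, with edges $(u_1,v_1),(u_1,v_2)$ of length $1$ and $(u_2,v_2)$ of length $100$, all of large capacity. The optimum $101$ is certified by $\phi(u_1)=0$, $\phi(u_2)=-99$, $\phi(v_1)=\phi(v_2)=1$, $\tl=0$. Distances from $\{u_1,u_2\}$ give dual value only $2$. This graph is already a DAG, so with $G'=G$, the identity projection, and an exact oracle, your construction outputs a dual off by a factor of about $50$.

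Your $s$--$t$ reduction does not repair this as written, under either reading.
\begin{itemize}
\item If the oracle is run on the augmented DAG, its congestion-$\kappa$ flow may overload some super-source edges. It then no longer routes $\flpi(D)$, and the projected flow does not route $D$.
\item If the reduction is only in the analysis, the offsets must live as length increases $\phi'(\first{u})-\min_w\phi'(\first{w})$ on the super edges. The projection algorithm never sees those edges. You would have to transfer them by hand, scaled by $1/\lambda'$, and check the $s$--$t$ distance comparison through them. That is exactly the missing step.
\end{itemize}

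The paper avoids all of this by keeping the offsets directly in the potential:
\[
\psi(x):=\min_{u:\,D(u)>0}\Bigl\{\tfrac{1}{\lambda'}\phi'(\first{u})+\dist_{G-C}(u,x)\Bigr\}.
\]
This is one shortest-path computation with initial labels, and it is feasible for $(G,\tl)$ by construction. At every source it gives $\psi(u)\le\phi'(\first{u})/\lambda'$. For every source $u$ and sink $v$ we have $\phi'(\last{v})-\phi'(\first{u})\le\dist_{G'-C'}(\first{u},\last{v})\le\lambda'\dist_{G-C}(u,v)$, so at every sink $\psi(v)\ge\phi'(\last{v})/\lambda'$. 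Weighting by the demands and adding $-\sum_e U(e)\tl(e)=-\frac{1}{\lambda'}\sum_{e'}U(e')\tl'(e')$ yields $\Dual(\psi,\tl)\ge\Dual(\phi',\tl')/\lambda'$. No transport argument or super-source reduction is needed.
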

Combining with
\Cref{thm:boosting}, we can further reduce exact minimum cost flow computation from general graphs to approximate computation on DAGs.
\begin{theorem}\label{thm:MCFreductiontoDAGs}
    For $\kappa\in \bbN_+,\delta\in(1/\log n,1),\eps\in(0,1/\log n)$ with $\eps<\delta^{2/\delta}$, there is an algorithm solving the exact minimum cost flow problem with time complexity proportional to making $\kappa^{1/\delta}\cdot (1/\eps)^{\poly(1/\delta)}$-efficient calls to $\OPDMCF{\kappa,1+\eps}$.
\end{theorem}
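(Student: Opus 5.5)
The proof composes three results already stated: \Cref{thm:boosting}, \Cref{thm:DAGprojectionconstruction} and \Cref{lem:MCFtoDAG}. Start from the outer layer. By \Cref{thm:boosting}, exact minimum cost flow on $G$ reduces to $\tO{n^{O(1/\log(1/\eps'))}\kappa'^2/\eps'^3}$-efficient calls to an oracle $\OPDMCF{\kappa',1+\eps'}$ on general graphs. The task is to implement such a general-graph oracle from $\OPDMCF{\kappa,1+\eps}$ on DAGs, and to choose $\kappa',\eps'$ so that all losses stay within $\kappa^{1/\delta}(1/\eps)^{\poly(1/\delta)}$.

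\textbf{Implementing the general-graph oracle.} For each instance $(G,D)$ handed to the general oracle, I would do two things.
\begin{enumerate}
\item Build a DAG projection $G'$ using \Cref{thm:DAGprojectionconstruction}, with parameters $\eps_f=\eps$, $\kappa_f=\kappa$ and the given $\delta$. Its internal calls to $\OPDMCF{\kappa_f,1+\eps_f}$ on general graphs are themselves realized recursively, via DAG projections of smaller length scale. This is the $h$-length recursion described in the overview, and I take the theorem as already accounting for it. Its cost is $n^{\delta}(1/\eps)^{\poly(1/\delta)}$-efficient calls. It returns a DAG of size $m^{1+\delta}(1/\eps)^{\poly(1/\delta)}$ together with a $(\kappa_p,\lambda_p)$-projection algorithm, where $\kappa_p=\kappa^{1/\delta}n^\delta(1/\eps)^{\poly(1/\delta)}$ and $\lambda_p=1+\eps(1/\delta)^{O(1/\delta)}$.
\item Apply \Cref{lem:MCFtoDAG}. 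One call to $\OPDMCF{\kappa,1+\eps}$ on $G'$, plus the projection algorithm, yields a $(\kappa\kappa_p,(1+\eps)\lambda_p)$-\PDMCF{} solution on $G$.
\end{enumerate}
So the effective parameters are $\kappa'=\kappa\kappa_p$ and $1+\eps'=(1+\eps)\lambda_p$, giving $\eps'=O(\eps(1/\delta)^{O(1/\delta)})$.

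\textbf{Parameter bookkeeping.} The hypothesis of \Cref{thm:boosting} needs $\eps'=o(1/\log^9 n)$. This is where I would use $\eps<\delta^{2/\delta}$ together with $\eps<1/\log n$. The condition $\eps<\delta^{2/\delta}$ absorbs the $(1/\delta)^{O(1/\delta)}$ blow-up, possibly after first shrinking $\eps$ by a polynomial power, which only changes $(1/\eps)^{\poly(1/\delta)}$. The dominant factor is then $\kappa'^2/\eps'^3$ times $n^{O(1/\log(1/\eps'))}$, multiplied by the per-call cost of steps 1 and 2.
\begin{itemize}
\item $\kappa'^2$ contributes $\kappa^{O(1/\delta)}n^{2\delta}(1/\eps)^{\poly(1/\delta)}$.
\item The projection size contributes $m^{\delta}(1/\eps)^{\poly(1/\delta)}$.
\end{itemize}

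\textbf{Main obstacle.} The hard step is the $n^{O(\delta)}$ and $n^{O(1/\log(1/\eps'))}$ factors, which do not appear in the claimed bound. I would handle them by running the whole argument with $\delta$ replaced by a smaller $\delta'=\Theta(\delta)$ and reading the statement with $n^{O(\delta)}$ absorbed into the $\poly(1/\delta)$ exponent. Equivalently, this reading treats $n^{\delta}\le(1/\eps)^{\poly(1/\delta)}$, which needs $\eps\le n^{-\delta^{c}}$ and can be arranged by decreasing $\eps$ at the outset. Similarly, $1/\log(1/\eps')\le\delta$ once $\eps'\le 2^{-1/\delta}$, which the assumption $\eps<\delta^{2/\delta}$ essentially guarantees. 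Finally I would check that oracle sizes and the work outside the oracle (projection algorithm time $m^{1+\delta}(1/\eps)^{\poly(1/\delta)}$) sum to the stated efficiency. This is routine multiplication of the factors above.
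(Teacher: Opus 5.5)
\textbf{Where your proof fails.} Your composition is the same as the paper's. For each general-graph instance you build a projection with \Cref{thm:DAGprojectionconstruction}, use \Cref{lem:MCFtoDAG} with the DAG oracle to get a $(\kappa\kappa_p,(1+\eps)\lambda_p)$ general-graph oracle, and feed that to \Cref{thm:boosting}. The gap is in the final bookkeeping.

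\textbf{You cannot shrink $\eps$.} Twice you repair a parameter problem by ``first shrinking $\eps$ by a polynomial power'' or ``decreasing $\eps$ at the outset''. That move is not available. $\eps$ is the accuracy of the given oracle $\OPDMCF{\kappa,1+\eps}$. A $(\kappa,1+\eps)$ guarantee implies a $(\kappa,1+\eps'')$ guarantee only for $\eps''\ge\eps$, never for a smaller $\eps''$.

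\textbf{The $n^{O(\delta)}$ factors therefore remain.} They cannot be absorbed into $(1/\eps)^{\poly(1/\delta)}$. They come from three places: $\kappa'^2$, which contributes $n^{2\delta}$; the size bound $m^{1+\delta}(1/\eps)^{\poly(1/\delta)}$ of the DAG each oracle call runs on, which is the only size guarantee available for the given $\delta$; and the term $n^{O(1/\log(1/\eps'))}\le n^{O(\delta)}$. Rescaling $\delta$ by a constant, which is what the paper does and what your ``$\delta'=\Theta(\delta)$'' amounts to, only cleans up constants in exponents. It cannot remove a factor polynomial in $n$.

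\textbf{What the paper actually proves.} The paper does not remove these factors either. Its proof ends with DAG oracle calls of total size $m^{1+\delta}\kappa^{1/\delta}(1/\eps)^{\poly(1/\delta)}$. That is an $m^{\delta}$ factor beyond the literal ``$\kappa^{1/\delta}(1/\eps)^{\poly(1/\delta)}$-efficient'' in the statement. The right move is to prove that bound and keep the $m^{O(\delta)}$ factor explicit. It is harmless downstream because \Cref{cor:simpleMCFreductiontoDAGs} takes $\delta=1/\log^c n$.

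\textbf{The boosting hypothesis has the same problem.} \Cref{thm:boosting} requires $\eps'=o(1/\log^9 n)$. This does not follow from $\eps<\min\{1/\log n,\delta^{2/\delta}\}$ in general; for constant $\delta$ you only get $\eps'=O(1/\log n)$. The paper glosses over this by writing the length slack as $1+O(\eps)$ and applying \Cref{thm:boosting} directly. Your shrinking trick cannot repair it. The condition has to be added as a hypothesis, or the parameter range restricted. It does hold in the regime of \Cref{cor:simpleMCFreductiontoDAGs}, where $\eps<\exp(-\log^c n\log\log n)$ with $c$ small.
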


\begin{proof}
    We wish to apply \Cref{thm:boosting}, which requires an oracle solving PD-MCF. To get that, we compute a flow-cost preserving DAG projection using \Cref{thm:DAGprojectionconstruction}, and apply \Cref{lem:MCFtoDAG} along with the $\OPDMCF{\kappa,1+\eps}$ oracle on DAGs to get the required $\OPDMCF{\kappa',1+\eps'}$ oracle on general graphs where
    \[\kappa'=\kappa^{O(1/\delta)}\cdot n^{\delta}\cdot (1/\eps)^{\poly(1/\delta)}\qquad\qquad\eps'=1+\eps\cdot (1/\delta)^{(1/\delta)}=1+O(\eps)\]

    and the oracle calls to $\OPDMCF{\kappa,1+\eps}$ are on graphs with total size $m^{1+\delta}\cdot (1/\eps)^{\poly(1/\delta)}$ (additional running time is subsumed by this). 

    Now we can apply \Cref{thm:boosting}. Notice that since $\eps<\delta^{2/\delta}$, we have that $O(1/\log(1/\eps))<\delta$. Thus, the total calls to $\OPDMCF{\kappa,1+\eps}$ on DAGs has size (after rescaling $\delta$ to hide constants in $\poly(1/\delta)$)
    \[m^{1+\delta}\cdot \kappa^{1/\delta}\cdot (1/\eps)^{\poly(1/\delta)}\]
\end{proof}

Let $c$ be a sufficiently small constant in $(0,1)$, by choosing $\delta=1/\log^{c}n$, we get the following corollary. 
\begin{corollary}\label{cor:simpleMCFreductiontoDAGs}
    For a sufficiently small constant $c\in (0,1)$ and
    \[\kappa<\exp(\log^{1-2c}n)\qquad\qquad \eps<\exp(-\log^cn\log\log n)\]
    There is an algorithm solving the exact minimum cost flow problem with time complexity proportional to making oracle calls to $\OPDMCF{\kappa,1+\eps}$ on DAGs with cumulative total size $m^{1+o(1)}$ and number of nodes $n^{1+o(1)}$.
\end{corollary}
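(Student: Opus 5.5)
This corollary follows by instantiating \Cref{thm:MCFreductiontoDAGs} with $\delta=1/\log^{c}n$. It remains to check three things: the hypotheses of that theorem hold, the efficiency factor $g=\kappa^{1/\delta}\cdot(1/\eps)^{\poly(1/\delta)}$ is $n^{o(1)}$, and hence the cumulative sizes are $m^{1+o(1)}$ and $n^{1+o(1)}$.

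\textbf{Step 1: hypotheses.} Since $c<1$, we have $\delta=1/\log^c n\in(1/\log n,1)$ for large $n$. Next, $\eps<\exp(-\log^c n\log\log n)\le 1/\log n$. Finally, $\delta^{2/\delta}=\exp(-2\log^c n\cdot c\log\log n)$. Because $2c<1$ for small $c$, this is at least $\exp(-\log^c n\log\log n)$, so $\eps<\delta^{2/\delta}$ as \Cref{thm:MCFreductiontoDAGs} requires. Here I read the smallness condition on $\eps$ as an upper bound enforcing $\eps<\delta^{2/\delta}$. If the intended constants differ, one simply adjusts $c$.

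\textbf{Step 2: bounding $g$.} For the congestion factor, $\kappa^{1/\delta}<\exp(\log^{1-2c}n\cdot\log^c n)=\exp(\log^{1-c}n)=n^{o(1)}$. The main obstacle is the factor $(1/\eps)^{\poly(1/\delta)}=\exp(\poly(\log^c n)\cdot\log(1/\eps))$. This is only subpolynomial if $\eps$ is not too small, whereas the hypothesis gives only an upper bound on $\eps$. I would handle this by noting that a $(\kappa,1+\eps)$ oracle is also a $(\kappa,1+\eps_0)$ oracle for every $\eps_0\ge\eps$. So we run the reduction with $\eps_0=\exp(-\log^c n\log\log n)/2$, which satisfies the constraints of Step 1. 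Then $\log(1/\eps_0)=O(\log^c n\log\log n)$. Writing the polynomial as $(1/\delta)^{C}$ for the absolute constant $C$ hidden in $\poly$, the exponent is $O(\log^{cC+c}n\log\log n)=o(\log n)$ once $c<1/(C+2)$. This is where "sufficiently small constant" is used. Therefore $g=n^{o(1)}$.

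\textbf{Step 3: sizes.} By the definition of $g$-efficient calls, the cumulative number of edges is at most $g\cdot m=m^{1+o(1)}$, since $m\ge n-1$ up to isolated vertices. The cumulative number of vertices is at most $g\cdot n=n^{1+o(1)}$. The additional running time outside the oracle calls is at most $g\cdot m$, which is also absorbed. The DAG-only nature of the calls is inherited from \Cref{thm:MCFreductiontoDAGs}, whose oracle calls are all made through \Cref{lem:MCFtoDAG} on the DAG projection. This gives the corollary.
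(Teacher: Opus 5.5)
Your proposal is correct and follows the paper's route: the paper's proof is the one-line instantiation of \Cref{thm:MCFreductiontoDAGs} with $\delta=1/\log^{c}n$. Your checks of the hypotheses and of $\kappa^{1/\delta}\cdot(1/\eps)^{\poly(1/\delta)}=n^{o(1)}$ supply what that line omits, including the monotonicity step for arbitrarily small $\eps$, which the paper leaves implicit. One small slip: $\eps_0=\tfrac12\exp(-\log^c n\log\log n)$ need not satisfy $\eps_0\ge\eps$. Take instead $\eps_0:=\exp(-\log^c n\log\log n)$ itself, which still satisfies $\eps_0<\delta^{2/\delta}$ (since $2c<1$) and $\eps_0<1/\log n$, or $\eps_0:=\max\{\eps,\tfrac12\exp(-\log^c n\log\log n)\}$.
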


\subsection{Reducing PD-MCF on General Graphs to DAGs}\label{subsec:MCFtoDAGs}
This subsection is dedicated for proving \Cref{lem:MCFtoDAG}. 

The next lemma isolates the two properties of a projection algorithm that we
will use later: it transfers primal flows from the DAG back to the original
graph, and it also transfers dual certificates in the reverse direction.
As an immediate consequence, any DAG equipped with such a projection algorithm
is indeed a flow-cost-preserving DAG projection.

\begin{lemma}[Projection algorithms transfer primal and dual certificates]
\label{lem:algorithmimpliesprojection}
Let $G'$ be a DAG together with a surjective projection map $\pi:V(G')\to V(G)$.
Suppose $G'$ is associated with a projection algorithm with congestion slack
$\kappa$ and length slack $\lambda$. Then the following hold.
\begin{enumerate}
    \item \textbf{Primal transfer.}
    For every flow $F'$ on $G'$ with congestion at most $1$ and cost $B$, the
    projection algorithm returns a flow $F$ on $G$ with
    \[
    \Dem(F)=\pi(\Dem(F')),
    \]
    congestion at most $\kappa$, and cost exactly $B$.

    \item \textbf{Dual transfer.}
    For every demand $D$ on $G$, and every feasible dual solution
    $(\phi,\tl)$ for the minimum-cost flow instance $(G',\flpi(D))$, one can
    construct a feasible dual solution $(\psi,C)$ for the minimum-cost flow
    instance $(G,D)$ such that
    \[
    \Dual(\psi,C)\ge \frac{1}{\lambda}\Dual(\phi,\tl).
    \]

    Consequently, $G'$ is a $(\kappa,\lambda)$-\Proj{} of $G$.
\end{enumerate}
\end{lemma}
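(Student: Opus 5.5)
Part 1 is essentially a restatement of the first bullet of \Cref{def:projectionalgorithm}: applied to $F'$, the projection algorithm returns $F$ with $\Dem(F)=\pi(\Dem(F'))$, congestion at most $\kappa$ and cost $B$. The work is in part 2 and in the closing claim that $G'$ is a $(\kappa,\lambda)$-\Proj{}.

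\textbf{Dual transfer.} Let $(\phi,\tl)$ be feasible for $(G',\flpi(D))$. I normalize it first by replacing $\phi$ with the shortest-path distance function in $G'$ under the lengths $\ell_{G'}+\tl$, measured from the source copies. This keeps the constraint $\phi(v)-\phi(u)\le \ell+\tl$, leaves $\tl$ unchanged, and does not decrease the dual value for a source-to-sink demand. (If $D$ has several sources, first reduce to a single $s$--$t$ demand with the super-source/super-sink trick used in \Cref{lem:ApxflowApxcostBoosting}.)

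\begin{enumerate}
\item Set $h':=\Dual(\phi,\tl)/|D|$ and view $\tl/h'$ as an $h'$-length moving cut $C'$ on $G'$. Then $G'-C'$ has length function $\ell_{G'}+\tl$, and $\dist_{G'-C'}(\first{s},\last{t})\ge\phi(\last{t})-\phi(\first{s})$.
\item Feed $C'$ to the projection algorithm. It returns a moving cut $C$ on $G$ with $|C|=|C'|$ and $\leng(C)=h'/\lambda$. Define $\tl_G := \leng(C)\cdot C$, so that
\[
\sum_e U(e)\,\tl_G(e)=\tfrac1\lambda\sum_{e'}U(e')\,\tl(e').
\]
\item Let $\psi$ be the distance function from $s$ in $G-C$.
\end{enumerate}

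Then $(\psi,\tl_G)$ is feasible for $(G,D)$ by construction. Its dual value is $|D|\,\dist_{G-C}(s,t)-\frac1\lambda\sum U\tl$.

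If $\dist_{G-C}(s,t)\le h$ (here $h$ is effectively infinite), the distance guarantee of the projection algorithm gives
\[
\dist_{G-C}(s,t)\ge \tfrac1\lambda\,\dist_{G'-C'}(\first{s},\last{t})\ge\tfrac1\lambda\bigl(\phi(\last t)-\phi(\first s)\bigr).
\]
Combining the two bounds yields $\Dual(\psi,\tl_G)\ge\frac1\lambda\Dual(\phi,\tl)$.

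For multiple sources and sinks I would instead apply the distance guarantee pairwise to $\psi(v)-\psi(u)$ over source/sink pairs. Since $\Dual$ is linear in the potentials at demand vertices, this suffices once $\psi$ is chosen as the distance from the super-source.

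\textbf{Main obstacle.} The delicate point is the scaling between the moving-cut formalism, where $C$ is integer-scaled and the length parameter changes by the factor $\lambda$, and the additive $\tl$ of the LP. I have to check that $\leng(C)\,C$ indeed carries total weight $\frac1\lambda\sum U\tl$, and not $\lambda$ times as much. The factor $1/\lambda$ must land on both the distance term and the cut term simultaneously, and that is exactly what the definition provides: same size, length divided by $\lambda$. I would also choose $h'$ freely, since the projection algorithm accepts any $C'$; the choice of $h'$ cancels out.

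\textbf{Consequence.} Property 1 of \Cref{def:overviewDAGprojection} is exactly part 1. For property 2, let $D$ be routable in $G$ at cost $B$ with congestion $1$. By LP duality on $(G',\flpi(D))$, either $\flpi(D)$ is infeasible in $G'$ or its optimal cost is $\max\Dual(\phi,\tl)$. Part 2 then gives
\[
\max\Dual(\phi,\tl)\le\lambda\cdot\max\Dual(\psi,C)=\lambda\cdot\OPT_G(D)\le\lambda B.
\]
Infeasibility would correspond to an unbounded dual, which would transfer to an unbounded dual on $G$ and contradict feasibility of $D$ in $G$. Hence $\flpi(D)$ is routable in $G'$ with congestion $1$ and cost at most $\lambda B$.
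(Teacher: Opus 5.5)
\paragraph{Single source and sink.} For an $s$--$t$ demand your argument is correct and follows the paper's route. You view $\tl$ as a moving cut on $G'$, push it through the dual part of the projection algorithm, take $\psi$ to be a distance potential in $G-C$, and invoke the distance guarantee. Your bookkeeping $\tl_G=\leng(C)\cdot C$, which puts the factor $1/\lambda$ on the cut term, is exactly what is needed, and the paper's write-up is looser on this point. Your derivation of the \Proj{} property via LP duality also matches the paper's.

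\paragraph{The gap: several sources and sinks.} The lemma quantifies over every demand $D$, and this case is not handled.
\begin{itemize}
\item The super-source/super-sink reduction is not available as stated. The projection algorithm is associated only with the pair $(G,G')$, so you have no projection algorithm for the augmented graphs. Constructing one by hand is precisely the missing step.
\item Your fallback does not close it either. Suppose $\psi$ is the distance from a super-source with zero-length edges, as in \Cref{lem:ApxflowApxcostBoosting}. Then $\psi(u)=0$ at every source and $\psi(v)=\min_{u'}\dist_{G-C}(u',v)$ at every sink. The pairwise guarantee therefore only yields $\psi(v)-\psi(u)\ge\frac1\lambda\bigl(\phi(\last{v})-\max_{u'}\phi(\first{u'})\bigr)$. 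Your linearity argument needs $\frac1\lambda\bigl(\phi(\last{v})-\phi(\first{u})\bigr)$ for every source--sink pair, which fails when the $\phi(\first{u})$ differ.
\item Your normalization of $\phi$ is likewise justified only for a single source. With several sources, zero-offset distances from the set of source copies can strictly decrease $\Dual(\phi,\tl)$. It is also unnecessary, since any feasible $\phi$ already satisfies $\phi(\last{v})-\phi(\first{u})\le\dist_{G'-C'}(\first{u},\last{v})$.
\end{itemize}

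\paragraph{The fix.} Give the sources offsets. Let $P,N$ be the sources and sinks of $D$, and set $\psi(x):=\min_{u\in P}\{\phi(\first{u})/\lambda+\dist_{G-C}(u,x)\}$.
\begin{itemize}
\item This $\psi$ is feasible for the length function $\ell+\tl_G$.
\item It satisfies $\psi(u)\le\phi(\first{u})/\lambda$ for every $u\in P$.
\item By the distance guarantee, $\psi(v)\ge\phi(\last{v})/\lambda$ for every $v\in N$.
\end{itemize}
Substituting these bounds into $\Dual$ gives $\Dual(\psi,\tl_G)\ge\frac1\lambda\Dual(\phi,\tl)$ directly, with no reduction to a single pair; this is the paper's choice. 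In the super-source picture, it amounts to putting length increase $\bigl(\phi(\first{u})-\min_{u'}\phi(\first{u'})\bigr)/\lambda$ on the super-source edge to $u$, rather than using zero-length edges.
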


\begin{proof}
The first item is exactly the primal part of the projection algorithm.

We now prove the second item. Fix any demand $D$ on $G$, and let
$(\phi,\tl)$ be any feasible dual solution for the instance $(G',\flpi(D))$.
Let
\[
P:=\{u\in V(G)\mid D(u)>0\},
\qquad
N:=\{v\in V(G)\mid D(v)<0\}.
\]
Since $\flpi(D)$ is supported only on first and last copies, we have
\[
\Dual(\phi,\tl)
=
\sum_{v\in N}(-D(v))\phi(\last{v})
-\sum_{u\in P}D(u)\phi(\first{u})
-\sum_{e\in E(G')}U(e)\tl(e).
\]

View $\tl$ as a moving cut $C'$ on $G'$ by setting $C'(e):=\tl(e)$ for every
$e\in E(G')$. By dual feasibility,
\[
\phi(y)-\phi(x)\le \dist_{G'-C'}(x,y)
\qquad \forall x,y\in V(G').
\]
In particular, for every $u\in P$ and $v\in N$,
\[
\phi(\last{v})-\phi(\first{u})
\le
\dist_{G'-C'}(\first{u},\last{v}).
\]

Apply the dual part of the projection algorithm to $C'$. It returns a moving
cut $C$ on $G$ with the same size,
\[
\sum_{e\in E(G)}U(e)C(e)=\sum_{e\in E(G')}U(e)C'(e)
=\sum_{e\in E(G')}U(e)\tl(e),
\]
and such that for every $u,v\in V(G)$,
\[
\dist_{G'-C'}(\first{u},\last{v})
\le
\lambda\cdot \dist_{G-C}(u,v).
\]
Hence, for every $u\in P$ and $v\in N$,
\[
\phi(\last{v})-\phi(\first{u})
\le
\lambda\cdot \dist_{G-C}(u,v).
\]

Now define
\[
\psi(x):=\min_{u\in P}
\left\{
\frac{\phi(\first{u})}{\lambda}+\dist_{G-C}(u,x)
\right\}.
\]
Since $G-C$ uses the length function $\ell+C$, the pair $(\psi,C)$ is dual
feasible for the minimum-cost flow LP on $G$.

Moreover, for every $u\in P$,
\[
\psi(u)\le \frac{\phi(\first{u})}{\lambda},
\]
while for every $v\in N$, the previous inequality implies that for every
$u\in P$,
\[
\frac{\phi(\first{u})}{\lambda}+\dist_{G-C}(u,v)
\ge
\frac{\phi(\last{v})}{\lambda},
\]
and therefore
\[
\psi(v)\ge \frac{\phi(\last{v})}{\lambda}.
\]

Using these bounds, we get
\begin{align*}
\Dual(\psi,C)
&=
\sum_{x\in V(G)}(-D(x))\psi(x)-\sum_{e\in E(G)}U(e)C(e)\\
&\ge
\frac{1}{\lambda}
\left(
\sum_{v\in N}(-D(v))\phi(\last{v})
-\sum_{u\in P}D(u)\phi(\first{u})
\right)
-\sum_{e\in E(G)}U(e)C(e)\\
&=
\frac{1}{\lambda}
\left(
\sum_{v\in N}(-D(v))\phi(\last{v})
-\sum_{u\in P}D(u)\phi(\first{u})
-\sum_{e\in E(G')}U(e)\tl(e)
\right)\\
&=
\frac{1}{\lambda}\Dual(\phi,\tl).
\end{align*}
This proves the second item.

It remains to prove the third item, namely that $G'$ is a
$(\kappa,\lambda)$-\Proj{} of $G$.

For the first property of \Cref{def:DAGprojection}, let $D'$ be any demand
routable in $G'$ with congestion $1$ and cost $B$, and let $F'$ be a
corresponding flow. By the first item, $F'$ projects to a flow $F$ on $G$ with
\[
\Dem(F)=\pi(\Dem(F'))=\pi(D'),
\]
the same cost $B$, and congestion at most $\kappa$. Hence $\pi(D')$ is
routable in $G$ with congestion $\kappa$ and cost $B$.

For the second property, let $D$ be any demand routable in $G$ with congestion
$1$ and cost $B$. Suppose for contradiction that $\flpi(D)$ is not routable in
$G'$ with congestion $1$ and cost at most $\lambda B$. By LP duality, there
exists a feasible dual solution $(\phi,\tl)$ for $(G',\flpi(D))$ with
\[
\Dual(\phi,\tl)>\lambda B.
\]
By the second item, this yields a feasible dual solution $(\psi,C)$ for
$(G,D)$ such that
\[
\Dual(\psi,C)>\frac{1}{\lambda}\cdot \lambda B=B.
\]
By weak duality, every feasible flow routing $D$ in $G$ has cost greater than
$B$, contradicting the assumption. Therefore $\flpi(D)$ is routable in $G'$
with congestion $1$ and cost at most $\lambda B$.

So $G'$ is a $(\kappa,\lambda)$-\Proj{} of $G$.
\end{proof}

Now, we are ready to prove \Cref{lem:MCFtoDAG}.

\begin{proof}[Proof of \Cref{lem:MCFtoDAG}]
Let
\[
D':=\flpi(D).
\]
Run the oracle $\OPDMCF{\kappa,\lambda}$ on the instance $(G',D')$, and let
$(\hat F,\hat\phi,\hat\tl)$ be the returned primal-dual triple. Then
$\hat F$ routes $D'$ in $G'$ with congestion at most $\kappa$, and
\[
\cost(\hat F)\le \lambda\cdot \Dual(\hat\phi,\hat\tl).
\]

By \Cref{lem:algorithmimpliesprojection}(1), projecting $\hat F$ back to $G$
gives a flow $F$ routing
\[
\pi(D')=\pi(\flpi(D))=D,
\]
with the same cost and congestion at most $\kappa\kappa'$. Hence
\[
\cost(F)=\cost(\hat F).
\]

By \Cref{lem:algorithmimpliesprojection}(2), the feasible dual solution
$(\hat\phi,\hat\tl)$ for $(G',D')$ can be transferred to a feasible dual
solution $(\psi,C)$ for $(G,D)$ satisfying
\[
\Dual(\psi,C)\ge \frac{1}{\lambda'}\Dual(\hat\phi,\hat\tl).
\]
Therefore,
\[
\cost(F)
=
\cost(\hat F)
\le
\lambda\cdot \Dual(\hat\phi,\hat\tl)
\le
\lambda\lambda'\cdot \Dual(\psi,C).
\]

Thus $(F,\psi,C)$ is a $(\kappa\kappa',\lambda\lambda')$-\PDMCF{} solution for
$(G,D)$, as claimed.
\end{proof}

\section{Length Constrained Expander Hierarchy Construction}\label{sec:LCED}

The goal of this section is to prove \Cref{lem:weakLCEDhierarchy_construct} on the construction of weak length-constrained expander hierarchy. This will be used for our DAG projection construction in the next section.
Before stating the lemma, required background and definitions are given below. 

\paragraph{Basic definitions.} The following definitions of directed length-constraint expansion are identical to \cite{HaeuplerLSW25}. We put them here for completeness. We first need to define multi-commodity demand only for the purpose of LC-expander, as follows.

A multi-commodity \emph{demand} $D:V\times V\rightarrow\mathbb{R}_{\ge0}$ assigns
a non-negative value $D(v,w) \ge 0$ to each ordered pair of vertices in $V$, specifying that vertex $v$ wants to send $D(v,w)$ units of demand to vertex $w$. We note that it is always safe to assume $D(u,u) = 0$.
The size of a demand is written as $|D|$ and is defined as $\sum_{v,w} D(v,w)$. For each vertex $v$, we let $D(v,\cdot) = \sum_{w}D(v,w)$ and $D(\cdot,v) = \sum_{w}D(w,v)$ denote the total demand starting from and ending at $v$, respectively. We say a demand $D$ is
\begin{itemize}
    \item \emph{$h$-length constrained}: if for each $v,w \in V$ with $D(v,w) > 0$, we have $\dist_G(v,w) \le h$;
    \item \emph{symmetric}: if for any pair of vertices $v,w \in V$, $D(v,w) = D(w,v)$;
    \item \emph{a sub-demand of $D'$}: if for any pair of vertices $v,w \in V$, $D(v,w) \le D'(v,w)$;
    \item \emph{$A$-respecting for some node-weighting $A$}: if for any $v\in V$, $\max\{D(v,\cdot),D(\cdot,v)\}\leq A(v)$.
\end{itemize}

We say a flow $F$ routes the demand $D$ if there exists a path decomposition of $F$ such that for every $D(v,w)\ge 0$, there is a flow path of $F$ from $v$ to $w$ with value $D(v,w)$. 

\begin{definition}[Node-Weighting]\label{def:node-weighting}
    Given a directed graph $G=(V,E)$ with edge capacities, a \emph{node-weighting} is a function $A:V\to \bbR_{\ge 0}$. Notice that $\deg_C$ is a node weighting. For a single-commodity demand $D$, we write $A_D$ as the node-weighting $A_D(v)=|D(v)|$.
\end{definition}

\begin{definition}[$h$-Length Separated Demand]
    For any multi-commodity demand $D$ and any $h$-length moving cut $C$, we define the amount of $h'$-length separated demand as the sum of demands between vertices that are $h'$-length separated by $C$. We denote this quantity with $\sep_{h'}(C,D)$, i.e.,
    \begin{align*}
        \sep_{h'}(C,D) = \sum_{u,v : \dist_{G-C}(u,v)>h'} D(u,v).
    \end{align*}
\end{definition}

\begin{definition}[$h$-Length Sparsity of a Cut $C$ for Demand $D$]\label{dfn:CDSparse}
For any multi-commodity demand $D$ and any $h$-length moving cut $C$ and $h'$ with $\sep_{h'}(C,D)>0$, the $h'$-length sparsity of $C$ with respect to $D$ is the ratio of $C$'s size to how much demand it $h'$-length separates i.e.,
\begin{align*}
    \spa_{h'}(C,D) = \frac{|C|}{\sep_{h'}(C,D)}.
\end{align*}
\end{definition}

Above we generalize the definition of length-constrained moving cut w.r.t arbitrary directed $h$-length demand. However, for the definition of a directed length-constrained expander, we restrict to symmetric $h$-length demands.

\begin{definition}[$(h,s)$-Length Sparsity of a Cut w.r.t.\ a Node-Weighting]\label{def:sparsity}
The $(h,s)$-length sparsity of any $h\cdot s$-length moving cut $C$ with respect to a node-weighting $A$ is defined as:
\begin{align*}
    \spa_{(h,s)}(C,A) = \min_{A\text{-respecting h-length symmetric demand}\ D} \spa_{h \cdot s} (C,D).
\end{align*}
\end{definition}

\paragraph{Sequence of sparse moving cuts.} We can upper bound the total size of a sequence of sparse moving cuts.

\begin{definition}[Sequence of Moving Cuts]\label{dfn:seq_moving_cuts}
Given a directed graph $G = (V, E)$, and node-weighting $A$, let $(C_1, C_2, \dots,C_n)$ be a sequence of $h \cdot s$ moving cuts, let $G - \sum_{j < i} C_j$ denote the graph that is applied with cuts from $C_1$ to $C_{i-1}$. 
We define $(C_1, C_2, \dots,C_n)$ as a sequence of $(h,s)$-length $\phi$-sparse moving cuts
if and only if the $(h,s)$-length sparsity of $C_i$ w.r.t $A$ in $G - \sum_{j < i} C_j$ is at most $\phi$.
\end{definition}

We will prove the following lemma in \Cref{subsec:unionofsparsecuts}.
\begin{restatable}{lemma}{cutsequence}\label{lem:existential_cut_sequence_different_length}
Let $A$ be a node-weighting on a graph $G$ with positive integral edge lengths and capacities.
Let $C_1,\ldots,C_k$ be a sequence of moving cuts.
Let $h_1,\ldots,h_k>1$, $\phi\in(0,1)$, $1\le \alpha\le \log n$, and
\[
s>\frac{8\log_2 n}{\alpha}.
\]
Suppose that for every \(i\in[k]\), the cut \(C_i\) is \((h_i,s)\)-length \(\phi\)-sparse with respect to \(A\) in the graph
\[
G-\sum_{j<i} C_j.
\]
Then
\[
\sum_{i=1}^k |C_i|
\le
O\!\bigl(2^{8\alpha+2}\phi\log^2 n\bigr)\cdot |A|.
\]
\end{restatable}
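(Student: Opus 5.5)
The plan follows the standard ``union of sparse cuts is sparse'' argument from the length-constrained expander literature (as in \cite{HaeuplerLSW25,HaeuplerHT24}), adapted to directed symmetric demands and to cuts at different length scales $h_i$.

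\textbf{Step 1: witness demands and bucketing.} For each $i$, by $\phi$-sparsity of $C_i$ in $G_i:=G-\sum_{j<i}C_j$, fix an $A$-respecting $h_i$-length symmetric demand $D_i$ with
\[
|C_i|\le \phi\cdot \sep_{h_i s}(C_i,D_i)\quad\text{in }G_i .
\]
Thus it suffices to show $\sum_i \sep_{h_is}(C_i,D_i)\le O(2^{8\alpha+2}\log^2 n)\,|A|$. Group the indices by dyadic scale $h_i\in[2^k,2^{k+1})$. There are $O(\log n)$ relevant scales, since lengths are polynomially bounded; this contributes one $\log n$ factor. Within a scale we may treat all $h_i$ as equal up to a factor $2$, which is absorbed into $s$.

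\textbf{Step 2: a potential (ball-growing) argument at a fixed scale $h$.} For each vertex $v$ and each radius threshold $r\in\{h\cdot 2^{t}\}$, consider the mass of $A$ in the round-trip ball of $v$ of radius $r$ in the current graph. Because the demands are symmetric and $h$-length, each separated pair $(u,w)$ is at round-trip distance at most $2h$ before the cut $C_i$ and more than $hs$ after it. Length only increases over time, and distances never exceed the polynomial bound. So along the $\log_2 n$ doublings between $2h$ and $hs>8h\log_2 n/\alpha$, some vertex in each separated pair must see its ball at radius $\approx h\,2^{t}$ shrink. Choose the radii at spacing $hs/\log n$. Applying the pigeonhole principle with parameter $\alpha$, there is a level where the ball mass drops by a factor $2^{-O(\alpha)}$ relative to the next level; this is what produces the $2^{8\alpha}$ term. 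Charge each unit of separated demand to its endpoint $u$ and to that level. Since the demand is $A$-respecting, the charge on $u$ in a single cut is at most $A(u)$ times the fractional decrease of a ball mass.

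\textbf{Step 3: summing over time.} For a fixed vertex and level, ball masses are monotonically nonincreasing over the sequence of cuts. We therefore use the potential $\log(\text{ball mass})$: each charge is at most $2^{O(\alpha)}$ times the decrease in this potential, and the potential has total range $O(\log n)$. Summing over vertices, weighted by $A$, gives $2^{O(\alpha)}\log n\cdot |A|$ per scale. Multiplying by the $O(\log n)$ scales yields the claimed bound $O(2^{8\alpha+2}\phi\log^2 n)|A|$ after multiplying by $\phi$.

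\textbf{Main obstacle.} The hard part is Step 2 in the directed setting. Round-trip balls are not metric balls in the usual sense. The argument that separating a pair must shrink a ball around one endpoint needs the symmetric-demand assumption, used to relate the forward and backward distances, together with a careful choice of the ball notion (out-ball $\cap$ in-ball). I would first attempt it with out-balls only, using the two directed pairs $(u,w)$ and $(w,u)$ that symmetry provides. If that fails, I would fall back to round-trip distance, paying a factor $2$ in radius, which the slack $s>8\log_2n/\alpha$ should accommodate.
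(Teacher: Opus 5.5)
\textbf{Verdict: your outer structure is right, but Steps~2--3 (the fixed-scale bound) have a genuine gap.} The paper imports that bound from \cite{HaeuplerLSW25}, and your discretized ball-growing scheme does not deliver it as written.

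\textbf{What matches the paper.} You bucket the cuts by the dyadic scale of $h_i$. The paper makes ``absorbed into $s$'' precise by viewing $\widehat C_i=(h_i/2^\ell)C_i$ as an $\widehat h_\ell(s/2)$-length cut with $\widehat h_\ell=2^{\ell+1}$. This cut induces exactly the same length increase as $C_i$ and is $(\widehat h_\ell,s/2)$-length $2\phi$-sparse, which is why the hypothesis reads $s>8\log_2 n/\alpha$. You then run a potential argument whose potential is monotone under \emph{every} length increase. That monotonicity, which you note in Step~3, is exactly what \Cref{lem:existential_cut_sequence_general} adds. The cuts of one bucket are interleaved with cuts of other buckets, so \Cref{lem:existential_cut_sequence} cannot be applied to a bucket verbatim, but the bucket's potential drops are bounded by the full telescoping sum. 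Your use of round-trip balls for directedness is also fine.

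\textbf{Where the scheme breaks.} The radius at which an endpoint loses a $2^{-O(\alpha)}$ fraction of its ball has to be chosen per pair and per cut. Write $B_x(r),B'_x(r)$ for round-trip balls before and after the cut.
\begin{itemize}
\item You need an $r$ around which the $A$-mass grows by at most $2^{O(\alpha)}$ over a window of width about $4h$.
\item Then $B_u(r)\subseteq B_w(r+2h)$, together with disjointness of $B'_u(r)$ and $B'_w(r+2h)$ (valid when $2r+2h\le hs$), forces $u$ to lose half of $B_u(r)$ or $w$ to lose a $2^{-O(\alpha)}$ fraction of $B_w(r+2h)$.
\item The charge must go to whichever endpoint lost mass, not to a fixed endpoint.
\end{itemize}
Because this radius varies, the monotone quantity that pays for all charges is $\sum_u A(u)\sum_t \ln A(B_u(r_t))$ over all candidate radii. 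Its range is $O(\log n)$ \emph{per radius}, not in total. With your $\Theta(\log n)$ radii this costs an extra $\log n$ factor, giving roughly $2^{O(\alpha)}\phi\log^3 n\,|A|$. Nothing in the scheme pins the exponent to $8\alpha$, which matters because the $O(\cdot)$ sits outside $2^{8\alpha+2}$. Separately, there are only $\log_2(s/2)$ doublings between $2h$ and $hs$, not $\log_2 n$.

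\textbf{The missing idea: one smoothed ball mass per vertex.} This is the exponential distance weight $w_h^\alpha$ from the proof of Lemma~3.17 in \cite{HaeuplerLSW25}. It aggregates all radii at once, with weight decaying by roughly $2^{-\Theta(\alpha)}$ per $h$ of distance. The potential is $P_i(u)=A(u)\ln w_i(u)$, with $\ln w_1(u)\le\ln n$. The real work is the transfer step: a cut that is $\phi$-sparse for some $A$-respecting symmetric $h$-length demand separates at least $2^{-8\alpha-1}|C|/\phi$ of the associated exponential demand at length $hs/2$.
\begin{itemize}
\item Symmetry makes the weights seen from $u$ and from $w$ agree up to $2^{O(\alpha)}$.
\item The condition $s>4\log_2 n/\alpha$ makes weight beyond distance $hs/2$ negligible.
\end{itemize}
Hence each cut lowers $\sum_u P_i(u)$ by at least $2^{-8\alpha-2}|C|/\phi$. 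Telescoping gives $2^{8\alpha+2}\phi\ln n\,|A|$ per bucket, and summing over the $O(\log n)$ buckets gives the statement.
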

\paragraph{Decomposition.} We are only going to use weak expander decomposition in this paper, which is defined as follows.

\begin{definition}[Weak Length-Constraint Expander (LC-ED) Decomposition]\label{def:weakLC-ED}
    Given a graph $G$ with positive integral capacities and lengths, let $C$ be a $h$-length moving cut and $A$ be a node-weighting.
    
    We say $A$ is \emph{$(h,s)$-length $C$-constraint $\phi$-expanding} if every $A$-respecting $h$-length symmetrically constrained single-commodity demand $D$ on $G-C$ can be routed on $G$ by a flow with length bound $h\cdot s$ and congestion $1/\phi$. 

    Moreover, we say $A$ is \emph{$(\le h,s)$-length $C$-constraint $\phi$-expanding} with length slack $s$ if for every $h'\le h$, every $A$-respecting $h'$-length symmetrically constrained single-commodity demand $D$ on $G-C$ can be routed on $G$ by a flow with length bound $h'\cdot s$ and congestion $1/\phi$. 

    An \emph{associated routing algorithm} for $A$ is an algorithm with input a $A$-respecting $h$-length (or $h'$-length with $h'\le h$ in the second case) symmetrically constrained single-commodity demand $D$ on $G-C$ with a witness, output a flow routing the demand.
\end{definition}

We define a weak LC-ED hierarchy as follows. For a sequence of moving cuts $C_1,...,C_t$, we write $C_{> i}=\sum_{j> i}C_j$. 

\begin{definition}[Weak LC-ED hierarchy]\label{def:weakLC-EDhierarchy}
    Given a graph $G$ with positive integral capacities and lengths, a $\le h$-length $(\phi,s)$-expanding $t$-layer weak LC-ED hierarchy is a sequence of moving cuts $C_1,...,C_t$ such that $\sum_{i\in[t]}C_i(e)\ge1$ for all $e\in E$, and for every $h'\le h$, we interpret $C_{>i}$ as an $h'$-length moving cut, then $\deg_{C_i}$ is $\le h'$-length $C_{>i}$-constraint $(\phi,s)$-expanding.
\end{definition}

Now, we are ready to state the goal of this section.

\begin{restatable}{lemma}{weakLChierarchy}\label{lem:weakLCEDhierarchy_construct}
Let $G=(V,E)$ be a directed graph with positive integral capacities and lengths,
let $h\ge 1$, let $\eps\in(\log\log n/\log n,1),\phi\in\left(1/\poly(n),\ \frac{1}{\kappa\kappa' n^{c\eps}}\right)$ for some sufficiently large constant $c$, then there exists
$s=(\log n)^{O(1/\eps)}$
and an algorithm that outputs a $\le h$-length $(\phi,s)$-expanding
$t$-layer weak LC-expander hierarchy together with efficient associated routing algorithms, where
\[
t
=
O\!\left(
\frac{\log m}{\log(1/(\kappa\kappa' n^{O(\eps)}\phi))}
\right).
\]
The algorithm makes $\frac{m^{1+O(\eps)}}{\phi^3}$-efficient calls to $\OProj{(hs,\kappa',\lambda')}$ and $\OPDMCF{\kappa,\lambda}$ where $\lambda,\lambda'\in [1,2]$
\end{restatable}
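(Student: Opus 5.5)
We build the hierarchy bottom-up, one layer at a time, using a sparse-cut-or-certify subroutine. We then bound the number of layers with \Cref{lem:existential_cut_sequence_different_length}.

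\textbf{Core subroutine.} Given a node-weighting $A$ and a current moving cut $C_{>i}$, it returns one of two things:
\begin{enumerate}[(a)]
\item an $(h',s)$-length $\phi'$-sparse moving cut $C$ with respect to $A$ in $G-C_{>i}$, where $\phi'=\phi\cdot\kappa\kappa' n^{O(\eps)}$; or
\item a certificate, together with a routing algorithm, that $A$ restricted to the remaining part is $(\le h,s)$-length $C_{>i}$-constraint $\phi$-expanding.
\end{enumerate}
Concretely, for each dyadic $h'\le h$ we do the following. First we compute a directed fat LDD of $G-C_{>i}$ with diameter $h'$, obtained from the undirected variant of \Cref{sec:FatLDD}. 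This lets us restrict attention to clusters in which every pair is $h'$-symmetric, so $h'$-length symmetrically constrained demands live inside clusters. Inside the clusters we run a length-constrained directed cut-matching game for $O(\log^2 n)$ rounds. In each round, the cut player proposes a bipartition of an $A$-weighted cluster. The matching player then calls $\OProj{hs,\kappa',\lambda'}$ to obtain a DAG, and runs $\OPDMCF{\kappa,\lambda}$ on it to route $A$-respecting demand across the bipartition with length $h's$. By \Cref{lem:MCFtoDAG}, this gives a $(\kappa\kappa',\lambda\lambda')$-PD-MCF on $G-C_{>i}$. Either most of the demand is routed, with congestion $\kappa\kappa'/\phi$, and is embedded as a matching, or the dual $(\psi,\tl)$ yields a moving cut of length $h's$ separating a large part of the demand. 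The transfer from this dual to a sparse cut follows the same scaling argument as in the proof of \Cref{lem:weakLPboosting}, and it loses the factor $\kappa\kappa'$. If the game ends with a union of matchings that forms an $n^{O(\eps)}$-step expander, then composing paths gives routing with length $h's\cdot(\log n)^{O(1/\eps)}$ and congestion $1/\phi$. The value $s=(\log n)^{O(1/\eps)}$ comes from this composition, together with the fat-LDD overhead and the $O(1/\eps)$ levels of the cut-matching recursion.

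\textbf{Building the layers.} We set $A_1=\deg_G$ and proceed top-down. Repeatedly apply the subroutine to the current node-weighting and accumulate the sparse cuts it returns. By \Cref{lem:existential_cut_sequence_different_length}, with $\alpha$ chosen so that $s>8\log n/\alpha$, the total size of the accumulated cuts is at most $\phi'\,\polylog(n)\,2^{O(\alpha)}|A|$. Call this total $C_{(j)}$. The remaining weighting is certified to be expanding relative to the cuts accumulated so far. The next layer uses $A_{j+1}=\deg_{C_{(j)}}$. This has mass shrinking by a factor $\phi'\,n^{O(\eps)}=\phi\kappa\kappa'n^{O(\eps)}$ per level, so after $t=O(\log m/\log(1/(\kappa\kappa' n^{O(\eps)}\phi)))$ levels the mass is below $1/\poly$ and the process terminates. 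Reindex the layers so that the cut produced at each level plays the role of $C_{>i}$ for the earlier-created levels. We also need $\sum_i C_i(e)\ge 1$; this holds because the first layer is $\deg_G$ itself, so we can take $C_1\equiv 1$. There is one ordering subtlety. Expansion of $\deg_{C_i}$ must hold against $C_{>i}$, which is the set of cuts added later. We handle this by re-running the certification after all deeper cuts are fixed; since adding more cut only makes the graph $G-C_{>i}$ longer, certificates remain valid if stated with respect to the final $C_{>i}$. The cost of this step is absorbed into $m^{1+O(\eps)}/\phi^3$: there are $1/\phi$ iterations of the boosting, $1/\phi^2$ from matching-player congestion, and all oracle calls are made on instances of size $m^{1+O(\eps)}$.

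\textbf{Main obstacle.} The hard step is the directed length-constrained cut-matching game. The cut player must work on directed symmetric demands and produce bipartitions whose failure to embed actually certifies an $h'$-length sparse moving cut rather than just a long-distance separation. My first attempt would be to use the round-trip (symmetrized) distances supplied by the fat LDD, so that within a cluster the directed demand pairs are mutually close. I would then run an undirected-style cut strategy on these clusters, using random projections over the embedded matchings, while routing each matching in both directions through the DAG oracle to handle directionality.
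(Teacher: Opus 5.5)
\textbf{Gap: the uncertified node-weight is never accounted for.} The layer-building step has a genuine hole. Your subroutine (b) certifies only a sub-weighting $A'_j\preceq A_j$, with $|A_j|-|A'_j|\le \Delta\phi|A_j|$. This is true of any cut-matching-based certificate; cf.\ \Cref{lem:cutmatchinggame} and \Cref{lem:LCexpanderdecomposition}. Taking the minimum over the $O(\log n)$ dyadic scales shrinks $A'_j$ further, and you need that minimum, because a single cut must work when interpreted at every $h'\le h$.

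Yet your layer cuts are $C_1\equiv 1$ followed by the accumulated cuts, so $\deg_{C_j}=A_j$. Nothing in your argument shows that $\deg_{C_j}$ itself is $C_{>j}$-constraint expanding; the vertices with $A'_j(v)<A_j(v)$ are simply unaccounted for. You cannot repair this by shrinking $C_j$ alone, since that breaks $\sum_i C_i(e)\ge 1$. And setting $A_{j+1}=\deg_{C_{(j)}}$ drops the uncertified part from every later layer.

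\textbf{The missing idea: residual bookkeeping.} The paper maintains a residual cut $C'_j$, with $C'_1\equiv 1$ and $A_j=\deg_{C'_j}$, and proceeds as follows.
\begin{itemize}
\item Take $C_j\preceq C'_j$ maximal subject to $\deg_{C_j}\preceq A'_j$.
\item Push both the remainder $R_j=C'_j-C_j$ and the new sparse cut $C''_j$ into the next layer: $C'_{j+1}=C''_j+R_j$.
\item By maximality, each edge with $R_j(e)>0$ has an endpoint $v$ saturated in $A'_j$. Charging $U(e)R_j(e)$ to that endpoint gives $|R_j|\le |A_j|-|A'_j|\le\Delta\phi|A_j|$. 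So the per-layer shrink factor stays $O(\Delta\phi)$, and your layer count survives.
\item Covering follows from the backward induction $\sum_{k\ge j}C_k\succeq C_j+C'_{j+1}=C'_j+C''_j\succeq C'_j$.
\item The same induction gives $C_{>j}\succeq C''_j$. The monotonicity you identified then transfers the certificate from $C''_j$ to $C_{>j}$, so no re-running is needed.
\end{itemize}

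\textbf{Last layer.} It must be certified directly, not by another call to the subroutine, which would create yet another layer. Once $|A_t|\le 1$, capacities at least $1$ make root-routing inside each witness cluster trivially feasible.

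\textbf{Re-deriving the cut-matching machinery is unnecessary.} \Cref{lem:LCexpanderdecomposition} already gives the per-scale decomposition, and the hierarchy proof is bookkeeping on top of it. If you do re-derive it, note that an $hs$-length projection does not support \Cref{lem:MCFtoDAG}. You need the length-capped dual of \Cref{cor:PDMCFtoDAG}, together with a case split on whether the average flow cost is at most $hs/2$.
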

\subsection{FatLDD for Round-Trip Distance}\label{subsec:FatLDD}

In this section, we will describe an important subroutine that extends the normal LDD on directed graphs to allow big separation between clusters, which we call FatLDD, stated in \Cref{lem:directedlddwithseparationFatLDD}.

\newcommand{\Rdist}{\widehat{\mathrm{dist}}}

The separation will be based on the symmetric round-trip distance, which is defined as follows.
\begin{definition}[Round-trip distance]\label{def:roundtrip}
    On a directed graph $G=(V,E)$ with length $\ell$, the \emph{round-trip} distance between two vertices $u,v$ is defined as
    \[\Rdist_G(u,v)=\dist_G(u,v)+\dist_G(v,u)\]
\end{definition}

We will use the existing FatLDD construction for undirected graphs on the graph $G'$.
We begin with the notions of \emph{clustering} and \emph{neighborhood cover}.

\begin{definition}[Clustering]\label{def:clustering}
Given a graph $G=(V,E)$ with distance function $\dist:V\times V\to \bbR_{\ge 0}$,
a \emph{clustering} $\cS=\{S_1,\dots,S_{|\cS|}\}$ is a collection of pairwise disjoint
vertex sets, called \emph{clusters}. We say that $\cS$ has \emph{diameter}
$h_{\diam}$ (with respect to $\dist$) if
\[
\dist(u,v)\le h_{\diam}
\qquad\text{for all }i\in[|\cS|]\text{ and all }u,v\in S_i.
\]
We say that $\cS$ has \emph{separation factor} $s$ if
\[
\dist(S_i,S_j)\stackrel{\mathrm{def}}{=}\min_{u\in S_i,\ v\in S_j}\dist(u,v)\ge s\cdot h_{\diam}
\qquad\text{for all }i\neq j.
\]

A \emph{separation witness} for separation factor $s$ is a collection of pairwise disjoint
vertex sets $S_1',\dots,S_{|\cS|}'$ such that
\[
\ball_{\dist}\!\left(S_i,\frac{s\,h_{\diam}}{2}\right)\subseteq S_i'
\qquad\text{for every }i\in[|\cS|].
\]
Intuitively, the sets $S_i'$ certify that different clusters are well separated. As usual,
one may replace the constant $1/2$ above by any smaller absolute constant by adjusting $s$
by a constant factor.
\end{definition}

\begin{definition}[Neighborhood Cover]
Given a graph $G=(V,E)$ with distance function $\dist:V\times V\to \bbR_{\ge 0}$,
a \emph{neighborhood cover} $\cN$ with \emph{width} $\omega$, \emph{covering radius}
$h_{\cov}$, \emph{diameter bound} $h_{\diam}$, and \emph{separation factor} $s$
(with respect to $\dist$) is a collection of $\omega$ clusterings
\[
\cN=\{\cS_1,\dots,\cS_\omega\},
\]
where each $\cS_i$ has diameter $h_i\le h_{\diam}$ and separation factor $s$, such that
for every vertex $v\in V$, there exists a cluster $S$ in some $\cS_i$ satisfying
\[
\ball_{\dist}(v,h_{\cov})\subseteq S.
\]
When we say that the cover comes \emph{with separation witnesses}, we mean that each
clustering $\cS_i$ is accompanied by a separation witness.
\end{definition}

We have the following lemma. The proof is deferred to \Cref{sec:FatLDD}.

\begin{restatable}{lemma}{FatLDD}\label{lem:directedlddwithseparationFatLDD}
For any $h_{\cov}>0$, $s>1$, and $\eps\in(0,1)$, and any directed graph
$G=(V,E)$ with edge lengths $\ell$, there is a randomized algorithm that computes
a neighborhood cover with respect to $\Rdist_G$, together with separation witnesses,
with
\begin{itemize}
    \item covering radius $h_{\cov}$,
    \item separation factor $s$,
    \item diameter bound
    \[
    h_{\diam}=\frac{1}{\eps}\cdot O(s\log^2 n)^{O(1/\eps)}\cdot h_{\cov},
    \]
    \item width
    \[
    \omega=n^{O(\eps)}\cdot \log n.
    \]
\end{itemize}
The algorithm runs in $\hO{m}$ time given that both $h_\diam/h_{\cov}$ and $\omega$ are in the order of $n^{o(1)}$.
\end{restatable}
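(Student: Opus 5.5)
The plan is to reduce to the undirected setting, as the paper announces in \Cref{subsec:FatLDD}. First I build an auxiliary undirected graph $G'$ whose shortest-path metric approximates $\Rdist_G$ up to a small distortion. Then I invoke an existing undirected FatLDD / neighborhood-cover construction with separation witnesses on $G'$, and finally pull the clusters back to $V$.

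The obstacle is that $\Rdist_G$ is a metric (it is symmetric and satisfies the triangle inequality), but it is not the shortest-path metric of an explicit sparse undirected graph. To handle this I use the standard round-trip spanner / low-distortion embedding approach. I sample $O(\log n)$-wise random orderings or scales in the style of Cohen's round-trip covers. For each scale $h_j=2^j$ and each sampled center $c$, I compute $\ball^{\mathrm{out}}(c,h_j)\cap\ball^{\mathrm{in}}(c,h_j)$ with directed shortest-path computations. I then connect every vertex $v$ of such a round-trip ball to a virtual node $c_j$ by an undirected edge of length $\Theta(h_j)$.

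This gives an undirected graph $G'$ on $V$ plus virtual nodes with the following properties:
\begin{enumerate}
\item $\dist_{G'}(u,v)\ge \Rdist_G(u,v)/O(1)$, because every $G'$-edge is witnessed by a round-trip through a center.
\item $\dist_{G'}(u,v)\le O(\log^2 n)\cdot\Rdist_G(u,v)$ with high probability, because a random center at the right scale captures both $u$ and $v$.
\end{enumerate}
To keep the size and time at $\hO{m}$, I restrict each vertex to polylogarithmically many balls via random-rank pruning, exactly as in round-trip cover constructions. The distortion $O(\log^2 n)$ is what feeds into the $O(s\log^2 n)$ factor in $h_\diam$.

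Next I apply the undirected neighborhood cover with separation, e.g.\ the Haeupler--Hershkowitz--Tan style FatLDD cover, on $G'$. I use covering radius $h'_{\cov}=O(\log^2 n)\,h_{\cov}$ and separation factor $s'=O(\log^2 n)\,s$. This returns $\omega=n^{O(\eps)}\log n$ clusterings with diameter $\frac1\eps O(s')^{O(1/\eps)}h'_{\cov}$ and separation witnesses in $G'$.

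The cover then transfers back to $G$ through the two distortion inequalities:
\begin{enumerate}
\item \emph{Covering.} Balls $\ball_{\Rdist}(v,h_{\cov})$ are contained in $G'$-balls of radius $h'_{\cov}$, so each is covered.
\item \emph{Diameter.} Because $\Rdist\le O(1)\dist_{G'}$, a cluster's $\Rdist$-diameter is at most an $O(1)$ factor times its $G'$-diameter, which stays within the claimed $h_\diam$.
\item \emph{Separation.} A separation of $s'h'_\diam$ in $G'$ yields separation at least $s'h'_\diam/O(\log^2n)\ge s\,h_\diam$ in $\Rdist$.
\item \emph{Witnesses.} The $G'$-witnesses restricted to $V$ serve as witnesses, since $\Rdist$-balls are contained in correspondingly scaled $G'$-balls.
\end{enumerate}
Virtual nodes are simply dropped from the clusters.

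The main obstacle I expect is the upper-distortion step: guaranteeing $\dist_{G'}\le O(\log^2 n)\Rdist_G$ with high probability for all pairs while keeping $G'$ near-linear in size. I would handle this with the scale-by-scale random-shift argument, using a padded decomposition probability $\Omega(1/\log n)$ per scale and $O(\log n)$ repetitions. The running time is $\hO{m}$ because each step uses near-linear directed SSSP over $n^{o(1)}$ scales and repetitions, together with the undirected cover's near-linear time when $h_\diam/h_{\cov},\omega=n^{o(1)}$.
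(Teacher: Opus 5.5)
Your overall reduction matches the paper's: approximate the round-trip metric by an undirected graph $G'$ with $O(\log^2 n)$ distortion, run an undirected cover with inflated radius and separation, and pull clusters and witnesses back. Your transfer of covering, diameter, separation and witnesses is also fine. The gaps are both in efficiency, which is where the content of the lemma lies: without a time bound your distortion claims are trivial (just make every vertex a center).

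\textbf{The construction of $G'$ in near-linear time.} Running out- and in-Dijkstra from many centers to get $\ball^{\mathrm{out}}(c,h_j)\cap\ball^{\mathrm{in}}(c,h_j)$ costs the size of the one-sided balls, not of their intersection. Random-rank (least-element) pruning only bounds rescans for one-sided distances. Under the natural pruning rule, a vertex lying in the out-balls of many centers but in none of their in-balls is never assigned, so every out-Dijkstra rescans it. Your fallback, a ``random-shift padded decomposition of the round-trip metric'', presupposes exactly the object that is hard to compute. Also, $\Omega(1/\log n)$ success per scale with $O(\log n)$ repetitions gives only constant probability, not high probability.

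The missing tool is a directed LDD (\Cref{lem:LDD}), in which each edge is reversed with probability $O(\log^2 n)\,\ell(e)/d$. If no edge on the shortest $u\to v$ and $v\to u$ paths is reversed, the part index is nondecreasing along both paths, so $u$ and $v$ share a part. With $d=\Theta(\log^2 n)\cdot\widehat{\mathrm{dist}}_G(u,v)$ this happens with constant probability. Taking $O(\log n)$ independent LDDs per dyadic scale and starring each part with edges of length $\Theta(d)$ yields $G'$ with high probability, using only $\tilde O(1)$ SSSP calls.

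\textbf{The running time of the undirected cover.} You assume it is near-linear whenever $h_{\diam}/h_{\cov}=n^{o(1)}$. The available result (\Cref{lem:undirfatLDD}) instead runs in $m\cdot h_{\diam}\cdot n^{O(\eps)}$ time. It depends on the absolute diameter of an integer-length graph, which is polynomial whenever $h_{\cov}$ is.

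The paper inserts a hop-reduction step to fix this:
\begin{enumerate}
\item Add Cohen's hopset \cite{Cohen00} to $G'$, with hopbound $h=n^{o(1)}$ and stretch $1.1$.
\item Rescale all lengths by $\lambda=\alpha h_{\cov}/h$ and round up. Every pair is then joined by an $h$-hop path whose rescaled length is at most $1.1$ times its scaled distance plus an additive $h$.
\item Call the undirected cover with covering radius $10h$ and separation factor $40\alpha s$. This forces every clustering diameter to satisfy $\widehat h_i\ge 10h$, so the additive rounding error is absorbed in the covering, separation and witness arguments.
\end{enumerate}
After this step the absolute diameter in the rescaled graph is $n^{o(1)}$, so the cover runs in $m^{1+o(1)}$ time. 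Without it, your running-time claim does not follow.
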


\subsection{LC-Expander Decomposition}

In this section, we will show an algorithm that use cut-matching game to either find a cut that separates a large amount of demands, or certify that there is a large amount of node-weighting that is expanding. See \Cref{lem:LCexpanderdecomposition} for the main result of this section. We first provide some helpful lemmas.

The algorithm involves the following non-stop cut matching game from \cite{abs-2507-09729}.
\newcommand{\dd}{\mathbf{d}}
\newcommand{\Omatch}{\cO_{\mathrm{match}}}
\begin{definition}[Matching]\label{def:matching}
    Let $(P,Q)$ be a partition of $V$. A set of directed edges $M \subseteq P \times Q$ with capacities $U_M : M \to \bbR_{\ge 0}$ is called a \emph{$(P,Q)$-matching}. For a demand bound $\dd : V \to \bbR_{\ge 0}$ and an error $\eps \ge 0$, we say $M$ is \emph{$(\dd,\eps)$-perfect} if
    \begin{enumerate}
        \item $\vol_M \le \dd$ (i.e. $M$ does not send more than $\dd(v)$ out of any vertex), and
        \item $\vol_M(V) \ge (1 - \eps)\cdot \dd(V)$.
    \end{enumerate}
\end{definition}

\begin{lemma}[\cite{abs-2507-09729}]\label{lem:cutmatchinggame}
    Let $\dd : V \to \bbN$ be polynomially bounded and let $\eps = o(1/\log^2 n)$. Suppose we have an oracle $\Omatch$ that, for any partition $(P,Q)$ of $V$ and any $\dd' \le \dd$ with $\dd'(P) = \dd'(Q)$ and $\dd'(V) \ge \dd(V)/2$, returns a $(\dd',\eps)$-perfect $(P,Q)$-matching.

    Then there is a randomized algorithm that makes $O(\log^2 n)$ calls to $\Omatch$ and $O(m \log^2 n)$ additional work with $\tO{1}$ depth, and produces $\tilde{\dd} \le \dd$ with
    \[
        \tilde{\dd}(V) \ge \bigl(1 - O(\eps \log^2 n)\bigr) \cdot \dd(V)
    \]
    such that, with high probability, $\tilde{\dd}$ is $\Omega(1)$-expanding in the graph
    \[
        W := \bigcup_i M_i,
    \]
    where $M_i$ is the matching returned by the $i$-th call to $\Omatch$.
\end{lemma}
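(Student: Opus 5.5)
The plan is to run a cut-matching game in the style of Khandekar--Rao--Vazirani and R\"acke--Shah--T\"aubig, adapted to directed graphs and to imperfect matchings. We play $T=O(\log^2 n)$ rounds. We maintain the current demand bound $\dd_i\le\dd$, starting from $\dd_0=\dd$, and an implicit ``flow matrix'' $F_i$ recording how $\dd_i$-mass has been mixed by the matchings $M_1,\dots,M_i$. Here $F_i$ is the product of the lazy averaging operators induced by the matchings, normalized by $\dd$. Expansion will come from the standard potential
\[
\Psi_i=\sum_{v}\dd_i(v)\,\bigl\|F_i(v)-\bar F_i\bigr\|^2,
\]
which starts at $O(n)$ (after normalization) and which we want to push below $1/\poly(n)$. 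The standard argument shows that small potential certifies that $\dd_T$ is $\Omega(1)$-expanding in $W$: every vertex's mass is spread almost uniformly, so every cut is crossed by a constant fraction of its volume.

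In each round, the cut player picks a random Gaussian vector $r$ and computes the projections $u(v)=\langle F_i(v),r\rangle$. This takes $O(m)$ work per round via the implicit product representation, so $O(m\log^2 n)$ overall. It then sorts by $u$ and chooses a threshold, splitting $V$ into $(P,Q)$ and a sub-bound $\dd'\le\dd_i$ with $\dd'(P)=\dd'(Q)\ge \dd_i(V)/2$, trimming boundary vertices fractionally. We alternate rounds in which the low side plays the role of $P$ and rounds in which the high side does. This way both $P\to Q$ and $Q\to P$ directions get matched, which is what directed expansion needs. The oracle $\Omatch$ returns a $(\dd',\eps)$-perfect matching $M_i$. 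We then update $F_{i+1}$ by averaging along $M_i$. The usual projection lemma (a random Gaussian preserves squared distances up to $O(\log n)$ w.h.p.) gives an expected potential drop of a $1-\Omega(1/\log n)$ factor for the matched mass. Over $O(\log^2 n)$ rounds this drives $\Psi$ down by a $\poly(n)$ factor.

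The non-stop part handles imperfection. Vertices that the matching underserves, i.e. whose $\vol_{M_i}(v)$ is less than $\dd'(v)$, have their deficit subtracted permanently from $\dd_{i+1}$. The total removal per round is at most $\eps\,\dd'(V)\le\eps\,\dd(V)$. Over $O(\log^2 n)$ rounds this gives $\tilde\dd(V)\ge(1-O(\eps\log^2 n))\dd(V)$, which is exactly the claimed bound. The assumption $\eps=o(1/\log^2 n)$ keeps this loss below a constant, so the condition $\dd'(V)\ge\dd(V)/2$ required by the oracle continues to hold throughout.

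The main obstacle is showing that deleting mass does not destroy the potential argument. The deleted vertices may be exactly the ones carrying the potential, and removing weight changes the mean $\bar F_i$. I would handle this by defining $\Psi$ only over the surviving weight $\dd_i$ and checking two facts. First, deleting weight can only decrease the unnormalized sum. Second, the shift of the centroid costs at most $O(\eps)$ relative potential per round, which is dominated by the $\Omega(1/\log n)$ drop. Finally, the directed certificate follows by applying the undirected mixing argument separately to forward and backward walks along the alternately oriented matchings.
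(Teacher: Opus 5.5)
The paper does not prove this lemma; it imports it from \cite{abs-2507-09729}, so your sketch has to stand on its own. Its undirected skeleton (KRV/RST projections, permanently deleting deficits, the $O(\eps\log^2 n)$ loss) is the right one. The directed part, however, fails as written.

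\textbf{The directed certificate.} Lazy averaging along $M_i\subseteq P\times Q$ moves half of each $Q$-endpoint's content back to its $P$-partner, and no edge of $W$ supports that transfer. So a small $\Psi$ certifies only that the undirected version of $W$ expands. Alternating which side is called $P$ does not repair this. Write $M(X,Y)$ for the capacity of $M$-edges from $X$ to $Y$. Every $(P,Q)$-matching satisfies $M(S,\bar S)-M(\bar S,S)=\vol_M(P\cap S)-\vol_M(Q\cap S)$, so the orientation across $S$ is dictated by where the bisection cuts $S$, and your rule does not control that.

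Here is a concrete failure, even with $\eps=0$ and $\dd\equiv 1$. Let $S:=P_1$; after round~1 the rows of $a$ and its twin $M_1(a)$ coincide. Suppose that from then on the oracle matches twin pairs mirror-symmetrically: $a\to a'$ inside $S$ together with $M_1(a)\to M_1(a')$ inside $\bar S$, and a twin pair split by the threshold is matched to itself. This keeps all twin rows equal forever. Every matching is perfect across the cut player's bisection, so your potential does reach $1/\poly(n)$. Yet after round~1 only the $O(1)$ split twin pairs per round contribute edges between $S$ and $\bar S$. Hence $|E_W(\bar S,S)|=O(\log^2 n)$ while $|S|=n/2$.

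Your closing appeal to ``forward and backward walks'' does not help, because the cut player never drives down the potential of any directed process. The standard remedy is to call $\Omatch$ on both $(P,Q)$ and $(Q,P)$ for the same bisection, which is still $O(\log^2 n)$ calls. For perfect matchings the union $M_{P\to Q}\cup M_{Q\to P}$ is Eulerian. Then $|E_W(S,\bar S)|=|E_W(\bar S,S)|$, and directed cuts inherit the undirected expansion your argument certifies for the $M_{P\to Q}$'s.

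\textbf{The non-stop part.} This also needs more than you give.

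(i) Re-centering never increases $\Psi$, since the weighted mean minimizes it, so the centroid shift is not the danger. The danger is that the unmatched $\le\eps$ fraction of weight may carry almost all of the projected potential. Then the drop on the matched mass need not be $\Omega(\Psi/\log n)$. You must show that deleting the deficit removes at least the part of the KRV term that is lost, e.g.\ by applying the projection lemma to the deleted vertices as well.

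(ii) Deleted weight carries commodity away, so the mean row $\bar F$ is no longer uniform over origins, and ``every vertex's mass is spread almost uniformly'' is false as stated. What survives is weaker. Each surviving row is still a probability vector close to $\bar F$, and $\bar F$ puts at most about $\dd(S)/\tilde{\dd}(V)\le\tfrac12+o(1)$ on origins in any $S$ with $\tilde{\dd}(S)\le\tilde{\dd}(V)/2$. In a directed process, such as the lazy walk that keeps half and pushes half along out-edges, this certifies only edges \emph{entering} $S$. Meanwhile the Eulerian identity now fails by exactly the deficits. These sit on weight you have deleted, so they are not bounded by $\tilde{\dd}(S)$. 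Certifying edges leaving $S$ therefore needs an extra ingredient, for example tracking a backward process in the same potential as the forward one.
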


We also need the following lemma showing how to compute cut-match using calls to a length bounded MCF algorithm.

\begin{restatable}{lemma}{cutmatch}\label{lem:cutmatch}
    Given a directed graph $G$ with positive integral edge capacities and lengths, an $h$-length single-commodity demand $D$, and parameters $\phi,\eps\in(0,0.1)$ and $s>1$, there is an algorithm making $O\!\left(\frac{\log(1/\eps)}{\phi}\right)$-efficient oracle
    calls to $\OProj{hs,\kappa',\lambda'}$ and to $\OPDMCF{\kappa,\lambda}$, that either
    \begin{itemize}
        \item outputs an implicit flow-path decomposition of an $hs$-length flow routing some demand $D'\preceq D$ with congestion
        \[
        O\!\left(\frac{\kappa\kappa'\log(1/\eps)}{\phi}\right),
        \]
        such that
        \[
        |D'|\ge (1-\eps)\cdot |D|,
        \]
        or
        \item outputs a $\frac{hs}{4\lambda\lambda'}$-length moving cut $C$ that is $(h,\frac{s}{4\lambda\lambda'})$-length \(\phi\)-sparse with respect to \(A_D\) and has size at least
        \[
        \frac{\phi\eps}{3}\cdot |D|.
        \]
    \end{itemize}
\end{restatable}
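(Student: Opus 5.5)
We prove \Cref{lem:cutmatch} with a multiplicative-weights scheme. Each round makes one call to $\OProj{hs,\kappa',\lambda'}$ and one call to $\OPDMCF{\kappa,\lambda}$, and the scheme stops in one of two ways: it routes almost all of $D$ within length $hs$, or the dual certificates it has collected combine into a sparse moving cut.

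\textbf{Setup.} We add a super source and super sink with length-$0$ edges, so $D$ becomes an $s$--$t$ demand. We compute an $hs$-length DAG projection $G'$ once. We then maintain the remaining demand $D_{\mathrm{rem}}$, initially $D$, and a moving cut $C$ with total size $|C|$, initially $0$. A round works on the current graph $G-C$, with $C$ read as an $hs$-length cut, and on the demand $D_{\mathrm{rem}}$. Using \Cref{lem:MCFtoDAG}, we run $\OPDMCF{\kappa,\lambda}$ on $G'$ for the lifted demand $\flpi(D_{\mathrm{rem}})$ and get a flow $F$ of congestion $\kappa\kappa'$ together with a dual $(\psi,\tl)$ satisfying $\cost(F)\le\lambda\lambda'\Dual(\psi,\tl)$.

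\textbf{One round.} We flow-path decompose $F$ using \Cref{thm:flowpathdecomposition}. Consider the flow paths whose length in $G-C$ is at most $hs$, and let $D''$ be the part of $D_{\mathrm{rem}}$ they carry.
\begin{itemize}
\item If $|D''|\ge\phi|D_{\mathrm{rem}}|$, we keep these paths, scaled down by $1/(\kappa\kappa')$, subtract $D''$ from $D_{\mathrm{rem}}$, and raise $C$ multiplicatively on the edges they use.
\item Otherwise more than a $1-\phi$ fraction of the flow travels on paths longer than $hs$. Then $\cost(F)\ge(1-\phi)|D_{\mathrm{rem}}|hs$, so the dual value is at least $(1-\phi)|D_{\mathrm{rem}}|hs/(\lambda\lambda')$.
\end{itemize}

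\textbf{Turning the dual into a cut.} Rescale $\tl$ into an $\frac{hs}{4\lambda\lambda'}$-length moving cut $C'$, and set $C^\star:=C+C'$. The dual bound says that a constant fraction of the $A_{D_{\mathrm{rem}}}$ mass is $hs/(4\lambda\lambda')$-separated in $G-C^\star$. We need $|C^\star|\le\phi\cdot\sep(C^\star,D)$, which is exactly $(h,\frac{s}{4\lambda\lambda'})$-length $\phi$-sparsity. Whether this holds depends on the total size of $C$.
\begin{itemize}
\item While $|C|$ remains below $\phi|D_{\mathrm{rem}}|$, the inequality holds because $C'$ is paid for by the dual gap. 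When it holds, we output $C^\star$.
\item If it fails, the MWU potential argument charges the size of $C$ against the routed demand. The number of rounds is at most $O(\log(1/\eps)/\phi)$, since each successful round removes a $\phi$ fraction of $D_{\mathrm{rem}}$ and we stop once $|D_{\mathrm{rem}}|\le\eps|D|$.
\end{itemize}
The standard MWU accounting then gives total congestion $O(\kappa\kappa'\log(1/\eps)/\phi)$. If the process ends by exhausting the demand, the union of the stored implicit path decompositions routes $D'$ with $|D'|\ge(1-\eps)|D|$, which is the first outcome. If it ends by cutting, the cut separates at least a constant fraction of the remaining demand, and that remaining demand is at least $\eps|D|$. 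So $|C^\star|\ge\phi\eps|D|/3$ after we pad $C^\star$ to make it at least this large, and sparsity still holds. This is the second outcome.

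\textbf{Main obstacle.} The hard step is the conversion from dual to cut. Two points need care.
\begin{enumerate}
\item The MCF dual bounds $\phi(t)-\phi(s)$ only on average over the demand. Turning it into a length-constrained separation of individual demand pairs requires thresholding $\psi$ at level $hs/(4\lambda\lambda')$ and applying Markov's inequality. This is the source of the factor $4\lambda\lambda'$.
\item The cut $C'$ must be measured against \emph{symmetric} $h$-length demands that respect $A_D$, not just against $D$. I would handle this by also pairing each separated source with its own sink copy, so that a symmetric witness exists.
\end{enumerate}
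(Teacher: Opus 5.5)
\textbf{Gap in the dual-to-cut step.} You call the oracle on the \emph{whole} remaining demand $D_{\mathrm{rem}}$. You then assert that the rescaled length increase $C'$ is ``paid for by the dual gap'' and is therefore $\phi$-sparse. This does not follow. The dual guarantee only gives $|D_{\mathrm{rem}}|\,(\psi(t)-\psi(s))-\sum_e U(e)\tilde\ell(e)\ge (1-\phi)|D_{\mathrm{rem}}|\,hs/(\lambda\lambda')$. After you normalize $\tilde\ell$ into a cut that enforces separation, this bounds $|C'|$ only by roughly $|D_{\mathrm{rem}}|$, not by $\phi|D_{\mathrm{rem}}|$.

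Here is a concrete failure. Take three parallel $s$--$t$ routes $P_0,P_1,P_2$ with lengths $0,\,2hs,\,100hs$ and capacities $\phi|D|/10,\,|D|/2,\,|D|/2$.
\begin{itemize}
\item Routing all of $D$ optimally saturates $P_0$ and $P_1$. Fewer than a $\phi$ fraction of the flow is on short paths, so you are in your cut case.
\item An optimal dual has $\psi(t)-\psi(s)=100hs$ and adds $100hs$ along $P_0$ and $98hs$ along $P_1$.
\item Any rescaling $c\tilde\ell$ that raises the $s$--$t$ distance to $hs/(4\lambda\lambda')$ therefore has size at least about $0.49|D|$ as a cut of that length. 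Its sparsity is a constant, even though cutting $P_0$ alone would be $O(\phi)$-sparse.
\end{itemize}
The underlying reason is complementary slackness: it puts $\tilde\ell$ on every saturated edge of the full routing, and nothing prevents those edges from carrying most of the capacity.

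Your MWU layer adds further unresolved issues. A projection computed once for $G$ is not a projection of $G-C$ after $C$ changes. The cut $C^\star=C+C'$ inherits the uncontrolled size of the accumulated $C$. In the branch where sparsity fails, you output neither a cut nor enough flow.

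\textbf{The missing idea} is to route only a $\Theta(\phi)$ fraction per round, which is what \Cref{lem:cutmatchoneiteration} does. Attach a super-source and super-sink whose auxiliary edges have capacities $|D(v)|$, and ask \Cref{cor:PDMCFtoDAG} for only $\phi|D|/3$ units. The returned cut $\widehat C$ then has size at most $\phi|D|/3$ automatically. There are two cases.
\begin{itemize}
\item If the average cost is at most $hs/2$, Markov on the path decomposition keeps at least $\phi|D|/6$ units on paths of length at most $hs$.
\item Otherwise $\widehat C$ has length at least $hs/(2\lambda\lambda')$. Vertices whose auxiliary edge receives cut value above $1/4$ carry at most $4|\widehat C|<|D|/6$ demand. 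Every remaining positive--negative pair is separated by half that length. Doubling the restriction of $\widehat C$ to $G$, then padding, gives an $\frac{hs}{4\lambda\lambda'}$-length cut of sparsity below $\phi$ and size at least $\phi|D|/3$.
\end{itemize}
\Cref{lem:cutmatch} then repeats this on the residual demand on the unchanged graph $G$. Each round either returns such a cut, of size at least $\phi\eps|D|/3$ while $|D^{(r)}|\ge\eps|D|$, or removes a $\phi/6$ fraction of the residual demand. So $O(\log(1/\eps)/\phi)$ rounds suffice, and the per-round congestions $\kappa\kappa'$ simply add up. No multiplicative weights or accumulated cut is needed.
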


\begin{lemma}\label{lem:LCexpanderdecomposition}
Let $G=(V,E)$ be a directed graph with positive integral capacities and integral edge lengths,
let $h\ge 1$ and $\lambda,\lambda'\in[1,2]$, let $\eps\in(\log\log n/\log n,1),\phi\in(1/\poly(n),1)$, 
and let
$A:V\to \bbR_{\ge 0}$ be a node-weighting.

Then there is a length slack 
\[s=(\log n)^{O(1/\eps)}\]
and an algorithm that makes $\frac{n^{O(\eps)}}{\phi^3}$-efficient
calls to $\OProj{(hs,\kappa',\lambda')}$ and $\OPDMCF{\kappa,\lambda}$ 
and the same additional time, and outputs a $hs$-length moving cut $C$
and a node-weighting $A'\preceq A$ such that:
\begin{enumerate}
    \item $A'$ is $(h,s)$-length $C$-constraint $\phi$-expanding with an efficient associated routing algorithm,
    \item both $|C|$ and $|A|-|A'|$ satisfy 
    \[
    |C|,|A|-|A'|\le \Delta\cdot \phi\cdot |A|,
\qquad\text{where}\qquad\Delta=\kappa\kappa'n^{O(\eps)}
    \]
\end{enumerate}
\end{lemma}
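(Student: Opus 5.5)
The proof will be an iterative ``cut or certify'' procedure driven by \Cref{lem:cutmatch} and the non-stop cut-matching game of \Cref{lem:cutmatchinggame}. To make symmetric $h$-length demands tractable, we first invoke \Cref{lem:directedlddwithseparationFatLDD} on the current graph $G-C$, with respect to round-trip distance. We use covering radius $h$ and separation factor $s_0=\polylog n$, which gives $\omega=n^{O(\eps)}\log n$ clusterings, each of diameter $h_\diam=(\log n)^{O(1/\eps)}h$. Every symmetric $h$-length demand pair $u,v$ lies inside a common cluster of some clustering $\cS_j$. So it suffices to make, for each $j$ and each cluster $S\in\cS_j$, the restriction of $A$ to $S$ expanding for $h_\diam$-length demands. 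The final length slack is $s=(\log n)^{O(1/\eps)}\cdot O(1)$, obtained by combining $h_\diam/h$ with the constant-step expander routing in $W$.

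\textbf{Per-cluster game.} Fix a clustering and run the cut-matching game of \Cref{lem:cutmatchinggame} simultaneously on all clusters, with $\dd=A$ restricted to clusters. The separation witnesses make the clusters non-interacting at length $h_\diam s$, so the instances can be batched into one call. Each matching query $(P,Q),\dd'$ becomes a single-commodity $h_\diam$-length demand $D$ from $P$ to $Q$ inside clusters. We feed it to \Cref{lem:cutmatch} with error $\eps'=1/\log^3 n$.
\begin{itemize}
\item If a flow routing $D'$ with $|D'|\ge(1-\eps')|D|$ is returned, its path decomposition gives a $(\dd',\eps')$-perfect matching. Each matching edge is embedded by a length-$h_\diam s$ flow path with congestion $O(\kappa\kappa'\log(1/\eps')/\phi)$.
\item If a sparse cut $C_i$ of size $\ge\phi\eps'|D|/3$ is returned, we add it to $C$, set $G\leftarrow G-C_i$, and restart (recomputing the FatLDD).
\end{itemize}
When the game terminates, it yields $\tilde\dd$ that is $\Omega(1)$-expanding in $W=\bigcup M_i$ and loses only $O(\eps'\log^2 n)$ of $\dd(V)$. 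Any $\tilde\dd$-respecting demand is routed in $W$ with congestion $O(\log n)$ and $O(\log n)$-hop paths; the hop bound follows from directed expander routing, which we assume is part of the cited game guarantee. Composing with the embeddings routes it in $G$ with congestion $O(\kappa\kappa'\log^3 n/\phi)$ and length $O(\log n)h_\diam s$. We set $A'$ to the pointwise minimum of the $\tilde\dd$'s over the $\omega$ clusterings and rescale $\phi$ by the polylog and $\omega$ factors. This gives the associated routing algorithm: decompose the witness into cluster pieces, assign each piece to the clustering covering it, and route. The node-weighting losses from the games add up to $|A|-|A'|\le\omega\cdot O(\eps'\log^2n)|A|$, which is too large unless $\eps'\lesssim\phi$. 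I would therefore instead take $\eps'=\Theta(\phi)$. The game lemma needs $\eps'=o(1/\log^2n)$, which holds because $\phi<1/n^{c\eps}$. The call count of \Cref{lem:cutmatch} then carries an extra $\log(1/\phi)$, which is harmless.

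\textbf{Bounding the cuts.} Each cut returned is $(h,s')$-length $\phi$-sparse with respect to $A$ in the current graph. Hence the sequence of cuts satisfies \Cref{lem:existential_cut_sequence_different_length} with constant $\alpha$ and $s'>8\log n$, giving $|C|\le O(\phi\log^2n)|A|$. Each cut also has size $\ge\phi\eps'|D|/3\ge\Omega(\phi^2|A|)$, because $|D|\ge\dd(V)/2$ as long as we have not lost too much. Therefore the number of restarts is at most $O(\log^2 n/\phi)$. Each restart costs $O(\log^2 n)$ game rounds, each of $O(\log(1/\phi)/\phi)$-efficient calls, times $\omega$ clusterings. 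This totals $n^{O(\eps)}/\phi^{3}$-efficient calls, matching the claim.

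\textbf{Main obstacle.} The delicate step is the sparsity parameter: \Cref{lem:cutmatch} guarantees sparsity with respect to $A_D$, not $A$. We need $A_D\preceq A$, which holds since $\dd'\le\dd\le A$, and we also need sparsity to be monotone in the node-weighting. The other delicate point is that FatLDD must be recomputed after cuts while keeping the clusters' $h$-length symmetric demands consistent with $G-C$ at the end. I would handle both by running the FatLDD on the final graph in a last certification pass that produces no cut, which is guaranteed once the restart budget is exhausted. Finally, the congestion loss $\kappa\kappa'$ and the $\omega=n^{O(\eps)}$ overlap are absorbed into $\Delta=\kappa\kappa' n^{O(\eps)}$.
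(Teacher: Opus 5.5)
There is a genuine gap in your termination and efficiency argument. You restart from scratch (new FatLDD, all games rerun) every time \Cref{lem:cutmatch} returns a cut. You then bound the number of restarts by claiming every returned cut has size $\Omega(\phi^2|A|)$ because $|D|\ge \dd(V)/2$. But the cut-matching game can only be run per cluster. A single game over a whole clustering would need $(\dd',\eps')$-perfect matchings between sets $P,Q$ whose masses are imbalanced inside individual, far-apart clusters. A combined demand across clusters is also not symmetric $h_{\diam}$-length, so a failure of \Cref{lem:cutmatch} on it certifies no sparse cut. Finally, one batched call gives only a global $(1-\eps')$ guarantee, not a per-cluster one. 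With per-cluster games you have $\dd=A|_S$, so a failing query yields a cut of size only $\Omega(\phi_0\eps' A(S))$, where $\phi_0$ is your rescaled sparsity. A cluster can carry an arbitrarily small fraction of $|A|$. \Cref{lem:existential_cut_sequence_different_length} bounds the total size of the cuts, not their number, so low-mass clusters can force far more than $n^{O(\eps)}/\phi^{O(1)}$ restarts, each costing a full FatLDD and a full round of games. Both the claimed $n^{O(\eps)}/\phi^3$ bound and your ``final certification pass guaranteed once the restart budget is exhausted'' fail for this reason.

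The missing idea, which the paper uses, is to tolerate small failures instead of restarting on each one.
\begin{itemize}
\item When a cluster's game hits a cut, declare $S$ bad, set $\widetilde A_S\equiv 0$ on it, and continue.
\item Start a new round only if some clustering has bad clusters of total $A$-mass at least $\beta|A|$, with $\beta=\Theta(\phi/\omega)$. In that case add all bad cuts of that clustering simultaneously.
\item Each non-terminal round then adds cut mass $\Omega(\phi_0\eps'\beta)|A|$. Against the total $O(\phi_0\log^2 n)|A|$, this gives $O(\log^2 n/(\eps'\beta))=n^{O(\eps)}/\phi^2$ rounds.
\item At termination the discarded bad mass costs only $\omega\beta|A|=O(\phi)|A|$ in $|A|-|A'|$.
\end{itemize}
Adding several cuts in one round means you must check two things: each cut is still sparse after the earlier same-round cuts are applied, and a cut computed on $G[S']$ is sparse in the whole current graph. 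Both follow because every short path between two vertices of $S$ stays inside the witness $S'$ when the separation factor exceeds roughly twice the slack used in \Cref{lem:cutmatch}, and because the witness sets of one clustering are disjoint. Disjointness also makes the separate per-cluster calls cost $O(m)$ per clustering.

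Two minor points. The FatLDD covering radius must be $2h$, not $h$, since a symmetric $h$-length witness piece has round-trip radius up to $2h$. Also, $\eps'=\Theta(\phi)$ satisfies $\eps'=o(1/\log^2 n)$ only under an assumption $\phi<1/n^{c\eps}$ that this lemma does not make. Either take $\eps'=\phi/(\omega\log^2 n)$, or dispose of the case $\Delta\phi\ge 1$ trivially.
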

\begin{proof}
We first define the parameters used in the proof:
\[
s_{\sep}:=512\log n,\qquad
s_0:=256\log n,\qquad
\sigma:=\frac{s_0}{4\lambda\lambda'},
\]
\[
\omega:=n^{O(\eps)}\log n,\qquad
\Gamma:=\frac1\eps\cdot O(\log^3 n)^{O(1/\eps)},
\qquad
s:=c_s\Gamma\log^2 n,
\]
\[
\eps_{\rm cm}:=\frac{\phi}{100\omega\log^2 n},\qquad
\beta:=\frac{\phi}{100\omega},\qquad
L_\phi:=\log\!\frac{100\omega\log^2 n}{\phi},
\]
\[
\phi_0:=c_0\,\omega\kappa\kappa' L_\phi\cdot \phi,
\qquad
\Delta:=c_1\,\omega\kappa\kappa' L_\phi\log^2 n.
\]

Here $s_{\sep}$ is the separation factor used in the FatLDD cover, $s_0$ is the slack passed to
\Cref{lem:cutmatch}, and $\sigma$ is the resulting sparsity parameter of the cuts returned by
\Cref{lem:cutmatch}. The parameters $\omega$ and $\Gamma$ are respectively the width and diameter
blow-up of the FatLDD cover. The final slack $s$ is chosen large enough to absorb the
$O(\Gamma\log^2 n)$ routing-length blow-up. The parameter $\eps_{\rm cm}$ is the error used in the
cut-matching game, $\beta$ is the threshold for deciding whether a round is non-terminal, and
$\phi_0$ is the sparsity parameter given to \Cref{lem:cutmatch}. Finally, $\Delta$ is the loss
parameter that will upper bound both the total cut size and the total mass loss.

Since $\phi\in(1/\poly(n),1)$ and $\eps>\log\log n/\log n$, all polylogarithmic factors
are absorbed by $n^{O(\eps)}$, and thus
\[
s=(\log n)^{O(1/\eps)}
\qquad\text{and}\qquad
\Delta=\kappa\kappa' n^{O(\eps)}.
\]
\paragraph{The algorithm.}
Instead of maintaining a single moving cut during the execution, we maintain a cumulative
length-increase function
\[
\Lambda:E\to \bbR_{\ge 0},
\]
initially $\Lambda\equiv 0$.
At the beginning of each round, let $G^\circ$ denote the graph on the same vertex and edge sets as $G$,
with edge-length function
\[
\ell_{G^\circ}(e)=\ell_G(e)+\Lambda(e).
\]

Apply \Cref{lem:directedlddwithseparationFatLDD} to $G^\circ$ with parameters
\[
(h_{\cov},s,\eps)=(2h,s_{\sep},\eps),
\]
obtaining a neighborhood cover $\cN$ with respect to $\Rdist_{G^\circ}$,
with width $\omega$, and with each clustering $\cS\in\cN$ having diameter at most
\[
h_{\cS}\le \Gamma h.
\]

Fix a clustering $\cS\in\cN$ and a cluster $S\in\cS$.
Let $h_S\le \Gamma h$ be the diameter bound of $S$, and let $S'$ be its separation witness, so that
\[
\ball_{\Rdist_{G^\circ}}(S,s_{\sep}h_S/2)\subseteq S'.
\]
We run the cut-matching game of \Cref{lem:cutmatchinggame} on $S$ with demand bound
\[
\dd(v):=A(v)\qquad(v\in S).
\]

We implement the matching oracle $\Omatch$ as follows.
Given a partition $(P,Q)$ of $S$ and a vector $\dd'\le \dd$ with
\[
\dd'(P)=\dd'(Q)
\qquad\text{and}\qquad
\dd'(V)\ge \dd(V)/2,
\]
define the single-commodity demand
\[
D(v):=
\begin{cases}
\dd'(v) & v\in P,\\
-\dd'(v) & v\in Q,\\
0 & v\notin S.
\end{cases}
\]

We first prove the key geometric claim that paths of length at most $h_S\sigma$ between vertices of $S$
cannot leave $S'$ after adding any cut supported on $S'$.

\begin{claim}\label{clm:lced_short_paths_stay_inside}
Let $u,v\in S$, and let $C'$ be any moving cut supported on edges of $G^\circ[S']$.
Then every $(u,v)$-path in $G^\circ-C'$ of length at most $h_S\sigma$ is contained in $S'$.
Consequently, if
\[
\dist_{G^\circ[S']-C'}(u,v)>h_S\sigma,
\]
then also
\[
\dist_{G^\circ-C'}(u,v)>h_S\sigma.
\]
\end{claim}

\begin{proof}
Let $P$ be a $(u,v)$-path in $G^\circ-C'$ of length at most $h_S\sigma$, and let $x$ be any vertex on $P$.
Since $C'$ only adds nonnegative lengths, the prefix of $P$ from $u$ to $x$ has length at most $h_S\sigma$
already in $G^\circ$, and similarly the suffix from $x$ to $v$ has length at most $h_S\sigma$ in $G^\circ$.
Hence
\[
\dist_{G^\circ}(u,x)\le h_S\sigma
\]
and
\[
\dist_{G^\circ}(x,v)\le h_S\sigma.
\]
Since $u,v\in S$ and $S$ has round-trip diameter at most $h_S$, we also have
\[
\dist_{G^\circ}(v,u)\le \Rdist_{G^\circ}(u,v)\le h_S.
\]
Therefore
\[
\dist_{G^\circ}(x,u)\le \dist_{G^\circ}(x,v)+\dist_{G^\circ}(v,u)\le h_S\sigma+h_S,
\]
and thus
\[
\Rdist_{G^\circ}(x,u)\le h_S\sigma+(h_S\sigma+h_S)=(2\sigma+1)h_S.
\]
Because $\sigma\le 64\log n$ and $s_{\sep}=512\log n$, we have
\[
2\sigma+1 < s_{\sep}/2.
\]
Hence
\[
\Rdist_{G^\circ}(x,u) < s_{\sep}h_S/2,
\]
so
\[
x\in \ball_{\Rdist_{G^\circ}}(S,s_{\sep}h_S/2)\subseteq S'.
\]
Thus every vertex of $P$ lies in $S'$, proving the first part.

The second part follows immediately: if there were a $(u,v)$-path in $G^\circ-C'$ of length at most $h_S\sigma$,
then by the first part it would lie entirely in $S'$, contradicting
\[
\dist_{G^\circ[S']-C'}(u,v)>h_S\sigma. \qedhere
\]
\end{proof}

Now every positive vertex and every negative vertex of $D$ lie in $S$.
Since $S$ has round-trip diameter at most $h_S$, every positive vertex and every negative vertex are at directed
distance at most $h_S$ in $G^\circ$.
As $\sigma>1$, \Cref{clm:lced_short_paths_stay_inside} implies that every path of length at most $h_S$
between vertices of $S$ is contained in $S'$.
Hence $D$ is an $h_S$-length single-commodity demand on $G^\circ[S']$.

We therefore invoke \Cref{lem:cutmatch} on $G^\circ[S']$ with parameters
\[
(h,\phi,\eps,s)\leftarrow (h_S,\phi_0,\eps_{\rm cm},s_0).
\]
Since
\[
h_S s_0 \le \Gamma h \cdot 256\log n \le hs
\]
by the choice of $s$, every oracle call made by \Cref{lem:cutmatch} is a valid call to
$\OProj{(hs,\kappa',\lambda')}$.

If \Cref{lem:cutmatch} outputs an implicit flow-path decomposition
\[
F_S=(L_i,\lambda_i,s_i,t_i)_{i\in[k]},
\]
then each path goes from some vertex in $P$ to some vertex in $Q$.
We interpret the pairs $(s_i,t_i)$ with capacities $\lambda_i$ as the matching returned by $\Omatch$,
and store the corresponding path realizations in $G^\circ[S']$.

If \Cref{lem:cutmatch} instead outputs a moving cut $Q_S$, then we declare $S$ to be \emph{bad},
terminate the cut-matching game on $S$, and define
\[
\widetilde A_S(v):=0 \qquad\text{for all }v\in S.
\]

If $S$ is not bad, then \Cref{lem:cutmatchinggame} returns a node-weighting
\[
\widetilde A_S\preceq A|_S
\]
such that
\[
|\widetilde A_S|
\ge
\bigl(1-O(\eps_{\rm cm}\log^2 n)\bigr)\cdot |A|_S|,
\]
and such that the union $W_S$ of the matchings produced during the game is an $\Omega(1)$-expander
with respect to $\widetilde A_S$.

After this is done for every cluster in every clustering of $\cN$, we inspect the clusterings.
If there exists some clustering $\cS\in\cN$ such that the total $A$-mass of its bad clusters is at least
\[
\beta|A|,
\]
then for every bad cluster $S$ in this clustering we add the returned cut $Q_S$ to the cumulative
length increase, namely
\[
\Lambda(e)\gets \Lambda(e)+h_S\sigma\cdot Q_S(e)
\qquad\text{for all }e\in E,
\]
and start the next round.

Otherwise the algorithm terminates.
For every clustering $\cS_j\in\cN$, define
\[
A_j(v):=\widetilde A_{S_j(v)}(v),
\]
where $S_j(v)$ is the unique cluster of $\cS_j$ containing $v$.
Finally define
\[
A'(v):=\min_j A_j(v).
\]

\paragraph{Correctness (associated routing algorithm).}
We show that $A'$ is $(h,s)$-length $C$-constraint $\phi$-expanding and give an associated routing algorithm.

Let $D$ be any $A'$-respecting $h$-length symmetrically constrained single-commodity demand on the final graph
$G-\Lambda$, together with a witness
\[
D=D_1+\cdots+D_k,
\]
where each $D_i$ is symmetric $h$-length and the witness is cumulatively $A'$-respecting, namely
\[
\sum_{i=1}^k |D_i(v)|\le A'(v)
\qquad\text{for every }v\in V.
\]

For each witness demand $D_i$, choose any vertex $u_i$ in the support of $D_i$.
Since $D_i$ is symmetric $h$-length, for every vertex $v$ in the support of $D_i$ we have
\[
\dist_{G-\Lambda}(u_i,v)\le h
\qquad\text{and}\qquad
\dist_{G-\Lambda}(v,u_i)\le h.
\]
Hence
\[
\Rdist_{G-\Lambda}(u_i,v)\le 2h.
\]
Since the final FatLDD cover has covering radius $2h$, for every $i$ there exists a cluster
$S_i$ in some clustering $\cS_{j(i)}\in\cN$ such that
\[
\supp(D_i)\subseteq \ball_{\Rdist_{G-\Lambda}}(u_i,2h)\subseteq S_i.
\]

For each cluster $S$ appearing in this assignment, let
\[
D_S:=\sum_{i:\,S_i=S} D_i.
\]
Then the demands $D_S$ sum to $D$.
Moreover, each $D_S$ is supported on $S$, and for every $v\in S$,
\[
|D_S(v)|
=
\left|\sum_{i:\,S_i=S} D_i(v)\right|
\le
\sum_{i:\,S_i=S}|D_i(v)|
\le
\sum_{i=1}^k |D_i(v)|
\le
A'(v)
\le
\widetilde A_S(v).
\]
Thus $D_S$ is $\widetilde A_S$-respecting.

Because $S$ is not bad, the cut-matching game produced an $\Omega(1)$-expander $W_S$ with respect to
$\widetilde A_S$.
By standard routing on constant expanders, one can route $D_S$ inside $W_S$ with $O(1)$ congestion
and with paths using $O(\log n)$ expander edges.

Each edge of $W_S$ came from one successful call to \Cref{lem:cutmatch}, and is therefore realized by a path
in $G[S']$ of length at most
\[
h_S s_0\le \Gamma h\cdot 256\log n,
\]
and of congestion
\[
O\!\left(\frac{\kappa\kappa' \log(1/\eps_{\rm cm})}{\phi_0}\right)
=
O\!\left(\frac1{\omega\phi}\right)
\]
by the definition of $\phi_0$.
Replacing each expander edge by a flow path (this can be done cumulatively for all paths using
\Cref{thm:flowpathdecomposition} in nearly linear time) lifts the routing of $D_S$ from $W_S$
to the original graph.
Since the expander paths use only $O(\log n)$ edges, the lifted routing has length
\[
O(\Gamma h\log^2 n)\le hs
\]
by the choice of $s$.

Within any fixed clustering $\cS_j$, the witness sets $S'$ are pairwise disjoint, so routings inside different
clusters of $\cS_j$ do not interfere.
Across different clusterings there are only $\omega$ possibilities.
Hence after summing the lifted routings over all chosen clusters $S$, the total congestion is
\[
O(1)\cdot O\!\left(\frac1{\omega\phi}\right)\cdot \omega = O(1/\phi).
\]
After adjusting constants, this gives a routing with congestion at most $1/\phi$ and length bound $hs$.
Thus $A'$ is $(h,s)$-length $C$-constraint $\phi$-expanding.
\paragraph{Correctness (upper bound on $|A|-|A'|$).}
Fix one clustering $\cS_j$ in the final round.
By termination, the bad clusters in $\cS_j$ have total $A$-mass strictly less than $\beta|A|$.
For every good cluster $S\in\cS_j$, the cut-matching game loses only an
$O(\eps_{\rm cm}\log^2 n)$ fraction of the $A$-mass on $S$.
Therefore
\[
|A|-|A_j|
\le
\beta|A| + O(\eps_{\rm cm}\log^2 n)\cdot |A|
=
O\!\left(\frac{\phi}{\omega}\right)\cdot |A|.
\]
Since
\[
A'(v)=\min_j A_j(v),
\]
we have pointwise
\[
A(v)-A'(v)\le \sum_j (A(v)-A_j(v)).
\]
Summing over all vertices gives
\[
|A|-|A'|
\le
\sum_j (|A|-|A_j|)
\le
\omega\cdot O\!\left(\frac{\phi}{\omega}\right)\cdot |A|
=
O(\phi)\cdot |A|
\le
\Delta\phi\cdot |A|.
\]

\paragraph{Correctness (upper bound on $|C|$).}
Fix a bad cluster $S$ that appears in some round, and let $G^\circ$ denote the current graph in that round.
The invocation of \Cref{lem:cutmatch} on the induced subgraph $G^\circ[S']$ returns a moving cut $Q_S$ that is
\[
(h_S,\sigma)\text{-length }\phi_0\text{-sparse}
\]
with respect to $A_D$, where $A_D(v)=|D(v)|$ and $A_D\preceq A|_S\preceq A$.
By \Cref{def:sparsity}, there exists an $A_D$-respecting $h_S$-length symmetric demand $\widehat D$
on $G^\circ[S']$ such that
\[
\spa_{h_S\sigma}(Q_S,\widehat D;G^\circ[S'])\le \phi_0.
\]
Since $A_D\preceq A$, the demand $\widehat D$ is also $A$-respecting.
Moreover, $\widehat D$ is supported on pairs inside $S$, so by
\Cref{clm:lced_short_paths_stay_inside}, every pair that is $h_S\sigma$-separated by $Q_S$
inside $G^\circ[S']$ is also $h_S\sigma$-separated by $Q_S$ inside the whole graph $G^\circ$.
Hence
\[
\sep_{h_S\sigma}(Q_S,\widehat D;G^\circ)
\ge
\sep_{h_S\sigma}(Q_S,\widehat D;G^\circ[S']),
\]
and therefore
\[
\spa_{h_S\sigma}(Q_S,\widehat D;G^\circ)\le \phi_0.
\]
Equivalently,
\[
\spa_{(h_S,\sigma)}(Q_S,A;G^\circ)\le \phi_0.
\]
Thus every bad cut is sparse in the current whole graph with respect to the same global node-weighting $A$.

We now order all bad cuts added by the algorithm into a single sequence
\[
Q_1,Q_2,\ldots,Q_q
\]
as follows: we first order them by rounds, and within each round we order the bad clusters of the chosen
clustering arbitrarily.
For each $r$, let
\[
G_r:=G-\sum_{j<r}Q_j
\]
denote the graph obtained after applying the previous cuts in this order.

We claim that each $Q_r$ is $(h_r,\sigma)$-length $\phi_0$-sparse with respect to $A$ in $G_r$,
where $h_r$ is the diameter bound of the corresponding cluster.
This is immediate for cuts coming from different rounds, since $G_r$ is exactly the graph in which the cut is produced,
possibly with additional earlier cuts already applied.

The only point to check is when several bad cuts come from the same round.
Suppose $Q_r$ comes from a bad cluster $S_r$ with separation witness $S_r'$,
and let $Q_j$ be any earlier cut from the same round.
Then $Q_j$ comes from another bad cluster in the same clustering, so its witness set is disjoint from $S_r'$.
Since each cut is produced on the induced subgraph of its own witness set, $Q_j$ is supported on edges of $S_j'$
and therefore has disjoint support from $Q_r$.
More importantly, by \Cref{clm:lced_short_paths_stay_inside}, every path of length at most $h_r\sigma$
between vertices of $S_r$ is contained in $S_r'$.
Hence adding the earlier same-round cuts does not affect:
\begin{enumerate}
    \item the fact that the witness demand $\widehat D_r$ is $h_r$-length, and
    \item the value of $\sep_{h_r\sigma}(Q_r,\widehat D_r)$.
\end{enumerate}
Therefore the same witness demand still certifies that $Q_r$ is $(h_r,\sigma)$-length $\phi_0$-sparse in $G_r$.

We may therefore apply \Cref{lem:existential_cut_sequence_different_length} to the sequence
$Q_1,\ldots,Q_q$ with $\alpha=1$.
This yields
\[
\sum_{r=1}^q |Q_r|
\le
O(\phi_0\log^2 n)\cdot |A|.
\]
Substituting the definition of $\phi_0$ gives
\[
\sum_{r=1}^q |Q_r|
\le
O\!\left(
\omega\kappa\kappa' L_\phi\log^2 n
\right)\cdot \phi |A|
\le
\Delta\phi\cdot |A|.
\]

Let $\Lambda$ be the cumulative length increase added by all bad cuts during the algorithm, i.e.,
\[
\Lambda(e)=\sum_{r=1}^q h_r\sigma\cdot Q_r(e).
\]
We define the output cut $C$ to be the $hs$-length moving cut satisfying
\[
hs\cdot C(e)=\Lambda(e),
\qquad\text{equivalently}\qquad
C(e):=\frac{\Lambda(e)}{hs}
=\frac1{hs}\sum_{r=1}^q h_r\sigma\cdot Q_r(e).
\]
Since $h_r\le \Gamma h$ for every $r$ and $\sigma=O(\log n)$, while $s=c_s\Gamma\log^2 n$,
choosing $c_s$ sufficiently large ensures that
\[
h_r\sigma\le hs
\qquad\text{for every }r.
\]
Thus $C$ is indeed a valid $hs$-length moving cut. Finally,
\[
|C|
=
\sum_e U(e)C(e)
=
\frac1{hs}\sum_{r=1}^q h_r\sigma\cdot |Q_r|
\le
\sum_{r=1}^q |Q_r|
\le
\Delta\phi\cdot |A|.
\]
This proves the desired upper bound on $|C|$.

\paragraph{Complexity.}
We first show the number of rounds by the algorithm. Consider any non-terminal round.
By definition, there exists some clustering $\cS$ such that the bad clusters in $\cS$ have total $A$-mass at least
$\beta|A|$.
For every such bad cluster $S$, the corresponding call to \Cref{lem:cutmatch} is made on a vector $\dd'\le A|_S$
with $\dd'(V)\ge |A|_S|/2$.
Hence the returned cut has size at least
\[
\Omega(\phi_0\eps_{\rm cm}\cdot |A|_S|).
\]
Therefore the total cut added in this round has size at least
\[
\Omega(\phi_0\eps_{\rm cm}\beta)\cdot |A|.
\]
Since the total cut size over the whole execution is at most $\Delta\phi|A|$, the number of rounds is at most
\[
O\!\left(\frac{\Delta\phi}{\phi_0\eps_{\rm cm}\beta}\right)
=
O\!\left(\frac{\omega^2\log^4 n}{\phi^2}\right)
=
\frac{n^{O(\eps)}}{\phi^2}.
\]

Fix one round.
For any fixed clustering $\cS$, the witness sets $S'$ are pairwise disjoint, so the total size of all induced
subgraphs $G^\circ[S']$ over clusters $S\in\cS$ is $O(m)$.
Each cluster runs one cut-matching game, which makes $O(\log^2 n)$ calls to $\Omatch$.
Each such call invokes \Cref{lem:cutmatch}, and \Cref{lem:cutmatch} makes
\[
O\!\left(\frac{L_\phi}{\phi_0}\right)
\]
calls to $\OProj{(hs,\kappa',\lambda')}$ and $\OPDMCF{\kappa,\lambda}$.

Hence over one clustering, the total size of all oracle instances is
\[
O\!\left(
m\log^2 n\cdot \frac{L_\phi}{\phi_0}
\right),
\]
and over all $\omega$ clusterings in the cover, one round contributes total oracle-instance size
\[
O\!\left(
\omega m\log^2 n\cdot \frac{L_\phi}{\phi_0}
\right)
=
O\!\left(\frac{m\log^2 n}{\phi}\right).
\]
Multiplying by the number of rounds gives total oracle-instance size
\[
O\!\left(
\frac{\omega^2 m\log^6 n}{\phi^3}
\right)
=
\frac{m\cdot n^{O(\eps)}}{\phi^3}.
\]
Under the standard regime $n=O(m)$, this is
\[
\frac{m^{1+O(\eps)}}{\phi^3}.
\]
The same bound also dominates the additional non-oracle work, including the FatLDD construction and the bookkeeping
inside the cut-matching games.

Finally, in one round there are at most $O(\omega n\log^2 n)$ invocations of \Cref{lem:cutmatch},
and each such invocation makes $O(L_\phi/\phi_0)$ calls to the two underlying oracles.
Multiplying again by the number of rounds yields
\[
O\!\left(
\frac{\omega^2 n\log^6 n}{\phi^3}
\right)
=
\frac{n^{1+O(\eps)}}{\phi^3}
\]
total calls to $\OProj{(hs,\kappa',\lambda')}$ and $\OPDMCF{\kappa,\lambda}$.
This proves the lemma.
\end{proof}

\subsection{LC-Expander Hierarchy}\label{subsec:LCexpanderHierarchy}

In this section, we prove the following lemma that shows how to constructed a LC-expander Hierarchy as defined in \Cref{def:weakLC-EDhierarchy}.

\weakLChierarchy*
\begin{proof}
Let
\[
\cH:=\{2^0,2^1,\dots,2^{\lceil \log_2 h\rceil}\}
\]
be the set of dyadic scales up to $h$.
Since all edge lengths are positive integers and $h\le \poly(n)$, we have
\[
|\cH|=O(\log n).
\]

We will construct cuts
\[
C_1,\ldots,C_t
\]
together with auxiliary residual cuts
\[
C'_1,C'_2,\ldots,C'_t.
\]
Initially, set
\[
C'_1(e):=1\qquad\text{for all }e\in E.
\]
For every $i\ge 1$, define
\[
A_i:=\deg_{C'_i}.
\]
Note that
\[
|A_i|=2|C'_i|.
\]

We also fix
\[
\Delta_0:=\kappa\kappa'n^{O(\eps)},
\]
where the hidden constant is large enough to absorb every logarithmic loss arising below,
including the $O(\log n)$ dyadic scales. Since
\[
\phi< \frac{1}{\kappa\kappa'n^{c\eps}},
\]
for a sufficiently large constant $c$ we may assume
\[
4\Delta_0\phi\le \frac12.
\]

\paragraph{The iterative step.}
Suppose $C'_i$ has already been defined.

If
\[
|A_i|\le 1,
\]
then we terminate the construction and set
\[
t:=i,
\qquad
C_t:=C'_t.
\]
We will show below that this last layer is trivially expanding.

Otherwise, assume
\[
|A_i|>1.
\]

\smallskip\noindent
\emph{Step 1: make $A_i$ expanding simultaneously at all scales.}
For every dyadic scale $\eta\in\cH$, apply \Cref{lem:LCexpanderdecomposition} to the graph $G$,
the length bound $\eta$, the parameter $\phi$, the parameter $\eps$, and the node-weighting $A_i$.
This returns an $\eta s$-length moving cut $D_{i,\eta}$ and a node-weighting
\[
A'_{i,\eta}\preceq A_i
\]
such that:
\begin{enumerate}
    \item $A'_{i,\eta}$ is $(\eta,s)$-length $D_{i,\eta}$-constraint $\phi$-expanding;
    \item
    \[
    |D_{i,\eta}|,\ |A_i|-|A'_{i,\eta}|
    \le \Delta_0\phi |A_i|.
    \]
\end{enumerate}

We unify these guarantees over all scales by defining
\[
A'_i(v):=\min_{\eta\in\cH} A'_{i,\eta}(v)
\qquad\text{for all }v\in V,
\]
and
\[
C''_i(e):=2\max_{\eta\in\cH} D_{i,\eta}(e)
\qquad\text{for all }e\in E.
\]
Since $|\cH|=O(\log n)$ and all logarithmic factors are absorbed into $n^{O(\eps)}$,
after enlarging the hidden constant in $\Delta_0$ if necessary we still have
\[
|C''_i|,\ |A_i|-|A'_i|\le \Delta_0\phi |A_i|.
\]

We claim that $A'_i$ is $(\le h,s)$-length $C''_i$-constraint $\phi$-expanding.
Indeed, let $h'\le h$, and choose $\eta\in\cH$ such that
\[
h'\le \eta\le 2h'.
\]
If we interpret $C''_i$ as an $h'$-length moving cut, then for every edge $e$,
\[
h' C''_i(e)\ge 2h' D_{i,\eta}(e)\ge \eta D_{i,\eta}(e).
\]
Therefore every demand that is $h'$-length in $G-C''_i$ is also $\eta$-length in $G-D_{i,\eta}$.
Since $A'_i\preceq A'_{i,\eta}$, the routing algorithm for $A'_{i,\eta}$ routes every
$A'_i$-respecting $h'$-length symmetrically constrained demand with length bound at most
\[
\eta s\le 2h's.
\]
Absorbing the constant factor $2$ into the hidden constant in the definition of $s$ proves the claim.

\smallskip\noindent
\emph{Step 2: realize most of $A'_i$ by a subcut of $C'_i$.}
Let $C_i\preceq C'_i$ be a maximal subcut satisfying
\[
\deg_{C_i}\preceq A'_i.
\]
which can be found by checking each edge and increase it $C_i$ value until one of its end point get saturated in $A'_i$. Let
\[
R_i:=C'_i-C_i
\]
be the leftover cut.

Since $\deg_{C_i}\preceq A'_i$ and $A'_i$ is $(\le h,s)$-length $C''_i$-constraint $\phi$-expanding,
the same routing algorithm certifies that $\deg_{C_i}$ is also $(\le h,s)$-length
$C''_i$-constraint $\phi$-expanding.

We now bound the remainder.
By maximality of $C_i$, every edge $e$ with $R_i(e)>0$ has at least one endpoint $v$ with
\[
\deg_{C_i}(v)=A'_i(v).
\]
Charge the quantity $U(e)\cdot R_i(e)$ to one such saturated endpoint.
For any fixed vertex $v$, the total charge sent to $v$ is at most
\[
\deg_{R_i}(v)=A_i(v)-\deg_{C_i}(v)\le A_i(v)-A'_i(v).
\]
Summing over all vertices yields
\[
|R_i|\le |A_i|-|A'_i|\le \Delta_0\phi |A_i|.
\]

\smallskip\noindent
\emph{Step 3: define the next residual cut.}
Set
\[
C'_{i+1}:=C''_i+R_i.
\]
Then
\[
|C'_{i+1}|
\le
|C''_i|+|R_i|
\le
2\Delta_0\phi |A_i|
=
4\Delta_0\phi |C'_i|.
\]

\paragraph{Termination and the number of layers.}
Because $|C'_1|=\poly(n)$ and
\[
|C'_{i+1}|\le 4\Delta_0\phi |C'_i|,
\]
after
\[
t
=
O\!\left(
\frac{\log m}{\log(1/(4\Delta_0\phi))}
\right)
=
O\!\left(
\frac{\log m}{\log(1/(\kappa\kappa'n^{O(\eps)}\phi))}
\right)
\]
iterations we reach an index $t$ with
\[
|C'_t|\le \frac12,
\]
and hence
\[
|A_t|=2|C'_t|\le 1.
\]
This is exactly our stopping rule.

\paragraph{The last layer is expanding.}
We now verify that the final layer $C_t=C'_t$ is $(\le h,s)$-length $0$-constraint $\phi$-expanding.

Fix any $h'\le h$, and let $D$ be any $\deg_{C_t}$-respecting $h'$-length symmetrically constrained
single-commodity demand on $G$, with witness clustering
\[
\{V_1,\ldots,V_k\}.
\]
For each cluster $V_a$, choose a root $r_a\in V_a$.
Because $V_a$ has weak diameter at most $h'$, every vertex of $V_a$ can send flow to $r_a$
and receive flow from $r_a$ along paths of length at most $h'$.
Route the demand inside each cluster by first sending all positive mass to the root and then
sending it from the root to the negative vertices.
This produces a routing of length at most
\[
2h'\le h's
\]
since $s\ge 2$.

Moreover, since $D$ is $\deg_{C_t}$-respecting,
\[
|D|:=\sum_{v\in V}|D(v)| \le \sum_{v\in V}\deg_{C_t}(v)=|A_t|\le 1.
\]
Thus the total amount of flow routed in this construction is at most $|D|\le 1$.
As every edge has capacity at least $1$, the congestion on every edge is at most $1\le 1/\phi$.
Therefore $\deg_{C_t}$ is $(\le h,s)$-length $0$-constraint $\phi$-expanding.

\paragraph{Correctness of the hierarchy.}
We show that
\[
C_1,\ldots,C_t
\]
is a $\le h$-length $(\phi,s)$-expanding weak LC-ED hierarchy.

First, we prove the covering property
\[
\sum_{i=1}^t C_i(e)\ge 1
\qquad\text{for every }e\in E.
\]
In fact, we prove by backward induction that for every $i\in[t]$,
\[
\sum_{j\ge i} C_j \succeq C'_i.
\]
The base case $i=t$ is immediate since $C_t=C'_t$.
For the inductive step,
\[
\sum_{j\ge i} C_j
=
C_i+\sum_{j\ge i+1} C_j
\succeq
C_i+C'_{i+1}
=
C_i+C''_i+R_i
=
C'_i+C''_i
\succeq
C'_i.
\]
Applying this with $i=1$ and using $C'_1\equiv 1$ proves the required covering property.

Next, fix any $i<t$ and any $h'\le h$.
By the claim above, $\deg_{C_i}$
is $(\le h,s)$-length $C''_i$-constraint $\phi$-expanding.
Also, from the covering property just proved,
\[
C_{>i}:=\sum_{j>i} C_j \succeq C'_{i+1}\succeq C''_i.
\]
If both cuts are interpreted as $h'$-length moving cuts, then $C_{>i}$ imposes at least as much
length increase as $C''_i$.
Therefore every demand that is $h'$-length in $G-C_{>i}$ is also $h'$-length in $G-C''_i$.
So the same routing algorithm certifies that $\deg_{C_i}$ is $(\le h,s)$-length
$C_{>i}$-constraint $\phi$-expanding.

For $i=t$, the future cut is empty, and we already proved that $\deg_{C_t}$ is
$(\le h,s)$-length $0$-constraint $\phi$-expanding.
Hence $C_1,\ldots,C_t$ is a $\le h$-length $(\phi,s)$-expanding $t$-layer weak LC-ED hierarchy.

\paragraph{Associated routing algorithms.}
For every layer $i<t$, we simply use the routing algorithm inherited from the node-weighting $A'_i$
and the domination
\[
\deg_{C_i}\preceq A'_i.
\]
For the last layer $i=t$, we use the explicit root-routing algorithm described above.
All these routing algorithms are efficient.

\paragraph{Complexity.}
Each iteration performs $|\cH|=O(\log n)$ calls to \Cref{lem:LCexpanderdecomposition}.
By \Cref{lem:LCexpanderdecomposition}, one such call makes oracle calls to $\OProj{(hs,\kappa',\lambda')}$ and $\OPDMCF{\kappa,\lambda}$ on instances of total size
\[
\frac{m^{1+O(\eps)}}{\phi^3},
\]
and the same additional time.
Since
\[
t=O\!\left(
\frac{\log m}{\log(1/(\kappa\kappa'n^{O(\eps)}\phi))}
\right)
\le O(\log n),
\]
and all logarithmic factors are absorbed by $n^{O(\eps)}$, the overall complexity remains. 
\end{proof}
\section{DAG Projection Construction via LC Expander Hierarchy}\label{sec:expandertoembedding}

In this section, we will show the following lemma about constructing a $h$-length flow-cost-preserving DAG projection.
\begin{restatable}{lemma}{hlengthprojection}\label{lem:hlengthprojection}
There is an algorithm given a directed graph $G=(V,E)$ with non-negative polynomially bounded integral edge capacities and lengths, a length bound $h$, parameters $\eps,\eps_p,\eps_f\in (0,1/\log n),\delta\in(1/\log n,1)$ and $z,\kappa_p,\kappa_f\in \bbN_+$, with oracle calls to $\OProj{h/z,\kappa_p,1+\eps_p}$ and $\OPDMCF{\kappa_f,1+\eps_f}$, write
\[S(n,\delta,\eps)=n^{O(\delta)}\cdot (1/\eps)^{O(1/\delta^5)}\]

output a DAG projection of size 
\[m\cdot S(n,\delta,\eps)\]
with a $(\kappa,\lambda)$-projection algorithm where
\[\kappa=\kappa_f\kappa_p\cdot S(n,\delta,\eps)  \qquad\qquad \lambda=1+O(\eps+\eps_p+\eps_f)/\delta^2\]

The projection algorithm takes time $zm\cdot S(n,\delta,\eps)$ time. The construction algorithm makes $z\cdot S(n,\delta,\eps)$-efficient oracle calls.
\end{restatable}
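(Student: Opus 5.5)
We build the $h$-length DAG projection directly along the lines of the overview in \Cref{subsec:overviewprojection}, using the weak LC-ED hierarchy of \Cref{lem:weakLCEDhierarchy_construct} and the bounded-width path cover of \Cref{lem:pathcover}. The $\OProj{h/z,\kappa_p,1+\eps_p}$ oracle is used only to implement the MCF-type subroutines at length scale below $h$. Concretely, we first apply the concatenation idea (\Cref{lem:concatenating}): $z$ copies of an $h/z$-length projection give an $h$-length projection of $z$ times the size. Through \Cref{lem:MCFtoDAG}, this yields an $h$-length $\OPDMCF{\kappa_f\kappa_p,(1+\eps_f)(1+\eps_p)}$ oracle on $G$. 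This explains the factor $z$ in the oracle efficiency and in the projection time, while the size of the final DAG does not depend on $z$. With this oracle we run \Cref{lem:weakLCEDhierarchy_construct} with $\eps\leftarrow\delta$ and $\phi=1/(\kappa_f\kappa_p n^{O(\delta)})$. This gives $t=O(1/\delta)$ layers $C_1,\dots,C_t$ with slack $s=(\log n)^{O(1/\delta)}$.

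Next we build the DAG scale by scale and layer by layer. For each dyadic $h'\le h$, and recursively from the top layer $t$ down to layer $1$, we interpret $C_{>i}$ as an $h'$-length cut. We build a path cover of $G-C_{>i}$ via \Cref{lem:pathcover} with diameter $h'$, $q=1/\eps$, and width $n^{O(\delta)}/\eps$. Each SCC $S$ is then classified by its $\deg_{C_i}$-mass. A light SCC is handled by recursing on $G[S]$; there are $O(1/\delta)$ recursion levels, with the threshold $\sigma=n^{\delta}$. A heavy SCC is replaced by the layer-$(i{+}1)$ structure inside $S$, followed by a star gadget with hub $v_S$, where the hub edges have capacity $\deg_{C_i}$ and length $h's$. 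Between every pair of heavy hubs we add $O(\log n/\eps)$ bridges indexed by average cost $r_j=(1+\eps)^j$. The capacity of bridge $j$ is computed by the $h$-length MBCF routine from \Cref{lem:weakLPboosting}, which is fed by the oracle above. The disjoint union over scales is a DAG, because every SCC has been contracted or replaced. The size bound $m\cdot S(n,\delta,\eps)$ comes from the width $n^{O(\delta)}/\eps$ being raised to the $O(1/\delta)$ nested levels, together with the $\log n/\eps$ bridges. Bridges add only $O(\sigma^2)$ per scale, since there are at most $\sigma$ heavy SCCs.

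The projection algorithm has a flow direction and a cut direction. In the flow direction, star edges are unfolded by the associated routing algorithm of the hierarchy, which costs congestion $1/\phi$. Bridges are unfolded by the stored MBCF flows, which costs congestion $\kappa_f\kappa_p$, and light parts are handled recursively. The total congestion is therefore $\kappa_f\kappa_p$ times $(n^{O(\delta)}/\eps)^{O(1/\delta)}$. The cost is preserved exactly, because gadget lengths are upper bounds that we may pad. In the cut direction, the DAG cut $C'$ is mapped back by charging each bridge cut to the MBCF dual moving cut that certified its capacity, and each star cut to edges via the expander dual. Path-cover edges are mapped to their images. The point to check is that $\dist_{G'-C'}(\first{u},\last{v})\le\lambda\dist_{G-C}(u,v)$.

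The main obstacle is this forward-length argument, which covers both item 2 of \Cref{def:DAGprojection} and the dual property. We must show that a $u$-$v$ path of cost $L$ in $G-C$ lifts to a DAG path of cost $(1+O(\eps+\eps_p+\eps_f)/\delta^2)L$. The plan is as follows. Choose the scale with $h'\approx \eps L/s$, so that the path lifts through the path cover ($q h'\ge L$ after concatenation). Take the first heavy SCC $S_1$ and the last heavy SCC $S_2$ on the lifted path. Replace the middle portion by a bridge whose length is within $(1+\eps)$ of that portion, charge the two star detours of $2h's\le 2\eps L$, and recurse on the light prefix and suffix. Unlike the undirected case, a short path need not lie in one SCC. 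So the cumulative charging must account for the path visiting many SCCs across the $1/\delta$ recursion depths and $1/\delta$ layers. Each of the two nestings loses a $(1+O(\eps))$ factor, which is why the stretch is $1+O(\eps)/\delta^2$. I would first try to prove this via an invariant: at every recursion level, the lifted path enters at most two gadgets of that level. All other parts are light and are delegated to the recursion.
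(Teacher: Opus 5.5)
Your overall architecture matches the paper's: one global weak LC-ED hierarchy, bounded-width path covers at every dyadic scale, light/heavy recursion, a duplicated DAG with hubs whose edges have capacity $\deg_{C_i}$ and length $h's$, bridges between heavy hubs, and \Cref{lem:concatenating} to upgrade the $h/z$-length oracle. But the step you flag as the main obstacle is planned in a way that fails. In the cut direction, the natural size-preserving way to map back a length increase $\alpha$ on an interface edge $((\last{x}_A)_1,v_A)$ is to spread it over the edges $f$ incident to $x$ as $C_i(f)\cdot\alpha$, since that edge has capacity $\deg_{C_i}(x)$. The star edges of the stored auxiliary cut behind a bridge are translated the same way. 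A $u$--$v$ path in $G-C$ that merely passes through $x$ therefore pays only about $\alpha$ times the $C_i$-mass of the one or two edges it uses at $x$. That can be far smaller than $\alpha$. So exiting at the first heavy SCC and re-entering at the last one can make $\dist_{G'-C'}(\first{u},\last{v})$ arbitrarily larger than $\dist_{G-C}(u,v)$. For the same reason, your invariant ``at most two gadgets per level, everything else light'' cannot hold. Heavy SCCs that the path crosses while it has accumulated little $C_i$-mass must be traversed \emph{through} their recursive structure, not through the hub.

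The missing idea is the cumulative $C_i$-mass argument of \Cref{lem:cutprojection_lengthincrease}.
\begin{itemize}
\item Heavy children must be built one layer down, as $\textsc{Projection}(V(C),i-1)$ on $G[V(C)]-C_{>i-1}$; your ``layer-$(i{+}1)$'' goes in the wrong direction. Then a heavy piece $Q_j$ of $C_i$-mass $\gamma_j$ lifts through its child with only an additive $\gamma_j h''$.
\item If the heavy pieces have total $C_i$-mass at most $1$, the total loss is $O(h'')=O(\eps d)$ and no hub is used.
\item Otherwise, take the prefix window $W_L$ up to the point where the cumulative $C_i$-mass of heavy pieces first exceeds $1$, and the symmetric suffix window $W_R$. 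Choose the exit $x$ and entry $y$ in these windows minimizing the combined interface, auxiliary-star and bridge cost $\Psi(x,y)$.
\item Every edge $e$ of $W_L$ receives at least $C_i(e)$ times the minimized interface-plus-star increase, and $C_i(W_L)>1$. So the path's own $\tilde\ell$-length on $W_L$ pays for the exit, and symmetrically for the entry.
\end{itemize}
Only then does the bridge comparison (capacity $2\mu_{A,B,r}/\eps$, certified distance $\tau_{r^*-1}$) give a $1+O(\eps+\eps_p+\eps_f)$ loss per level. This loss compounds over $(t+1)\lceil\log_\sigma n\rceil+t=O(1/\delta^2)$ levels. Two smaller slips: with $q=1/\eps$, your cover at scale $h'\approx\eps L/s$ only reaches length $L/s$, so the covering radius must be $h's/\eps$. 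And classifying SCCs by $\deg_{C_i}$-mass instead of vertex count loses the progress measure for the light recursion, which is why the paper uses $|V(C)|>|S|/\sigma$.
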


\Cref{lem:hlengthprojection} will imply \Cref{thm:DAGprojectionconstruction} through recursive construction. Both of them will be proved in \Cref{subsec:combineeverything}.

\subsection{An Algorithm for Constructing the DAG Projection}
\label{subsec:algorithmforconstructingDAGprojection}

\newcommand{\AProj}{\textsc{Projection}}
\newcommand{\Aux}{\mathrm{aux}}
\newcommand{\dloc}{\mathrm{loc}}
\newcommand{\dstay}{\mathrm{stay}}
\newcommand{\ddesc}{\mathrm{desc}}
\newcommand{\dtr}{\mathrm{tr}}
\newcommand{\dout}{\mathrm{out}}
\newcommand{\dinm}{\mathrm{in}}

In this section we describe a recursive algorithm for constructing an
\(h\)-length flow-cost-preserving DAG projection.

We fix two global parameters:
\begin{itemize}
    \item a stretch parameter \(\eps\in(0,1/\log n)\), and
    \item a recursion parameter \(\delta\in(0,1)\).
\end{itemize}
Let
\[
\sigma := n^\delta.
\]
We also write
\[
s=(\log n)^{O(1/\delta)}
\qquad\text{and}\qquad
t=O(1/\delta)
\]
for the length slack and the number of layers guaranteed by
\Cref{lem:weakLCEDhierarchy_construct} for our chosen parameters.

The construction assumes oracle access to
\[
\OProj{hs,\kappa_p,1+\eps_p}
\qquad\text{and}\qquad
\OPDMCF{\kappa_f,1+\eps_f},
\]
where
\[
0\le \eps_p,\eps_f \le 0.1.
\]

Notice that $\OProj{hs,\kappa_p,1+\eps_p}$ is a stronger oracle than the $\OProj{h/z,\kappa_p,1+\eps_p}$ claimed in \Cref{lem:hlengthprojection}, but this is fine: we can use \Cref{lem:concatenating} to get the stronger oracle from the weaker one. This will be elaborated in \Cref{subsec:combineeverything}.

\paragraph{A single global weak LC-ED hierarchy.}
At the beginning of the algorithm we apply
\Cref{lem:weakLCEDhierarchy_construct} to the original graph \(G\),
with length parameter \(h\), with the lemma parameter \(\eps\) set to
\(\delta\), and with
\[
\phi:=\frac{1}{\kappa_f\kappa_p\,n^{c\delta}}
\]
for a sufficiently large constant \(c\).
This yields a \(\le h\)-length \((\phi,s)\)-expanding \(t\)-layer weak
LC-ED hierarchy
\[
C_1,\ldots,C_t.
\]

For a vertex set \(S\subseteq V\), when we write
\[
G[S]-C_{>i},
\qquad\text{where }C_{>i}:=\sum_{j>i}C_j,
\]
we mean the induced subgraph \(G[S]\) equipped with the length function
obtained by restricting the moving cut \(C_{>i}\) to the edges of \(G[S]\).

The construction is recursive. For every \(S\subseteq V\) and
\(i\in\{0,1,\ldots,t\}\), let
\[
\AProj(S,i)
\]
denote the recursive routine, and note that the top-level call is
\[
\AProj(V,t).
\]

Intuitively, \(\AProj(S,i)\) constructs a DAG projection for the induced
subproblem on \(S\), after the cut \(C_{>i}\) has already been applied.

\paragraph{Base cases.}
The recursive routine \(\AProj(S,i)\) has the following base cases.
\begin{itemize}
    \item If \(i=0\), return the edgeless graph on the vertex set \(S\).
    \item If \(|S|=1\), return the single-vertex graph.
\end{itemize}

We now describe \(\AProj(S,i)\) for \(i>0\) and \(|S|>1\).

\paragraph{Path covers at all dyadic scales.}
For each dyadic scale \(h'\le h\), let
\[
G_{S,i}^{(h')}:=G[S]-C_{>i},
\]
where \(C_{>i}\) is interpreted as an \(h'\)-length moving cut.
Let
\[
q=O(\log h)=O(\log n)
\]
denote the number of dyadic scales.

Applying \Cref{lem:pathcover} to \(G_{S,i}^{(h')}\) with appropriate parameters, we can get an induced
\[
\frac{h's}{\eps}\text{-path cover}
\]
with SCC diameter at most \(h'\), such that the width is at most $\omega=(s/\eps)\cdot \exp(O(1/\delta^3))\cdot n^{O(\delta^3)}$.

We denote this path cover by
\[
P^{(h')}.
\]

The graph \(P^{(h')}\) is not necessarily a DAG. However, every SCC of
\(P^{(h')}\) maps to an induced subgraph of \(G_{S,i}^{(h')}\), and every
such SCC has weak diameter at most \(h'\) in \(G_{S,i}^{(h')}\).
For notational convenience, if \(C\) is an SCC of \(P^{(h')}\), then we
also write \(C\) for its image inside \(G\).

\paragraph{Light and heavy SCCs.}
Let \(C\) be an SCC of \(P^{(h')}\).
\begin{itemize}
    \item If
    \[
    |V(C)|\le |S|/\sigma,
    \]
    then \(C\) is called \emph{light}, and we recursively call
    \[
    \AProj(V(C),t)
    \]
    to obtain a DAG projection \(D_C\).

    \item If
    \[
    |V(C)|>|S|/\sigma,
    \]
    then \(C\) is called \emph{heavy}, and we recursively call
    \[
    \AProj(V(C),i-1)
    \]
    to obtain a DAG projection \(D_C\).
\end{itemize}

\paragraph{Assembling one DAG layer.}
For each SCC \(C\) of \(P^{(h')}\), let \(D_C\) be the recursively
constructed DAG projection of \(C\).
Since \(D_C\) is a DAG equipped with a projection map to \(V(C)\), for
every \(u\in V(C)\) we denote by
\[
\first{u}_C
\qquad\text{and}\qquad
\last{u}_C
\]
the first and last lifted copies of \(u\) in the topological order of
\(D_C\).

We now replace each SCC \(C\) of \(P^{(h')}\) by its DAG projection \(D_C\).
For every edge \((u,v)\in E(P^{(h')})\) whose endpoints lie in two
different SCCs, say \(u\in C_u\) and \(v\in C_v\) with \(C_u\neq C_v\),
we add an edge
\[
\bigl(\last{u}_{C_u},\,\first{v}_{C_v}\bigr)
\]
with the same length and capacity as \((u,v)\).
Since the condensation of \(P^{(h')}\) is a DAG and each \(D_C\) is a DAG,
the resulting graph is again a DAG. We denote it by
\[
D^{(h')}.
\]

\paragraph{Duplicating the DAG.}
Create two disjoint copies \(D^{(h')}_1\) and \(D^{(h')}_2\) of
\(D^{(h')}\).
If \(x\in V(D^{(h')})\), then \(x_1\) and \(x_2\) denote its copies in
the two DAGs.

For each heavy SCC \(C\), create two fresh interface vertices
\[
v_C,\qquad v'_C.
\]
For every \(u\in V(C)\), add the interface edges
\[
\bigl((\last{u}_C)_1,\,v_C\bigr)
\qquad\text{and}\qquad
\bigl(v'_C,\,(\first{u}_C)_2\bigr),
\]
each of length \(h's\), and each of capacity \(\deg_{C_i}(u)\).

\paragraph{Auxiliary graph for heavy SCCs.}
Construct an auxiliary graph \(G^{\Aux,(h')}_{S,i}\) as follows.
Start from \(G[S]-C_{>i}\).
For every heavy SCC \(C\), add a fresh star vertex \(s_C\), and for every
\(u\in V(C)\) add the two zero-length edges
\[
(s_C,u)
\qquad\text{and}\qquad
(u,s_C),
\]
each with capacity \(\deg_{C_i}(u)\).

Invoke the oracle \(\OProj{hs,\kappa_p,1+\eps_p}\) on
\(G^{\Aux,(h')}_{S,i}\) and obtain a DAG projection of this auxiliary graph.

For every ordered pair of heavy SCCs \((A,B)\) and every power-of-two demand
value \(2^j\), apply \Cref{cor:PDMCFtoDAG} to the projected auxiliary graph
with an \(s_A\)-\(s_B\) demand of value \(2^j\).
For each such call, store both:
\newcommand{\pow}{\mathrm{pow}}
\begin{itemize}
    \item the returned flow \(g^{\pow}_{A,B,j}\) on \(G^{\Aux,(h')}_{S,i}\), and
    \item the returned moving cut \(C^{\pow}_{A,B,j}\) on \(G^{\Aux,(h')}_{S,i}\).
\end{itemize}

Let
\[
B_0:=\sum_{e\in E(G)} U(e)\cdot |\ell(e)|,
\qquad
R:=O(\log n/\eps).
\]
The specific value of $R$ will be set later. For every ordered pair of heavy SCCs \((A,B)\) and every \(r\in [R]\), let
\[
L_r:=(1+\eps)^{r-1}.
\]
Among all stored power-of-two flows from \(s_A\) to \(s_B\), let
\[
\mu_{A,B,r}
\]
be the largest value \(2^j\) for which the returned flow has average cost
at most \(L_r\). If no such value exists, set \(\mu_{A,B,r}:=0\). We let $R$ to be the first time $\mu_{A,B,R}$ reach the maximum value, i.e., the maximum amount of $s_A-s_B$ flow is at most $\mu_{A,B,R}$ under the power $2$ scales. 
In the case of there exists a flow with value $2^j$ and average cost at most $L_r$, denote the stored flow $g^\pow_{A,B,j}$ by $g^\star_{A,B,r}$, and the store cut $C^\pow_{A,B,j+1}$ by $C^\star_{A,B,r}$ in case $r<R$. Notice that $C^\star_{A,B,r}$ corresponds to a flow with average cost at least $L_r$, and size at most $2$ times the flow value $g^\star_{A,B,r}$. 
\paragraph{Bridge edges between heavy SCCs.}
For every ordered pair of heavy SCCs \((A,B)\) and every \(r\in [R]\) with
\(\mu_{A,B,r}>0\), add a bridge edge
\[
(v_A,v'_B)
\]
of length \(L_r\) and capacity
\[
2\mu_{A,B,r}/\eps.
\]

\paragraph{The DAG at scale \(h'\).}
Let \(\widehat D^{(h')}\) denote the DAG obtained by taking the union of
the two copies \(D^{(h')}_1,D^{(h')}_2\), all interface vertices, and all
bridge edges added above.

\paragraph{The final output and the projection map.}
The output of \(\AProj(S,i)\) is the disjoint union
\[
D^*_{S,i}:=\bigsqcup_{h'} \widehat D^{(h')}.
\]

Fix any topological order of \(D^*_{S,i}\).
For every \(u\in S\), let
\[
x_1\prec x_2\prec \cdots \prec x_k
\]
be all vertices of \(D^*_{S,i}\) whose projection is \(u\), listed in this
topological order.
Add zero-length, infinite-capacity edges
\[
(x_1,x_2),\ (x_2,x_3),\ \ldots,\ (x_{k-1},x_k).
\]

The projection map \(\pi\) on \(D^*_{S,i}\) is defined as follows.
\begin{itemize}
    \item On every recursively constructed DAG \(D_C\), use the recursively
    returned projection map.
    \item On the two copies \(D^{(h')}_1,D^{(h')}_2\), inherit the same
    projection map.
    \item For every heavy SCC \(C\), define
    \[
    \pi(v_C)=\pi(v'_C):=\bot.
    \]
\end{itemize}
\subsection{Size and Complexity of the Construction}\label{subsec:sizeandcomplexity}
We now calculate the explicit size of the output DAG and the non-oracle work of
the recursive construction.

\begin{lemma}[Size of the recursive projection]\label{lem:aproj_size}
For the top-level recursive call \(\AProj(V,t)\), the returned DAG satisfies
\[
|D^*_{V,t}|
\le m^{1+O(\delta)}\cdot (1/\eps)^{O(1/\delta^5)}\]
\end{lemma}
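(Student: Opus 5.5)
We will bound $|D^*_{S,i}|$ by a recursion over the pair $(|S|,i)$ and then unroll it. Write $m_S$ for the number of edges of $G[S]$ (plus $|S|$, to handle sparse pieces). Fix one call $\AProj(S,i)$ with $i>0$, $|S|>1$, and one dyadic scale $h'$. By \Cref{lem:pathcover}, $P^{(h')}$ is induced and has width $\omega=(s/\eps)\exp(O(1/\delta^3))n^{O(\delta^3)}$. Every SCC of $P^{(h')}$ maps to an induced subgraph of $G[S]$. Every edge of $P^{(h')}$ projects to an edge of $G[S]$, and each edge of $G[S]$ has at most $\omega^2$ preimages (one per pair of endpoint copies). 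Hence:
\begin{itemize}
\item the total size of $P^{(h')}$ is at most $\omega^2 m_S$;
\item $\sum_{C}m_{V(C)}\le \omega^2 m_S$, summing over all SCCs $C$.
\end{itemize}
A more careful count gives $\omega\, m_S$, since two endpoints in the same SCC are determined by the SCC.

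The non-recursive part of $\widehat D^{(h')}$ consists of three pieces.
\begin{itemize}
\item The crossing edges, of which there are at most $\omega^2 m_S$.
\item The interface edges, of which there are $O(\omega|S|)$.
\item The bridges, of which there are at most $R$ per ordered pair of heavy SCCs.
\end{itemize}
A heavy SCC has more than $|S|/\sigma$ vertices, and since each vertex of $S$ lies in at most $\omega$ SCCs, there are at most $\omega\sigma$ heavy SCCs. So the bridges number at most $R\,\omega^2\sigma^2=O(\log n/\eps)\,\omega^2 n^{2\delta}$, which must be charged against $m_S\ge |S|-1$ or simply absorbed into the polynomial factor. The chain edges added at the end are at most the number of vertices. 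Summing over the $q=O(\log n)$ scales and doubling for the two copies, we obtain
\[
|D^*_{S,i}|\le \sum_{h'}\Bigl(2\sum_{C\text{ light}}|D^*_{V(C),t}|+2\sum_{C\text{ heavy}}|D^*_{V(C),i-1}|\Bigr)+ O(q\,\omega^2\sigma^2 R)\,m_S .
\]

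To unroll this, define $T(S,i)$ as the worst-case ratio $|D^*_{S,i}|/m_S$ and use potential accounting. Each recursive step multiplies the total ``edge mass'' by at most $K:=2q\omega^2$. A light child shrinks the vertex set by a factor of $\sigma=n^{\delta}$, so along any root-to-leaf path there are at most $1/\delta$ light steps. A heavy child decreases $i$, so between two light steps there are at most $t=O(1/\delta)$ heavy steps. The recursion depth is therefore at most $(t+1)/\delta=O(1/\delta^2)$.

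The total mass at depth $d$ is at most $K^d m$, so the total size is at most
\[
K^{O(1/\delta^2)}\cdot O(q\,\omega^2\sigma^2R)\cdot m .
\]
Plugging in the factors: $\omega^{O(1/\delta^2)}$ contributes $(s/\eps)^{O(1/\delta^2)}\exp(O(1/\delta^5))\,n^{O(\delta)}$. With $s=(\log n)^{O(1/\delta)}$, the $s$-factor becomes $(\log n)^{O(1/\delta^3)}$. Since $\delta>1/\log n$ and $\eps<1/\log n$, this is at most $(1/\eps)^{O(1/\delta^3)}$, and $\exp(O(1/\delta^5))\le(1/\eps)^{O(1/\delta^5)}$. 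The factors $q$, $R$ and $\sigma^2$ are absorbed into $n^{O(\delta)}(1/\eps)$, giving $m^{1+O(\delta)}(1/\eps)^{O(1/\delta^5)}$.

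The main obstacle is the blow-up across levels. The bound $\sum_C m_{V(C)}\le\omega^{O(1)}m_S$ must be applied per level with a multiplicative factor, so it is essential that the depth is $O(1/\delta^2)$ and not logarithmic in $n$; this is why we argue depth via the light/heavy dichotomy instead of via vertex counts alone. A secondary subtlety is the bridges, whose count $\sigma^2$ is not proportional to $m_S$. We handle this by noting that these edges are only created when heavy SCCs exist, and then bounding them crudely by $n^{2\delta}\cdot m_S$ and absorbing the factor into $m^{O(\delta)}$.
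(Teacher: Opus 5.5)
Your proposal is correct and follows essentially the same route as the paper. You use the same per-scale accounting: $\omega^2$ lifted copies per original edge for the path cover and crossing edges, $O(\omega|S|)$ interface edges, and $O((\omega\sigma)^2R)$ bridges. This feeds into the same recurrence, in which light children shrink $|S|$ by a factor $\sigma$ and heavy children decrement $i$.

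Your explicit unrolling, via the local edge mass $m_S$ and the depth bound $O(t/\delta)=O(1/\delta^2)$, is a cleaner way of ``solving the recurrence'' than the paper's, which writes the global $m$ in every local term. You should also put the chain edges into your displayed recurrence; they cost at most a factor $2$ per level and so only change $K$ by a constant.
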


\begin{proof}
We will upper bound the size of the graph \(D^*_{S,i}\) returned by a general
recursive call \(\AProj(S,i)\), write down the recurrence, and then solve it.

Fix one dyadic scale \(h'\le h\).

First, the path cover \(P^{(h')}\) has size \(\tO{\omega^2 |E(G[S])|}\). After
duplicating the DAG and adding the inter-SCC edges, this remains
\(\tO{\omega^2 |E(G[S])|}\).

Second, the total number of lifted copies of original vertices in
\(P^{(h')}\) is at most \(\omega |S|\), so the interface vertices and
interface edges contribute at most \(\tO{\omega |S|}\) size.

Third, every heavy SCC contains more than \(|S|/\sigma\) original vertices,
while the total multiplicity of original vertices over all SCC images at this
scale is at most \(\omega |S|\). Hence the number of heavy SCCs at this scale
is at most \(\omega\sigma\). Therefore the bridge edges at this scale
contribute at most
\[
O((\omega\sigma)^2R).
\]

Thus the explicit graph material created in the current call at one scale is
\[
\tO{\omega^2 |E(G[S])|+\omega |S|+(\omega\sigma)^2R}.
\]
Summed over all \(q\) dyadic scales, this becomes
\[
q\cdot \tO{\omega^2 |E(G[S])|+\omega |S|+(\omega\sigma)^2R}.
\]

It remains to account for the recursive SCC gadgets. At one fixed scale,
every original vertex of \(S\) appears in at most \(\omega\) SCC images. Each light SCC \(C\) satisfies
\(|V(C)|\le |S|/\sigma\), and its recursive child is \(\AProj(V(C),t)\). Each
heavy SCC \(C\) satisfies \(|V(C)|\le |S|\), and its recursive child is
\(\AProj(V(C),i-1)\). Therefore, if we let \(\Lambda(\eta,i)\) denote the
 maximum size of \(D^*_{S,i}\) over all \(|S|\le \eta\), for \(\eta>1,i>0\),
\[
\Lambda(\eta,i)
\le
q\cdot \tO{\omega^2 m+\omega \eta+(\omega\sigma)^2R}
+
q\omega \cdot\left(
\sum_{S'\text{ is light}}\Lambda(|S'|,t)
+\Lambda(\eta,i-1)
\right).
\]

Solving this recurrence and substituting
\[
\sigma=n^\delta,\qquad
t=O(1/\delta),\qquad
q=O(\log n),\qquad
\]
\[
\omega=(s/\eps)\cdot \exp(O(1/\delta^3))\cdot n^{O(\delta^3)},\qquad
s=(\log n)^{O(1/\delta)},\qquad
R=O(\log n/\eps),
\]
gives
\[
\Lambda(n,t)
\le
m^{1+O(\delta)}\cdot
\eps^{-O(1/\delta^5)}\cdot
\]
which proves the lemma (remember that $\eps<1/\log n$, and $m\ge n$ w.o.l.g.).
\end{proof}

\begin{lemma}[Construction complexity]\label{lem:aproj_time}
The total non-oracle running time of constructing the top-level DAG
\(\AProj(V,t)\) and the cumulative graph size the oracles $\OProj{hs,\kappa_p,1+\eps_p}$, $\OPDMCF{\kappa_f,1+\eps_f}$ runs on is at most $m^{1+O(\delta)}\cdot (1/\eps)^{O(1/\delta^5)}$.
\end{lemma}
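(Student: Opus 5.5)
The plan is to account for the cost of a single recursive call \(\AProj(S,i)\), excluding its children, and then sum over the recursion tree. The recursion tree is the same one whose output size was bounded in \Cref{lem:aproj_size}. Concretely, I would show that the non-oracle work and the oracle-instance size generated locally by \(\AProj(S,i)\) are both bounded by \(\poly(\omega,\sigma,q,R,s,1/\eps)\cdot n^{O(\delta)}\) times \(|E(G[S])|+|S|\). With that local bound in hand, the recurrence for \(\Lambda(\eta,i)\) from the proof of \Cref{lem:aproj_size} carries over verbatim, with the size term replaced by this local cost term. Solving it gives the same bound \(m^{1+O(\delta)}(1/\eps)^{O(1/\delta^5)}\), because the extra polynomial factors in \(\omega,q,R\) are absorbed in exactly the same way.

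The local cost splits into the following items.

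\emph{(a) Hierarchy.} The one global call to \Cref{lem:weakLCEDhierarchy_construct} is made only at the top. With its parameter set to \(\delta\) and \(\phi=1/(\kappa_f\kappa_p n^{c\delta})\), it costs \(m^{1+O(\delta)}/\phi^3\)-efficient calls. I would remark that the \(\phi^{-3}\) factor in \(\kappa_f\kappa_p\) is charged as oracle efficiency in the way \Cref{lem:hlengthprojection} phrases it, or is absorbed by the convention that efficiency factors are stated up to \(S(n,\delta,\eps)\). This point has to be checked against how \(\kappa\) enters.

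\emph{(b) Path covers.} At each of the \(q\) scales, \Cref{lem:pathcover} runs in \(q'\cdot |E(G[S])|^{1+o(\delta)}\) time, where the stretch \(q'=s/\eps\) is used to get the \(h's/\eps\)-cover from diameter \(h'\) via \Cref{lem:concatenating}.

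\emph{(c) Assembly.} Computing SCCs, assembling \(D^{(h')}\), duplicating it, adding interface edges, taking the topological order and adding the chain edges are all linear in the size already bounded in \Cref{lem:aproj_size}.

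\emph{(d) Auxiliary graph and bridges.} The auxiliary graph \(G^{\Aux,(h')}_{S,i}\) has \(O(|E(G[S])|+\omega|S|)\) edges. The oracle \(\OProj{hs,\kappa_p,1+\eps_p}\) is called once on it per scale, producing a DAG of size \(|E(G[S])|\cdot S(n,\delta,\eps)\). Then \Cref{cor:PDMCFtoDAG} is called once per ordered pair of heavy SCCs and per power of two \(2^j\), giving at most \((\omega\sigma)^2\cdot O(\log n)\) calls per scale. This is the step I expect to be the main obstacle. Each call is run on the full projected auxiliary DAG, so naively the oracle-instance size is \((\omega\sigma)^2\log n\) times the DAG size. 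With \(\sigma=n^\delta\), this is only an \(n^{O(\delta)}\) factor, which is acceptable. The danger is the recursion: a heavy child has size up to \(|S|\), so these costs are paid again along chains of depth up to \(t\) heavy steps, interleaved with light steps that shrink \(|S|\) by \(\sigma\). I would argue exactly as in \Cref{lem:aproj_size}. The number of nested heavy levels is at most \(t=O(1/\delta)\) before \(i\) reaches 0, and the number of light levels is at most \(1/\delta\). Each level multiplies the total mass by at most \(q\omega\cdot n^{O(\delta)}\). This gives the total multiplier \((q\omega n^{O(\delta)})^{O(1/\delta^2)}\), which is \(m^{O(\delta)}(1/\eps)^{O(1/\delta^5)}\) after substituting \(\omega=(s/\eps)\exp(O(1/\delta^3))n^{O(\delta^3)}\).

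Finally, the post-processing work is linear in the outputs. This includes reading the stored flows and cuts \(g^\pow,C^\pow\), computing \(\mu_{A,B,r}\) for \(r\le R\), and choosing \(g^\star\) and \(C^\star\). Summing (a)–(d) over the recursion tree yields the claimed bound on both the non-oracle time and the cumulative oracle instance size.
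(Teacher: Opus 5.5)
Your overall plan is the paper's: charge each call $\AProj(S,i)$ a local cost proportional to $|E(G[S])|+|S|$, reuse the recurrence of \Cref{lem:aproj_size}, and bound the oracle instances by the auxiliary graphs, multiplied for $\OPDMCF{\kappa_f,1+\eps_f}$ by the number of heavy pairs. However, your resolution of the step you flag as the main obstacle in (d) is wrong as written.

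\textbf{The error in (d).} You let each recursion level multiply the total mass by $q\omega\cdot n^{O(\delta)}$ and conclude a multiplier $(q\omega n^{O(\delta)})^{O(1/\delta^2)}$. But $(n^{O(\delta)})^{O(1/\delta^2)}=n^{O(1/\delta)}$, which is not $m^{O(\delta)}(1/\eps)^{O(1/\delta^5)}$. Followed literally, this argument gives a useless bound. It also contradicts your first paragraph, where the extra factors are correctly treated as a local cost term.

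\textbf{What actually makes the lemma true.} The $n^{\Theta(\delta)}$ factors are local and never compound.
\begin{itemize}
\item The $(\omega\sigma)^2\cdot O(\log n)$ calls to \Cref{cor:PDMCFtoDAG} per scale, and the size of the DAG returned by $\OProj{hs,\kappa_p,1+\eps_p}$ on the auxiliary graph, are paid once at the node $(S,i)$, in proportion to $|E(G[S])|+|S|$. They do not enlarge the instances passed to the children.
\item Along the $(t+1)\lceil\log_\sigma n\rceil=O(1/\delta^2)$ levels, only the mass-branching factor compounds. At each scale, every vertex of $S$ lies in at most $\omega$ SCC images, heavy or light. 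So the children's inputs have total size at most $q\omega\,(|E(G[S])|+|S|)$, independently of $\sigma$. The bound $\omega\sigma$ on the number of heavy SCCs enters only the local pair count.
\item The polynomial part of $\omega$ is just $n^{O(\delta^3)}$; this is exactly why \Cref{lem:pathcover} is invoked with parameter $\delta^3$. Hence the total input size over the recursion tree is $m\,(q\omega)^{O(1/\delta^2)}=m\cdot n^{O(\delta)}(1/\eps)^{O(1/\delta^5)}$, using $\eps<1/\log n$ to absorb the $s$, $q$ and $\exp(O(1/\delta^5))$ factors.
\end{itemize}
Multiplying once by the local factor $q(\omega\sigma)^2\log n$ times the projected-DAG blow-up gives the claim. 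With a genuine $n^{\Theta(\delta)}$ per-level factor, the lemma would fail.

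\textbf{On (a).} Neither of your two proposed ways of absorbing the hierarchy cost works. We have $\phi^{-3}=(\kappa_f\kappa_p)^3n^{O(\delta)}$, while the target bound (and $S(n,\delta,\eps)$) has no dependence on $\kappa_f,\kappa_p$. The paper's proof does not count the one-time call to \Cref{lem:weakLCEDhierarchy_construct} at all; it only accounts for the recursion. You should either exclude that call explicitly, or carry the extra $(\kappa_f\kappa_p)^3$ factor.

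\textbf{Minor point on (b).} The $h's/\eps$-cover comes from the stretch parameter of \Cref{lem:pathcover} itself, with stretch $(s/\eps)\exp(O(1/\delta^3))$. It does not come from \Cref{lem:concatenating}, which concatenates DAG projections rather than path covers.
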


\begin{proof}
The proof is identical in form to \Cref{lem:aproj_size}. Let \(T(\eta,i)\)
denote the maximum non-oracle running time of a recursive call \(\AProj(S,i)\)
over all \(|S|\le \eta\).

For one fixed dyadic scale \(h'\), the non-oracle time in the current call
consists of constructing the path cover, computing SCCs, duplicating the DAG,
adding the interface and bridge edges, and adding the final zero-length trivial
edges. This is linear, up to polylogarithmic factors, in the explicit graph
material created in this call. Hence
\[
T(\eta,i)
\le
q\cdot \tO{\omega^2 m+\omega \eta+(\omega\sigma)^2R}
+
q\omega \eta\cdot
\left(
\sum_{S\text{ is light}}T(|S|,t)
+T(\eta,i-1)
\right)
\]

This is the same recurrence as above, so solving it and substituting the same
parameter bounds yields the lemma as claim.

Moreover, the total cumulative graph size for the oracle call $\OProj{hs,\kappa_p,1+\eps_p}$ is on the auxiliary graphs over all layers of recursion and all dyadic scales, with size $G[S]$ each, so the total size is no larger than $|D^*_{V,t}|$. The oracle call $\OPDMCF{\kappa_f,1+\eps_f}$ is also on the auxiliary graph over all layers of recursion and all dyadic scales with size $G[S]$, but repeat for all pairs of large clusters ($\omega\sigma^2$ many of them, which is at most a factor of $n^{O(\delta)}\cdot \exp(1/\delta^3)$).

\end{proof}

\subsection{Projection Algorithm for Flows}\label{subsec:projectionflows}

In this section, we prove that the construction above admits a projection
algorithm for flows.

\begin{lemma}\label{lem:flowprojection}
Let \(D^*_{V,t}\) be the DAG projection returned by \(\AProj(V,t)\).
Then there is an algorithm that, given a feasible flow \(F^D\) on
\(D^*_{V,t}\), computes a flow \(F\) on \(G\) such that
\[
\Dem(F)=\pi(\Dem(F^D)),
\qquad
\cost(F)\le \cost(F^D),
\]
and the congestion of \(F\) is at most
\[
\kappa=
\kappa_f\kappa_p\cdot n^{O(\delta)}\cdot (1/\eps)^{O(1/\delta^5)}
\]
Moreover, the running time is at most $\tO{|D^*_{V,t}|}$.
\end{lemma}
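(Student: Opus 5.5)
The plan is to prove the lemma by induction on the recursion tree of \(\AProj(S,i)\), ordered lexicographically by \((|S|,i)\). The inductive statement is the lemma itself for \(D^*_{S,i}\): any feasible flow on \(D^*_{S,i}\) maps to a flow on \(G[S]-C_{>i}\), hence on \(G[S]\subseteq G\), with projected demand, no larger cost, and congestion \(\kappa(|S|,i)\). In the base cases \(i=0\) or \(|S|=1\) the DAG has no edges, so the flow is zero and nothing needs to be done. For the inductive step, fix a feasible flow \(F^D\) on \(D^*_{S,i}\). First, compute its path decomposition by \Cref{thm:flowpathdecomposition}. The edges of \(D^*_{S,i}\) then split into four classes, and each class is mapped back separately:
\begin{enumerate}
\item the zero-length chain edges \((x_j,x_{j+1})\) joining copies of the same vertex;
\item edges inside recursive gadgets \(D_C\), together with the inter-SCC edges \((\last{u}_{C_u},\first{v}_{C_v})\) in either copy \(D^{(h')}_1,D^{(h')}_2\);
\item interface edges into \(v_C\) and out of \(v'_C\);
\item bridge edges \((v_A,v'_B)\).
\end{enumerate}
The chain edges project to the empty walk at a single vertex of \(G\), so they contribute nothing. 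Inter-SCC edges project to their original edges with the same length and capacity. For the gadget edges, restrict \(F^D\) to each \(D_C\) and apply the induction hypothesis to it.

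The star and bridge pieces are handled as follows. Any flow path that enters \(v_A\) must leave through a bridge to some \(v'_B\), because \(v_A\) has no other out-edges. Consider the flow \(x_{A,B,r}\le 2\mu_{A,B,r}/\eps\) on a bridge of length \(L_r\). Route it in \(G^{\Aux,(h')}_{S,i}\) by scaling the stored flow \(g^\star_{A,B,r}\) by \(x_{A,B,r}/\mu_{A,B,r}\). The scale factor is at most \(2/\eps\), and the result has average cost at most \(L_r\), so its cost is at most the cost the bridge contributes. To obtain this flow in \(G^{\Aux}\), the stored flow comes from \Cref{cor:PDMCFtoDAG}, which itself uses the \(\OProj{hs,\kappa_p,1+\eps_p}\) projection algorithm and the \(\OPDMCF{\kappa_f,1+\eps_f}\) oracle. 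Its congestion is therefore at most \(\kappa_p\kappa_f\) in \(G^{\Aux}\) before scaling.

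The star vertices \(s_A,s_B\) in this flow then have to be removed. At this point the demand sent into \(v_A\) from \((\last{u}_A)_1\) and the flow leaving \(s_A\) toward \(u'\in A\) are both \(\deg_{C_i}\)-respecting, up to congestion \(\kappa_p\kappa_f/\eps\). This holds because the interface edges and the star edges both have capacity \(\deg_{C_i}(u)\). The combined demand "from the interface tails to the star heads" inside \(A\) is therefore a \(\deg_{C_i}\)-respecting demand supported on a heavy SCC of \(P^{(h')}\). That SCC has weak diameter \(h'\) in \(G[S]-C_{>i}\), so the demand is symmetric \(h'\)-length. By the weak LC-ED hierarchy (\Cref{def:weakLC-EDhierarchy}), it can be routed in \(G\) with length at most \(h's\) and congestion \(O(\kappa_p\kappa_f/(\eps\phi))\), using the associated routing algorithm. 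The cost of this routing is at most \(h's\) per unit, which is exactly paid for by the length \(h's\) of the interface edge. Splicing the three pieces together, tail routing followed by the projected auxiliary flow followed by head routing, yields a walk in \(G\) for each flow path. The demand identity \(\Dem(F)=\pi(\Dem(F^D))\) holds because \(\pi(v_C)=\bot\) and all spliced pieces agree at their endpoints.

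The congestion recurrence is where I expect the most care. At each scale, every vertex of \(S\) lies in at most \(\omega\) SCC images, and there are \(q\) scales and two copies. The recurrence is then
\[\kappa(\eta,i)\le O(q\omega)\bigl(\kappa(\eta/\sigma,t)+\kappa(\eta,i-1)\bigr)+O(q\kappa_p\kappa_f/(\eps\phi)).\]
The additive term matters only once per level, while the multiplicative factor compounds over at most \(t\log_\sigma n=O(1/\delta^2)\) levels. This gives \((q\omega)^{O(1/\delta^2)}\cdot\kappa_p\kappa_f/(\eps\phi)\), which is \(\kappa_f\kappa_p\, n^{O(\delta)}(1/\eps)^{O(1/\delta^5)}\) after substituting \(\omega\) and \(\phi\). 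The term \(1/\phi\) is \(\kappa_f\kappa_p n^{O(\delta)}\), so I would need to check that it enters additively rather than multiplicatively; otherwise the factor \(\kappa_f\kappa_p\) would appear squared. The subtle step is the bridge/star splicing: I have to check that the auxiliary flow's use of star edges is genuinely \(\deg_{C_i}\)-respecting and that the resulting demand is symmetric length-bounded in \(G-C_{>i}\) at the correct scale \(h'\). Finally, the running time is near-linear in \(|D^*_{V,t}|\), since each edge is processed a constant number of times per recursion level and the stored flows are rescaled rather than recomputed.
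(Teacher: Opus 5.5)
\textbf{Main gap: where you route the star demands.} You invoke the hierarchy's routing algorithm inside every recursive call $\textsc{Projection}(S,i)$. But the hierarchy is built once for all of $G$, and by \Cref{def:weakLC-ED} its routing algorithm returns flows on $G$ (paths of length at most $h's$ in $G$), not on $G[S]$. This has two consequences:
\begin{enumerate}
\item Your induction hypothesis that the output lives on $G[S]$ fails.
\item More seriously, your recurrence does not control the congestion of these routings. The factor $O(q\omega)$ only bounds how many recursive gadgets reuse an edge inside $G[S]$. The recursion tree, however, contains very many calls (e.g.\ the subtrees of all light SCCs at every scale). Each of them would route its own heavy-SCC demands through $G$ at congestion up to $1/\phi$ times the demand scaling, and nothing limits how many of these separate routings use the same edge of $G$. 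The expander guarantee applies to a single symmetrically constrained demand and is not additive over pieces routed separately.
\end{enumerate}

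\textbf{The missing idea is deferral.} \Cref{lem:localflowreduction} strengthens the induction so that each call returns two things. First, a local flow on $G[S]$ only: the children's local flows, the inter-SCC edges, and the restrictions $\bar g_{A,B,r}|_{E(G[S])}$ of the scaled auxiliary flows, with no expander routing. Second, deferred demands $D_A^{\mathrm{out}}=p_A-\alpha_A$ and $D_A^{\mathrm{in}}=\beta_A-q_A$ for every heavy SCC $A$ in its subtree. Their routing is pre-paid by the budget $b_A(D)=\hat h(A)s|D|$ charged to the interface edges, which matches your cost argument. Only at the top level are all deferred demands with the same layer $j(A)$ and scale $\hat h(A)$ summed. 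Every vertex lies in at most $\chi_{V,t}\le(q\omega)^{O(t\log_\sigma n)}$ heavy sets of the whole recursion tree (\Cref{lem:overlapbound_local}). Hence each aggregate is $\chi_{V,t}\deg_{C_j}$-respecting and symmetrically constrained in $G-C_{>j}$, and one call to the routing algorithm per pair $(j,\hat h)$ routes it with congestion $\chi_{V,t}/\phi$. This gives $\mathrm{cong}(F)\le\Gamma(n,t)+2tq\chi_{V,t}/\phi$, where $\Gamma$ contains no $1/\phi$.

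\textbf{On your closing question about additivity.} In the paper, $1/\phi$ enters additively because each $D_A^{\mathrm{out}},D_A^{\mathrm{in}}$ is asserted to be $\deg_{C_i}$-respecting, via $p_A(u)+\alpha_A(u)\le\deg_{C_i}(u)$. The only justification given is the capacities of the interface and star edges. That does not cover $\alpha_A(u)=\sum_{B,r}\bar g_{A,B,r}(s_A,u)$, which is a sum over bridge types of scaled auxiliary flows whose star-edge congestion the paper itself bounds only by $2\kappa_f\kappa_p/\eps$. The factor $\kappa_f\kappa_p/\eps$ you kept in your respecting bound is exactly what that step would have to remove. Deferral does not affect it, and the paper's text does not settle your concern.

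\textbf{A further hole in the splicing step.} Your splicing, and likewise the paper's demand identity, removes only the stars $s_A,s_B$. But $G^{\mathrm{aux}}$ has a star for every heavy SCC, so the stored flows may pass through other stars as well. Routing across those is not covered by any interface-edge budget.
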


The proof is based on a recursive local reduction. The non-trivial part is
that the demands created at a recursive call \(\AProj(S,i)\) are \emph{not}
routed inside \(G[S]\). Instead, each call produces a local flow on \(G[S]\)
together with a collection of deferred demands. These deferred demands are
later routed \emph{globally on \(G\)}, layer by layer, using the weak
LC-expander hierarchy of the original graph.

\begin{definition}[Global heavy-SCC families]\label{def:globalheavyfamilies}
Consider the full recursion tree of the recursive call \(\AProj(S,i)\), and
collect all heavy SCC vertex sets generated by all recursive calls in this
subtree. We denote this family by
\[
\mathcal H_{S,i}.
\]
For every \(A\in\mathcal H_{S,i}\), let
\[
j(A)\in\{1,\ldots,t\}
\qquad\text{and}\qquad
\hat h(A)\le h
\]
denote respectively the second parameter of the recursion call involves $A$, and the dyadic scale at which the
heavy SCC \(A\) is generated.

Let $\chi_{S,i}$ be the maximum number of times a vertex $v$ being included in sets of $\mathcal H_{S,i}$.
\end{definition}

\begin{lemma}\label{lem:overlapbound_local}
For every recursive call \(\AProj(S,i)\), we have
\[
\chi_{S,i}\le (qw)^{O(t\log_\sigma |S|)}.
\]

\end{lemma}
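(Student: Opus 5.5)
We argue by induction on the recursion tree, tracking the multiplicity of a fixed vertex \(v\in S\) across the sets in \(\mathcal H_{S,i}\). Let \(N(\eta,i)\) be the maximum, over all calls \(\AProj(S',i')\) with \(|S'|\le \eta\) and \(i'\le i\), of the number of sets in \(\mathcal H_{S',i'}\) containing a fixed vertex. We will show
\[
N(\eta,i)\le (c\,q\omega)^{\,t\lceil\log_\sigma \eta\rceil+i}
\]
for a suitable constant \(c\). Since \(i\le t\), this gives the claimed \((q\omega)^{O(t\log_\sigma|S|)}\) whenever \(|S|\ge\sigma\). Smaller \(|S|\) is handled by the base cases, where the recursion depth is \(O(t)\) and the bound is absorbed into the constant in the exponent.

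The key step is a one-level accounting. In a call \(\AProj(S,i)\), fix \(v\in S\). At each of the \(q\) dyadic scales, the path cover \(P^{(h')}\) has width at most \(\omega\), so \(v\) has at most \(\omega\) copies. Each copy lies in exactly one SCC, so \(v\) belongs to at most \(\omega\) SCC images per scale, and to at most \(q\omega\) SCC images in total. Each such SCC \(C\) is of one of two kinds.
\begin{itemize}
\item If \(C\) is heavy, it contributes \(1\) directly to the multiplicity, since \(V(C)\in\mathcal H_{S,i}\). It then spawns \(\AProj(V(C),i-1)\), with \(|V(C)|\le|S|\) and layer index decreased by one.
\item If \(C\) is light, it contributes nothing directly. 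It spawns \(\AProj(V(C),t)\) with \(|V(C)|\le |S|/\sigma\), so the size shrinks by a factor \(\sigma\) and the layer index resets to \(t\).
\end{itemize}
Sets generated deeper in the recursion are contained in the child's vertex set, so only children containing \(v\) matter. This yields the recurrence
\[
N(\eta,i)\le q\omega\Bigl(1+\max\{N(\eta,i-1),\,N(\eta/\sigma,t)\}\Bigr),
\]
with \(N(\eta,0)=0\) because the \(i=0\) case returns an edgeless graph, and \(N(1,\cdot)=0\).

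Unrolling the recurrence, every root-to-leaf path of the recursion tree alternates between heavy steps, which decrease \(i\), and light steps, which divide \(\eta\) by \(\sigma\) and reset \(i\) to \(t\). Between consecutive light steps there are at most \(t\) heavy steps, and there are at most \(\log_\sigma|S|\) light steps in total. So the depth is at most \((t+1)(\log_\sigma|S|+1)\), and each level multiplies the count by at most \(2q\omega\). Hence
\[
\chi_{S,i}\le (2q\omega)^{(t+1)(\log_\sigma|S|+1)}=(q\omega)^{O(t\log_\sigma|S|)}.
\]

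The main obstacle is being careful that \(\mathcal H_{S,i}\) counts distinct sets generated anywhere in the subtree, not just at the top call. The same vertex set may be generated multiple times from different branches, and we must count each occurrence. This is fine because we are bounding occurrences along branches, which only overcounts.

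We must also check that the width bound \(\omega\) of \Cref{lem:pathcover} holds per call relative to the current subgraph \(G[S]-C_{>i}\). It does: the path cover is constructed afresh on that subgraph, so its width refers to copies of vertices of \(S\).

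Finally, when \(\log_\sigma|S|<1\), the depth is still \(O(t)\). We read the exponent \(O(t\log_\sigma|S|)\) with the convention of \(\lceil\cdot\rceil\), or equivalently absorb the additive \(t\) into the constant, as the paper does elsewhere.
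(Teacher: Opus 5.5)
Your proof is correct and follows the paper's argument. It uses the same per-call accounting: a vertex lies in at most \(q\omega\) SCC images, heavy children recurse at layer \(i-1\) with size at most \(|S|\), and light children recurse at layer \(t\) with size at most \(|S|/\sigma\). It also uses the same recurrence, where the paper writes \(1+\Xi(\lceil |S|/\sigma\rceil,t)+\Xi(|S|,i-1)\) and you use a max, which makes no difference.

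You go slightly further than the paper in two ways. You solve the recurrence explicitly with the exponent \(t\lceil\log_\sigma \eta\rceil+i\). You also note the \(\lceil\cdot\rceil\) convention needed when \(|S|<\sigma\), which the paper leaves implicit.
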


\begin{proof}
We prove the first bound by induction on the recursion. For convenience, write
\(\Xi(|S|,i)\) for an upper bound on \(\chi_{S,i}\).

The base cases are immediate. If \(i=0\), then \(\AProj(S,0)\) returns the
edgeless graph on \(S\), so no heavy SCC is generated and \(\chi_{S,0}=0\).
If \(|S|=1\), again no heavy SCC is generated.

Now consider a non-base call \(\AProj(S,i)\) with \(|S|>1\) and \(i>0\).
For each dyadic scale \(h'\le h\), the path cover \(P^{(h')}\) has width
\(w\). Since there are \(q\)
dyadic scales, each vertex is included in at most $q\omega$ many clusters.

For every light SCC \(C\), the recursive child is
\(\AProj(V(C),t)\), and by definition of light,
\[
|V(C)|\le |S|/\sigma.
\]
Hence, by induction, the maximum duplication of vertices in the recursive tree rooted at \(C\)
is at most
\[
\Xi(\lceil |S|/\sigma\rceil,t)
\]

Similarly, for every heavy SCC \(C\), the recursive child is
\(\AProj(V(C),i-1)\). Since \(|V(C)|\le |S|\), the induction hypothesis gives
a upper bound
\[
\Xi(|S|,i-1)
\]

Therefore
\[
\chi_{S,i}
\le
qw\cdot\Bigl(1+\Xi(\lceil |S|/\sigma\rceil,t)+\Xi(|S|,i-1)\Bigr).
\]
Solving this recurrence gives
\[
\chi_{S,i}\le (qw)^{O(t\log_\sigma |S|)}.
\]
The partition can be constructed recursively together with the construction of
\(D^*_{S,i}\).
\end{proof}

We next define the recursive congestion bound for the local-flow part.

The following lemma performs the recursive local reduction. 
For each deferred demand \(D\) created from a heavy SCC \(A\), we define
the \emph{budget} as
\[
b_A(D):=\hat h(A)s\cdot |D|,
\]

\begin{lemma}[Recursive flow mapping]\label{lem:localflowreduction}
Fix a recursive call \(\AProj(S,i)\), and let \(D^*_{S,i}\) be the DAG
returned by this call. There is an algorithm that, given a feasible flow
\(F^D\) on \(D^*_{S,i}\), outputs
\begin{itemize}
    \item a flow \(F^{\dloc}\) on \(G[S]\), and
    \item for every heavy SCC \(A\in \mathcal H_{S,i}\), two deferred demands
    \(D_A^{\dout},D_A^{\dinm}\) supported on \(V(A)\),
\end{itemize}
such that:
\begin{enumerate}
    \item each \(D_A^{\dout},D_A^{\dinm}\) is \(\deg_{C_{j(A)}}\)-respecting and symmetric
    \(\hat h(A)\)-length in \(G-C_{>j(A)}\);
    \item
    \[
    \Dem(F^{\dloc})
    +
    \sum_{A\in\mathcal H_{S,i}}
    \bigl(D_A^{\dout}+D_A^{\dinm}\bigr)
    =
    \pi(\Dem(F^D));
    \]
    \item
    \[
    \cost(F^{\dloc})
    +
    \sum_{A\in\mathcal H_{S,i}}
    \bigl(b_A(D_A^{\dout})+b_A(D_A^{\dinm})\bigr)
    \le
    \cost(F^D);
    \]
    \item the congestion of \(F^{\dloc}\) is at most \(\Gamma(|S|,i)\) where 
    \[\Gamma(|S|,i)\le (\omega\sigma)^2R\kappa_f\kappa_p/\eps
\cdot
\bigl(O(\omega^2)\bigr)^{\,i+t\lceil\log_\sigma |S|\rceil}\]
\end{enumerate}
Moreover, the reduction runs in time \(\tO{|D^*_{S,i}|}\).
\end{lemma}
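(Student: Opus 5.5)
We argue by induction on the recursion, following the structure of \(\AProj(S,i)\), and make the reduction local to each gadget. The base cases are immediate. If \(i=0\) or \(|S|=1\), the DAG has no edges (or only the zero-length chain edges between copies of a single vertex), so \(F^D\) has zero demand after projection. We output \(F^{\dloc}=0\) and no deferred demands.

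For a general call, we first path-decompose \(F^D\) using \Cref{thm:flowpathdecomposition}. Each flow path then lives in one scale DAG \(\widehat D^{(h')}\), apart from the zero-length chain edges, which project to a single vertex and can be contracted without changing \(\pi(\Dem)\) or cost. We then split \(F^D\) edge by edge into four kinds of pieces.

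\begin{enumerate}
\item \emph{Inter-SCC edges of \(D^{(h')}_1,D^{(h')}_2\).} These are images of edges of \(P^{(h')}\), hence of edges of \(G[S]\) with the same length. We map them directly into \(F^{\dloc}\).
\item \emph{Flow inside each recursive gadget \(D_C\).} We restrict \(F^D\) to \(D_C\) and apply the induction hypothesis to the child call \(\AProj(V(C),t)\) or \(\AProj(V(C),i-1)\). The child's local flow becomes part of \(F^{\dloc}\) through the inclusion \(G[V(C)]\subseteq G[S]\). The child's deferred demands are inherited unchanged, consistent with \(\mathcal H_{S,i}\) containing the children's heavy families. The restriction may carry boundary demand at \(\first{u}_C,\last{u}_C\). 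That demand is exactly accounted for by the incident inter-SCC and interface edges, which makes item 2 telescope.
\item \emph{Interface edges.} Consider the flow entering \(v_A\) from the copies \((\last{u}_A)_1\), and let \(x_A(u)\) denote the flow on those edges. By the capacities, \(x_A\preceq\deg_{C_i}\). We pick a root \(r\in V(A)\) and define \(D_A^{\dout}\) as the demand sending \(x_A(u)\) from each \(u\) to \(r\) in the symmetric sense; \(D_A^{\dinm}\) is defined similarly from the edges leaving \(v'_A\). Because \(A\) is an SCC of a path cover with diameter \(\hat h(A)=h'\) in \(G[S]-C_{>i}\), and hence in \(G-C_{>i}\), these demands are symmetric \(\hat h(A)\)-length and \(\deg_{C_{j(A)}}\)-respecting with \(j(A)=i\). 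This is item 1. Each interface edge has length \(h's\), so the cost paid on it equals \(b_A(\cdot)\) exactly, which gives the budget term in item 3.
\item \emph{Bridge edge \((v_A,v'_B)\) of length \(L_r\).} Its flow is at most \(2\mu_{A,B,r}/\eps\). We replace it by the stored flow \(g^\star_{A,B,r}\), scaled by (bridge flow)\(/\mu_{A,B,r}\le 2/\eps\). This flow lives on the auxiliary graph and has average cost at most \(L_r\), so it is no more costly than the bridge; it was obtained through \Cref{cor:PDMCFtoDAG} with congestion \(\kappa_f\kappa_p\). Its segments on the star edges \((s_A,u),(u,s_B)\) are dropped. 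The resulting flow on \(G[S]-C_{>i}\) has cost at most that computed with the \(C_{>i}\)-enhanced lengths, hence at most its original cost in \(G\). The demand it leaves at vertices of \(A\) and \(B\) is added to \(D_A^{\dout}\) and \(D_B^{\dinm}\) respectively. Those star segments have zero length, so there is no budget to pay, and this is where care is needed. We will instead absorb this contribution into the interface demands: the bridge flow out of \(v_A\) equals the interface inflow, so the combined demand on \(A\) is the interface demand minus the \(g^\star\)-endpoint demand. Both are \(\deg_{C_i}\)-bounded up to factor \(O(\kappa_f\kappa_p/\eps)\), and both are supported on \(A\). We therefore charge it to the same \(b_A\), taking the symmetric \(\hat h(A)\)-length bound as given.
\end{enumerate}

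For item 4 (congestion), a fixed edge of \(G[S]\) receives load from several sources. It gets \(O(\omega)\) load from copies in the two duplicated DAGs at each scale. It gets recursive local flows from at most \(q\omega\) SCC children. Finally, it gets bridge flows: there are at most \((\omega\sigma)^2R\) bridges, each contributing congestion \(2\kappa_f\kappa_p/\eps\). This yields the recurrence
\[
\Gamma(|S|,i)\le O(\omega^2)\bigl(\Gamma(|S|/\sigma,t)+\Gamma(|S|,i-1)\bigr)+(\omega\sigma)^2R\kappa_f\kappa_p/\eps .
\]
Here we absorb \(q\) into \(\omega\). The recursion depth is \(i+t\lceil\log_\sigma|S|\rceil\), which gives the stated bound. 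The running time is linear in \(|D^*_{S,i}|\), since every edge is processed once per level and the stored flows are reused via implicit decompositions.

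The main obstacle is item 3 for the bridge step. We must show that the deferred demands created at \(A\) and \(B\) by the stored flow's entry and exit points are paid for by the interface edges' \(h's\) lengths, not by the bridge. We also must ensure those demands remain symmetric \(\hat h(A)\)-length in \(G-C_{>j(A)}\) rather than only in \(G[S]\). The first approach we would try is to prove that \(D_A^{\dout}\) net of the \(g^\star\) endpoints is a difference of two \(\deg_{C_i}\)-respecting measures on \(A\) with equal mass, and to route it through the root \(r\) inside \(A\) using the diameter bound.
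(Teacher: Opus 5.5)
Your overall architecture matches the paper's: lexicographic induction, splitting $F^D$ by edge type at each scale, mapping inter-SCC edges directly, recursing into the gadgets $D_C$ and inheriting their deferred demands, and replacing each bridge by $\bar g_{A,B,r}=(x_{A,B,r}/\mu_{A,B,r})\,g^\star_{A,B,r}$ with the star edges dropped. The gap is in the transfer gadget, which is exactly where you stop.

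\textbf{The root construction.} The construction in your third item does not work. Since $\pi(v_A)=\pi(v'_A)=\bot$, the interface edges contribute to $\pi(\Dem(F^D))$ exactly the unbalanced vectors $p_A$ (your $x_A$) and $-q_A$ on $V(A)$, and the bridge edges contribute nothing. Dropping the star edges gives $\Dem(M)=\sum_A(\alpha_A-\beta_A)$, where $\alpha_A(u)=\sum_{B,r}\bar g_{A,B,r}(s_A,u)$ and $\beta_A(u)=\sum_{B,r}\bar g_{B,A,r}(u,s_A)$. Item 2 is therefore met by setting $D^{\mathrm{out}}_A:=p_A-\alpha_A$ and $D^{\mathrm{in}}_A:=\beta_A-q_A$. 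These are balanced because $\bar g_{A,B,r}$ has value exactly $x_{A,B,r}$ and flow is conserved at $v_A$ and $v'_A$. This is the paper's definition, with no root at all. A term concentrating $-\sum_u p_A(u)$ at a root $r$ has nothing to cancel against in item 2, and it would also violate the $\deg_{C_i}$ bound at $r$.

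\textbf{Item 3 and the remaining worries.} With this definition, item 3 (your ``main obstacle'') takes one line, because $|D|$ counts only the positive part. Thus $|D^{\mathrm{out}}_A|\le\sum_u p_A(u)$ and $|D^{\mathrm{in}}_A|\le\sum_u q_A(u)$. So $b_A(D^{\mathrm{out}}_A)+b_A(D^{\mathrm{in}}_A)$ is at most what $F^D$ already pays on the interface edges of $A$, each of length $\hat h s$. The bridge cost $\sum_{A,B,r}x_{A,B,r}L_r$ pays separately for $M$: its average cost is at most $L_r$ in the auxiliary graph, whose original edges are at least as long as in $G$ and whose star edges have length $0$.
\begin{itemize}
\item No deferred demand is routed inside this lemma. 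Routing through a root via the diameter bound is therefore neither needed nor sound, since the diameter gives short paths but no capacity control and would break item 4. That routing happens later, in \Cref{lem:flowprojection}, through the expander hierarchy.
\item The symmetric $\hat h(A)$-length property is automatic for any demand supported on $V(A)$, by the same diameter argument you gave in your third item.
\item Your caution about the $\deg_{C_i}$ bound is justified. The paper asserts $|D^{\mathrm{out}}_A(u)|\le p_A(u)+\alpha_A(u)\le\deg_{C_i}(u)$ from the interface and star capacities. However, each $\bar g_{A,B,r}$ has congestion up to $2\kappa_f\kappa_p/\eps$ and is summed over all bridges leaving $A$, so pointwise one only gets $\deg_{C_i}$-respecting up to that factor.
\end{itemize}
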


\begin{proof}
We induct on \((|S|,i)\) lexicographically. The base cases \(i=0\) and
\(|S|=1\) are trivial.

Write
\[
D^*_{S,i}
=
\Bigl(\bigsqcup_{\hat h\le h}\widehat D^{(\hat h)}\Bigr)
\cup E_{\mathrm{triv}},
\]
where \(E_{\mathrm{triv}}\) denotes the set of zero-length trivial edges added
between copies of the same original vertex.

For every dyadic scale \(\hat h\le h\), let \(F^{D,(\hat h)}\) be the
restriction of \(F^D\) to the edge set of \(\widehat D^{(\hat h)}\), i.e., we
simply ignore all trivial edges and keep only the edges inside the
\(\hat h\)-scale gadget.

Since the gadgets \(\widehat D^{(\hat h)}\) are edge-disjoint, we have
\[
F^D\!\mid_{E(D^*_{S,i})\setminus E_{\mathrm{triv}}}
=
\sum_{\hat h\le h} F^{D,(\hat h)}.
\]
Moreover, every trivial edge \(e=(x,y)\in E_{\mathrm{triv}}\) satisfies
\(\pi(x)=\pi(y)\), so it contributes zero to the projected demand. Therefore
\[
\sum_{\hat h\le h}\pi(\Dem(F^{D,(\hat h)}))
=
\pi(\Dem(F^D)).
\]
Also, all trivial edges have length \(0\), hence
\[
\sum_{\hat h\le h}\cost(F^{D,(\hat h)})\le \cost(F^D).
\]

We now process each scale independently and then sum the results.

\paragraph{Processing one fixed scale \(\hat h\).}
Fix one dyadic scale \(\hat h\le h\).

Let \(F^{D,(\hat h)}_{\dstay}\) be the restriction of \(F^{D,(\hat h)}\) to
the edges inside the two copies \(D^{(\hat h)}_1,D^{(\hat h)}_2\), including
the inter-SCC edges of \(D^{(\hat h)}\), but excluding interface and bridge
edges.

For every SCC \(C\) of \(P^{(\hat h)}\), and for each copy of the recursive
gadget \(D_C\), let \(F_C^D\) be the restriction of
\(F^{D,(\hat h)}_{\dstay}\) to that copy. If \(C\) is light, apply the
induction hypothesis to the recursive call \(\AProj(V(C),t)\); if \(C\) is
heavy, apply it to \(\AProj(V(C),i-1)\).

Every inter-SCC edge
\[
\bigl(\last{u}_{C_u},\first{v}_{C_v}\bigr)
\]
maps directly to the original edge \((u,v)\in E(G[S])\) with the same length
and capacity.

Summing the recursively returned local flows over all copies of all SCC
gadgets, together with the direct images of all inter-SCC edges, gives a flow
\(F^{(\hat h)}_{\dstay,\dloc}\) on \(G[S]\), and a collection of descendant
deferred demands \(\mathcal D^{(\hat h)}_{\ddesc}\), such that
\[
\Dem(F^{(\hat h)}_{\dstay,\dloc})
+
\sum_{D\in \mathcal D^{(\hat h)}_{\ddesc}} D
=
\pi(\Dem(F^{D,(\hat h)}_{\dstay})),
\]
and
\[
\cost(F^{(\hat h)}_{\dstay,\dloc})
+
\sum_{D\in \mathcal D^{(\hat h)}_{\ddesc}} b_{A(D)}(D)
\le
\cost(F^{D,(\hat h)}_{\dstay}),
\]
where \(A(D)\) denotes the heavy SCC that created the deferred demand \(D\).

Its congestion is at most
\[
2\omega^2\max\!\left\{
1,\Gamma(\lceil |S|/\sigma\rceil,t),\Gamma(|S|,i-1)
\right\},
\]
because in each of the two copies an original edge can be represented by at
most \(\omega^2\) lifted edges (each node is copied at most $\omega$ times, and the path cover is \emph{induced}).

\paragraph{Handling the interface and bridge edges.}
We next handle the transfer gadget.

For every heavy SCC \(A\) created in the current call at scale \(\hat h\), and
every \(u\in V(A)\), define
\[
p_A(u):=F^{D,(\hat h)}\bigl((\last{u}_A)_1,v_A\bigr),
\qquad
q_A(u):=F^{D,(\hat h)}\bigl(v'_A,(\first{u}_A)_2\bigr).
\]
For every ordered pair \((A,B)\) and every bridge bucket \(r\), let
\[
x_{A,B,r}:=F^{D,(\hat h)}(v_A,v'_B).
\]
Since \(F^{D,(\hat h)}\) is feasible and the bridge edge \((v_A,v'_B)\) has
capacity \(2\mu_{A,B,r}/\eps\), we have
\[
x_{A,B,r}\le 2\mu_{A,B,r}/\eps.
\]

We now scale the stored flow by the amount of bridge flow actually used:
\[
\bar g_{A,B,r}:=
\begin{cases}
\dfrac{x_{A,B,r}}{\mu_{A,B,r}}\,g^\star_{A,B,r}, & x_{A,B,r}>0,\\[2mm]
0, & x_{A,B,r}=0.
\end{cases}
\]
Then \(\bar g_{A,B,r}\) has value \(x_{A,B,r}\), average cost at most \(L_r\),
and congestion at most \(2\kappa_f\kappa_p/\eps\).

Define
\[
\alpha_{A,B,r}(u):=\bar g_{A,B,r}(s_A,u),
\qquad
\beta_{A,B,r}(v):=\bar g_{A,B,r}(v,s_B),
\]
and aggregate over all bridge types incident to \(A\):
\[
\alpha_A(u):=\sum_{B,r}\alpha_{A,B,r}(u),
\qquad
\beta_A(u):=\sum_{B,r}\beta_{B,A,r}(u).
\]
By flow conservation at the interface vertices \(v_A,v'_A\),
\[
\sum_u p_A(u)=\sum_u \alpha_A(u),
\qquad
\sum_u q_A(u)=\sum_u \beta_A(u).
\]

Define deferred demands
\[
D_A^{\dout}(u):=p_A(u)-\alpha_A(u),
\qquad
D_A^{\dinm}(u):=\beta_A(u)-q_A(u).
\]
These are single-commodity demands summing to zero, supported on \(V(A)\).
Moreover, for every \(u\in V(A)\),
\[
|D_A^{\dout}(u)|\le p_A(u)+\alpha_A(u)\le \deg_{C_i}(u),
\qquad
|D_A^{\dinm}(u)|\le q_A(u)+\beta_A(u)\le \deg_{C_i}(u),
\]
because both the interface edges and the star edges adjacent to \(u\) have
capacity \(\deg_{C_i}(u)\). Since \(A\) is an SCC of \(P^{(\hat h)}\), it has diameter at most \(\hat h\) in \(G[S]-C_{>i}\). Thus
\(D_A^{\dout},D_A^{\dinm}\) are \(\deg_{C_i}\)-respecting and
symmetric \(\hat h\)-length in \(G-C_{>i}\), witnessed by the
clustering consisting of \(V(A)\) together with singleton clusters outside
\(A\). Since \(j(A)=i\) and \(\hat h(A)=\hat h\), this proves Item~(1) for
heavy SCCs created in the current call; the corresponding claim for descendant
heavy SCCs holds by induction.

Let
\[
M^{(\hat h)}:=\sum_{A,B,r}\bar g_{A,B,r}\big|_{E(G[S])},
\]
that is, the sum of the restrictions of all scaled auxiliary flows to the
original edges of \(G[S]\). Since there are at most \(\omega\sigma\) heavy SCCs
in the current call at the fixed scale \(\hat h\), there are at most
\[
(\omega\sigma)^2R
\]
bridge types. Therefore \(M^{(\hat h)}\) has congestion at most
\[
2(\omega\sigma)^2R\kappa_f\kappa_p/\eps.
\]

Furthermore,
\[
\Dem(M^{(\hat h)})
+
\sum_{A\in\mathcal H^{\mathrm{cur}}_{S,i,\hat h}}
\bigl(D_A^{\dout}+D_A^{\dinm}\bigr)
=
\pi\bigl(\Dem(F^{D,(\hat h)}_{\dtr})\bigr),
\]
where \(F^{D,(\hat h)}_{\dtr}\) denotes the restriction of \(F^{D,(\hat h)}\)
to the interface and bridge edges, and
\[
\mathcal H^{\mathrm{cur}}_{S,i,\hat h}\subseteq \mathcal H_{S,i}
\]
denotes the heavy SCCs created in the current call at scale \(\hat h\).

For the cost, each \(\bar g_{A,B,r}\) has average cost at most \(L_r\), hence
\[
\cost(M^{(\hat h)})
\le
\sum_{A,B,r}x_{A,B,r}L_r
=
\cost\bigl(F^{D,(\hat h)}\!\mid_{\text{bridge edges}}\bigr).
\]
Also,
\[
|D_A^{\dout}|\le \sum_u p_A(u),
\qquad
|D_A^{\dinm}|\le \sum_u q_A(u),
\]
so
\[
b_A(D_A^{\dout})\le \hat h s\sum_u p_A(u),
\qquad
b_A(D_A^{\dinm})\le \hat h s\sum_u q_A(u).
\]
These are bounded by the costs of the outgoing and incoming interface-edge
families of \(A\), since every such interface edge has length \(\hat h s\).
Hence
\[
\cost(M^{(\hat h)})
+
\sum_{A\in\mathcal H^{\mathrm{cur}}_{S,i,\hat h}}
\bigl(b_A(D_A^{\dout})+b_A(D_A^{\dinm})\bigr)
\le
\cost(F^{D,(\hat h)}_{\dtr}).
\]

Now define
\[
F^{(\hat h)}_{\dloc}:=
F^{(\hat h)}_{\dstay,\dloc}+M^{(\hat h)},
\]
and
\[
\mathcal D^{(\hat h)}
:=
\mathcal D^{(\hat h)}_{\ddesc}
\cup
\{D_A^{\dout},D_A^{\dinm}\mid A\in \mathcal H^{\mathrm{cur}}_{S,i,\hat h}\}.
\]
Then
\[
\Dem(F^{(\hat h)}_{\dloc})+\sum_{D\in\mathcal D^{(\hat h)}}D
=
\pi(\Dem(F^{D,(\hat h)})),
\]
and
\[
\cost(F^{(\hat h)}_{\dloc})+\sum_{D\in\mathcal D^{(\hat h)}}b_{A(D)}(D)
\le
\cost(F^{D,(\hat h)}).
\]
Its congestion is at most
\[
\Gamma(|S|,i)\le 
2\omega^2\max\!\left\{
1,\Gamma(\lceil |S|/\sigma\rceil,i),\Gamma(|S|,i-1)
\right\}
+
2(\omega\sigma)^2R\kappa_f\kappa_p/\eps.
\]
Solving the recurrence gives the lemma.
\paragraph{Summing over all scales.}
Finally set
\[
F^{\dloc}:=\sum_{\hat h\le h}F^{(\hat h)}_{\dloc},
\]
and take the union of all deferred-demand collections over all scales. Since
there are \(q\) dyadic scales, the congestion multiplies by at most \(q\).
Therefore
\[
\cong(F^{\dloc})\le \Gamma(|S|,i)
\]
after adjusting the constant hidden in the definition of \(\Gamma\).
The demand and cost identities add over different scales, because
\[
\sum_{\hat h\le h}\pi(\Dem(F^{D,(\hat h)}))=\pi(\Dem(F^D))
\]
and
\[
\sum_{\hat h\le h}\cost(F^{D,(\hat h)})\le \cost(F^D).
\]
This proves the lemma.
\end{proof}
Now we are ready to prove \Cref{lem:flowprojection}.
\begin{proof}[Proof of \Cref{lem:flowprojection}]
Apply \Cref{lem:localflowreduction} to the top-level call \(\AProj(V,t)\) and
the input feasible flow \(F^D\). This gives a local flow \(F^{\dloc}\) on \(G\)
and, for every heavy SCC \(A\in\mathcal H_{V,t}\), two deferred demands
\(D_A^{\dout},D_A^{\dinm}\), such that
\[
\Dem(F^{\dloc})
+
\sum_{A\in\mathcal H_{V,t}}
\bigl(D_A^{\dout}+D_A^{\dinm}\bigr)
=
\pi(\Dem(F^D)),
\]
\[
\cost(F^{\dloc})
+
\sum_{A\in\mathcal H_{V,t}}
\bigl(b_A(D_A^{\dout})+b_A(D_A^{\dinm})\bigr)
\le
\cost(F^D),
\]
and
\[
\cong(F^{\dloc})\le \Gamma(n,t).
\]

For every layer \(j\in[t]\) and dyadic scale \(\hat h\le h\), define the
aggregate deferred demands
\[
\Delta^{\dout}_{j,\hat h}
:=
\sum_{\substack{A\in \mathcal H_{V,t}\\ j(A)=j,\ \hat h(A)=\hat h}}
D_A^{\dout},
\qquad
\Delta^{\dinm}_{j,\hat h}
:=
\sum_{\substack{A\in \mathcal H_{V,t}\\ j(A)=j,\ \hat h(A)=\hat h}}
D_A^{\dinm}.
\]

By Item~(1) of \Cref{lem:localflowreduction}, every
\(D_A^{\dout},D_A^{\dinm}\) in the above sum is \(\deg_{C_j}\)-respecting and
symmetric \(\hat h\)-length in \(G-C_{>j}\). Since each vertex belongs to at
most \(\chi_{V,t}\) sets in \(\mathcal H_{V,t}\), the aggregate demands
\(\Delta^{\dout}_{j,\hat h},\Delta^{\dinm}_{j,\hat h}\) are
\(\chi_{V,t}\deg_{C_j}\)-respecting and symmetric \(\hat h\)-length in
\(G-C_{>j}\).

Therefore
\[
\frac{1}{\chi_{V,t}}\Delta^{\dout}_{j,\hat h},
\qquad
\frac{1}{\chi_{V,t}}\Delta^{\dinm}_{j,\hat h}
\]
are \(\deg_{C_j}\)-respecting and symmetric \(\hat h\)-length. Since
\(\deg_{C_j}\) is \((\le h,s)\)-length \(C_{>j}\)-constraint
\((\phi,s)\)-expanding, the associated routing algorithm outputs flows
routing these scaled demands with congestion at most \(1/\phi\) and cost at most
\[
\hat h s\cdot \Bigl|\frac{1}{\chi_{V,t}}\Delta^{\dout}_{j,\hat h}\Bigr|,
\qquad
\hat h s\cdot \Bigl|\frac{1}{\chi_{V,t}}\Delta^{\dinm}_{j,\hat h}\Bigr|.
\]
Scaling the two routings back by \(\chi_{V,t}\), we obtain flows
\(R^{\dout}_{j,\hat h}\) and \(R^{\dinm}_{j,\hat h}\) routing
\(\Delta^{\dout}_{j,\hat h}\) and \(\Delta^{\dinm}_{j,\hat h}\), each with
congestion at most \(\chi_{V,t}/\phi\), and cost at most
\[
\hat h s\cdot |\Delta^{\dout}_{j,\hat h}|,
\qquad
\hat h s\cdot |\Delta^{\dinm}_{j,\hat h}|.
\]

Since
\[
|\Delta^{\dout}_{j,\hat h}|
\le
\sum_{\substack{A\in \mathcal H_{V,t}\\ j(A)=j,\ \hat h(A)=\hat h}}
|D_A^{\dout}|,
\qquad
|\Delta^{\dinm}_{j,\hat h}|
\le
\sum_{\substack{A\in \mathcal H_{V,t}\\ j(A)=j,\ \hat h(A)=\hat h}}
|D_A^{\dinm}|,
\]
we get
\[
\cost(R^{\dout}_{j,\hat h})
+
\cost(R^{\dinm}_{j,\hat h})
\le
\sum_{\substack{A\in \mathcal H_{V,t}\\ j(A)=j,\ \hat h(A)=\hat h}}
\bigl(b_A(D_A^{\dout})+b_A(D_A^{\dinm})\bigr).
\]

Now define
\[
F
:=
F^{\dloc}
+
\sum_{j,\hat h}R^{\dout}_{j,\hat h}
+
\sum_{j,\hat h}R^{\dinm}_{j,\hat h}.
\]
The deferred demands cancel exactly, so
\[
\Dem(F)=\pi(\Dem(F^D)).
\]
Also, summing the above cost bounds over all \((j,\hat h)\) and using
\Cref{lem:localflowreduction}, we get
\[
\cost(F)\le \cost(F^D).
\]

For the congestion,
\[
\cong(F)\le \Gamma(n,t)+\frac{2tq\chi_{V,t}}{\phi}.
\]
By \Cref{lem:localflowreduction},
\[
\Gamma(n,t)\le
(\omega\sigma)^2R\kappa_f\kappa_p/\eps
\cdot
\bigl(O(\omega^2)\bigr)^{\,t+t\lceil\log_\sigma n\rceil},
\]
and by \Cref{lem:overlapbound_local},
\[
\chi_{V,t}\le (qw)^{O(t\log_\sigma n)}
\]
Substituting the correct parameters yeilds the lemma.

Finally, the local reduction takes time \(\tO{|D^*_{V,t}|}\), and for each of
the \(2tq\) aggregate demands we make one call to the associated routing
algorithm. It can be verified that the associated routing algorithm takes time subsumed by \(\tO{|D^*_{V,t}|}\).
This proves the lemma.
\end{proof}
\subsection{Projection Algorithm for Moving Cuts}
\label{subsec:projectionmovingcuts}

In this subsection we prove that the recursive construction above admits a
projection algorithm for moving cuts.  

\begin{lemma}\label{lem:cutprojectionmain}
    Let $D^*_{V,t}$ be the DAG returned by $\AProj(V,t)$. There is an algorithm that, given a moving cut $C^D$ on $D^*_{V,t}$, return a moving cut $C$ on $G$ with the same size and length, such that for every $u,v\in V$ with $\dist_{G-C}(u,v)\le h$, we have $\dist_{D^*_{V,t}-C^D}(\first{u},\last{v})\le (1+\eps^*)\cdot \dist_{G-C}(u,v)$ where
    \[\eps^*=O\left(\eps+\eps_p+\eps_f\right)/\delta^2\]
\end{lemma}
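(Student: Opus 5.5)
We build $C$ recursively, mirroring the flow-projection argument of \Cref{lem:localflowreduction}, but in the dual direction. For a call $\AProj(S,i)$ with input cut $C^D$ on $D^*_{S,i}$, we output a cut $C_{S,i}$ on $G[S]$ whose size equals the total $C^D$-mass on the edges of $D^*_{S,i}$. Charging is by edge type.
\begin{enumerate}
\item Mass on a lifted or inter-SCC edge $e'$ is moved to its image $\pi(e')$. Since $\pi$ is a homomorphism with identical lengths, this preserves the length contributed.
\item Mass on the edges of a recursive gadget $D_C$ is handled by recursing. We apply the induction hypothesis to $\AProj(V(C),t)$ for light $C$ and to $\AProj(V(C),i-1)$ for heavy $C$, and add the returned cut.
\item Mass on an interface edge $((\last{u}_A)_1,v_A)$ or $(v'_A,(\first{u}_A)_2)$ is charged to the star edges $(u,s_A)$ or $(s_A,u)$ of the auxiliary graph $G^{\Aux,(\hat h)}_{S,i}$.
\item Mass on the bridge edges $(v_A,v'_B)$ of bucket $r$ is charged to the stored moving cut $C^\star_{A,B,r}$. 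This cut certifies that $s_A\to s_B$ flow of value about $2\mu_{A,B,r}$ has average cost at least $L_r$, and has size at most twice $\mu_{A,B,r}$. Because the bridge capacity is $2\mu_{A,B,r}/\eps$, paying for the bridge's cut mass lets us buy an $O(1/\eps)$-scaled copy of $C^\star_{A,B,r}$ at no larger size. That scaled copy raises every $A\to B$ distance above $L_r$.
\end{enumerate}
The auxiliary cuts living on star edges and on $G[S]$ are finally pushed to $G$. The star edges have capacity $\deg_{C_i}$, so their cut mass is charged to $\deg_{C_i}$-weighted cuts of edges near $u$; this is where the hierarchy cut $C_i$ enters. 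We also use the dual part of the projection algorithm of $\OProj{hs,\kappa_p,1+\eps_p}$ together with \Cref{lem:algorithmimpliesprojection}, which moves cuts between the auxiliary DAG and $G^{\Aux}$ with length slack $1+\eps_p$ and the $1+\eps_f$ loss from \Cref{cor:PDMCFtoDAG}. All lengths are rescaled by the global factor $1/(1+\eps^*)$, so the output satisfies the length requirement of \Cref{def:projectionalgorithm}.

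For the distance guarantee, fix $u,v$ with $d:=\dist_{G-C}(u,v)\le h$, and pick a shortest path $p$ in $G-C$. Choose the dyadic scale $\hat h$ with $\hat h s/\eps\approx d$, so that $p$ lies within the $\hat h s/\eps$ covering radius. Then $p$ lifts to a path $p'$ in $P^{(\hat h)}$ that passes through a sequence of SCCs.
\begin{itemize}
\item If $p'$ meets no heavy SCC, we stay in copy $D^{(\hat h)}_1$. By induction each light-SCC piece is stretched by a factor $(1+\eps^*_{\text{child}})$, and each inter-SCC edge carries exactly the cut we projected.
\item Otherwise, let $A$ be the first heavy SCC and $B$ the last. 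We exit copy $1$ at $v_A$, take the bridge whose bucket $L_r$ is within a factor $(1+\eps)$ of the cut-length of the $A$-to-$B$ subpath, and re-enter copy $2$ at $v'_B$.
\end{itemize}
The interface edges add length $2\hat h s\le 2\eps d$. The cut mass on the bridge is bounded because, by construction, that mass was converted into the scaled $C^\star$ and the star cuts. Any extra length on the bridge would therefore force the $G$-subpath from $A$ to $B$ to be longer, giving $\dist_{D-C^D}\le(1+O(\eps+\eps_p+\eps_f))$ times the $G$-length of that segment. The first and last copies are joined by the zero-length chain edges of $E_{\mathrm{triv}}$. Thus $\first{u}$ reaches the start of the lifted path and its end reaches $\last{v}$.

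Stretch accumulates only along the heavy-recursion chain. Each step of the chain either decreases $i$, giving at most $t=O(1/\delta)$ steps, or, through light SCCs, divides $|S|$ by $\sigma$, giving $O(1/\delta)$ steps. A naive product over the recursion tree would be too big. Instead we argue that along any single path only $O(1/\delta)\cdot O(1/\delta)$ levels contribute additive $O(\eps+\eps_p+\eps_f)$ losses, which gives $\eps^*=O(\eps+\eps_p+\eps_f)/\delta^2$.

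The main obstacle is the bridge step: showing that the cut mass placed on bridge edges really converts into a cut on $G$ of equal size that lengthens every $A\to B$ path by the corresponding amount. This requires combining the capacity $2\mu/\eps$, the size bound of $C^\star$, the threshold between consecutive buckets, and the projection of the auxiliary cut through $\OProj{}$. I would first verify this for a single bridge type, via an LP-duality argument on the $s_A$–$s_B$ instance, and then take maxima (rather than sums) over buckets so that the size is not overcounted.
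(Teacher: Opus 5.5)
Your heavy-SCC case has a genuine gap. Whenever the lifted path meets a heavy SCC, you send it through a bridge: leave copy~1 at $v_A$, re-enter copy~2 at $v'_B$. You then charge the two interface edges only their base length $2\hat h s$. But $C^D$ may put arbitrary mass on those interface edges. Your item~3 pulls that mass back to $G$ as $C_i(f)$ times the interface increment on the edges $f$ incident to the exit vertex $x$, which is what keeps the size equal, since the interface capacity is $\deg_{C_i}(x)$. So a single $G$-path through $x$ pays only a $C_i$-weighted fraction of the increment. Concretely, put increment $\alpha$ on every interface edge of one heavy SCC $A$, and take a $u$--$v$ path that crosses $A$ only along edges with $C_i(f)=0$. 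Its length in $G-C$ does not see $\alpha$, while your DAG route pays at least $\alpha$, so no stretch bound is possible. The star-edge part of the auxiliary cut you buy for the bridge has the same defect. The route that works in this example stays in copy~1 and goes through the heavy child's recursive gadget, built at level $i-1$, whose guarantee your proof never invokes.

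The missing idea is the paper's dichotomy on the $C_i$-mass $\gamma_j=C_i(Q_j)$ that the lifted path accumulates on heavy pieces.

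\emph{Total mass at most $1$.} No bridge is used. Heavy pieces are lifted through their level-$(i-1)$ gadgets inside copy~1. This is valid because replacing $C_{>i}$ by $C_{>i-1}=C_{>i}+C_i$ lengthens those pieces by at most $\hat h\sum_j\gamma_j\le \hat h\le 2\eps d/s$. This is why the inductive statement is formulated for $G[S]-C_{>i}$ at every dyadic scale, with the enlarged range $\Lambda_i \hat h s/\eps$ where $\Lambda_i=(1+\eps/s)^{t-i}$: the child's hypothesis must still apply after that increase.

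\emph{Total mass above $1$.} The paper takes left and right windows $W_L,W_R$ of $C_i$-mass greater than $1$. It chooses the exit $x$ and entry $y$ inside them to minimize $\Psi(x,y)$, the interface-plus-star-plus-bridge cost. Every window edge receives its $C_i$-share of the interface and star increments, so $\tilde\ell(W_L)$ and $\tilde\ell(W_R)$ pay for them. The prefix and suffix carry heavy mass at most $1$ and are lifted as in the first case.

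Only after this does the bridge step matter, and it is fairly mechanical once the right bucket is used. That bucket is not the one matching the subpath length, but $r^*-1$, where $L_{r^*}\le\beta_{A,B}=\min_r(L_r+\tilde\ell^D(e_r))<L_{r^*+1}$. This forces $\tilde\ell^D(e_{r^*-1})\ge\eps L_{r^*-1}$, which pays for the full $\tau_{r^*-1}C^\star_{A,B,r^*-1}$. Summing over buckets would not overcount anyway, since each bucket has its own bridge edge.

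So the obstacle you flagged, the bridge, is not the hard part. The hard part is charging the interface and star increments, which needs this $C_i$-mass dichotomy.
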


It is technically more convenient to
work with \emph{length increases} instead of moving cuts.

For a graph \(H\), a \emph{length increase} on \(H\) is a nonnegative function
\(\tilde\ell:E(H)\to \bbR_{\ge 0}\). We write \(H+\tilde\ell\) for the graph
obtained by replacing the length of every edge \(e\) by
\(\ell_H(e)+\tilde\ell(e)\), and define the size of \(\tilde\ell\) by
\[
|\tilde\ell|:=\sum_{e\in E(H)} U_H(e)\cdot \tilde\ell(e).
\]

We define
\[
\Lambda_i:=(1+\eps/s)^{t-i},
\]
and fix a sufficiently large absolute constant \(c_0\).  We also define
\[
\eta(S,i)
:=
\Bigl(1+c_0(\eps+\eps_p+\eps_f)\Bigr)^{
(t+1)\lceil \log_\sigma |S|\rceil+i
}.
\]

\begin{lemma}\label{lem:cutprojection_lengthincrease}
Let \(D^*_{S,i}\) be the DAG returned by \(\AProj(S,i)\).  There is an
algorithm that, given a length increase \(\tilde\ell^D\) on \(D^*_{S,i}\),
returns a length increase \(\tilde\ell\) on \(G[S]\) such that:
\begin{enumerate}
    \item \(\displaystyle |\tilde\ell|\le |\tilde\ell^D|\),
    \item for every dyadic \(h'\le h\), if \(C_{>i}\) is interpreted as an
    \(h'\)-length moving cut and
    \[
    G_{S,i}^{(h')}:=G[S]-C_{>i},
    \]
    then for every \(u,v\in S\) satisfying
    \[
    \dist_{G_{S,i}^{(h')}+\tilde\ell}(u,v)\le
    \Lambda_i\cdot \frac{h's}{\eps},
    \]
    we have
    \[
    \dist_{D^*_{S,i}+\tilde\ell^D}(\first{u},\last{v})
    \le
    \eta(S,i)\cdot
    \dist_{G_{S,i}^{(h')}+\tilde\ell}(u,v).
    \]
\end{enumerate}
Moreover, the algorithm runs in \(\tilde O(|E(D^*_{S,i})|)\) time.
\end{lemma}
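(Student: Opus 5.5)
We argue by induction on \((|S|,i)\) in the same lexicographic order as \Cref{lem:localflowreduction}, building \(\tilde\ell\) recursively and mirroring the flow mapping. The base cases \(i=0\) and \(|S|=1\) need care. For \(|S|=1\) the statement is trivial. For \(i=0\) we have \(C_{>0}\supseteq C_1+\dots+C_t\) with \(\sum_j C_j\ge 1\), so \(C_{>0}\) viewed as an \(h'\)-length cut adds at least \(h'\) to every edge. Hence any \(u\neq v\) with distance at most \(\Lambda_0 h's/\eps\) is reached by few edges only if \(s/\eps\) is small, and the hierarchy bookkeeping must make this case vacuous. We would check that the construction guarantees vacuity, possibly by treating \(i=0\) as ``no pairs to preserve''.

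\textbf{Construction of \(\tilde\ell\).} Given \(\tilde\ell^D\) on \(D^*_{S,i}\), we split it over the scales \(h'\) and handle each gadget \(\widehat D^{(h')}\) as follows.
\begin{enumerate}
\item Inside each recursive copy \(D_C\) (in either duplicate), we apply induction to obtain \(\tilde\ell_C\) on \(G[V(C)]\).
\item Each inter-SCC edge maps to its original edge, and we add its length increase there.
\item For bridge edges \((v_A,v'_B)\) of length \(L_r\), we pull back the increase along the stored cuts. If the bridge carries increase \(x\), we add \(\frac{x\cdot 2\mu_{A,B,r}/\eps}{|C^\star_{A,B,r}|}\) times the (scaled) moving cut \(C^\star_{A,B,r}\) on the auxiliary graph. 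Its star-edge parts are charged to the interface edges of \(A\) and \(B\); its \(G[S]\) parts are kept.
\item Interface-edge increases are pushed onto \(\deg_{C_i}\)-capacity edges incident to \(u\), using that they have capacity \(\deg_{C_i}(u)\).
\end{enumerate}
Summing over scales gives \(\tilde\ell\) with \(|\tilde\ell|\le|\tilde\ell^D|\), since capacities match: the bridge capacity \(2\mu/\eps\) is comparable to \(|C^\star|\), up to the factor \(2\) built into the construction.

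\textbf{Distance bound.} Fix \(h'\) and \(u,v\) with \(d:=\dist_{G^{(h')}_{S,i}+\tilde\ell}(u,v)\le\Lambda_i h's/\eps\). Let \(p\) be a shortest path. Since its length is below the covering parameter \(h's/\eps\) (after rescaling by \(\Lambda_i\)), \Cref{lem:pathcover} lifts \(p\) to a path \(p'\) in \(P^{(h')}\). This \(p'\) passes through a sequence of SCCs.

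\emph{All SCCs light.} Each segment inside a light SCC \(C\) is handled by induction on \(\AProj(V(C),t)\). The hypothesis applies because \(\Lambda_t\le\Lambda_i\) and \(C_{>t}=0\) only shortens distances. Segments are joined by inter-SCC edges from \(\last{u}_C\) to \(\first{v}_{C'}\). We gain a factor \(\eta(V(C),t)\le\eta(S,i)/(1+c_0(\dots))\), because the exponent drops by \((t+1)\) from \(\lceil\log_\sigma\rceil\) and rises by at most \(t-i\le t\).

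\emph{Some heavy SCC.} Let \(A\) be the first heavy SCC on \(p'\) and \(B\) the last. The prefix before \(A\) is routed in copy \(1\) and the suffix after \(B\) in copy \(2\), using induction with heavy children at level \(i-1\). The exponent drops by \(1\), and \(\Lambda_{i-1}=\Lambda_i(1+\eps/s)\) gives room for the extra additive length. We then enter \(v_A\), take a bridge, and leave \(v'_B\). The middle portion of \(p\) is an \(s_A\)-\(s_B\) route in the auxiliary graph. By the dual guarantee of \Cref{cor:PDMCFtoDAG} (through \(C^\star\)), if the bridge increase is \(x\) then every \(s_A\)-\(s_B\) path in \(G^{\Aux}+\tilde\ell\) has length at least about \(L_r+x\) up to \((1+\eps_f)(1+\eps_p)\). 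Choosing \(r\) with \(L_r\) within \((1+\eps)\) of the true subpath length, the bridge route costs at most \((1+O(\eps+\eps_p+\eps_f))\) times that subpath.

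The two interface edges add \(2h's\). This is absorbed because \(p\) has length at least \(\Omega(h's/\eps)\cdot\) something whenever it meets the heavy-SCC regime at the right scale. Otherwise we pass to the smaller scale \(h'/2\), whose disjoint copy is also in \(D^*_{S,i}\). Concretely, we pick \(h'\) as the largest scale with \(d\ge h's\), so that \(2h's\le 2\eps d\). Finally, the chain edges connect the chosen copies to \(\first u,\last v\) at zero cost.

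\textbf{Main obstacle.} The key difficulty is that the pulled-back cut for the bridges must lower bound \emph{all} \(s_A\)-\(s_B\) paths while the chosen bucket \(r\) is coarse. In particular, the threshold case where \(\mu_{A,B,r}\) jumps needs the factor-\(2\) slack between \(C^\pow_{A,B,j+1}\) and \(g^\pow_{A,B,j}\) to keep both the size bound and the distance bound. We would first try to prove a clean lemma: for every \(r<R\), \(\dist_{G^\Aux-C^\star_{A,B,r}}(s_A,s_B)\ge L_r/(1+\eps_f)(1+\eps_p)\). The distance and size claims would then follow by a scaling argument. The exponent accounting for \(\eta\) is then the routine recurrence above.
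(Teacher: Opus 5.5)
The gap is in the heavy-SCC case of your distance bound, and it is not the bridge issue you flag. Your route always leaves copy~1 through an interface edge $((\last{x}_A)_1,v_A)$ of the first heavy SCC $A$, and re-enters copy~2 through $(v'_B,(\first{y}_B)_2)$ at the last heavy SCC $B$. You charge these edges only their base length $h's$. But $\tilde\ell^D$ may put an arbitrarily large increase $M$ on every interface edge of $A$. Likewise, the auxiliary lower bound you use for the bridge only controls the length of $p[x,y]$ \emph{plus} the star-edge increases at $x$ and $y$, and those must also be paid for. Your step~4, and the analogous treatment of star edges, spreads these increases over the original edges $f$ incident to $x$ in proportion to $C_i(f)$. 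If $C_i$ vanishes on the edges of $p$, then $d$ is unaffected by them, while every route you build costs at least $M$. So the required inequality fails for your route, and no choice of bridge bucket repairs it. A symptom of this: the prefix before the \emph{first} heavy SCC contains no heavy piece, so the level-$(i-1)$ children and the slack $\Lambda_{i-1}=\Lambda_i(1+\eps/s)$ you invoke never enter your argument.

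The missing idea is to route according to $C_i$-mass.
\begin{itemize}
\item For a heavy piece $Q_j$ with $\gamma_j=C_i(Q_j)$, replacing $C_{>i}$ by $C_{>i-1}$ lengthens $Q_j$ by at most $\gamma_j h''$. So $Q_j$ can be lifted through its child $\AProj(V(S_j),i-1)$. While the cumulative heavy mass stays at most $1$, the total additive loss is at most $h''$, which $\Lambda_{i-1}$ absorbs.
\item If the total heavy mass on $p$ is at most $1$, the paper uses no bridge at all and stays in copy~1.
\item Otherwise it takes the left window of heavy edges up to the edge where the cumulative mass first exceeds $1$, and a symmetric right window. It picks $x$ in the left window and $y$ in the right window minimizing the combined interface increases, star increases and bridge length. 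Every window edge $e$ is incident to a candidate vertex and receives $C_i(e)$ times that vertex's interface and star increases. Since each window has $C_i$-mass above $1$, the $\tilde\ell$-length of the windows pays for the chosen interface and star terms. The prefix to $x$ and suffix from $y$ are then lifted recursively with additive loss at most $2h''$.
\end{itemize}

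The bridge part, by contrast, is routine once stated correctly. It is not true that every $s_A$--$s_B$ path has length about $L_r+x_r$. Instead, set $\beta=\min_r(L_r+x_r)$ and take $r^*$ with $L_{r^*}\le\beta<L_{r^*+1}$. Then $x_{r^*-1}\ge\eps L_{r^*-1}$, so bucket $r^*-1$'s cut certifies auxiliary distance at least $\tau_{r^*-1}=\min\{h,L_{r^*-1}/((1+\eps_p)(1+\eps_f))\}$, while $\beta<(1+\eps)^2L_{r^*-1}$. The paper pulls back only that bucket's cut, scaled by $\tau_{r^*-1}$. Your proportional pullback of every bucket also works (its coefficient is at least $x_r/\eps$ since $|C^\star|\le 2\mu$), and it preserves size exactly.

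One smaller fix: take $h''$ to be the \emph{smallest} dyadic scale with $d\le\Lambda_i h''s/\eps$, which gives $h''s\le 2\eps d$. Your ``largest with $d\ge h's$'' only gives $h's\le d$.
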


Notice that \Cref{lem:cutprojectionmain} is a directly corollary of \Cref{lem:cutprojection_lengthincrease} by interpretating $\tilde\ell,\tilde\ell^D$ as moving cuts, and set $S=V,i=t,h'=h$ (w.o.l.g. we assume $h$ is also powers of $2$) so that $G^{(h')}_{S,i}=G$.
\begin{proof}
We prove the lemma by induction on the quantity
\[
M(S,i):=(t+1)\lceil \log_\sigma |S|\rceil+i.
\]

\paragraph{Base cases.}
If \(i=0\), then \(D^*_{S,0}\) is edgeless, so there is nothing to do.
If \(|S|=1\), again there are no nontrivial edges and the claim is vacuous.

We now assume \(i>0\) and \(|S|>1\).

\paragraph{Construction of \(\tilde\ell\).}
We define \(\tilde\ell\) by summing three kinds of contributions.

\medskip
\noindent\textbf{(a) Recursive edges.}
For every SCC \(C\) appearing in every scale \(P^{(h'')}\), consider the
subgraph \(D_C\subseteq D^*_{S,i}\).
Restrict \(\tilde\ell^D\) to \(E(D_C)\), apply the inductive algorithm to
this restriction, and obtain a length increase \(\tilde\ell_C\) on \(G[V(C)]\).
We add all these contributions to \(\tilde\ell\).

\medskip
\noindent\textbf{(b) Interface edges.}
Let
\[
e=\bigl((\last{u}_C)_1,v_C\bigr)
\qquad\text{or}\qquad
e=\bigl(v'_C,(\first{u}_C)_2\bigr)
\]
be an interface edge, and let
\[
\alpha_e:=\tilde\ell^D(e).
\]
For every original edge \(f\) incident to \(u\) with \(C_i(f)>0\), add
\[
C_i(f)\cdot \alpha_e
\]
to \(\tilde\ell(f)\).

\medskip
\noindent\textbf{(c) Bridge bundles.}
Fix a scale \(h''\), and fix an ordered pair \((A,B)\) of heavy SCCs at this
scale.  Let
\[
e_r=(v_A,v'_B)
\]
denote the bridge edge of original length \(L_r\), and write
\[
\alpha_r:=\tilde\ell^D(e_r).
\]
Define
\[
\beta_{A,B}:=\min_r \bigl(L_r+\alpha_r\bigr),
\]
and let \(r^*\) be the unique index such that
\[
L_{r^*}\le \beta_{A,B}<L_{r^*+1},
\]

If $\mu_{A,B,r^*-1}=0$ then do nothing. Otherwise notice that $r^*-1<R$, so consider the stored cut \(C^\star_{A,B,r^*-1}\), and let
\[
\tau_{r^*-1}:=
\min\!\left\{
h,\,
\frac{L_{r^*-1}}{(1+\eps_p)(1+\eps_f)}
\right\}.
\]
Interpret \(C^\star_{A,B,r^*-1}\) as a length increase
\[
\tilde\ell^{\Aux}_{A,B}:=\tau_{r^*-1}\cdot C^\star_{A,B,r^*-1}
\]
on the auxiliary graph \(G^{\Aux,(h'')}_{S,i}\).

We now transfer \(\tilde\ell^{\Aux}_{A,B}\) back to \(G[S]\) as follows.
\begin{itemize}
    \item If an auxiliary edge is an original edge of \(G[S]\), we add the
    same length increase to that edge.
    \item If an auxiliary edge is a star edge \((s_A,u)\) or \((u,s_A)\),
    then for every original edge \(f\) incident to \(u\), we add
    \[
    C_i(f)\cdot \tilde\ell^{\Aux}_{A,B}(s_A,u)
    \]
    to \(\tilde\ell(f)\).
    \item We do the same for star edges adjacent to \(s_B\).
\end{itemize}

Summing all recursive, interface, and bridge-bundle contributions gives
the final length increase \(\tilde\ell\).

\paragraph{Correctness: the size bound.}
We first bound \(|\tilde\ell|\).

For the recursive parts, \(\tilde\ell\) does not apply more size than
\(\tilde\ell^D\) by the inductive hypothesis.

For an interface edge
\[
e=\bigl((\last{u}_C)_1,v_C\bigr)
\quad\text{or}\quad
e=\bigl(v'_C,(\first{u}_C)_2\bigr),
\]
the size contributed by \(e\) to \(|\tilde\ell^D|\) is
\[
U_D(e)\cdot \alpha_e
=
\deg_{C_i}(u)\cdot \alpha_e.
\]
On the other hand, the translated contribution to \(\tilde\ell\) has size
\[
\sum_{f\ni u} U(f)\cdot C_i(f)\cdot \alpha_e
=
\deg_{C_i}(u)\cdot \alpha_e.
\]
Thus each interface edge preserves size exactly.

Now consider a bridge bundle \((A,B)\) at some scale \(h''\), and suppose
\(r^*>1\). Since
\[
L_{r^*}\le \beta_{A,B}<L_{r^*+1}=(1+\eps)L_{r^*},
\]
and since \(\beta_{A,B}\le L_{r^*-1}+\alpha_{r^*-1}\), we get
\[
\alpha_{r^*-1}\ge \beta_{A,B}-L_{r^*-1}\ge \eps L_{r^*-1}.
\]
Recall that the capacity of the bridge edge \(e_{r^*-1}\) is
\[
U_D(e_{r^*-1})=2\mu_{A,B,r^*-1}/\eps.
\]
Therefore the contribution of \(e_{r^*-1}\) to \(|\tilde\ell^D|\) is at least
\[
U_D(e_{r^*-1})\cdot \alpha_{r^*-1}
\ge
(2\mu_{A,B,r^*-1}/\eps)\cdot \eps L_{r^*-1}
=
2\mu_{A,B,r^*-1}L_{r^*-1}.
\]

The translated auxiliary cut contributes size exactly
\[
\bigl|\text{translation of }\tilde\ell^{\Aux}_{A,B}\bigr|
=
|\tilde\ell^{\Aux}_{A,B}|,
\]
because original auxiliary edges preserve size trivially, and each translated
star edge preserves size by the same computation as above. Moreover,
\[
|\tilde\ell^{\Aux}_{A,B}|
=
\tau_{r^*-1}\cdot |C^\star_{A,B,r^*-1}|
\le
L_{r^*-1}\cdot 2\mu_{A,B,r^*-1},
\]

Hence every translated bridge bundle contributes no more size than was already
paid by the corresponding bridge edge in \(\tilde\ell^D\). Summing everything,
we obtain
\[
|\tilde\ell|\le |\tilde\ell^D|.
\]

\paragraph{Correctness: the length bound.}
Fix a dyadic \(h'\le h\), interpret \(C_{>i}\) as an \(h'\)-length cut, and
let
\[
d:=\dist_{G_{S,i}^{(h')}+\tilde\ell}(u,v)
\le
\Lambda_i\cdot \frac{h's}{\eps}.
\]
Let \(h''\le h'\) be the smallest dyadic scale such that
\[
d\le \Lambda_i\cdot \frac{h''s}{\eps}.
\]
Then by minimality of \(h''\),
\[
h''s\le 2\eps d.
\]

Let \(P\) be a shortest \((u,v)\)-path in \(G_{S,i}^{(h'')}+\tilde\ell\). We
will construct a path from \(\first{u}\) to \(\last{v}\) in
\(\widehat D^{(h'')}+\tilde\ell^D\) of length at most
\((1+\eta(S,i))d\).

\noindent\textbf{(a) Decomposing the lifted path.}
Since removing \(\tilde\ell\) only decreases lengths, the same path \(P\) has
length at most \(d\le \Lambda_i h''s/\eps\) in \(G_{S,i}^{(h'')}\). Hence, by
the path-cover property, \(P\) lifts to a path in \(P^{(h'')}\).

Decompose the lifted path into maximal pieces
\[
Q_1,Q_2,\ldots,Q_z
\]
such that each \(Q_j\) is contained in a single SCC of \(P^{(h'')}\).
Let \(a_j,b_j\) denote the endpoints of \(Q_j\), and let \(S_j\) denote the
SCC containing \(Q_j\).

\noindent\textbf{(b) Lifting light pieces recursively.}
If \(S_j\) is light, let
\[
d_j:=\dist_{G[V(S_j)]-C_{>i}+\tilde\ell}(a_j,b_j).
\]
Choose the smallest dyadic \(\bar h_j\in\{1,2,4,\ldots,h\}\) such that
\[
d_j\le \frac{\bar h_j s}{\eps}.
\]
This exists because \(d_j\le d\le \Lambda_i h's/\eps\le 2hs/\eps\).

Since the child call is \AProj\((V(S_j),t)\), the inductive hypothesis gives
a path in \(D_{S_j}+\tilde\ell^D\) from \((\first{a_j})_{S_j}\) to
\((\last{b_j})_{S_j}\) of length at most
\[
(1+\eta(V(S_j),t))\cdot d_j.
\]

\noindent\textbf{(c) Lifting non-critical heavy pieces recursively.}
If \(S_j\) is heavy, write
\[
\gamma_j:=C_i(Q_j):=\sum_{e\in E(Q_j)} C_i(e).
\]
Define
\[
\Gamma_j^{\leftarrow}:=\sum_{\substack{j'\le j\\ S_{j'}\text{ heavy}}}\gamma_{j'},
\qquad
\Gamma_j^{\rightarrow}:=\sum_{\substack{j'\ge j\\ S_{j'}\text{ heavy}}}\gamma_{j'}.
\]
We say \(Q_j\) is \emph{left-non-critical} if \(\Gamma_j^{\leftarrow}\le 1\),
and \emph{right-non-critical} if \(\Gamma_j^{\rightarrow}\le 1\).

Suppose first that
\[
\sum_{\substack{1\le j\le z\\ S_j\text{ heavy}}}\gamma_j\le 1.
\]
Then every heavy piece is both left- and right-non-critical. For such a piece,
passing from \(G_{S,i}^{(h'')}\) to \(G_{S,i-1}^{(h'')}\) adds at most
\(\gamma_j h''\) along \(Q_j\). Hence
\[
\dist_{G[V(S_j)]-C_{>i-1}+\tilde\ell}(a_j,b_j)
\le
\dist_{G[V(S_j)]-C_{>i}+\tilde\ell}(a_j,b_j)+\gamma_j h''.
\]
Since
\[
\dist_{G[V(S_j)]-C_{>i}+\tilde\ell}(a_j,b_j)\le d\le \Lambda_i\frac{h''s}{\eps}
\]
and \(\gamma_j\le 1\), we obtain
\[
\dist_{G[V(S_j)]-C_{>i-1}+\tilde\ell}(a_j,b_j)
\le
\Lambda_i\frac{h''s}{\eps}+h''
=
\Lambda_{i-1}\frac{h''s}{\eps}.
\]
Therefore the inductive hypothesis applies to the heavy child
\(\AProj(V(S_j),i-1)\), and yields a path in \(D_{S_j}+\tilde\ell^D\) of
length at most
\[
(1+\eta(V(S_j),i-1))\cdot
\bigl(
\dist_{G[V(S_j)]-C_{>i}+\tilde\ell}(a_j,b_j)+\gamma_j h''
\bigr).
\]
Summing over all heavy pieces, the total additive loss from the
\(\gamma_j h''\) terms is at most
\[
h''\cdot
\sum_{\substack{1\le j\le z\\ S_j\text{ heavy}}}\gamma_j
\le h''.
\]
After concatenating all recursively lifted pieces with the inter-SCC edges
inside the first copy \(D^{(h'')}_1\), we obtain a path of length at most
\[
(1+\eta(S,i))d,
\]
provided \(c_0\) is chosen large enough to absorb the extra \(O(\eps)\) term
coming from \(h''s\le 2\eps d\). So in this case we are done.

We may therefore assume
\[
\sum_{\substack{1\le j\le z\\ S_j\text{ heavy}}}\gamma_j > 1.
\]

Let \(a\) be the smallest index such that \(S_a\) is heavy and
\[
\Gamma_a^{\leftarrow}>1,
\]
and let \(b\) be the largest index such that \(S_b\) is heavy and
\[
\Gamma_b^{\rightarrow}>1.
\]
Then every heavy piece strictly before \(Q_a\) is left-non-critical, and every
heavy piece strictly after \(Q_b\) is right-non-critical. By the same argument
as above, all pieces strictly before \(Q_a\) and strictly after \(Q_b\) can be
lifted recursively, and the total additive loss from the corresponding
\(\gamma_j h''\) terms is at most
\[
2h''.
\]

\noindent\textbf{(d) Choosing the exit and entry points.}
Let
\[
\Gamma_{a^-}:=\sum_{\substack{j<a\\ S_j\text{ heavy}}}\gamma_j.
\]
Traverse the edges of \(Q_a\) from \(a_a\) to \(b_a\), and let
\[
e_L=(p_L,q_L)
\]
be the first edge of \(Q_a\) such that
\[
\Gamma_{a^-}+C_i(Q_a[a_a,p_L])+C_i(e_L)>1.
\]
Thus
\[
\Gamma_{a^-}+C_i(Q_a[a_a,p_L])\le 1.
\]

Define the left candidate set
\[
X:=\Bigl(\bigcup_{j<a,\ S_j\text{ heavy}}V(Q_j)\Bigr)\cup V(Q_a[a_a,p_L]).
\]

Symmetrically, let
\[
\Gamma_{b^+}:=\sum_{\substack{j>b\\ S_j\text{ heavy}}}\gamma_j,
\]
and let
\[
e_R=(p_R,q_R)
\]
be the last edge of \(Q_b\) such that
\[
\Gamma_{b^+}+C_i(Q_b[q_R,b_b])+C_i(e_R)>1.
\]
Thus
\[
\Gamma_{b^+}+C_i(Q_b[q_R,b_b])\le 1.
\]
Define the right candidate set
\[
Y:=V(Q_b[q_R,b_b])\cup
\Bigl(\bigcup_{j>b,\ S_j\text{ heavy}}V(Q_j)\Bigr).
\]

For every pair \(x\in X\), \(y\in Y\), let \(A:=S(x)\) and \(B:=S(y)\), and define
\[
\Psi(x,y):=
\tilde\ell^D\bigl((\last{x}_A)_1,v_A\bigr)
+\tilde\ell^{\Aux}_{A,B}(s_A,x)
+\beta_{A,B}
+\tilde\ell^{\Aux}_{A,B}(y,s_B)
+\tilde\ell^D\bigl(v'_B,(\first{y}_B)_2\bigr).
\]
Choose a pair \((x,y)\in X\times Y\) minimizing \(\Psi(x,y)\), and let
\[
A:=S(x),\qquad B:=S(y).
\]

By construction of \(X\), the total \(C_i\)-mass on the left window
\[
W_L:=
\Bigl(\bigcup_{j<a,\ S_j\text{ heavy}}Q_j\Bigr)\cup Q_a[a_a,p_L]\cup \{e_L\}
\]
is strictly larger than \(1\). Every edge of \(W_L\) receives in
\(\tilde\ell\) at least
\[
C_i(e)\cdot
\Bigl(
\tilde\ell^D\bigl((\last{x}_A)_1,v_A\bigr)
+
\tilde\ell^{\Aux}_{A,B}(s_A,x)
\Bigr)
\]
from the combination of interface and left-star translations. Hence
\[
\tilde\ell(W_L)
>
\tilde\ell^D\bigl((\last{x}_A)_1,v_A\bigr)
+
\tilde\ell^{\Aux}_{A,B}(s_A,x).
\]
Similarly, for the right window
\[
W_R:=
\{e_R\}\cup Q_b[q_R,b_b]\cup
\Bigl(\bigcup_{j>b,\ S_j\text{ heavy}}Q_j\Bigr),
\]
we have
\[
\tilde\ell(W_R)
>
\tilde\ell^{\Aux}_{A,B}(y,s_B)
+
\tilde\ell^D\bigl(v'_B,(\first{y}_B)_2\bigr).
\]
Thus the extra interface lengths and the two relevant auxiliary-star lengths
are already charged to the \(\tilde\ell\)-length of the original path.

Moreover, each interface edge has original length \(h''s\), so the two
interface edges contribute an additive
\[
2h''s\le 4\eps d.
\]

Finally, by construction of \(X\) and \(Y\), the total \(C_i\)-mass on the
heavy pieces traversed before reaching \(x\), and after leaving \(y\), is at
most \(1\). Therefore the same non-critical-heavy-piece argument as above
lifts the prefix from \(u\) to \(x\) inside the first copy \(D^{(h'')}_1\),
and the suffix from \(y\) to \(v\) inside the second copy \(D^{(h'')}_2\),
with an additional additive loss at most \(2h''\).

\noindent\textbf{(e) The bridge edge.}

By \Cref{cor:PDMCFtoDAG}, in the auxiliary graph
\[
G^{\Aux,(h'')}_{S,i}+\tilde\ell^{\Aux}_{A,B},
\]
the distance from \(s_A\) to \(s_B\) is at least
\[
\tau_{r^*-1}
=
\min\!\left\{
h,\,
\frac{L_{r^*-1}}{(1+\eps_p)(1+\eps_f)}
\right\}.
\]

Now the subpath \(P[x,y]\), together with the two star edges
\[
(s_A,x),\qquad (y,s_B),
\]
forms an \(s_A\)-\(s_B\) walk in the auxiliary graph. Therefore
\[
\tau_{r^*-1}
\le
\dist_{G_{S,i}^{(h'')}+\tilde\ell}(x,y)
+
\tilde\ell^{\Aux}_{A,B}(s_A,x)
+
\tilde\ell^{\Aux}_{A,B}(y,s_B).
\]
But the two auxiliary-star terms were already charged to \(\tilde\ell(W_L)\)
and \(\tilde\ell(W_R)\) as argued already in (d) by the choosing of entry point, respectively. Hence they create no additional
uncharged loss in the final comparison.

Since
\[
\beta_{A,B}<L_{r^*+1}=(1+\eps)L_{r^*}\le (1+\eps)^2L_{r^*-1},
\]
we conclude that
\[
\beta_{A,B}
\le
(1+\eps)^2(1+\eps_p)(1+\eps_f)\cdot
\Bigl(
\dist_{G_{S,i}^{(h'')}+\tilde\ell}(x,y)
+
\tilde\ell^{\Aux}_{A,B}(s_A,x)
+
\tilde\ell^{\Aux}_{A,B}(y,s_B)
\Bigr).
\]
After charging the two star terms to the left and right windows, the bridge
contributes only a multiplicative \(1+O(\eps+\eps_p+\eps_f)\) loss on the
skipped middle part.

\noindent\textbf{(f) Concatenating all parts.}
We now concatenate:
\begin{itemize}
    \item the recursively lifted prefix from \(u\) to \(x\) inside the first
    copy \(D^{(h'')}_1\),
    \item the interface edge \(\bigl((\last{x}_A)_1,v_A\bigr)\),
    \item the cheapest bridge \((v_A,v'_B)\),
    \item the interface edge \(\bigl(v'_B,(\first{y}_B)_2\bigr)\),
    \item the recursively lifted suffix from \(y\) to \(v\) inside the second
    copy \(D^{(h'')}_2\).
\end{itemize}
All concatenations are valid because \(D^{(h'')}_1\) and \(D^{(h'')}_2\)
contain the lifted inter-SCC edges of \(P^{(h'')}\), and the final trivial
zero-length edges in \(D^*_{S,i}\) allow us to move from the global
\(\first{u}\) to the copy of \(\first{u}\) in \(\widehat D^{(h'')}\), and
from the copy of \(\last{v}\) in \(\widehat D^{(h'')}\) to the global
\(\last{v}\).

Each recursively lifted subpiece contributes at most the factor
\[
\max\bigl\{
\eta(V(S_j),t),\,
\eta(V(S_j),i-1)
\bigr\}
\le
\frac{\eta(S,i)}{1+c_0(\eps+\eps_p+\eps_f)}.
\]
The additive losses from the non-critical heavy parts and the two interface
edges are \(O(h''+h''s)=O((\eps+\eps_p+\eps_f)d)\), and the bridge contributes
a multiplicative \(1+O(\eps+\eps_p+\eps_f)\) factor on the skipped middle
segment. Choosing \(c_0\) large enough, all these losses are absorbed into the
definition of \(\eta(S,i)\), and we obtain
\[
\dist_{D^*_{S,i}+\tilde\ell^D}(\first{u},\last{v})
\le
\eta(S,i)\cdot d.
\]
This proves the second item of the lemma.

The running time is \(\tilde O(|E(D^*_{S,i})|)\), since each recursive edge,
interface edge, and bridge bundle is processed only once up to polylogarithmic
overhead.
\end{proof}

\subsection{Combine Everything}\label{subsec:combineeverything}
We first prove the following lemma.
\begin{lemma}\label{lem:concatenating}
Let $q\in\bbN$. Given a directed graph $G$ with non-negative edge capacities and lengths, suppose $\OProj{h,\kappa,\lambda}$ computes a $h$-length flow-cost-preserving DAG projection $G'$ of $G$ together with an efficient $(\kappa,\lambda)$-projection algorithm. Then there is an algorithm that computes a $qh$-length flow-cost-preserving DAG projection $\widehat G$ of size $O(q|G'|)$ associated with an efficient $(4\kappa,\lambda)$-projection algorithm. Its running time is $O(q)$ times that of the projection algorithm for $G'$.
\end{lemma}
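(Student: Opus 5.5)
The construction is a concatenation of $q$ copies: I build $\widehat G$ by chaining copies of $G'$ so that a path of length up to $qh$ can be split into at most $O(q)$ pieces, each of length at most $h$. Concretely, take $q$ (or $2q$) disjoint copies $G'_1,\dots,G'_{q}$ of $G'$. For every $v\in V$ and every $k<q$, add a zero-length, infinite-capacity edge from $\last{v}$ in $G'_k$ to $\first{v}$ in $G'_{k+1}$. Then add zero-length trivial edges chaining all copies of $v$ in a global topological order, as in the construction of $D^*_{S,i}$, so that $\first{v}$ and $\last{v}$ of $\widehat G$ lie in $G'_1$ and $G'_q$. The result is clearly a DAG of size $O(q|G'|)$, and the projection map is inherited from each copy.

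\textbf{Primal transfer.} Given a feasible flow $F'$ on $\widehat G$, I restrict it to each copy $G'_k$. Connector edges join vertices with the same image, so they contribute nothing to the projected demand and nothing to the cost. I apply the given projection algorithm to each restriction, obtaining flows $F_k$ on $G$ with congestion $\kappa$. Their sum has demand $\pi(\Dem(F'))$ and cost $\cost(F')$, but naively its congestion is $q\kappa$, not $4\kappa$.

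This is the main obstacle. My first attempt would be to argue that the copies can be arranged so that the flow in $\widehat G$ is charged against the capacity of $G$ only $O(1)$ times. One way is to scale capacities in each copy by roughly $1/q$ while keeping the connector edges uncapacitated, so that the summed congestion is $\kappa$. The danger is that a single long flow path then has to route its full value through every copy. To reconcile this, I would instead use geometric grouping with a doubling structure (copies for $h,2h,4h,\dots$), each with capacities scaled geometrically, so that the total congestion is a geometric series bounded by $4\kappa$. The catch is that this would require the $h$-length oracle at larger scales, so I would settle on whichever scaling the dual argument below still tolerates.

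\textbf{Dual transfer.} Given a moving cut $C'$ on $\widehat G$, I apply the cut projection to each restriction $C'_k$ and sum to get $C$, so that $|C|=|C'|$. For $u,v$ with $\dist_{G-C}(u,v)\le qh$, I take a shortest path in $G-C$ and cut it greedily into at most $q$ consecutive subpaths, each of $(G-C)$-length at most $h$. Since $C\succeq C_k$, each piece also has length at most $h$ in $G-C_k$, so the guarantee for copy $k$ lifts it from $\first{x}$ to $\last{y}$ in $G'_k-C'_k$ with stretch $\lambda$. The connector edges then concatenate these lifts at zero cost, giving total stretch $\lambda$. The length parameter $\leng(C)=\leng(C')/\lambda$ has to be checked to be consistent across copies, which is routine. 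The running time is $q$ invocations of the projection algorithm.
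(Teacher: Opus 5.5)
\textbf{Gap 1: the congestion slack.} You identify the obstacle correctly but do not overcome it, so the proposal does not prove the stated $(4\kappa,\lambda)$ slack. With unscaled copies, your summed-cut dual argument is essentially fine, but the primal sum then has congestion about $q\kappa$ (or $2q\kappa$). Your doubling fallback is circular, since the larger-scale oracles it needs are exactly what this lemma is meant to supply.

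The missing idea is how to make $1/q$ capacity scaling compatible with the dual guarantee. The paper takes $2q$ copies and divides every capacity by $q$. The projected flows then sum to congestion at most $2q\cdot\kappa/q+2q\cdot 1/q\le 4\kappa$.

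On the dual side, given a cut $C$ on $\widehat G$, the paper \emph{averages} instead of summing. The restriction $C_i$ to copy $i$, viewed on the unscaled $G'$, has size $q|C_i|$, and so does its projection $C_i^\star$. Hence $\widehat C:=\frac1q\sum_i C_i^\star$ (plus the connector part) has size exactly $|C|$. But now $C_i^\star\not\preceq\widehat C$, so lifting piece $j$ in a fixed copy, as you do, no longer works. Instead, for each offset $a\in[q]$ one lifts piece $j$ into copy $a+j-1$, giving a walk $W_a$. Averaging over $a$, piece $Q_j$ is charged $\frac1q\sum_a C^\star_{a+j-1}(Q_j)\le\widehat C(Q_j)$, so some $W_a$ has length at most $\lambda\,\ell_{G-\widehat C}(P)$.

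Your worry that a long flow path must push its full value through every scaled copy is unfounded, for two reasons. First, passing through a copy at a fixed vertex $x$ is free: the guarantee for the pair $(x,x)$ gives a zero-length $\first{x}$--$\last{x}$ path that no finite cut can lengthen. Second, the lifting direction is never constructed by hand; it follows from the dual guarantee by LP duality (\Cref{lem:algorithmimpliesprojection}).

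\textbf{Gap 2: no connectors for original edges.} Your $\widehat G$ has only same-vertex connectors between copies, none realizing edges of $G$. Your greedy split of a path of length at most $qh$ into at most $q$ pieces of length at most $h$ is false in general: three edges of length $0.6h$ with $q=2$ need three pieces. It is impossible outright when an edge is longer than $h$. For example, let $G$ be a single edge $(u,v)$ of length $3h/2$ and $q=2$. Two isolated vertices form a valid $h$-length projection $G'$, yet your $\widehat G$ has no path from $\first{u}$ to $\last{v}$.

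The paper fixes this as follows. For every $i\in[2q-1]$ and $(u,v)\in E(G)$, it adds an edge from $(\last{u})^{(i)}$ to $(\first{v})^{(i+1)}$ of length $\ell_G(u,v)$ and capacity $U_G(u,v)/q$. It then writes $P=Q_1\circ e_1\circ Q_2\circ\cdots\circ e_{r-1}\circ Q_r$ with $\ell(Q_j)\le h$ and $\ell(Q_j)+\ell(e_j)>h$. This gives $r\le q$, and each boundary edge $e_j$ crosses between consecutive copies. The $2q$ copies leave room for all $q$ offsets.
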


\begin{proof}
Let $G'$ be the DAG projection returned by $\OProj{h,\kappa,\lambda}$, with projection map $\pi:V(G')\to V(G)$. For each $i\in[2q]$, let $G^{(i)}$ be a copy of $G'$, and for each vertex $x\in V(G')$ write $x^{(i)}$ for its copy in $G^{(i)}$.

We construct $\widehat G$ as follows.
\begin{itemize}
    \item For every internal edge of every copy $G^{(i)}$, keep the same length as in $G'$, but divide its capacity by $q$.
    \item For every $i\in[2q-1]$ and every edge $(u,v)\in E(G)$, add an edge from $(\last{u})^{(i)}$ to $(\first{v})^{(i+1)}$ with length $\ell_G(u,v)$ and capacity $U_G(u,v)/q$.
    \item For every $i\in[2q-1]$ and every $u\in V(G)$, add an edge from $(\last{u})^{(i)}$ to $(\first{u})^{(i+1)}$ with length $0$ and infinite capacity.
\end{itemize}
Let $\hat\pi$ be the natural projection map on $\widehat G$, namely $\hat\pi(x^{(i)})=\pi(x)$.

Since each $G^{(i)}$ is a DAG and all added edges go from layer $i$ to layer $i+1$, the graph $\widehat G$ is also a DAG. Its size is clearly $O(q|G'|)$.

We now define a $qh$-length $(4\kappa,\lambda)$-projection algorithm for $\widehat G$.

\medskip
\noindent\textbf{Primal part.}
Let $F$ be a flow on $\widehat G$ with congestion $1$ and cost $B$. For each $i\in[2q]$, let $F_i$ be the restriction of $F$ to the internal edges of $G^{(i)}$. Also, for each $i\in[2q-1]$, let $H_i$ be the restriction of $F$ to the inter-copy edges from layer $i$ to layer $i+1$ corresponding to original edges of $G$; we ignore the zero-length infinite-capacity edges, since they project to the same vertex and thus contribute zero projected demand.

View each $F_i$ as a flow on the original graph $G'$. Because capacities in $G^{(i)}$ were scaled down by a factor of $q$, each $F_i$ has congestion at most $1/q$ on $G'$. Applying the primal part of the projection algorithm for $G'$ to each $F_i$, we obtain a flow $f_i$ on $G$ with the same cost as $F_i$, congestion at most $\kappa/q$, and demand $\Dem(f_i)=\pi(\Dem(F_i))$.

Next, map each $H_i$ directly to a flow $g_i$ on $G$ by sending exactly the same edge-flow on the corresponding edge of $G$. Since those inter-copy edges have capacities scaled down by $q$, each $g_i$ has congestion at most $1/q$ on $G$, and the same cost as $H_i$.

Now define
\[
f:=\sum_{i=1}^{2q} f_i+\sum_{i=1}^{2q-1} g_i.
\]
By construction,
\[
\Dem(f)=\hat\pi(\Dem(F)).
\]
Also, $\cost(f)=\cost(F)=B$.

Finally, the congestion of $f$ is at most
\[
\sum_{i=1}^{2q}\frac{\kappa}{q}+\sum_{i=1}^{2q-1}\frac{1}{q}
\le 2\kappa+2
\le 4\kappa,
\]
where we use $\kappa\ge 1$. Hence the primal part has congestion slack $4\kappa$.

\medskip
\noindent\textbf{Dual part.}
Let $C$ be an $L$-length moving cut on $\widehat G$. Since the zero-length inter-copy edges have infinite capacity, any finite-size moving cut assigns value $0$ to them. Thus $C$ is supported only on the internal edges of the copies and on the inter-copy edges corresponding to edges of $G$.

For each $i\in[2q]$, let $C_i$ be the restriction of $C$ to the internal edges of $G^{(i)}$. For each $i\in[2q-1]$, let $D_i$ be the restriction of $C$ to the inter-copy edges from layer $i$ to layer $i+1$ corresponding to original edges of $G$.

For each $i$, view $C_i$ as a moving cut on the original graph $G'$. Since capacities in $G^{(i)}$ were scaled down by a factor of $q$, its size in $G'$ is $q|C_i|$. Applying the dual part of the projection algorithm for $G'$ to $C_i$, we obtain an $(L/\lambda)$-length moving cut $C_i^\star$ on $G$ with
\[
|C_i^\star|=q|C_i|.
\]

For each $i\in[2q-1]$, define a cut $D_i^\star$ on $G$ by assigning to each edge $(u,v)\in E(G)$ exactly the cut value that $D_i$ assigns to the inter-copy edge from $(\last{u})^{(i)}$ to $(\first{v})^{(i+1)}$. Then $D_i^\star$ is an $(L/\lambda)$-length moving cut on $G$, and because the capacities of those inter-copy edges were scaled down by $q$, we have
\[
|D_i^\star|=q|D_i|.
\]

Now define
\[
\widehat C:=\frac{1}{q}\left(\sum_{i=1}^{2q} C_i^\star+\sum_{i=1}^{2q-1} D_i^\star\right).
\]
Then $\widehat C$ is an $(L/\lambda)$-length moving cut on $G$, and
\[
|\widehat C|
=\frac{1}{q}\left(\sum_{i=1}^{2q}|C_i^\star|+\sum_{i=1}^{2q-1}|D_i^\star|\right)
=\sum_{i=1}^{2q}|C_i|+\sum_{i=1}^{2q-1}|D_i|
=|C|.
\]

It remains to prove the distance guarantee. Fix any $u,v\in V(G)$ with $\dist_G(u,v)\le qh$, and let $P$ be a shortest path from $u$ to $v$ in $G$.

We first claim that $P$ can be written as
\[
P=Q_1\circ e_1\circ Q_2\circ e_2\circ \cdots \circ e_{r-1}\circ Q_r,
\]
where each $Q_j$ is a (possibly empty) subpath of length at most $h$, each $e_j$ is a single edge, and $r\le q$. Indeed, greedily take $Q_1$ to be the longest initial subpath of $P$ of length at most $h$; if $Q_1\neq P$, let $e_1$ be the next edge, and continue. For every $j<r$, we then have $\ell(Q_j)+\ell(e_j)>h$. Summing over $j<r$ gives
\[
(r-1)h<\sum_{j<r}\bigl(\ell(Q_j)+\ell(e_j)\bigr)\le \ell(P)\le qh,
\]
so $r\le q$.

Let $s_j,t_j$ be the endpoints of $Q_j$, so $s_1=u$, $t_r=v$, and $e_j=(t_j,s_{j+1})$ for every $j<r$.

For each offset $a\in[q]$, construct a walk $W_a$ in $\widehat G-C$ as follows:
\begin{itemize}
    \item move from $(\first{u})^{(1)}$ to $(\first{u})^{(a)}$ using only zero-length inter-copy edges and zero-length paths inside the intermediate copies;
    \item for each $j=1,\ldots,r$, traverse inside copy $G^{(a+j-1)}$ a shortest path from $(\first{s_j})^{(a+j-1)}$ to $(\last{t_j})^{(a+j-1)}$;
    \item for each $j<r$, traverse the inter-copy edge from $(\last{t_j})^{(a+j-1)}$ to $(\first{s_{j+1}})^{(a+j)}$ corresponding to the edge $e_j=(t_j,s_{j+1})$;
    \item finally move from $(\last{v})^{(a+r-1)}$ to $(\last{v})^{(2q)}$ using only zero-length inter-copy edges and zero-length paths inside the later copies.
\end{itemize}
This is well-defined because $a+r-1\le 2q-1$.

We justify the zero-length paths inside one copy. Applying the dual guarantee for the base projection to the pair $(x,x)$ gives
\[
\dist_{G^{(i)}-C_i}\bigl((\first{x})^{(i)},(\last{x})^{(i)}\bigr)
\le \lambda\cdot \dist_{G-C_i^\star}(x,x)=0.
\]
Hence such a path exists and has length $0$.

For each $j$, since $\dist_G(s_j,t_j)\le \ell(Q_j)\le h$, the dual guarantee for the base projection yields
\[
\dist_{G^{(a+j-1)}-C_{a+j-1}}
\bigl((\first{s_j})^{(a+j-1)},(\last{t_j})^{(a+j-1)}\bigr)
\le
\lambda\cdot \dist_{G-C_{a+j-1}^\star}(s_j,t_j)
\le
\lambda\cdot \ell_{G-C_{a+j-1}^\star}(Q_j).
\]
Also, for each $j<r$, the inter-copy edge corresponding to $e_j$ has length
\[
\ell_G(e_j)+L\cdot D_{a+j-1}(e_j)
\le
\lambda\left(\ell_G(e_j)+\frac{L}{\lambda}D_{a+j-1}^\star(e_j)\right)
=
\lambda\cdot \ell_{G-D_{a+j-1}^\star}(e_j),
\]
where we use $\lambda\ge 1$ and $\ell_G(e_j)\ge 0$.

Therefore,
\[
\ell_{\widehat G-C}(W_a)
\le
\lambda\sum_{j=1}^r \ell_{G-C_{a+j-1}^\star}(Q_j)
+\lambda\sum_{j=1}^{r-1}\ell_{G-D_{a+j-1}^\star}(e_j).
\]

Now average over all $a\in[q]$. We get
\begin{align*}
\frac{1}{q}\sum_{a=1}^q \ell_{\widehat G-C}(W_a)
&\le
\lambda\sum_{j=1}^r\left(\ell_G(Q_j)+\frac{L}{\lambda}\cdot \frac{1}{q}\sum_{a=1}^q C_{a+j-1}^\star(Q_j)\right)\\
&\quad+\lambda\sum_{j=1}^{r-1}\left(\ell_G(e_j)+\frac{L}{\lambda}\cdot \frac{1}{q}\sum_{a=1}^q D_{a+j-1}^\star(e_j)\right)\\
&\le
\lambda\cdot \ell_{G-\widehat C}(P).
\end{align*}
Hence for some $a\in[q]$,
\[
\ell_{\widehat G-C}(W_a)\le \lambda\cdot \ell_{G-\widehat C}(P).
\]
Since $W_a$ is a walk from $(\first{u})^{(1)}$ to $(\last{v})^{(2q)}$, this implies
\[
\dist_{\widehat G-C}\bigl((\first{u})^{(1)},(\last{v})^{(2q)}\bigr)
\le \lambda\cdot \ell_{G-\widehat C}(P)
= \lambda\cdot \dist_{G-\widehat C}(u,v).
\]

Thus the dual part also holds. Therefore $\widehat G$ is associated with a $qh$-length $(4\kappa,\lambda)$-projection algorithm. By \Cref{lem:algorithmimpliesprojection}, $\widehat G$ is a $qh$-length flow-cost-preserving DAG projection of $G$.
\end{proof}
Now we are ready to prove the two lemmas stated at the begining of this seciton.
\hlengthprojection*
\begin{proof}
    We call the algorithm stated in \Cref{subsec:algorithmforconstructingDAGprojection} and apply \Cref{lem:aproj_size,lem:aproj_time,lem:flowprojection,lem:cutprojectionmain}. There are two things to notice in order to get \Cref{lem:hlengthprojection}.
    \begin{itemize}
        \item \Cref{subsec:algorithmforconstructingDAGprojection} makes oracle calls to $\OProj{hs,\kappa_p,1+\eps_p}$, i.e., a $hs$-length algorithm but not $h/z$. However, we can use \Cref{lem:concatenating} to get the oracle by concatenating $zs$ copies of the projection returned by $\OProj{h/z,\kappa_p,1+\eps_p}$, which only increase all the time complexity part by a factor of $z$ (the $s=(\log n)^{1/\delta}$ factor is absorbed in $S(n,\delta,\eps)$).
        \item \Cref{lem:cutprojectionmain} is slightly different from the definition \Cref{def:DAGprojection}: the cut $C^D,C$ has the same length while \Cref{def:DAGprojection} has a different length (differ by the length slack $\lambda$ of the projection). This can be easily solved by multiplying all the capacities of the edges in the DAG by $\lambda$, so the cut $C^D$ will increase its length by a factor $\lambda$. Then we can intepretate $C$ as a $\leng(C^D)/\lambda$ cut, the size keep the same and the congestion only increase by a factor of at most $O(1/\delta^2)$. 
    \end{itemize}
\end{proof}

\projectionmain*
\begin{proof}
    Notice that when $h$ is sufficiently small (like $\polylog(n)$), a $h$-length DAG projection is trivial to get according to \Cref{lem:concatenating}.

    \Cref{lem:hlengthprojection} shows how to enlarge the length by a factor of $q$ with oracle $\OPDMCF{\kappa_f,1+\eps_f}$ and
    \begin{itemize}
        \item enlarging the congestion slack by a factor of $\kappa_f\cdot n^{O(\delta)}\cdot (1/\eps)^{O(1/\delta^5)}$,
        \item enlarging the length slack minus $1$ by a factor of $O(1/\delta^2)$ if we always choose $\eps=\eps_f$ and $\eps_f<\eps_p$,
        \item with a fix complexity $qm\cdot S(n,\delta,\eps)$ and fixed size $m\cdot S(n,\delta,\eps)$.
    \end{itemize}
    Thus, we can choose the parameters $q$ to be $n^{\sqrt{\delta}}$, then up to $O(1/\sqrt{\delta})$ iterations, we get sufficient large length bound $h$, and the congestion becomes (redefine $\delta=\sqrt{\delta}$) 
    \[\kappa_f^{O(1/\delta)}\cdot n^{O(\delta)}\cdot (1/\eps_f)^{O(1/\delta^{11})}\]
    the length slack becomes
    \[1+\eps_f\cdot (1/\delta)^{O(1/\delta)}\]

    The complexity also matches.
\end{proof}
\section{Minimum Cost Flow Algorithm on DAGs}\label{sec:dag-mcf}

In this section we give an algorithm that returns an approximate primal--dual solution for minimum-cost $s$--$t$ flow in a DAG. The algorithm is inspired by the weighted push--relabel framework for max flow~\cite{BernsteinBST24}.

\begin{restatable}{theorem}{MCFonDAGs}\label{thm:dag-oracle}
Given a DAG $G=(V,E)$ with source $s$, sink $t$, and a parameter $\eps\in(0,0.1)$, suppose there exists an $s$--$t$ flow $f^*$ on $G$.
Then there is an algorithm that computes a $(1\pm 6\eps)$ primal--dual minimum-cost flow pair $(f,\phi,C)$ with value
$\val(f)\ge \val(f^*)/3$ in time $\tO{n^2/\eps}$.
\end{restatable}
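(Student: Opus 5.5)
We will prove \Cref{thm:dag-oracle} by a weighted push--relabel algorithm in the spirit of \cite{BernsteinBST24}, run on the capacitated DAG with potentials, as outlined in \Cref{subsec:overviewdagalgo}. First, fix a topological order $v_1=s,\dots,v_n=t$ (restricted to vertices reachable from $s$ and co-reachable to $t$) and define edge weights $w(v_a,v_b)$ as in \cite{BernsteinBST24}, so that $\sum_e 1/w(e)=\tO{n}$ and every $s$--$t$ path $P$ has $w(P)=O(n)$. After scaling, assume the optimal average cost of routing $\val(f^*)$ units is $\Theta(1)$ (guessed by powers of $2$). Define a combined edge length $\hat\ell(e)=\ell(e)/\Lambda+\eps\,w(e)/n$, where $\Lambda$ is the guessed average cost. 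Then every path carrying $f^*$ has $\hat\ell$-length $O(1)$ on average, and the weights contribute only an $\eps$-fraction of the cost. We maintain a preflow $f$ and a potential $\phi\ge 0$ with $\phi(t)=0$ and $\phi(s)$ capped at $H=O(1/\eps)$ in $\hat\ell$-units. A residual edge $(u,v)$ is \emph{admissible} if $\hat\ell_\phi(u,v)\le -\Theta(\eps\,w(e)/n)$, which means $\phi(u)-\phi(v)$ exceeds $\hat\ell(u,v)$ by a weight-proportional margin.

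The algorithm is a push--relabel loop. Excess at a vertex is pushed along an admissible edge. A vertex with no admissible out-edge is relabeled, meaning its $\phi$ is raised by the smallest amount that makes some edge admissible; in the max-flow version this raise is by $w(e)$. Vertices whose potential exceeds $H$ are frozen, and their excess is returned to $s$ at the end, as in \cite{BernsteinBST24}. For the running time, we argue that between two saturations of the same edge $e$, the potential of an endpoint must rise by $\Omega(w(e)\eps/n)$ in $\hat\ell$-scale. We then rescale so that the potential range is $O(n/\eps)$ in $w$-units. This gives $O(n/(\eps w(e)))$ saturations per edge, for a total of $\tO{n^2/\eps}$ by $\sum 1/w(e)=\tO n$. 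Nonsaturating pushes are charged via the standard dynamic-tree or current-arc bookkeeping that \cite{BernsteinBST24} uses.

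For certification, the final $f$ (after returning frozen excess) is a feasible $s$--$t$ flow. The dual is built from $\phi$ as follows. Set $\tilde\ell(e)=\max(0,-\hat\ell_\phi(e))$ on saturated edges and scale back by $\Lambda$. Dual feasibility follows because non-saturated residual edges are never sufficiently negative. Complementary slackness holds up to the admissibility margin, namely flow is only on edges that are tight up to $\eps w(e)/n$. Summed along flow paths, this margin costs at most $\eps\cdot O(n)/n\cdot\val(f)$, which is an $O(\eps)$ fraction. Together with the $\eps$-fraction from the weights in $\hat\ell$, this yields the $(1\pm 6\eps)$ gap.

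For the value bound $\val(f)\ge\val(f^*)/3$, we compare the dual against $f^*$. If less than a third were routed, the frozen set would form a cut in which $f^*$'s paths either cross a saturated edge, which is paid by $\tilde\ell$, or climb the potential gap $H$. Choosing $H$ as a large constant times the optimal average cost then contradicts optimality of $f^*$'s cost. I expect this step to be the main obstacle, since, as noted in the overview, potentials in the cost setting are not monotone along the topological order. My first attempt will use acyclicity to argue that any relabel sequence raises potentials consistently with the topological order: define $\phi$ initially by $\hat\ell$-distances to $t$, note the DAG makes these well defined, and show that the invariant $\phi(u)\le\phi(v)+\hat\ell(u,v)+\text{margin}$ on residual forward edges survives pushes. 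That invariant is exactly what makes the cut argument go through.
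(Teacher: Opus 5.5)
Your framework matches the paper's: topological weights $w(v_i,v_j)=|j-i|$ with $\sum_e 1/w(e)=\tilde O(n)$, an admissibility margin proportional to $\eps w(e)$, and $O(n/(\eps w(e)))$ saturations per edge. The proof still has genuine gaps, starting with the value bound you flag yourself. The route you sketch there (frozen set as a cut, potentials monotone along the topological order, initial potentials as distances to $t$) is not what is needed, and the paper's argument uses no cut at all. It has three parts:
\begin{enumerate}
\item The lazy checks maintain $\ell_\phi(e)\ge -3\eps\alpha\,w(e)$ for \emph{every} residual edge, including reverse edges created by augmentations (\Cref{lem:invariance}).
\item The algorithm stops only once the $s$--$t$ potential gap exceeds $50\alpha n$, a large constant times the guessed average cost of $f^*$.
\item If $\val(f)<0.9q$, the residual flow $g=f^*-f$ has value $\Delta\ge 0.1q$ and telescopes to $\ell_{G_f}(g)=\ell_\phi(g)+(\phi(t)-\phi(s))\Delta\ge 50\alpha n\Delta-3\eps\alpha\sum_e g(e)w(e)$. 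This contradicts $\ell_{G_f}(g)\le \ell_G(f^*)\le 2\alpha n q$.
\end{enumerate}
The ingredient you are missing is $\sum_e g(e)w(e)\le n(\val(f^*)+\val(f))$. It holds because $w$ telescopes to at most $n$ along every $s$--$t$ path of a DAG. This, not any monotonicity of potentials, is where acyclicity enters.

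Your certification step also fails as written. That unsaturated residual edges are ``never sufficiently negative'' does not give dual feasibility. An edge carrying no flow can still have negative reduced length up to the margin, and the $\tilde\ell$ needed to repair it adds $U(e)\tilde\ell(e)$ to the gap with no primal counterpart (\Cref{lem:pd-gap-identity}). The paper fixes this with a second DAG-specific fact. In any residual cycle, the topologically forward and backward edges have equal total $w$, so charging $6\eps\alpha w(e)$ on backward residual edges removes all negative cycles (\Cref{lem:nonegcycle}). Shortest-path potentials $\phi'$ then give a feasible dual whose slack sits only on flow-carrying edges, at most $6\eps\alpha w(e)$ each, for a total gap of at most $6\eps\alpha n\,\val(f)$.

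Your preflow with frozen excess ``returned to $s$ at the end'' causes a further, cost-specific problem. Canceling flow can un-saturate edges whose tails were relabeled after saturation. These become residual with very negative reduced length, and the term $(U(e)-f(e))C(e)$ is then no longer controlled by $\eps$ times the cost. The paper never creates excess: it augments only along complete admissible $s$--$t$ paths or admissible cycles, which is exactly why it needs \Cref{lem:stflow}.

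Finally, the $\tilde O(n^2/\eps)$ bound needs three things you do not have:
\begin{enumerate}
\item relabels in fixed steps of $\eps\alpha$, rather than raising $\phi$ by ``the smallest amount that makes some edge admissible'', which forces a scan of all out-edges at every relabel;
\item re-checking an edge only when an endpoint potential crosses a multiple of $\eps\alpha w(e)$;
\item a potential range of $\Theta(1)$ times the average cost. With your range $H=O(1/\eps)$ you get $O(n/(\eps^2 w(e)))$ rather than $O(n/(\eps w(e)))$ saturations per edge.
\end{enumerate}
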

Notice that \Cref{thm:dag-oracle} implies our main theorem about combinatorial minimum cost flow on general graphs.

\main*

\begin{proof}

    We apply \Cref{cor:simpleMCFreductiontoDAGs} which requires a $\OPDMCF{\kappa,1+\eps}$ oracle. We use \Cref{thm:dag-oracle} with $\eps$ set to be $\exp(-\log^cn\log \log n)$, which outputs a flow routing $1/3$ of the demands. We repeat this procedure $O(\log n)$ times on the original graph and the remaining demands until all the demands have been routed, summing over all the flows gives a flow with congestion $O(\log n)$. Moreover, suppose the $i$-th iteration routes $a_i$ amount of demands and the total demand is $A$, we have that the $i$-th flow's cost is at most $(1+O(\eps))\cdot \OPT\cdot \frac{a_i}{A}$ where $\OPT$ is the optimal cost for routing all the demands (this is because we can scale down the optimal flow routing all demands by $\frac{a_i}{A}$ and get a flow with cost $\OPT\cdot \frac{a_i}{A}$ routing $a_i$ demands, and according to the guarantee of \Cref{thm:dag-oracle}). Summing over all the costs gives $(1+O(\eps))\cdot \OPT$, so the required oracle for \Cref{cor:simpleMCFreductiontoDAGs} is satisfied with $\kappa=O(\log n)$. The total running time is $n^{2+o(1)}$.
\end{proof}
\subsection{Algorithm}\label{subsec:dag-alg}

\paragraph{Guessing the scale.}
We first guess (within a factor of $2$) the value and cost of $f^*$.
Assume we are given integers $q,\alpha$ such that
\[
\val(f^*)\in[q,2q]
\qquad\text{and}\qquad
\ell_G(f^*)\in[\alpha n q,\,2\alpha n q].
\]
These guesses can be obtained by trying powers of $2$ for $q$ and $\alpha$.

\paragraph{Topological weights.}
Fix a topological order $(v_1,\dots,v_n)$ of $V$.
For any (directed) edge $e=(v_i,v_j)$ that appears in the residual graph (including reverse residual edges),
define its \emph{topological weight}
\[
w(e)\;:=\;|j-i|.
\]
(Thus reverse residual edges are also assigned weights by the same formula.)

\paragraph{Residual graph and reduced lengths.}
We maintain a feasible flow $f$ (initially the zero flow) and its residual graph $G_f$.
Each residual edge $e=(u,v)\in E(G_f)$ has residual capacity $U_{G_f}(e)>0$ and original length/cost $\ell_G(e)$
(inherited from $G$ for forward residual edges and negated for reverse residual edges, as usual).
We also maintain a potential $\phi:V\to\mathbb{R}$ (initially $\phi\equiv 0$).
For a residual edge $e=(u,v)$ define its \emph{reduced length}
\[
\ell_\phi(e)\;:=\;\ell_G(e)+\phi(u)-\phi(v).
\]

\paragraph{Admissible edges with lazy updates.}
Each residual edge is marked either \emph{admissible} or \emph{inadmissible}.
Initially all residual edges are inadmissible.
We use the following rule to (re)compute the status of an edge, but we do not recompute it on every potential update;
instead we only recompute it at discrete ``checkpoints'' to control the total number of status changes.

\begin{description}
\item[Changing edge state (lazy checking).]
For a residual edge $e=(u,v)\in E(G_f)$, whenever either $\phi(u)$ or $\phi(v)$ hits a multiple of $\eps\alpha\cdot w(e)$,
we \emph{check} $e$ and set $e$ as admissible if
\[\ell_\phi(e)\le -2\eps\alpha\cdot w(e)\]
and otherwise mark $e$ as inadmissible.
\end{description}

The purpose of the checkpointing (multiples of $\eps\alpha w(e)$) is to ensure that each edge is checked only
$O\!\left(\frac{n}{\eps\,w(e)}\right)$ times over the entire execution (formalized in \Cref{subsec:runningtime}).

\paragraph{Active vertices.}
We call a vertex $v$ \emph{active} if $v\neq t$ and
\[
\phi(v)\ge -99\alpha n.
\]
Intuitively, we keep relabeling active vertices until admissible structure emerges; the algorithm terminates once
the source potential becomes sufficiently negative (see below).

\paragraph{Operations.}
The algorithm alternates between the following two types of operations.

\begin{description}
\item[Relabel.]
While there exists an active vertex $u$ with \emph{no} outgoing admissible residual edge, perform
\[
\phi(u)\;\leftarrow\;\phi(u)-\eps\alpha,
\]
and apply the lazy-checking rule above to any incident edges whose endpoint potentials hit a checkpoint.

\item[Push / Augment.]
Once every active vertex has at least one outgoing admissible residual edge, each active vertex picks an arbitrary outgoing admissible edge. Starting from $s$, follow the picked admissible edges. \Cref{lem:stflow} guarantees that we will not end up with a non-active vertex.
Then either we reach $t$ (yielding an admissible $s$--$t$ path), or we repeat a vertex (yielding an admissible directed cycle).
In either case, let $P$ denote the resulting admissible $s$--$t$ path or directed cycle, and augment $f$ along $P$ by
\[
\Delta \;:=\;\min_{e\in P} u_f(e),
\]
updating residual capacities and the flow accordingly.
(Implementation details and how we find $P$ efficiently are deferred to \Cref{subsec:runningtime}.)
\end{description}

The algorithm terminates when $\phi(s)<-50\alpha n$.

\begin{lemma}[Proof in \Cref{subsec:correctness}]\label{lem:stflow}
Repeatedly following an admissible outgoing edge starting from $s$ does not reach a non-active vertex other than $t$.
\end{lemma}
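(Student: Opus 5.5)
We will prove \Cref{lem:stflow} by maintaining an invariant relating the potentials along admissible edges. The natural invariant is that the potential at $t$ never moves: $t$ is never active, so it is never relabeled and $\phi(t)=0$ throughout. The concern is a walk from $s$ along picked admissible edges that reaches a vertex $v\neq t$ with $\phi(v)<-99\alpha n$. Such a $v$ has ``sunk below'' the activity threshold. The walk is only started while $\phi(s)\ge -50\alpha n$, since otherwise the algorithm has terminated. So it suffices to show that admissible edges cannot lead from a vertex of potential above $-50\alpha n$ down to one of potential below $-99\alpha n$.

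First I would establish a potential-drop bound along a single admissible edge. Suppose $e=(u,v)$ is marked admissible. At its last check we had $\ell_\phi(e)\le -2\eps\alpha w(e)$. Since then, $\phi(u)$ and $\phi(v)$ have each moved by less than $\eps\alpha w(e)$, because otherwise a checkpoint would have been hit and the edge re-checked. Hence currently $\ell_\phi(e)\le 0$, which gives
\[
\phi(v)\ \ge\ \phi(u)+\ell_G(e)-O(\eps\alpha w(e)).
\]
So following an admissible edge lowers potential by at most $-\ell_G(e)$ plus a slack $O(\eps\alpha w(e))$. Equivalently, along the walk $P$ from $s$ to $x$,
\[
\phi(x)\ \ge\ \phi(s)+\ell_G(P)-O\Bigl(\eps\alpha\sum_{e\in P}w(e)\Bigr).
\]

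Next I would bound the two error sources by $O(\alpha n)$. The slack term is easy if the walk moves monotonically in the topological order, since then $\sum_{e\in P} w(e)\le n$. Forward DAG edges go forward, but reverse residual edges go backward, and this non-monotonicity is exactly the difficulty flagged in the overview.

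The term $-\ell_G(P)$ could be large if $P$ uses reverse edges of large cost. I would charge both terms to the current flow. Reverse residual edges carry flow of $f$, and I would maintain a second invariant: $f$ only ever augments along admissible paths and cycles, so the cost of $f$ stays within $O(\alpha n q)$, using the guess $\ell_G(f^*)\le 2\alpha nq$. Combined with the guess $\val(f)\le 2q$, flow paths of $f$ have average length $O(\alpha n)$. What I really need, though, is a per-path bound. For this I would argue via potentials: each flow edge $(a,b)$ of $f$ was pushed while admissible, so $\phi(b)\lesssim\phi(a)-\ell_G(a,b)$ held at push time. Since potentials only decrease, the reverse edge $(b,a)$ satisfies $\phi(a)\ge\phi(b)+\ell_G(a,b)-O(\eps\alpha w)$ up to later relabels of $b$. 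Relabels of $b$ only help, because they lower $\phi(b)$. So reverse edges cannot produce a large drop.

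Summing gives $\phi(x)\ge\phi(s)-O(\alpha n)$ with a constant well below $49$. Thus $\phi(x)\ge -50\alpha n-O(\alpha n)>-99\alpha n$, and $x$ is active unless $x=t$.

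The main obstacle is the second step. Reverse edges break topological monotonicity, so the $w$-slack may accumulate beyond $n$, and the bound on $-\ell_G(P)$ needs the per-edge history invariant. I would first try to prove a cleaner global invariant that holds after every push and relabel. The candidate is that for every residual edge $e=(u,v)$,
\[
\ell_\phi(e)\ \ge\ -c\,\eps\alpha w(e)
\]
for a constant $c$, i.e.\ approximate $\eps$-optimality in the Goldberg--Tarjan sense with topological-weight slack. For a forward DAG edge the slack then telescopes to at most $\eps\alpha n$. For a cycle-free walk I would telescope over the topological positions visited, bounding the total by $O(\eps\alpha n)$ plus the contribution of reverse edges, which is nonpositive in potential drop.
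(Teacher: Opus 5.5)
Your reduction is the right one: while $\phi(s)\ge -50\alpha n$, show that admissible edges cannot lead from $s$ to a vertex with $\phi<-99\alpha n$. Your candidate global invariant is exactly \Cref{lem:invariance}. But there is a genuine gap in the core step, controlling the drop across backward residual edges, and edge-by-edge accounting along $P$ cannot close it.

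Your claim that reverse edges ``cannot produce a large drop'' is false. For every backward residual edge $(b,a)$, the invariant $\ell_\phi(b,a)=-\ell_G(a,b)+\phi(b)-\phi(a)\ge-3\eps\alpha w(a,b)$ itself forces $\phi(b)-\phi(a)\ge\ell_G(a,b)-3\eps\alpha w(a,b)$. So each such step drops the potential by essentially the full cost of a flow-carrying edge, which can be $\Theta(\alpha n)$.

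Your push-history inequality has the wrong direction and sign. Admissibility of $(a,b)$ at push time gives $\phi(b)\ge\phi(a)+\ell_G(a,b)+\eps\alpha w(a,b)$, which is an \emph{upper} bound on $\phi(a)$. Your fallback has the same problem: telescoping $\ell_\phi\ge-3\eps\alpha w$ along $P$ from $s$ to $x$ gives $\phi(x)\le\phi(s)+\ell_{G_f}(P)+3\eps\alpha w(P)$, again an upper bound.

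More fundamentally, $\phi(x)=\phi(s)+\ell_{G_f}(P)-\ell_\phi(P)$ with $-\ell_\phi(P)\ge\eps\alpha w(P)$. Any bound routed through $\ell_{G_f}(P)$ therefore throws away a term of size at least $\eps\alpha w(P)$. Nothing in your plan, in particular no average-cost bound on $f$, gives a per-walk bound $-\ell_{G_f}(P)=O(\alpha n)$. Incidentally, the slack you worry about in the first step does not exist: $\ell_\phi\le 0$ gives $\phi(y)\ge\phi(x)+\ell_G(x,y)$ with no error term.

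The missing idea is to lower-bound the potential via a residual path \emph{from} the relevant vertex \emph{back to} $s$, rather than via $P$.
\begin{itemize}
\item Let $(b,a)$ be the last backward edge on $P$. It exists, since otherwise $P$ uses only forward edges of nonnegative cost and potentials are nondecreasing along it.
\item The edge $(a,b)$ carries flow and $G$ is a DAG, so $f$ decomposes into $s$--$t$ paths. One of these has a prefix $Q$ from $s$ to $a$.
\item The reversal $Q^{\mathrm{rev}}$ is a residual $a$-to-$s$ path of backward edges. Hence $\ell_{G_f}(Q^{\mathrm{rev}})\le 0$, and $w(Q^{\mathrm{rev}})\le n$ because $Q$ is monotone in the topological order.
\item Telescoping the invariant along $Q^{\mathrm{rev}}$ gives $\phi(a)\ge\phi(s)-3\eps\alpha n>-99\alpha n$.
\item From $a$ onward, $P$ uses only forward admissible edges, along which potentials do not decrease.
\end{itemize}
This is how the paper proves the lemma.
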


\paragraph{Recovering the primal and the dual.}
The primal solution is the flow $f$ at the end of the algorithm.

To construct the dual, define an auxiliary penalty $C'$ on \emph{backward} residual edges as follows.
For each residual edge $e=(v_i,v_j)\in E(G_f)$ with $i>j$ (i.e., $e$ goes backward in the fixed topological order),
set
\[
C'(e)\;:=\;6\eps\alpha\cdot w(e),
\]
and set $C'(e)=0$ otherwise.
The next lemma (proved in \Cref{subsec:correctness}) shows that this eliminates negative cycles.

\begin{lemma}\label{lem:nonegcycle}
The length function $\ell_\phi(\cdot)-C'(\cdot)$ on $G_f$ has no negative cycle.
Equivalently, the residual graph $G_{f,\phi}-C'$ contains no negative cycle.
\end{lemma}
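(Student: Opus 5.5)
The idea is to show that for every residual edge the modified length $\ell_\phi(e)-C'(e)$, or at least its sum around any cycle, is bounded below by a telescoping quantity in topological positions that vanishes on closed walks. Any cycle in $G_f$ must contain backward edges, since the original graph is a DAG and only reverse residual edges go backward. The total forward displacement $\sum_{\text{forward }e} w(e)$ on a cycle equals the total backward displacement $\sum_{\text{backward }e} w(e)$.

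First I establish the invariant
\[
\ell_\phi(e)\;\ge\; -3\eps\alpha\cdot w(e)\qquad\text{for every residual edge } e,
\]
together with a sharper one-sided bound for backward residual edges. The lower bound comes from the lazy-check rule. An edge $e=(u,v)$ is only relabeled through its tail $u$ when $u$ has no admissible outgoing edge. Between two checkpoints of $e$, each endpoint potential moves by less than $\eps\alpha w(e)$. So if $e$ was inadmissible at its last check, we had $\ell_\phi(e)>-2\eps\alpha w(e)$ then, and decreasing $\phi(u)$ by less than $\eps\alpha w(e)$ since then keeps $\ell_\phi(e)>-3\eps\alpha w(e)$. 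Decreasing $\phi(v)$ only increases $\ell_\phi(e)$. If $e$ was admissible at its last check, $u$ is never relabeled while $e$ stays admissible, so $\ell_\phi(e)$ cannot decrease. The remaining case is an edge newly created by augmentation: it is the reverse of an admissible edge with $\ell_\phi\le-2\eps\alpha w$, so its own reduced length is at least $+2\eps\alpha w\ge0$ at creation. That edge then evolves under the same rules.

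The step I expect to be the main obstacle is converting this per-edge bound into a no-negative-cycle statement using the penalty $C'=6\eps\alpha w$ on backward edges only. For forward edges I have $\ell_\phi\ge -3\eps\alpha w$; for backward edges the subtracted penalty is what must be controlled. I would try to show that backward residual edges, which all arise as reverses of pushed admissible edges, keep $\ell_\phi(e)\ge 3\eps\alpha w(e)$ up to checkpoint slack. At creation they have $\ell_\phi\ge 2\eps\alpha w$, so the argument must use the checkpoint grid more carefully, perhaps by adjusting constants or by noting that pushes only happen right after checks. With the per-edge lower bounds in hand, summing around a cycle $\Gamma$ uses $\sum_{\Gamma}(\ell_\phi)=\sum_\Gamma \ell_G$ (potentials telescope) and $\sum_{\text{fwd}}w=\sum_{\text{bwd}}w$. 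The forward deficit $-3\eps\alpha\sum_{\text{fwd}}w$ must be paid for by backward slack. If that direct accounting fails, the fallback is to interpret the penalty accounting as giving backward edges an excess of $3\eps\alpha w$ over the $-3\eps\alpha w$ floor, which exactly cancels the forward deficit.

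Once nonnegativity of every cycle length under $\ell_\phi - C'$ is established, the "equivalently" clause is immediate: potentials do not change cycle lengths, and $G_{f,\phi}-C'$ is by definition $G_f$ with lengths $\ell_\phi-C'$. I would double-check the sign convention, since $C'$ here is subtracted rather than added as a moving cut, and make sure the constant $6$ is the one the balance argument needs.
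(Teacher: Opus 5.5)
There is a genuine gap, and it is at the step you flagged: you are trying to prove the version with $C'$ \emph{subtracted}, and that version is false. Take any original edge $e=(u,v)$ with $0<f(e)<U(e)$. The residual edges $(u,v)$ and $(v,u)$ form a $2$-cycle whose $\ell_\phi$-length telescopes to $0$. Since $C'$ assigns $6\eps\alpha w(e)>0$ to the backward edge $(v,u)$, this cycle has length $-6\eps\alpha w(e)<0$ under $\ell_\phi-C'$, so no per-edge estimate can rescue that version.

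Your planned claim that backward residual edges keep $\ell_\phi(e)\ge 3\eps\alpha w(e)$ also fails on its own. After a backward edge is created its tail potential can keep decreasing, and backward edges can become admissible, with $\ell_\phi(e)\le-\eps\alpha w(e)$. Any admissible cycle cancelled in the Push step must contain one, and the proof of \Cref{lem:stflow} explicitly handles admissible backward edges. Even granting the claim, subtracting $6\eps\alpha w$ would leave backward edges at $-3\eps\alpha w$, with no slack left over for the forward edges.

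Your justification of the ``equivalently'' clause reverses the paper's convention. Here $G-C$ means lengths \emph{increased} by $C$, so $G_{f,\phi}-C'$ carries lengths $\ell_\phi+C'$. That is the true statement, and it is the one \Cref{lem:approx-pd} uses. Its inequality $-\ell(e)+\phi^*(v)-\phi^*(u)\ge-6\eps\alpha w(e)$ on reverse residual edges is exactly the shortest-path inequality for $\ell_\phi+C'$.

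With the correct sign, the proof follows in one line from the invariant and needs no special property of backward edges. Your derivation of $\ell_\phi(e)\ge-3\eps\alpha w(e)$ is essentially \Cref{lem:invariance}. One small slip: at push time an admissible edge is only known to satisfy $\ell_\phi\le-\eps\alpha w$, not $-2\eps\alpha w$. That still gives the newly created reverse edge $\ell_\phi\ge\eps\alpha w$.

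The floor holds for \emph{every} residual edge, forward or backward. For a cycle $\Gamma$ in $G_f$, let $W_f$ and $W_b$ be the total $w$-weight of its forward and backward edges. Topological indices telescope around $\Gamma$, so $W_f=W_b$. Hence
\[
\sum_{e\in\Gamma}\bigl(\ell_\phi(e)+C'(e)\bigr)\;\ge\;-3\eps\alpha\,(W_f+W_b)+6\eps\alpha\,W_b\;=\;0 .
\]
The constant $6=2\cdot 3$ comes from $W_f+W_b=2W_b$. The penalty on backward edges pays for their own floor and, through $W_f=W_b$, for the floor of the forward edges. This is the argument implicit in the paper: the invariant summed around a cycle. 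Your fallback about backward edges having an ``excess over the floor'' is not this accounting and is not needed.
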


Given \Cref{lem:nonegcycle}, we can run a single-source shortest path algorithm for graphs with negative edges but no
negative cycles (e.g., nearly-linear-time implementations as in~\cite{abs-2511-08551}) on $G_f$ with edge lengths
$\ell_\phi(\cdot)-C'(\cdot)$ and source $s$.
Let $\phi':V\to\mathbb{R}$ be the resulting distance labels.

Finally, we define the dual slack variables $C:E(G_f)\to\mathbb{R}_{\ge 0}$ by
\[
C(e)\;:=\;
\begin{cases}
\phi(v)-\phi(u)-\ell_G(e), & \text{if $e=(u,v)$ is saturated in $f$ and }\ \ell_G(e)<\phi(v)-\phi(u),\\
0, & \text{otherwise},
\end{cases}
\]
and return $(f,\ \phi+\phi',\ C)$ as the primal--dual output.

\subsection{Correctness}\label{subsec:correctness}

Recall our guessing assumption: there exists an $s$--$t$ flow $f^*$ with
\[
\val(f^*)\in[q,2q]
\qquad\text{and}\qquad
\ell_G(f^*)\in[\alpha n q,\,2\alpha n q].
\]

We first prove an invariant of the algorithm maintained by the lazy admissibility checks.

\begin{lemma}\label{lem:invariance}
At any time during the algorithm, every residual edge $e\in E(G_f)$ satisfies
\[
\ell_\phi(e)\;\ge\;-3\eps\alpha\cdot w(e).
\]
Moreover, if $e$ is currently marked admissible, then
\[
\ell_\phi(e)\;\le\;-\eps\alpha\cdot w(e).
\]
\end{lemma}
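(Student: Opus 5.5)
We argue by induction over the execution, treating the two claims separately: the invariant $\ell_\phi(e)\ge -3\eps\alpha w(e)$ for all residual edges, and the upper bound $\ell_\phi(e)\le -\eps\alpha w(e)$ for admissible ones. Initially $\phi\equiv 0$, all lengths are non-negative, and no edge is admissible, so both hold. The reduced length of $e$ changes in only two ways: a relabel of one of its endpoints, or an augmentation that creates $e$ as a new (reverse) residual edge.

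\textbf{Augmentations.} If $e=(u,v)$ is admissible and we push along it, the reverse edge $\bar e=(v,u)$ may appear. It has $w(\bar e)=w(e)$ and $\ell_\phi(\bar e)=-\ell_\phi(e)\ge \eps\alpha w(e)\ge 0$, so the lower bound holds for $\bar e$. We take $\bar e$ to start out inadmissible; it satisfies no admissibility constraint until it is next checked. Edges that lose residual capacity disappear, and the bounds on the remaining edges are unaffected.

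\textbf{Relabels.} A relabel lowers $\phi(u)$ by $\eps\alpha$ for an active $u$ with no outgoing admissible edge. For an incoming edge $(x,u)$, $\ell_\phi=\ell+\phi(x)-\phi(u)$ increases. This only helps the lower bound. It can also break the admissible upper bound, but only by a drift of less than one checkpoint gap, as discussed below. For an outgoing edge $e=(u,v)$, $\ell_\phi(e)$ decreases by $\eps\alpha$. The key observation is what we know about such an $e$ at the moment of the relabel:
\begin{itemize}
\item $e$ is inadmissible, since $u$ has no outgoing admissible edge.
\item Its status was set at its last checkpoint, when $\phi(u)$ or $\phi(v)$ hit a multiple of $\eps\alpha w(e)$. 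We also count the creation of $e$ as a check point; a fresh reverse edge has $\ell_\phi\ge 0$.
\item At that checkpoint, being marked inadmissible meant $\ell_\phi(e)>-2\eps\alpha w(e)$.
\end{itemize}
Since then, potentials move only in steps of $\eps\alpha$, and before either endpoint reaches the next multiple, $\phi(u)$ can have dropped by at most $\eps\alpha w(e)$ in total. When it reaches that next multiple, the check fires. If $e$ is reclassified as admissible, $u$ is no longer eligible for relabeling. If $e$ stays inadmissible, we have a fresh lower bound of $-2\eps\alpha w(e)$. So between checks the reduced length decreases by at most $\eps\alpha w(e)$, giving $\ell_\phi(e)\ge -2\eps\alpha w(e)-\eps\alpha w(e)=-3\eps\alpha w(e)$. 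Decreases of $\phi(v)$ only raise $\ell_\phi(e)$.

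\textbf{Admissible upper bound.} An edge marked admissible at a check had $\ell_\phi\le -2\eps\alpha w(e)$. Afterwards, $\ell_\phi(e)$ can increase only through relabels of its head $v$; its tail $u$ is not relabeled while $u$ has this admissible out-edge. Before $\phi(v)$ drops by a full $\eps\alpha w(e)$, it hits a multiple of $\eps\alpha w(e)$ and $e$ is rechecked. Hence the increase is at most $\eps\alpha w(e)$, giving $\ell_\phi(e)\le -\eps\alpha w(e)$.

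\textbf{Main obstacle.} The delicate step is the bookkeeping of checkpoints: showing that between consecutive checks of $e$, the relevant endpoint potential can move by at most one checkpoint gap $\eps\alpha w(e)$. The issue is that checks are triggered by either endpoint, and each relabel step $\eps\alpha$ may be smaller than the gap (since $w(e)\ge 1$). We would state precisely that a check happens exactly when an endpoint crosses a multiple of $\eps\alpha w(e)$. Because steps are $\eps\alpha$ and $\eps\alpha w(e)$ is an integer multiple of $\eps\alpha$, crossing coincides with hitting. This bounds the drift by strictly less than one gap, and the case analysis above then closes the induction.
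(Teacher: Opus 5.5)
Your proof is correct and follows essentially the same route as the paper. Both fix the last check of $e$, bound the drift of the relevant endpoint potential by one checkpoint gap $\eps\alpha w(e)$, split on the status assigned at that check (using that the tail of an admissible edge is never relabeled), and handle pushes via $\ell_\phi(\bar e)=-\ell_\phi(e)$. Your extra bookkeeping makes explicit details the paper leaves implicit: you treat the creation of a reverse edge as an inadmissible checkpoint, you use one-sided monotonicity in each endpoint, and you note that $\eps\alpha w(e)$ is an integer multiple of the step $\eps\alpha$, so no checkpoint is skipped.
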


\begin{proof}
Fix a residual edge $e=(u,v)$ and let $\tau$ be the most recent time at which $e$ was \emph{checked}
under the ``Changing edge state'' rule.
At time $\tau$, by definition of the marking rule,
\[
e \text{ admissible at time }\tau \implies \ell_\phi^\tau(e)\le -2\eps\alpha\, w(e),
\qquad
e \text{ inadmissible at time }\tau \implies \ell_\phi^\tau(e)> -2\eps\alpha\, w(e),
\]
where $\ell_\phi^\tau$ denotes the reduced length at time $\tau$.

\smallskip
\noindent\emph{Drift between checks.}
Between $\tau$ and the next check of $e$, each endpoint potential decreases by at most $\eps\alpha\,w(e)$
in absolute value, and in particular
\begin{equation}\label{eq:drift}
\bigl|(\phi(u)-\phi(v))-(\phi^\tau(u)-\phi^\tau(v))\bigr|
\;\le\;\eps\alpha\cdot w(e).
\end{equation}
Since $\ell_\phi(e)=\ell_G(e)+\phi(u)-\phi(v)$, the same bound holds for the change in $\ell_\phi(e)$:
\begin{equation}\label{eq:drift-lphi}
\bigl|\ell_\phi(e)-\ell_\phi^\tau(e)\bigr|\;\le\;\eps\alpha\cdot w(e).
\end{equation}

\smallskip
\noindent\emph{Lower bound for all residual edges.}
If $e$ was inadmissible at time $\tau$, then $\ell_\phi^\tau(e)>-2\eps\alpha\,w(e)$, and by~\eqref{eq:drift-lphi},
\[
\ell_\phi(e)\;\ge\;\ell_\phi^\tau(e)-\eps\alpha\,w(e)\;>\;-3\eps\alpha\,w(e).
\]
If $e$ was admissible at time $\tau$, then $\ell_\phi(e)$ cannot decrease since the potential of edges with at least one admissible outgoing edge cannot decrease. Moreover, certainly $\ell_\phi^\tau(e)\le -2\eps\alpha\,w(e)$, and again by~\eqref{eq:drift-lphi},
\[
\ell_\phi(e)\;\le\;\ell_\phi^\tau(e)+\eps\alpha\,w(e)\;\le\;-\eps\alpha\,w(e).
\]
Thus, the upper bound holds.

\smallskip
\noindent\emph{Effect of push operations.}
A push only changes the residual graph by (i) decreasing residual capacities on traversed edges and
(ii) possibly creating reverse residual edges.
Reduced lengths on existing residual edges are unchanged (potentials do not change during a push).
For a forward residual edge $a\!\to\! b$, then the newly created reverse residual edge is $b\!\to\! a$ with
\[
\ell_\phi(b,a)\;=\;-\ell_G(a,b)+\phi(b)-\phi(a)\;=\;-\bigl(\ell_G(a,b)+\phi(a)-\phi(b)\bigr)\;=\;-\ell_\phi(a,b).
\]
Since the push traverses only admissible edges, $\ell_\phi(a,b)\le -\eps\alpha\,w(a,b)$, hence
$\ell_\phi(b,a)\ge \eps\alpha\,w(a,b)\ge -3\eps\alpha\,w(b,a)$.
Thus the invariant is preserved by pushes as well.
\end{proof}

\paragraph{Admissible paths from $s$ cannot enter inactive vertices (proof of \Cref{lem:stflow}).}
We now show that before termination, following admissible edges from $s$ cannot reach an inactive vertex; this implies that
once every active vertex has an outgoing admissible edge, we can always find either an admissible $s$--$t$ path or an admissible cycle.

\begin{proof}[Proof of \Cref{lem:stflow}]
Consider any time during the algorithm at which $\phi(s)\ge -50\alpha n$ (in particular, this holds whenever we execute a Push step).
Suppose for contradiction that there exists a directed path $P$ in $G_f$ from $s$ to some vertex $v$ with $\phi(v)<-99\alpha n$
such that every edge of $P$ is admissible.

Let $e=(b,a)$ be the \emph{last} backward residual edge on $P$ (i.e., $e$ satisfies $b=v_i$, $a=v_j$ with $i>j$ in the fixed topological
order). Such an edge must exist: since all edges of $P$ are admissible, $\ell_\phi(P)\le 0$, and telescoping yields
\[
\ell_{G_f}(P)\;=\;\ell_\phi(P)+\phi(v)-\phi(s)\;\le\;\phi(v)-\phi(s)\;<\;-49\alpha n,
\]
so $P$ must contain a negative-cost residual edge. Because $G$ is a DAG with nonnegative edge costs, negative-cost residual edges can only be
backward residual edges (reverse edges created by the current flow).

Since $(b,a)$ is a backward residual edge, the forward edge $(a,b)$ in $G$ currently carries positive flow in $f$.
Because $G$ is a DAG, the support of $f$ decomposes into $s$--$t$ paths; hence $(a,b)$ lies on some $s$--$t$ flow path in this decomposition.
Let $Q$ denote the prefix of that path from $s$ to $a$. Every edge on $Q$ carries positive flow, so every edge of $Q$ appears in $G_f$
as a \emph{backward} residual edge; reversing $Q$ gives a residual path $Q^{\mathrm{rev}}$ from $a$ to $s$ consisting only of backward residual edges.
Moreover, since $Q$ follows the topological order, its total topological weight satisfies $w(Q)\le n$, hence also $w(Q^{\mathrm{rev}})\le n$.

Applying \Cref{lem:invariance} to each edge of $Q^{\mathrm{rev}}$ and summing, we obtain
\[
\ell_\phi(Q^{\mathrm{rev}})\;\ge\;-3\eps\alpha\cdot w(Q^{\mathrm{rev}})\;\ge\;-3\eps\alpha n.
\]
On the other hand, telescoping along $Q^{\mathrm{rev}}$ yields
\[
\ell_\phi(Q^{\mathrm{rev}})\;=\;\ell_{G_f}(Q^{\mathrm{rev}})+\phi(a)-\phi(s)\;\le\;\phi(a)-\phi(s),
\]
because every edge of $Q^{\mathrm{rev}}$ is a backward residual edge and thus has nonpositive original residual cost.
Combining the two displays gives
\[
\phi(a)-\phi(s)\;\ge\;-3\eps\alpha n,
\qquad\text{i.e.,}\qquad
\phi(a)\;\ge\;\phi(s)-3\eps\alpha n\;\ge\;-50\alpha n-3\eps\alpha n\;>\;-99\alpha n,
\]
using $\eps\le 0.1$. Thus $a$ is active.

Now consider the suffix $P[a\leadsto v]$ of $P$ starting at $a$ and ending at $v$.
By the choice of $(b,a)$ as the last backward residual edge on $P$, every edge of $P[a\leadsto v]$ is a \emph{forward} residual edge,
hence has nonnegative original cost.
For any forward residual admissible edge $x\to y$ we have
\[
\ell_\phi(x,y)=\ell_G(x,y)+\phi(x)-\phi(y)\le -\eps\alpha\,w(x,y)\le 0,
\]
and since $\ell_G(x,y)\ge 0$, it follows that $\phi(y)\ge \phi(x)$.
Therefore, potentials are nondecreasing along $P[a\leadsto v]$, and in particular $\phi(v)\ge \phi(a)>-99\alpha n$.
This contradicts $\phi(v)<-99\alpha n$ and completes the proof.
\end{proof}

\paragraph{The algorithm routes $\Omega(q)$ flow.}
We now show that upon termination we have routed a constant fraction of the guessed value $q$.

\begin{lemma}\label{lem:value-omega-q}
At termination, the algorithm outputs a flow $f$ with $\val(f)\ge 0.9\,q$.
\end{lemma}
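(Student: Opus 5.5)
We argue by contradiction: suppose that at termination $\val(f)<0.9q$. Then at least $\val(f^*)-\val(f)\ge 0.1q$ units of $f^*$ are still unrouted. The difference $f^*-f$ is a feasible circulation plus an $s$--$t$ flow in $G_f$. Decomposing it, the residual graph $G_f$ contains an $s$--$t$ flow $g$ with $\val(g)\ge 0.1q$ and $\ell_{G_f}(g)\le \ell_G(f^*)-\ell_G(f)+(\text{cycle part})$. We will instead bound the cost directly. Since $G$ is a DAG with nonnegative costs, every $s$--$t$ path of $f^*$ is forward, so $\ell_G(f^*)\le 2\alpha nq$. We then compare the potential drop $\phi(t)-\phi(s)$ against the reduced lengths along the paths of $g$.

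\textbf{Step 1: pin down the potentials at termination.} At termination $\phi(s)<-50\alpha n$. Since $t$ is never active, it is never relabeled, so $\phi(t)=0$. Telescoping along any $s$--$t$ residual path $P$ gives
\[
\ell_\phi(P)=\ell_{G_f}(P)+\phi(s)-\phi(t)<\ell_{G_f}(P)-50\alpha n.
\]

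\textbf{Step 2: the lower bound from \Cref{lem:invariance}.} Every residual edge satisfies $\ell_\phi(e)\ge -3\eps\alpha w(e)$, so $\ell_\phi(P)\ge -3\eps\alpha\, w(P)$. Averaging over the paths of $g$ yields
\[
\ell_{G_f}(g)\ge \val(g)\cdot 50\alpha n-3\eps\alpha\, w(g).
\]

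\textbf{Step 3: upper bound on $\ell_{G_f}(g)$ and $w(g)$.} Take $g$ to be the residual flow $f^*-f$, restricted via flow decomposition to its $s$--$t$ part; the leftover cycles can be dropped. Its cost is at most $\ell_G(f^*)-\ell_G(f)+(\text{cost of dropped cycles})$. To control this, I would bound the cost of $f^*-f$ directly: it equals $\ell_G(f^*)-\ell_G(f)\le 2\alpha nq$, because $\ell_G(f)\ge 0$ as a flow in a nonnegative DAG. We then need the cycle part to have nonnegative reduced cost, which follows from Step 2 applied to cycles up to a $3\eps\alpha w$ error. For the weight, note that the edges of $g$ are supported on edges of $f^*$ and $f$ (forward or reversed). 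This gives $w(g)\le w(f^*)+w(f)\le n(\val(f^*)+\val(f))\le 3nq\cdot O(1)$, since each flow path in a DAG has topological weight at most $n$.

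\textbf{Step 4: conclude.} Combining the bounds gives
\[
2\alpha nq\ \ge\ 0.1q\cdot 50\alpha n-O(\eps\alpha nq)=5\alpha nq-O(\eps\alpha nq).
\]
This is a contradiction for $\eps\le 0.1$.

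The main obstacle is Step 3: correctly relating $f^*-f$, which may include circulations with negative residual cost, to an $s$--$t$ residual flow whose reduced length can be lower-bounded. I would handle this by applying the invariant to the entire decomposition of $f^*-f$, cycles included. The cycle terms then contribute at least $-3\eps\alpha w(\text{cycles})$, and their weight is bounded by the same $O(nq)$ topological-weight argument. As a result, no cycle needs to be discarded.
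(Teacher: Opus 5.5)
Your proposal is correct and takes essentially the same approach as the paper. Once you settle on applying the invariant to the full residual representation $g$ of $f^*-f$, cycles included, your argument matches the paper's: telescope with $\phi(s)<-50\alpha n$ and $\phi(t)=0$, lower-bound $\ell_\phi(g)\ge -3\eps\alpha\sum_e g(e)w(e)$ using $\sum_e g(e)w(e)\le W(f^*)+W(f)<3nq$, and compare against $\ell_{G_f}(g)=\ell_G(f^*)-\ell_G(f)\le 2\alpha nq$. The earlier detour of discarding cycles is unnecessary, since cycles contribute nothing to the potential term; note also that its displayed bound for the path part should subtract the cycle cost rather than add it.
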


\begin{proof}
Let termination mean $\phi(s)<-50\alpha n$. Note also that $\phi(t)$ never changes (we never relabel $t$), hence $\phi(t)=0$.

Assume for contradiction that $\val(f)<0.9\,q$.
Since $\val(f^*)\ge q$, the value gap satisfies
\[
\Delta \;:=\;\val(f^*)-\val(f)\;\ge\;0.1\,q.
\]

Let $g$ be the standard residual representation of the difference $f^*-f$ in $G_f$:
for each original edge $e$ we send $\max\{f^*(e)-f(e),0\}$ units on the forward residual edge $e$,
and $\max\{f(e)-f^*(e),0\}$ units on the backward residual edge $e^{\mathrm{rev}}$.
Then $g$ is a feasible $s$--$t$ flow in $G_f$ of value $\val(g)=\Delta$.
Moreover, since $\ell_G$ is nonnegative on original edges,
\[
\ell_{G_f}(g)\;\le\;\ell_G(f^*)\;\le\;2\alpha n q.
\]
(The inequality $\ell_{G_f}(g)\le \ell_G(f^*)$ holds because each backward residual edge has nonpositive cost, and $\ell_G(f)\ge 0$.)

We now derive a lower bound on $\ell_{G_f}(g)$ that contradicts this.
By flow conservation and telescoping of potentials,
for any $s$--$t$ flow $g$ we have
\begin{equation}\label{eq:telescoping}
\ell_\phi(g)\;=\;\ell_{G_f}(g)+\phi(s)\val(g)-\phi(t)\val(g)
\;=\;\ell_{G_f}(g)+(\phi(s)-\phi(t))\Delta.
\end{equation}
Rearranging and using $\phi(t)=0$ and $\phi(s)<-50\alpha n$ gives
\begin{equation}\label{eq:lower1}
\ell_{G_f}(g)\;=\;\ell_\phi(g)+(\phi(t)-\phi(s))\Delta
\;\ge\;\ell_\phi(g)+50\alpha n\cdot \Delta.
\end{equation}

Next, apply \Cref{lem:invariance} to lower bound $\ell_\phi(g)$:
\[
\ell_\phi(g)\;=\;\sum_{e\in E(G_f)} g(e)\,\ell_\phi(e)
\;\ge\;-3\eps\alpha\sum_{e\in E(G_f)} g(e)\,w(e).
\]
It remains to upper bound the weighted sum $\sum_e g(e)w(e)$.
For any $s$--$t$ flow $h$ on a DAG, define its $w$-weight by $W(h):=\sum_e h(e)w(e)$.
Since $G$ is a DAG, $h$ decomposes into $s$--$t$ paths; for any such path $p$,
\[
\sum_{(v_i,v_j)\in p} w(v_i,v_j)\;=\;\sum_{(v_i,v_j)\in p} (j-i)\;=\;(\text{index}(t)-\text{index}(s))\;\le\;n,
\]
hence $W(h)\le n\cdot \val(h)$.
Applying this to $f^*$ and $f$ and noting that $g$ uses at most the forward part of $f^*$ and the backward part of $f$,
we obtain
\[
\sum_e g(e)w(e)\;\le\;W(f^*)+W(f)\;\le\;n\val(f^*)+n\val(f)\;\le\;n\cdot(2q+0.9q)\;<\;3nq.
\]
Therefore,
\[
\ell_\phi(g)\;\ge\;-3\eps\alpha\cdot 3nq\;=\;-9\eps\alpha n q.
\]
Plugging into~\eqref{eq:lower1} and using $\Delta\ge 0.1q$ yields
\[
\ell_{G_f}(g)
\;\ge\;-9\eps\alpha n q + 50\alpha n\cdot 0.1q
\;=\;\bigl(5-9\eps\bigr)\alpha n q
\;>\;2\alpha n q,
\]
for $\eps\le 0.1$.
This contradicts $\ell_{G_f}(g)\le 2\alpha n q$, completing the proof.
\end{proof}

\paragraph{Primal--dual gap.}
We will use the following exact (and standard) identity, which cleanly formalizes the
“slackness $\Rightarrow$ small gap’’ step.

\begin{lemma}[Exact primal--dual gap identity]\label{lem:pd-gap-identity}
Consider the min-cost flow primal/dual pair:
\[
\text{(P)}\quad \min \sum_{e\in E}\ell(e)f(e)
\ \ \text{s.t. flow conservation and } 0\le f(e)\le U(e),
\]
\[
\text{(D)}\quad \max \sum_{u\in V}(-D(u))\psi(u)-\sum_{e\in E}U(e)C(e)
\ \ \text{s.t. } \psi(v)-\psi(u)\le \ell(e)+C(e),\ C(e)\ge 0.
\]
For any primal-feasible $f$ and any dual-feasible $(\psi,C)$,
\begin{equation}\label{eq:gap-identity}
\sum_{e\in E}\ell(e)f(e)\;-\;
\Bigl(\sum_{u\in V}(-D(u))\psi(u)-\sum_{e\in E}U(e)C(e)\Bigr)
\;=\;
\sum_{e\in E} f(e)\,\gamma(e)
\;+\;
\sum_{e\in E} (U(e)-f(e))\,C(e),
\end{equation}
where $\gamma(e)\coloneqq \ell(e)+C(e)-(\psi(v)-\psi(u))\ge 0$ is the dual slack on $e=(u,v)$.
\end{lemma}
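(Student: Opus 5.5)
The identity is purely algebraic, so the plan is to expand the right-hand side edge by edge and collapse it with flow conservation, following the same telescoping used in the proof of \Cref{lem:smallresidual}.

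First, substitute the definition $\gamma(e)=\ell(e)+C(e)-(\psi(v)-\psi(u))$ into $\sum_e f(e)\gamma(e)$. This gives three sums:
\[
\sum_{e}f(e)\ell(e)\;+\;\sum_e f(e)C(e)\;-\;\sum_{e=(u,v)}f(e)\bigl(\psi(v)-\psi(u)\bigr).
\]

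Second, add the term $\sum_e (U(e)-f(e))C(e)$. The two $f(e)C(e)$ contributions cancel, leaving exactly $\sum_e U(e)C(e)$. The right-hand side of \eqref{eq:gap-identity} is therefore
\[
\sum_e \ell(e)f(e)+\sum_e U(e)C(e)-\sum_{e=(u,v)}f(e)\bigl(\psi(v)-\psi(u)\bigr).
\]

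Third, regroup the last sum by vertices:
\[
\sum_{e=(u,v)} f(e)\bigl(\psi(u)-\psi(v)\bigr)=\sum_{u}\psi(u)\Bigl(\sum_{e\in\delta^+(u)}f(e)-\sum_{e\in\delta^-(u)}f(e)\Bigr).
\]
By primal feasibility (flow conservation), the bracketed quantity equals $D(u)$, so the sum is $\sum_u D(u)\psi(u)$. Substituting back, the right-hand side becomes
\[
\sum_e\ell(e)f(e)-\Bigl(\sum_u(-D(u))\psi(u)-\sum_e U(e)C(e)\Bigr),
\]
which is exactly the left-hand side.

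Nonnegativity of $\gamma(e)$ follows directly from the dual constraint $\psi(v)-\psi(u)\le\ell(e)+C(e)$. Nonnegativity of the second term on the right follows from $f\le U$ and $C\ge 0$, though the identity itself does not need either fact.

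There is no real obstacle here. The only step needing care is the sign convention in the telescoping: the demand convention is out-flow minus in-flow equals $D(u)$, and the dual objective carries $-D(u)$, so the signs must be tracked consistently when regrouping by vertices.
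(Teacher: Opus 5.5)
Your proposal is correct and follows the paper's proof: both telescope $\sum_{e=(u,v)} f(e)\bigl(\psi(u)-\psi(v)\bigr)$ into $\sum_u D(u)\psi(u)$ via flow conservation and cancel the $f(e)C(e)$ terms. The only difference is direction, since you expand the right-hand side into the left while the paper starts from the left, and your signs check out.
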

\begin{proof}
Expand the dual objective and use flow conservation:
\[
\sum_{u\in V} D(u)\psi(u)
=
\sum_{u\in V}\psi(u)\Bigl(\sum_{e\in\delta^+(u)}f(e)-\sum_{e\in\delta^-(u)}f(e)\Bigr)
=
\sum_{e=(u,v)\in E} f(e)\bigl(\psi(u)-\psi(v)\bigr).
\]
Hence
\[
\sum_{e}\ell(e)f(e) + \sum_u D(u)\psi(u) + \sum_e U(e)C(e)
=
\sum_e f(e)\bigl(\ell(e)+\psi(u)-\psi(v)\bigr) + \sum_e U(e)C(e).
\]
Add and subtract $\sum_e f(e)C(e)$ and rearrange to obtain~\eqref{eq:gap-identity}.
\end{proof}

\begin{lemma}\label{lem:approx-pd}
Let $\phi^*\coloneqq \phi+\phi'$, where $\phi'$ is computed as in the analysis (with $\phi'(s)=0$),
and let $C$ be the algorithm's output.
Then the primal--dual gap satisfies
\[
\ell_G(f)\;-\;\Bigl(\sum_{u\in V}(-D(u))\phi^*(u)-\sum_{e\in E}U(e)C(e)\Bigr)
\;\le\;
6\eps\alpha \sum_{e\in E} f(e)\,w(e)
\;\le\;
6\eps\alpha n\cdot \val(f).
\]
\end{lemma}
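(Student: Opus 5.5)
The plan is to apply the gap identity of \Cref{lem:pd-gap-identity} with $\psi=\phi^*$ and the returned $C$, and bound both terms on the right-hand side. First I will check dual feasibility of $(\phi^*,C)$. Since $\phi'$ is a shortest-path label in $G_f$ under the lengths $\ell_\phi-C'$, every residual edge $e=(u,v)$ satisfies $\phi'(v)\le \phi'(u)+\ell_\phi(e)-C'(e)$. Unrolling $\ell_\phi$, this reads
\[
\phi^*(v)-\phi^*(u)\le \ell_G(e)-C'(e)\le \ell_G(e).
\]
For an original edge that is unsaturated, the forward residual edge exists, so the constraint holds with slack $0$. For a saturated original edge, $C(e)$ is defined precisely to restore feasibility. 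I expect to need to redefine $C$ using $\phi^*$ rather than $\phi$, i.e.\ $C(e):=\max\{0,\phi^*(v)-\phi^*(u)-\ell_G(e)\}$ on saturated edges. Then $(U(e)-f(e))C(e)=0$ for every edge, so the second sum in \eqref{eq:gap-identity} vanishes.

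It remains to bound $\sum_e f(e)\gamma(e)$, where $\gamma(e)=\ell(e)+C(e)-(\phi^*(v)-\phi^*(u))$. For an edge with $f(e)>0$, the reverse residual edge $(v,u)$ exists with length $-\ell_G(e)$. The shortest-path inequality on it gives
\[
\phi^*(u)-\phi^*(v)\le -\ell_G(e)-C'(v,u), \quad\text{i.e.}\quad \ell_G(e)-(\phi^*(v)-\phi^*(u))\le -C'(v,u)\le 0.
\]
Here the reverse edge is backward in the topological order, so $C'(v,u)=6\eps\alpha w(e)$, which gives even a negative bound. I will need to be careful here: $\gamma(e)\ge 0$ is required, so combining with the forward inequality (when $e$ is unsaturated) should force $\gamma(e)\le$ the $C'$ penalty. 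More carefully, I would use $\gamma(e)\le C'$-slack. The natural target is $\gamma(e)\le 6\eps\alpha w(e)$ on flow-carrying edges, from the fact that $\phi'$ differs from an exact potential by at most the penalties $C'$. I expect this sign and bookkeeping step, i.e.\ showing that the shortest path labels make the reverse-edge constraint tight up to $C'$, to be the main obstacle. If direct inequalities are insufficient, I would instead choose $C'$ only as a perturbation and argue via complementary slackness along flow paths.

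Summing $f(e)\cdot 6\eps\alpha w(e)$ gives the first inequality. For the second, I will decompose $f$ into $s$--$t$ paths on the DAG. Each path has telescoping topological weight at most $n$, as in the proof of \Cref{lem:value-omega-q}, so $\sum_e f(e)w(e)\le n\cdot\val(f)$.
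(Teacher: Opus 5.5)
Your skeleton matches the paper's proof: the gap identity of \Cref{lem:pd-gap-identity} with $(\psi,C)=(\phi^*,C)$, the observation that $C$ lives only on saturated edges so $\sum_e (U(e)-f(e))C(e)=0$, a bound on the slack $\gamma(e)$ of flow-carrying edges via the triangle inequality for $\phi'$ on the reverse residual edge $(v,u)$, and the DAG path decomposition giving $\sum_e f(e)w(e)\le n\cdot\val(f)$. Your remark that $C$ must be defined from $\phi^*$ rather than $\phi$ is correct, and it is how the paper's proof actually uses $C$.

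However, the step that carries the lemma, $\gamma(e)\le 6\eps\alpha w(e)$ whenever $f(e)>0$, is not proved. Reading the lengths literally as $\ell_\phi-C'$, you derive $\phi^*(v)-\phi^*(u)\ge \ell(e)+6\eps\alpha w(e)$. This contradicts the forward inequality $\phi^*(v)-\phi^*(u)\le \ell(e)$ for every edge with $0<f(e)<U(e)$. That is not a bookkeeping issue. Under $\ell_\phi-C'$, such an edge closes a residual $2$-cycle $u\to v\to u$ of length $\ell_\phi(u,v)+\ell_\phi(v,u)-6\eps\alpha w(e)=-6\eps\alpha w(e)<0$, so the labels $\phi'$ do not exist at all. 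Neither of your fallbacks (combining with the forward inequality, or complementary slackness along flow paths) gives an upper bound on $\gamma(e)$.

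The missing idea is the sign of the penalty. In the paper's moving-cut notation, $G_{f,\phi}-C'$ means lengths $\ell_\phi+C'$, so backward residual edges are made \emph{longer} by $6\eps\alpha w(e)$. This is the form in which \Cref{lem:nonegcycle} is true. On any residual cycle $Z$ of a DAG the index differences telescope to zero, so forward and backward topological weights coincide. Hence \Cref{lem:invariance} gives $\ell_\phi(Z)\ge -3\eps\alpha w(Z)=-C'(Z)$.

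With $\phi'$ the distances from $s$ under $\ell_\phi+C'$, the triangle inequality on $(v,u)$ reads $-\ell(e)+\phi^*(v)-\phi^*(u)\ge -6\eps\alpha w(e)$. That is, $\ell(e)-(\phi^*(v)-\phi^*(u))\le 6\eps\alpha w(e)$, hence $\gamma(e)\le C(e)+6\eps\alpha w(e)$. If $C(e)>0$, then $\gamma(e)=0$ by the $\phi^*$-based definition of $C$; otherwise $\gamma(e)\le 6\eps\alpha w(e)$. Forward residual edges have $C'=0$, so your feasibility argument for unsaturated edges is unchanged, and the rest of your proof goes through.
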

\begin{proof}
We apply Lemma~\ref{lem:pd-gap-identity} with $(\psi,C)=(\phi^*,C)$.
By construction, $C(e)\ge 0$ and $(\phi^*,C)$ is dual-feasible (the algorithm sets $C(e)$ exactly
to fix the upper-capacity constraints on saturated edges).

It remains to upper bound the two nonnegative terms in~\eqref{eq:gap-identity}.
We claim:

\smallskip
\noindent\textbf{(i) Saturated edges do not contribute to the second term.}
If $C(e)>0$, then $e$ is saturated, i.e.\ $f(e)=U(e)$, hence $(U(e)-f(e))C(e)=0$.
Therefore the second sum in~\eqref{eq:gap-identity} vanishes.

\smallskip
\noindent\textbf{(ii) Edges with $f(e)>0$ have slack $\gamma(e)\le 6\eps\alpha\,w(e)$.}
Fix an original edge $e=(u,v)$ with $f(e)>0$.
Then the reverse residual edge $(v,u)$ exists in $G_f$.
By the way $\phi'$ is defined in the analysis (as a distance potential on the modified residual graph),
we have for this residual edge the inequality
\[
\bigl(-\ell(e)\bigr) + \phi^*(v)-\phi^*(u) \;\ge\; -6\eps\alpha\,w(e),
\]
equivalently,
\begin{equation}\label{eq:lower-bound-tight}
\phi^*(v)-\phi^*(u)\;\ge\;\ell(e)-6\eps\alpha\,w(e).
\end{equation}
On the other hand, if $f(e)<U(e)$ then the forward residual edge $(u,v)$ also exists, and the same reasoning gives
$\ell(e)+\phi^*(u)-\phi^*(v)\ge 0$, i.e.\ $\phi^*(v)-\phi^*(u)\le \ell(e)$.
If $f(e)=U(e)$, dual feasibility ensures $\phi^*(v)-\phi^*(u)\le \ell(e)+C(e)$.
In both cases,
\[
\gamma(e)=\ell(e)+C(e)-(\phi^*(v)-\phi^*(u))
\le \ell(e)+C(e)-(\ell(e)-6\eps\alpha w(e))
= C(e)+6\eps\alpha w(e).
\]
Moreover, when $f(e)>0$ and $f(e)<U(e)$ we have $C(e)=0$ (by construction, $C$ is only needed for saturated edges),
so $\gamma(e)\le 6\eps\alpha w(e)$; when $f(e)=U(e)$ we also have $\gamma(e)=0$ by definition of $C(e)$.
Thus in all cases with $f(e)>0$ we have $\gamma(e)\le 6\eps\alpha w(e)$.

\smallskip
Combining (i) and (ii) with~\eqref{eq:gap-identity} gives
\[
\ell_G(f)-\text{Dual}(\phi^*,C)
=
\sum_{e} f(e)\gamma(e)
\le
6\eps\alpha \sum_e f(e)w(e).
\]
Finally, every unit of $s$--$t$ flow can be decomposed into an $s$--$t$ path following the topological order,
and along such a path the $w$-weights telescope to at most $n$; hence
$\sum_e f(e)w(e)\le n\cdot \val(f)$.
This yields the stated bound.
\end{proof}

\subsection{Complexity (Implementation Details)}\label{subsec:runningtime}

We implement the algorithm almost identically to Appendix~A of~\cite{BernsteinBST24} using dynamic trees
(link--cut trees). For completeness we spell out the implementation details here.
The only new feature is that, when a newly chosen admissible edge would create a directed cycle in the maintained structure,
we immediately augment along that cycle (cycle-cancel) before continuing.

\paragraph{Bounding the number of augmentations.}
We first bound how many augmentations (Push operations) occur.

\begin{lemma}\label{lem:sum-inv-w}
For any directed graph on vertices ordered as $(v_1,\dots,v_n)$, letting $w(v_i,v_j)=|j-i|$, we have
\[
\sum_{e\in E}\frac{1}{w(e)}\;\le\;\sum_{d=1}^{n-1}\frac{n-d}{d}\;=\;O(n\log n)\;=\;\tO{n}.
\]
\end{lemma}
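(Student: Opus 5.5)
The plan is to count edges by their topological weight. Every edge $e=(v_i,v_j)$ of a simple directed graph on the ordered vertex set $(v_1,\dots,v_n)$ has $w(e)=|j-i|\in\{1,\dots,n-1\}$; we may take $i\neq j$, since self-loops never appear in a DAG or its residual graph. So I will group the edges according to the value $d=w(e)$.

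For a fixed $d$, I bound the number of edges with $w(e)=d$. Such an edge joins an unordered pair $\{v_i,v_{i+d}\}$ with $1\le i\le n-d$, so there are exactly $n-d$ such pairs. The paper intends the right-hand side $\sum_d (n-d)/d$, which charges one edge per pair and weight. I will make this hold by identifying each residual edge $(v_i,v_j)$ with the underlying original edge of $G$ between $v_i$ and $v_j$, with the forward or backward orientation fixed by the DAG order. This gives
\[
\sum_{e\in E}\frac1{w(e)}=\sum_{d=1}^{n-1}\frac{|\{e:w(e)=d\}|}{d}\le\sum_{d=1}^{n-1}\frac{n-d}{d}.
\]

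If both orientations, or a bounded number of parallel copies, are allowed per pair, the count gains only a constant factor, which the $O(\cdot)$ absorbs. The last step is the harmonic estimate:
\[
\sum_{d=1}^{n-1}\frac{n-d}{d}\le n\sum_{d=1}^{n-1}\frac1d=O(n\log n)=\tO{n}.
\]

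The main obstacle is the multigraph issue. The paper allows parallel edges, and the residual graph creates reverse copies, so the literal count $n-d$ per weight class only holds for simple graphs, or after merging parallel edges between the same ordered pair. I would state explicitly that parallel edges with the same endpoints are merged, which is harmless for the push–relabel analysis since they share the same weight and checkpoints. Without this, I would fall back on the bound $(n-d)\cdot c$ where $c$ is the maximum multiplicity. Everything else is routine.
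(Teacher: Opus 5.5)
Your proposal is correct and follows the paper's proof: group the edges by $d=w(e)$, bound each class by the $n-d$ index pairs at distance $d$, and finish with the harmonic sum. Your caveat is warranted, since the paper's count of ``ordered pairs $(i,i+d)$'' tacitly assumes a simple graph whose edges all point forward in the order, which holds for the original DAG edges it is applied to. One correction to your aside: parallel edges with different costs cannot simply be merged in a min-cost flow instance, so in that case the right repair is your multiplicity-factor fallback, not merging.
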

\begin{proof}
Group edges by their topological distance $d=w(e)$. For a fixed $d\in\{1,\dots,n-1\}$,
there are at most $n-d$ ordered pairs $(i,i+d)$, hence at most $n-d$ edges with $w(e)=d$.
Therefore
\[
\sum_{e\in E}\frac{1}{w(e)}
\;\le\;
\sum_{d=1}^{n-1}\frac{n-d}{d}
\;=\;
n\sum_{d=1}^{n-1}\frac{1}{d}-\sum_{d=1}^{n-1}1
\;=\;
O(n\log n).
\]
\end{proof}

\begin{lemma}\label{lem:augmentationsmall}
Throughout the algorithm, the total number of augmentations is $\tO{n^2/\eps}$.
Moreover, any fixed original edge $e$ (equivalently, its two residual directions over time) becomes saturated at most
$O\!\left(\frac{n}{\eps\,w(e)}\right)$ times.
\end{lemma}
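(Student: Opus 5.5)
The plan is a potential-range argument in the style of \cite{BernsteinBST24}. We prove the per-edge saturation bound first and then sum it with \Cref{lem:sum-inv-w}. The key observation is that every augmentation saturates at least one edge of the path or cycle $P$, namely the bottleneck edge achieving $\Delta=\min_{e\in P}u_f(e)$. So the total number of augmentations is at most the total number of saturation events, which is $\sum_e O(n/(\eps w(e)))=\tO{n^2/\eps}$ once the per-edge bound holds.

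\textbf{Potential range.} Potentials only decrease, in steps of $\eps\alpha$. A vertex is relabeled only while active, so $\phi(v)\ge -99\alpha n-\eps\alpha$ at all times. Hence every potential lies in $[-100\alpha n,0]$, and $\phi(t)=0$ throughout.

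\textbf{Per-edge saturation bound.} Fix an original edge $e=(u,v)$, with residual directions $e$ and $e^{\mathrm{rev}}$, both of topological weight $w(e)$. Suppose the forward direction is saturated at some time $\tau_1$. It was traversed while marked admissible, so by \Cref{lem:invariance}, $\ell_G(e)+\phi(u)-\phi(v)\le -\eps\alpha w(e)$ at $\tau_1$.

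The forward direction can be saturated again only after flow is pushed back along $e^{\mathrm{rev}}$ at some later time $\tau_2$. At $\tau_2$, $e^{\mathrm{rev}}$ was admissible, so $-\ell_G(e)+\phi(v)-\phi(u)\le -\eps\alpha w(e)$. Adding the two inequalities, the quantity $\phi(u)-\phi(v)$ must have moved by at least $2\eps\alpha w(e)$ between $\tau_1$ and $\tau_2$. Since potentials only decrease, $\phi(v)$ must have decreased by at least $2\eps\alpha w(e)$ in this interval. Symmetrically, between a saturation of $e^{\mathrm{rev}}$ and the next push on $e$, $\phi(u)$ must drop by at least $2\eps\alpha w(e)$.

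So each full cycle of saturating $e$ and then $e^{\mathrm{rev}}$ (or the reverse order) costs $\Omega(\eps\alpha w(e))$ of decrease in $\phi(u)$ or $\phi(v)$. Each endpoint has total decrease at most $100\alpha n$. This gives $O\!\left(\frac{n}{\eps w(e)}\right)$ saturations of $e$, with the factor $\alpha$ cancelling.

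\textbf{Main obstacle: edges that never appear in $G$.} Summing over residual edges is safe, since each residual direction corresponds to an original edge of $G$ and \Cref{lem:sum-inv-w} applies to any edge set on the ordered vertices. The step I expect to be hardest is justifying the telescoping above rigorously. The concern is that \Cref{lem:invariance} only bounds reduced lengths of edges currently marked admissible. So I need that a push traverses only edges marked admissible at that moment, and that the marks used are consistent with the lazy checkpoints. I would handle this by invoking the ``moreover'' part of \Cref{lem:invariance} at each push time, which gives exactly the two inequalities used.

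A secondary subtlety is cycle cancellations, which also saturate their bottleneck edge. These are covered by the same argument, since they push only along admissible edges.
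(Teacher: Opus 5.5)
Your proposal is correct and is essentially the paper's proof. You bound augmentations by saturation events, and you observe that between consecutive saturations of one residual direction the opposite direction must be pushed on while admissible. By \Cref{lem:invariance} and monotonicity of potentials, the relevant endpoint then loses at least $2\eps\alpha w(e)$ of its $O(\alpha n)$ potential budget, and you sum $O(n/(\eps w(e)))$ over edges via \Cref{lem:sum-inv-w}, exactly as the paper does. Phrasing the step as a push on $e^{\mathrm{rev}}$, rather than a saturation of it as your ``full cycle'' sentence loosely suggests, is the precise form of the paper's charging argument.
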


\begin{proof}
Every augmentation (along an $s$--$t$ path or a directed cycle) sends
\[
\Delta=\min_{a\in P}u_f(a)
\]
units and therefore saturates at least one residual edge on the augmented object $P$.
Thus it suffices to bound the total number of saturation events.

Fix an original edge $e=(v_i,v_j)$ and write $d\coloneqq w(e)=|j-i|$.
Consider two consecutive times at which \emph{the same direction} of $e$ becomes saturated as a residual edge.
At the moment a residual edge $a=(x,y)$ is saturated in an augmentation, it must be admissible, hence by \Cref{lem:invariance}
\[
\ell_\phi(a)\;\le\;-\eps\alpha\,w(a)\;=\;-\eps\alpha d.
\]
When $a$ is saturated, the reverse residual edge $a^{\mathrm{rev}}=(y,x)$ is created (or its capacity increases), and
\[
\ell_\phi(a^{\mathrm{rev}})=-\ell_\phi(a)\;\ge\;\eps\alpha d.
\]
For $a^{\mathrm{rev}}$ to become admissible (and hence be eligible to be saturated in some later augmentation), we must have
$\ell_\phi(a^{\mathrm{rev}})\le -\eps\alpha d$ at that later time.
Therefore, between the creation of $a^{\mathrm{rev}}$ and the time it is next saturated, its reduced length must drop by at least
$2\eps\alpha d$.

Now observe that for a fixed residual edge $(y,x)$,
\[
\ell_\phi(y,x)=\ell_G(y,x)+\phi(y)-\phi(x),
\]
and potentials \emph{never increase} (they only decrease under relabels).
Hence the only way $\ell_\phi(y,x)$ can decrease by $2\eps\alpha d$ is that $\phi(y)$ decreases by at least $2\eps\alpha d$
(since $\phi(x)$ cannot increase to help).
Each vertex potential decreases in steps of size $\eps\alpha$, and a vertex is relabeled only while active, so over the entire run
each vertex $u$ satisfies
\[
0\ge \phi(u)\ge -99\alpha n,
\]
i.e., $\phi(u)$ decreases by at most $99\alpha n$ in total.
Charging $2\eps\alpha d$ units of potential decrease of the tail endpoint each time $e$ is saturated (in the appropriate direction),
we conclude that $e$ can be saturated at most
\[
O\!\left(\frac{99\alpha n}{2\eps\alpha d}\right)
=
O\!\left(\frac{n}{\eps\,w(e)}\right)
\]
times overall.

Summing over all edges and using \Cref{lem:sum-inv-w}, the total number of saturation events (hence augmentations) is
\[
O\!\left(\sum_{e\in E}\frac{n}{\eps\,w(e)}\right)
=
O\!\left(\frac{n}{\eps}\right)\cdot \sum_{e\in E}\frac{1}{w(e)}
=
\tO{n^2/\eps}.
\]
\end{proof}

\paragraph{Dynamic trees.}
We use a standard link--cut tree data structure~\cite{SleatorT83,GoldbergT88}.

\newcommand{\Bnu}{\boldsymbol{\nu}}
\newcommand{\Bc}{\boldsymbol{c}}
\newcommand{\forward}[1]{\overrightarrow{#1}}
\newcommand{\backward}[1]{\overleftarrow{#1}}
\newcommand{\Bsink}{\boldsymbol{\nabla}}
\newcommand{\Bw}{\boldsymbol{w}}
\newcommand{\abs}{\boldsymbol{\mathrm{abs}}}

\begin{lemma}[Dynamic trees {\cite{SleatorT83,GoldbergT88}}]\label{lem:dyn-trees}
There is a data structure that maintains a collection of rooted trees $\cT$ together with values
$\Bnu\in\mathbb{R}^{E(\cT)}$ on the tree edges.
It supports the following operations in amortized $O(\log n)$ time:
\begin{itemize}
  \item $\textsc{Link}(u,v,\nu)$: add an edge making $v$ the parent of $u$ with stored value $\Bnu(u,v)\gets \nu$.
        Requires that $u$ is currently a root and $u,v$ are in different trees.
  \item $\textsc{Cut}(u)$: delete the parent edge of $u$ (if any), making $u$ a root.
  \item $\textsc{FindRoot}(u)$: return the root of the tree containing $u$.
  \item $\textsc{FindMin}(u)$: return an edge of minimum $\Bnu$ on the path from $u$ to its root
        (ties broken toward $u$).
  \item $\textsc{Add}(u,x)$: add $x$ to $\Bnu(e)$ for every edge $e$ on the path from $u$ to its root.
\end{itemize}
\end{lemma}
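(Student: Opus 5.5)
The plan is to follow the classical Sleator--Tarjan construction rather than design anything new: represent every rooted tree of $\cT$ by a decomposition into vertex-disjoint \emph{preferred paths}, store each preferred path in a splay tree keyed by depth, and reduce all five operations to a single primitive $\textsc{Access}(u)$. $\textsc{Access}(u)$ makes the root-to-$u$ path preferred and places exactly that path into one splay tree with $u$ at its root. Since $\Bnu$ lives on edges, we store $\Bnu(u,\mathrm{parent}(u))$ at the child vertex $u$, with the root carrying $+\infty$. Each splay node keeps a lazy additive tag and the subtree minimum of the stored values, together with a pointer to the position attaining it, breaking ties toward the deeper node. Rotations and tag pushes maintain these aggregates in $O(1)$ time each.

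With $\textsc{Access}$ available, the operations are implemented as follows. For $\textsc{Link}(u,v,\nu)$, access $u$ (a root, hence alone on its path), set its stored value to $\nu$, and set its path-parent pointer to $v$. For $\textsc{Cut}(u)$, access $u$, detach the left (shallower) splay subtree, and reset $u$'s value to $+\infty$. For $\textsc{FindRoot}(u)$, access $u$, walk to the leftmost node, and splay it. For $\textsc{FindMin}(u)$, access $u$ and read the aggregate minimum over $u$'s splay tree while excluding the root's $+\infty$. For $\textsc{Add}(u,x)$, access $u$ and add $x$ to the lazy tag of the whole splay tree, which covers exactly the edges on the $u$-to-root path. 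Correctness of each operation follows directly from the invariant that after $\textsc{Access}(u)$ the splay tree rooted at $u$ contains precisely the root-to-$u$ path in depth order.

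The only real work is the amortized bound, and I expect this to be the main obstacle. The bound has two parts.
\begin{enumerate}
\item \emph{Number of preferred-child changes.} Using the heavy--light decomposition of the represented forest, each access creates at most $O(\log n)$ new light preferred edges. The number of heavy preferred edges that are destroyed is bounded by the light ones created plus $O(1)$ per link or cut. Hence the number of path switches is $O(\log n)$ amortized per operation.
\item \emph{Cost of the splays.} We charge the splays with the splay-tree access lemma, using the potential $\sum_x \log(\text{size of } x\text{'s subtree in the virtual tree})$, where the virtual tree includes path-parent pointers. The splays performed during one access telescope to $O(\log n)$ amortized, plus $O(1)$ per path switch.
\end{enumerate}
Combining the two parts gives $O(\log n)$ amortized per operation. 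I would check carefully that link and cut change the virtual-tree potential by only $O(\log n)$, and that the lazy tags never need to be pushed farther than the nodes already touched by splaying.
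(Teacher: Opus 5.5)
Your route is the standard splay-based Sleator--Tarjan link-cut tree. That is exactly what the paper relies on: it gives no proof of this lemma and only cites the dynamic-tree literature. Most of your plan is correct: the representation, the reduction of everything to \textsc{Access}, the lazy-tag and min-aggregate bookkeeping with ties broken toward the deeper node, and the two-part amortization.

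There is, however, one step that fails as written. Your \textsc{Link}$(u,v,\nu)$ accesses only $u$ and then hangs $u$ off $v$ via a path-parent pointer. Under your potential $\sum_x \log s(x)$ over the virtual tree, this increases $s(y)$ by $|T_u|$ for \emph{every} virtual ancestor $y$ of $v$, where $T_u$ is $u$'s tree. Splay trees can have linear depth. For instance, the root-to-$v$ path can be stored as a left chain with $v$ deepest, so that $v$ and its virtual ancestors have subtree sizes $1,2,\dots,k$. Then the potential jumps by $\sum_{i\le k}\log(1+|T_u|/i)=\Theta(n)$ when $k,|T_u|=\Theta(n)$. So the check you deferred does not go through for this implementation. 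The standard fix is to also call \textsc{Access}$(v)$ before attaching. Then $v$ is the root of its virtual tree, only $s(v)$ changes, and the potential rises by at most $\log n$. \textsc{Cut} is fine as you wrote it, since after \textsc{Access}$(u)$ only $s(u)$ decreases.

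There is also a smaller inaccuracy in the heavy--light count. The quantity you need to bound is the number of heavy edges that \emph{become} preferred. This is at most the number of heavy edges that stop being preferred, each of which is charged to a light edge becoming preferred or to the end of an access. On top of that, you must add the number of edges that turn from light to heavy during links and cuts. A link or cut changes subtree sizes along a whole root path, so it can create up to $O(\log n)$ new heavy edges, not $O(1)$. This still gives $O(\log n)$ amortized per operation, so your conclusion stands once \textsc{Link} is repaired.
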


\paragraph{Maintaining an admissible pseudoforest.}
During a Push phase (i.e., when no more relabels are possible), every active vertex has at least one outgoing admissible residual edge.
For each active vertex $u\neq t$, we maintain \emph{one chosen admissible outgoing edge} $p(u)=(u,v)$, called its \emph{parent edge}.
The chosen parent edges define a directed pseudoforest (outdegree at most $1$).
We maintain this structure inside dynamic trees as an \emph{actual forest} by immediately canceling any cycle that would be created:

\begin{itemize}
\item When $u$ becomes active and currently has no parent edge in $\cT$, choose some admissible outgoing residual edge $e=(u,v)$.
      If $u$ is a root and $u$ and $v$ are in different trees, we insert it via $\textsc{Link}(u,v,u_f(e))$.
\item If inserting $e=(u,v)$ would create a directed cycle (i.e., $u$ is a root but $\textsc{FindRoot}(v)=u$),
      then the cycle consists of the path from $v$ to $u$ in the current tree plus the edge $(u,v)$.
      We augment along this directed cycle by the bottleneck residual capacity (found via $\textsc{FindMin}(v)$ and the residual of $(u,v)$),
      update residual capacities using $\textsc{Add}(v,-\Delta)$ along the tree path, and decrease the residual of $(u,v)$ by $\Delta$.
      At least one edge on the cycle becomes saturated; we delete every saturated tree edge by cutting it from $\cT$.
      After the cycle is broken, we retry linking $u$ to some admissible outgoing edge (possibly the same one if it remains residual).
\end{itemize}

Whenever an edge is marked inadmissible (or its residual capacity drops to $0$), if it is currently used as a parent edge in $\cT$
we remove it using $\textsc{Cut}(\cdot)$ and then pick a replacement admissible outgoing edge for that child (if the child is active).
All such changes take $O(\log n)$ time per link/cut/update.

\paragraph{Augmenting on an $s$--$t$ path.}
When the structure is cycle-free in the component containing $s$, the chosen parent pointers define a (tree) path from $s$ to its root.
By \Cref{lem:stflow} (and the cycle-cancel rule above), this root is $t$.
We then augment along the unique tree path from $s$ to $t$:
we compute the bottleneck $\Delta$ via $\textsc{FindMin}(s)$ and apply $\textsc{Add}(s,-\Delta)$.
After this update, at least one tree edge on the path has residual capacity $0$; we repeatedly call $\textsc{FindMin}(s)$ and delete
any returned edge with $\Bnu=0$ (cutting it from $\cT$ and updating the corresponding residual reverse edge in $G_f$).
Each such deletion corresponds to a saturation event.

\paragraph{Total running time.}
We bound the total number of expensive events.

\begin{itemize}
\item \textbf{Relabels.} Each relabel decreases $\phi(u)$ by $\eps\alpha$, and $\phi(u)$ decreases from $0$ to below $-99\alpha n$
at most. Hence each vertex is relabeled $O(n/\eps)$ times and the total number of relabels is $O(n^2/\eps)$.
\item \textbf{Edge checks.} An edge $e=(u,v)$ is checked only when $\phi(u)$ or $\phi(v)$ hits a multiple of $\eps\alpha w(e)$.
Since each endpoint potential decreases by at most $O(\alpha n)$, the number of checks of $e$ is
$O\!\left(\frac{\alpha n}{\eps\alpha w(e)}\right)=O\!\left(\frac{n}{\eps w(e)}\right)$.
Summing over edges and using \Cref{lem:sum-inv-w} gives a total of $\tO{n^2/\eps}$ edge checks.
Each check causes $O(1)$ admissibility work plus $O(\log n)$ dynamic-tree updates in the worst case.
\item \textbf{Augmentations and saturated-edge deletions.}
By \Cref{lem:augmentationsmall}, the total number of augmentation steps is $\tO{n^2/\eps}$, and the total number of saturated-edge
deletions from $\cT$ is also $\tO{n^2/\eps}$. Each such event costs $O(\log n)$ amortized time.
\end{itemize}

Putting everything together, the total running time is $\tO{n^2/\eps}$.

\appendix
\section{Missing Proofs in \Cref{sec:LCED}}\label{sec:LCEDmissing}
\subsection{Union of Sparse Moving Cuts}\label{subsec:unionofsparsecuts}
In this section, we will prove \Cref{lem:existential_cut_sequence_different_length}.

We first need the following lemma from \cite{HaeuplerLSW25}.

\begin{lemma}[Lemma 3.17 of \cite{HaeuplerLSW25}]\label{lem:existential_cut_sequence}
Let $C_1, \ldots, C_k$ be an sequence of $(h,s)$-length $\phi$-sparse for some node-weighting $A$ in the graph $G$ where $h > 1$, $\phi < 1$, $1 \le \alpha \le \log n$ and
$s > \frac{4\log_2 n}{\alpha}$, then $\sum_{i} |C_i| \leq (2^{8\alpha+2}\phi\ln n)\cdot |A|$.
\end{lemma}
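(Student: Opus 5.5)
The plan is an amortized potential argument. I would track how the $A$-mass of balls around each vertex shrinks as the cuts are applied, and charge each cut's size to that shrinkage.

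\textbf{Step 1: witness demands.} For each $i$, sparsity of $C_i$ in $G_i:=G-\sum_{j<i}C_j$ gives an $A$-respecting, symmetric, $h$-length demand $D_i$ in $G_i$ with
\[
\sep_{hs}(C_i,D_i)\ge |C_i|/\phi .
\]
So it suffices to prove $\sum_i \sep_{hs}(C_i,D_i)\le 2^{8\alpha+2}\ln n\cdot |A|$. This reduces the claim to a statement about how much $A$-respecting short demand can be separated in total.

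\textbf{Step 2: the potential.} Split the interval $[h,hs]$ into $L:=\lfloor s/2\rfloor>2\log_2 n/\alpha$ radii $r_k:=2kh$. For a vertex $v$ in the current graph $G_i$, let
\[
a_k^{(i)}(v):=A\bigl(\ball_{G_i}(v,r_k)\bigr)
\]
(using out-balls, and symmetrically in-balls). This is nonincreasing in $i$, since lengths only grow, and nondecreasing in $k$. I would define
\[
\Phi_i:=\sum_v A(v)\sum_k 2^{-\alpha k}\ln a_k^{(i)}(v),
\]
or a close variant that weights the radii geometrically. Then $\Phi$ starts at most about $|A|\ln(\mathrm{poly}\, n)$ above its minimum and never increases.

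\textbf{Step 3: each separated pair forces a drop.} Take a pair $(u,w)$ counted in $\sep_{hs}(C_i,D_i)$. Then $\dist_{G_i}(u,w)\le h$, but $\dist_{G_{i+1}}(u,w)>hs$. Among the $L$ nested balls around $u$ there is a ``good'' index $k$ with $a_{k+1}^{(i)}(u)\le 2^{\alpha}a_k^{(i)}(u)$. Otherwise the mass would multiply by $2^{\alpha}$ more than $\log_2(n\,\mathrm{poly})/\alpha$ times, which is impossible since $s>4\log_2 n/\alpha$. In $G_{i+1}$, the point $w$ leaves the radius-$r_{k+1}$ ball of $u$, and so does every vertex near $w$.

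Using the $A$-respecting property of $D_i$, I would charge the demand $D_i(u,w)$ to a mass loss $\Omega(D_i(u,w)\cdot 2^{-O(\alpha)}/a_k(u))$ in $\ln a_{k}$ at $u$ (or at $w$, via in-balls). This converts separated demand into a potential drop of at least $2^{-8\alpha-2}\sep_{hs}(C_i,D_i)$. Summing over $i$ and comparing with the total range of $\Phi$, which is $O(|A|\ln n)$, gives the bound.

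\textbf{Main obstacle.} The hard part is the charging in Step 3. A single vertex $u$ may carry demand to many partners $w$, and the total demand out of $u$ is bounded only by $A(u)$, not by $a_k(u)$. I would handle this in two ways:
\begin{enumerate}
\item Average over the good radius index with weights $2^{-\alpha k}$, so that the drop in $\ln a_k(u)$ is at least the fraction of $u$'s ball mass that was separated.
\item Use symmetry of $D_i$, so the charge can be split between the $u$-side out-ball and the $w$-side in-ball.
\end{enumerate}
Making these constants come out to $2^{8\alpha+2}$ is where I expect the bookkeeping to be delicate. If this proves too messy, I would instead cite the argument of Lemma 3.17 of \cite{HaeuplerLSW25} directly, since the statement is exactly theirs.
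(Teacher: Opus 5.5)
Your Step 1 reduction is fine, but Step 3, which you flag yourself, is a genuine gap rather than delicate bookkeeping, and part of the problem is already in the setup of Step 2. There are three concrete failures.

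First, the claim that in $G_{i+1}$ every vertex near $w$ leaves the radius-$r_{k+1}$ ball of $u$ is false: a vertex $x$ near $w$ in $G_i$ can stay close to $u$ in $G_{i+1}$ (for instance, if the cut only lengthens the edge from $x$ to $w$). What $\dist_{G_{i+1}}(u,w)>hs$ actually gives is a dichotomy: for every $x$, $\dist_{G_{i+1}}(u,x)>hs/2$ or $\dist_{G_{i+1}}(x,w)>hs/2$. So the lost mass is split between the two endpoints.

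Second, suppose you do get a drop of order $D_i(u,w)/a_k(u)$ in $\ln a_k(u)$. It enters $\Phi$ multiplied by $A(u)2^{-\alpha k}$, so it pays for $D_i(u,w)$ only if $a_k(u)\le 2^{O(\alpha)}A(u)$ and $k=O(1)$. Neither holds in general. The vertex $u$ may sit in a heavy ball of which its separated partners are a negligible part. And the good index from your pigeonhole may be around $\log_2 n/\alpha$, where $2^{-\alpha k}$ is $1/\mathrm{poly}(n)$ rather than $2^{-O(\alpha)}$.

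Third, with $A$-masses, $\ln a_k(v)$ ranges over $[\ln A(v),\ln|A|]$. Both the pigeonhole and the claimed $O(|A|\ln n)$ range of $\Phi$ therefore silently assume $|A|/A(v)\le\mathrm{poly}(n)$, which is not given for an arbitrary node-weighting.

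The missing idea is to put the logarithm outside a single smooth weight and to charge per vertex rather than per pair. This is what the paper imports from \cite{HaeuplerLSW25}; its proof of \Cref{lem:existential_cut_sequence_general} exposes the skeleton, namely a potential $\sum_u A(u)\ln w_i(u)$ built from exponential distance weights, with $\ln w_1(u)\le \ln n$.

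Concretely, take for example $W_i(u)=\sum_{x\in V}2^{-\alpha \widehat d_{G_i}(u,x)/h}$ with $\widehat d_{G_i}(u,x)=\dist_{G_i}(u,x)+\dist_{G_i}(x,u)$. This counts vertices, so $1\le W_i(u)\le n$. With $a_k$ counting vertices of round-trip balls, $\ln\sum_k 2^{-\alpha k}a_k$ discretizes $\ln W_i$, whereas your $\sum_k 2^{-\alpha k}\ln a_k$ does not. The potential $\sum_u A(u)\ln W_i(u)$ is nonincreasing and starts at most $|A|\ln n$, and no good index is needed.

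Now take a separated witness pair $(u,w)$.
\begin{itemize}
\item Symmetry of $D_i$ gives $\widehat d_{G_i}(u,w)\le 2h$. Hence the contributions of each $x$ to $W_i(u)$ and $W_i(w)$ agree up to a factor $2^{2\alpha}$, and so do the totals.
\item Use the round-trip form of the dichotomy above. Summing the smaller of the two contributions over $x$ shows that one endpoint loses a $2^{-O(\alpha)}$ fraction of its weight. The reason is that $s>4\log_2 n/\alpha$ makes a vertex pushed beyond round-trip distance $hs/2$ contribute less than $2^{-\alpha s/2}<n^{-2}$ afterwards.
\item That endpoint's term $A(\cdot)\ln W_i(\cdot)$ then drops by $2^{-O(\alpha)}A(\cdot)$. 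This pays for all witness demand at that vertex at once, because $D_i$ is $A$-respecting. This is exactly what removes your ``many partners'' obstacle.
\end{itemize}
Equivalently, $C_i$ separates a $2^{-O(\alpha)}$ fraction of the canonical exponential demand. That is the inequality $\sep_{hs/2}(C_i,D^{i}_{h,A})\ge 2^{-8\alpha-1}|C_i|/\phi$ the paper quotes, and telescoping the potential finishes the proof.

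Your fallback of citing \cite{HaeuplerLSW25} is what the paper does. The self-contained argument you sketch does not close without this change.
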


We can generalize \Cref{lem:existential_cut_sequence} to allow the cut sequence to be a subsequence as follows.

\begin{lemma}\label{lem:existential_cut_sequence_general}
Let $A$ be a node-weighting on a graph $G$ with positive integral edge lengths and capacities.
Let $C_1,\ldots,C_k$ be a sequence of moving cuts, and let
\[
C_{a_1},\ldots,C_{a_z}
\]
be a subsequence of $C_1,\ldots,C_k$.
Let $h>1$, $\phi\in(0,1)$, $1\le \alpha\le \log n$, and
\[
s>\frac{4\log_2 n}{\alpha}.
\]
Suppose that for every $r\in[z]$, the cut $C_{a_r}$ is $(h,s)$-length $\phi$-sparse with respect to $A$ in the graph
\[
G-\sum_{j<a_r} C_j.
\]
Then
\[
\sum_{r=1}^z |C_{a_r}|
\le
(2^{8\alpha+2}\phi\ln n)\cdot |A|.
\]
\end{lemma}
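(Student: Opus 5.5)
The plan is to reduce \Cref{lem:existential_cut_sequence_general} to \Cref{lem:existential_cut_sequence} by merging the non-selected cuts into the selected ones.

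\textbf{Step 1: merge the gaps.} Set $a_0:=0$ and, for each $r\in[z]$, define
\[
\widehat C_r:=\sum_{a_{r-1}<j\le a_r} C_j ,
\]
with the cuts after $a_z$ discarded. Then $\sum_{r'<r}\widehat C_{r'}=\sum_{j<a_r}C_j-(\text{stuff})$; more precisely, the graph $G-\sum_{r'<r}\widehat C_{r'}-\sum_{a_{r-1}<j<a_r}C_j$ is exactly $G-\sum_{j<a_r}C_j$. So $G_r:=G-\sum_{r'<r}\widehat C_{r'}$ is the old graph minus the gap cuts $\Delta_r:=\sum_{a_{r-1}<j<a_r}C_j$.

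\textbf{Step 2: sparsity of the merged cut.} The key step is to show that $\widehat C_r=\Delta_r+C_{a_r}$ is $(h,s)$-length $\phi$-sparse with respect to $A$ in $G_r$. Take the witness demand $D$ for $C_{a_r}$. It is $A$-respecting and $h$-length symmetric in $G_r-\Delta_r$. Since $G_r$ has shorter lengths, $D$ is also $h$-length in $G_r$.

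A pair separated at distance $hs$ by $C_{a_r}$ in $G_r-\Delta_r$ is separated by $\widehat C_r$ in $G_r$, since both give the same final graph. Hence $\sep_{hs}(\widehat C_r,D)\ge\sep_{hs}(C_{a_r},D)$.

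However, $|\widehat C_r|=|C_{a_r}|+|\Delta_r|$ is larger, so sparsity of the merged cut can fail. This is the main obstacle.

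\textbf{Step 3: fix the obstacle by going inside the original proof.} Since merging inflates the size, I would not merge. Instead I would reuse the potential argument of \cite{HaeuplerLSW25} behind \Cref{lem:existential_cut_sequence}. That argument tracks a potential $\Phi$ of the current graph, something like the sum over $A$-weighted pairs of a $\log$-scaled function of truncated distances. The two properties it needs are:
\begin{itemize}
\item $\Phi$ is monotone non-decreasing under \emph{any} nonnegative length increase.
\item $\Phi$ is bounded overall by $O(2^{8\alpha}\ln n)|A|$.
\end{itemize}
Each $\phi$-sparse cut $C$ applied to the current graph increases $\Phi$ by at least $|C|/\phi$, up to the same constants.

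With those properties, the proof is direct. The interleaved non-selected cuts $C_j$ only increase $\Phi$ by monotonicity. Each selected $C_{a_r}$, being sparse in the graph $G-\sum_{j<a_r}C_j$ where it is actually applied, contributes an increase of at least $|C_{a_r}|/\phi$. Telescoping against the upper bound on $\Phi$ gives
\[
\sum_r|C_{a_r}|\le(2^{8\alpha+2}\phi\ln n)|A|.
\]

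So the plan is to restate the potential used in Lemma~3.17 of \cite{HaeuplerLSW25}, verify its monotonicity under arbitrary length increases (the only new ingredient), and rerun the telescoping step. I expect that monotonicity check to be the only real work. If the potential there is defined via balls or neighborhoods, monotonicity should follow because distances only grow.
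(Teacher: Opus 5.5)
Your Step~3 plan is correct and is essentially the paper's proof. The paper reruns the potential argument of Lemma~3.17 of \cite{HaeuplerLSW25} on $G_i=G-\sum_{j<i}C_j$, charges each selected $C_{a_r}$ only to its own potential change in $G_{a_r}$, and uses monotonicity of the potential under arbitrary length increases so that the interleaved cuts contribute only nonnegative terms, which then telescope against the initial bound $\ln n\cdot|A|$; you were right to discard the merging detour of Steps~1--2, since it inflates the cut size exactly as you observed.
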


\begin{proof}
This is proved by the same potential argument as in the proof of Lemma~3.17 of \cite{HaeuplerLSW25}, with only one change: instead of summing the potential drops over all cuts, we sum only over the subsequence \(C_{a_1},\ldots,C_{a_z}\).

More precisely, define
\[
G_1:=G,
\qquad
G_i:=G-\sum_{j<i} C_j
\qquad\text{for every }i\in[k+1].
\]
For each \(i\), let \(w_i\) denote the exponential distance weight \(w_h^\alpha\) with respect to the graph \(G_i\), and let \(D_i\) denote the corresponding exponential demand \(D^i_{h,A}\). Also define the potential
\[
P_i(u):=A(u)\ln(w_i(u))
\qquad\forall u\in V.
\]

Fix any \(r\in[z]\). Since \(C_{a_r}\) is \((h,s)\)-length \(\phi\)-sparse with respect to \(A\) in \(G_{a_r}\), the same argument as in Lemma~3.17 of \cite{HaeuplerLSW25} gives
\[
\sep^{a_r}_{h\cdot s/2}(C_{a_r},D_{a_r})
\ge 2^{-8\alpha-1}\cdot \frac{|C_{a_r}|}{\phi},
\]
and therefore
\[
\sum_{u\in V}\bigl(P_{a_r}(u)-P_{a_r+1}(u)\bigr)
\ge
2^{-8\alpha-2}\cdot \frac{|C_{a_r}|}{\phi}.
\]
Equivalently,
\[
|C_{a_r}|
\le
2^{8\alpha+2}\phi\sum_{u\in V}\bigl(P_{a_r}(u)-P_{a_r+1}(u)\bigr).
\]

Summing over all \(r\in[z]\), we obtain
\[
\sum_{r=1}^z |C_{a_r}|
\le
2^{8\alpha+2}\phi
\sum_{r=1}^z\sum_{u\in V}\bigl(P_{a_r}(u)-P_{a_r+1}(u)\bigr).
\]
Now the key observation is that the potential is monotone under arbitrary length increases: for every \(i\) and every \(u\in V\),
\[
P_i(u)\ge P_{i+1}(u),
\]
since passing from \(G_i\) to \(G_{i+1}\) only increases edge lengths. Hence all terms in the above sum are nonnegative, and so
\[
\sum_{r=1}^z\sum_{u\in V}\bigl(P_{a_r}(u)-P_{a_r+1}(u)\bigr)
\le
\sum_{i=1}^k\sum_{u\in V}\bigl(P_i(u)-P_{i+1}(u)\bigr).
\]
The latter telescopes to
\[
\sum_{i=1}^k\sum_{u\in V}\bigl(P_i(u)-P_{i+1}(u)\bigr)
\le
\sum_{u\in V} P_1(u).
\]
Therefore,
\[
\sum_{r=1}^z |C_{a_r}|
\le
2^{8\alpha+2}\phi\sum_{u\in V}P_1(u)
=
2^{8\alpha+2}\phi\sum_{u\in V}A(u)\ln(w_1(u)).
\]
Finally, as in Lemma~3.17 of \cite{HaeuplerLSW25}, we have \(\ln(w_1(u))\le \ln n\) for all \(u\), and so
\[
\sum_{r=1}^z |C_{a_r}|
\le
(2^{8\alpha+2}\phi\ln n)\cdot |A|.
\]
\end{proof}

Now we are ready to prove \Cref{lem:existential_cut_sequence_different_length}.
\cutsequence*
\begin{proof}
Partition the index set \([k]\) into dyadic buckets according to the values \(h_i\).
For each integer \(\ell\), let
\[
I_\ell:=\{\, i\in[k] : 2^\ell < h_i \le 2^{\ell+1}\,\}.
\]
Only \(O(\log H)\) of these buckets are nonempty.

Fix one nonempty bucket \(I_\ell\), and let
\[
\widehat h_\ell:=2^{\ell+1}.
\]
For each \(i\in I_\ell\), define a rescaled cut
\[
\widehat C_i:=\frac{2h_i}{\widehat h_\ell}\, C_i = \frac{h_i}{2^\ell}\, C_i.
\]
Since \(2^\ell < h_i \le 2^{\ell+1}\), we have
\[
1<\frac{2h_i}{\widehat h_\ell}\le 2,
\]
and therefore
\[
|C_i|\le |\widehat C_i| \le 2|C_i|.
\]

We now view \(\widehat C_i\) as a \(\widehat h_\ell\cdot (s/2)\)-length moving cut. Indeed,
\[
\widehat h_\ell\cdot \frac{s}{2}\cdot \widehat C_i(e)
=
2^{\ell+1}\cdot \frac{s}{2}\cdot \frac{h_i}{2^\ell} C_i(e)
=
h_i s\, C_i(e),
\]
so \(\widehat C_i\) induces exactly the same edge-length increases as \(C_i\). Hence
\[
G-\widehat C_i = G-C_i.
\]

Next, let \(D_i^\star\) be a witnessing \(A\)-respecting \(h_i\)-length symmetric demand showing that \(C_i\) is \((h_i,s)\)-length \(\phi\)-sparse. Since \(h_i\le \widehat h_\ell\), the same demand \(D_i^\star\) is also \(A\)-respecting \(\widehat h_\ell\)-length symmetric.
Moreover,
\[
\widehat h_\ell\cdot \frac{s}{2}
=
2^\ell s
\le
h_i s.
\]
Therefore every demand pair that is \(h_i s\)-separated by \(C_i\) is also \(\widehat h_\ell(s/2)\)-separated by \(\widehat C_i\). Hence
\[
\sep_{\widehat h_\ell (s/2)}(\widehat C_i,D_i^\star)
\ge
\sep_{h_i s}(C_i,D_i^\star).
\]
Combining this with \(|\widehat C_i|\le 2|C_i|\), we obtain
\[
|\widehat C_i|
\le
2|C_i|
\le
2\phi\cdot \sep_{h_i s}(C_i,D_i^\star)
\le
2\phi\cdot \sep_{\widehat h_\ell (s/2)}(\widehat C_i,D_i^\star).
\]
Thus \(\widehat C_i\) is \((\widehat h_\ell, s/2)\)-length \(2\phi\)-sparse with respect to \(A\) in the graph
\[
G-\sum_{j<i} C_j
=
G-\sum_{j<i} \widehat C_j,
\]
where equality again holds because \(C_j\) and \(\widehat C_j\) induce the same metric.

Now apply \Cref{lem:existential_cut_sequence_general} to the subsequence
\[
\{\widehat C_i : i\in I_\ell\}
\]
with parameters \(\widehat h_\ell\), \(2\phi\), and \(s/2\). This is valid because
\[
\frac{s}{2}>\frac{4\log_2 n}{\alpha}
\]
by assumption. Therefore,
\[
\sum_{i\in I_\ell} |\widehat C_i|
\le
(2^{8\alpha+2}\cdot 2\phi\ln n)\cdot |A|
=
O(2^{8\alpha+2}\phi\ln n)\cdot |A|.
\]
Since \(|C_i|\le |\widehat C_i|\), we also get
\[
\sum_{i\in I_\ell} |C_i|
\le
O(2^{8\alpha+2}\phi\ln n)\cdot |A|.
\]

Finally, summing over all nonempty buckets gives
\[
\sum_{i=1}^k |C_i|
\le
O(\log H)\cdot O(2^{8\alpha+2}\phi\ln n)\cdot |A|
=
O\!\bigl(2^{8\alpha+2}\phi\ln n\cdot \log H\bigr)\cdot |A|.
\]
If \(H\le n^{O(1)}\), then \(\log H=O(\log n)\), and hence
\[
\sum_{i=1}^k |C_i|
\le
O\!\bigl(2^{8\alpha+2}\phi\log^2 n\bigr)\cdot |A|.
\]
\end{proof}

\subsection{FatLDD on Directed Graphs}\label{sec:FatLDD}
In this section, we will fill in the missing proofs in \Cref{subsec:FatLDD}.

We first show that it is possible to construct an undirected graph approximating the round-trip distance of a directed graph. We need the following lemma about randomized low-diameter decomposition.

\begin{lemma}[\cite{AshvinkumarBCGH24}]\label{lem:LDD}
    There is a randomized algorithm that, given a directed graph $G=(V,E)$ with edge length $\ell$ and a parameter $d$, makes $\tO{1}$ calls to an exact SSSP oracle and computes a distribution of a partition of the vertex set $V=V_1\uplus V_2...\uplus V_k$ such that
    \begin{itemize}
        \item for each edge $e=(u,v)$, the probability that $e$ is \emph{reversed}, i.e., $u\in V_i,v\in V_j$ with $j<i$, is at most $O(\log^2n)\cdot w(e)/d$.
        \item for every $i\in[k]$ and every $u,v\in V_i$, we have $\dist_G(u,v)\le d$. 
    \end{itemize}
\end{lemma}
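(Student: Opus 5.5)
The plan is to obtain the lemma from the standard recursive directed ball-carving scheme (in the style of Bernstein--Nanongkai--Wulff-Nilsen and its refinements) with random radii. Each level of the recursion either carves balls of random radius around \emph{light} vertices or certifies that the current vertex set already has small diameter. I keep track of an ordering of the carved pieces so that the only reversed edges are boundary edges of carved balls, and then charge each such edge to the randomness of the radius.

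\textbf{Step 1 (light/heavy classification).} In the current subgraph $H$ on $n_H$ vertices, sample $O(\log n)$ random vertices. Run one SSSP from each and one SSSP in the reverse graph. This estimates, for every $v$, whether $|\ball^{\mathrm{out}}_H(v,d/4)|$ and $|\ball^{\mathrm{in}}_H(v,d/4)|$ exceed $n_H/2$. A vertex is out-light (in-light) if its out-ball (in-ball) is small.

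If every vertex is heavy in both directions, then for any $u,v$ the sets $\ball^{\mathrm{out}}(u,d/4)$ and $\ball^{\mathrm{in}}(v,d/4)$ intersect. Hence $\dist_H(u,v)\le d/2$, and we output $H$ as a single part; note that the diameter here is measured inside $H$.

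\textbf{Step 2 (carving light vertices).} Otherwise, repeatedly pick a light vertex $v$ and draw a radius $r$ from a geometric/exponential distribution with mean $\Theta(d/\log n)$, truncated at $d/4$. Carve the ball in the light direction.
\begin{itemize}
\item A carved \emph{in-ball} is placed \emph{before} the remainder. An edge from the ball to the remainder is then forward. An edge from the remainder into the ball is reversed, but such an edge lies on the boundary of the in-ball.
\item Symmetrically, a carved \emph{out-ball} is placed \emph{after} the remainder.
\end{itemize}
In both cases, by memorylessness of the radius, a boundary edge $e$ is cut with probability $O(w(e)\log n/d)$. Each carved ball has at most $n_H/2$ vertices, and the remainder after carving consists only of heavy vertices. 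We recurse on each carved ball, because a ball has only weak diameter, and on the remainder. Since sizes halve along the recursion, every vertex participates in $O(\log n)$ levels. Summing the per-level bound over these levels gives a reversal probability of $O(\log^2 n)\,w(e)/d$. The final order of the parts is the one induced by the nested placements, and the base cases guarantee $\dist_G(u,v)\le d$ within each $V_i$.

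\textbf{Step 3 (main obstacle: $\tO{1}$ SSSP calls).} Carving balls one at a time would need one SSSP per ball. To avoid this, I would carve all light vertices of a level simultaneously. Give each light vertex a random exponential shift, i.e.\ a random start time. Then a \emph{single} SSSP from a super-source, with edges of length equal to the shift to each center, assigns every vertex to the center minimizing the shifted distance; this is done separately for the in- and out-light classes. The carving is processed in batches over $O(\log n)$ geometric scales of radius.

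The delicate part is to show that this parallel clustering still yields the following two properties.
\begin{itemize}
\item Each reversed edge is charged only $O(w(e)\log n/d)$ per level. This follows from the exponential-shift analysis: the probability that $u$ and $v$ get different centers is at most $O(\log n)\,w(e)/d$.
\item The clusters can be consistently ordered. I would order them by center rank and need to verify that an edge between clusters is reversed only when it is cut.
\end{itemize}
With $O(\log n)$ levels, each using $\tO{1}$ SSSP calls in total across the disjoint subproblems (run as one call on their disjoint union), the overall number of oracle calls is $\tO{1}$.
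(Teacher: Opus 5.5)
The paper does not prove this lemma; it imports it from \cite{AshvinkumarBCGH24}. Judged on its own, your Steps~1--2 are the sequential Bernstein--Nanongkai--Wulff-Nilsen carving, and they are fine except for one slip. The heavy remainder should be output as a single part, since for heavy $u,v$ you already have $\mathrm{dist}_G(u,v)\le\mathrm{dist}_H(u,v)\le d/2$. You should not recurse on it: the remainder need not shrink, so recursing breaks your $O(\log n)$-depth argument. More importantly, that scheme needs one shortest-path computation per carved ball, so it does not give $\tilde O(1)$ oracle calls. The oracle bound therefore rests entirely on Step~3, and Step~3 has a genuine gap.

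\textbf{First bullet of Step~3.} The claim is false for directed graphs. The exponential-shift analysis only bounds a one-sided event: for out-clusters, that the best and second-best shifted distances at the tail $u$ differ by less than $w(e)$. It does not bound the probability that $u$ and $v$ get different centers. For example, let $u,v$ both be centers, with $u$ reached by no other center and no $v$-to-$u$ path. Then $u$ stays in its own cluster, and $v$ joins it only if $\delta_u>\delta_v+w(e)$. So they are separated with probability at least $1/2$.

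\textbf{Second bullet of Step~3.} All the weight therefore falls on the ordering claim, which you leave unverified, and for this clustering it fails for every choice of order. Take two centers $c_1,c_2$ on a directed cycle $c_1\to x_1\to c_2\to x_2\to c_1$:
\begin{itemize}
\item the edges $c_i\to x_i$ have length $D=\Theta(d/\log n)$ (the mean shift), and $x_1\to c_2$, $x_2\to c_1$ have length $1$;
\item a large strongly connected blob $Y$ is attached by edges $x_i\to Y$ and $Y\to\{c_1,c_2,x_1,x_2\}$ of length $d/4$.
\end{itemize}
Then $x_1,x_2$ and $Y$ are heavy, and $c_1,c_2$ are the only light vertices (out-light only). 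With constant probability both shifts lie in $[D,2D]$. In that event, minimizing shifted distance puts $x_1$ with $c_1$ and $x_2$ with $c_2$, in two distinct parts. The unit-length edges $x_1\to c_2$ and $x_2\to c_1$ then run between these parts in opposite directions. Whatever order you choose, one of them is reversed with probability $\Omega(1)$, far above the allowed $O(\log^2 n)/d$.

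\textbf{What is missing.} Simultaneous shifted-distance clustering does not emulate a sequential carving order in a directed graph. You need an assignment rule that comes with its own carving order, so that every reversed edge is a boundary edge of its claimant's ball. One example: each vertex goes to the first center, in a random order, whose memoryless-radius ball in $H$ contains it, and parts are ordered by that order. You would also need an argument that such a rule can be evaluated with $\tilde O(1)$ black-box SSSP calls. That is the nontrivial part of the statement, and the proposal does not supply it.
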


\begin{restatable}{lemma}{roundtripapprox}\label{lem:apxroundtrip} There is a randomized algorithm that, given a directed graph $G=(V,E)$ with edge lengths $\ell$, constructs an undirected graph $G'=(V,E')$ using $\tO(1)$ calls to an exact directed SSSP oracle such that, with high probability, for every $u,v\in V$, \[ \Rdist_G(u,v)\le \dist_{G'}(u,v)\le O(\log^2 n)\cdot \Rdist_G(u,v). \] \end{restatable}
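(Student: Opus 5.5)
We build $G'$ as a union of ``cluster stars'' obtained from independent samples of the directed LDD of \Cref{lem:LDD}, taken over all dyadic scales. Fix a scale $d=2^k$ for every $k$ with $2^k\le nN$ ($O(\log n)$ scales). For each scale, draw $\Theta(\log n)$ independent partitions $V_1\uplus\cdots\uplus V_k$ from \Cref{lem:LDD} with parameter $d$. Each is computed with $\tO{1}$ SSSP calls, so the total is $\tO{1}$ calls.

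For every cluster $V_i$ of every sample, we fix a center $c_i\in V_i$. For each $x\in V_i$ we add an undirected edge $\{x,c_i\}$ of length $2d$. The lower bound is then immediate. Every cluster has strong directed diameter at most $d$, so $\Rdist_G(x,c_i)\le 2d$. Every edge of $G'$ therefore has length at least the round-trip distance between its endpoints. Since $\Rdist_G$ is a metric, the triangle inequality along any path of $G'$ gives $\Rdist_G(u,v)\le\dist_{G'}(u,v)$.

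For the upper bound, fix $u,v$ with $r=\Rdist_G(u,v)$. Let $C$ be the closed walk consisting of a shortest $u\to v$ path followed by a shortest $v\to u$ path, so $\ell(C)=r$. Choose the scale $d$ with $d\ge c\log^2 n\cdot r$ for a large constant $c$, and $d=O(\log^2 n\cdot r)$.

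By \Cref{lem:LDD} and a union bound over the edges of $C$, the probability that some edge of $C$ is reversed is at most $O(\log^2 n)\,\ell(C)/d\le 1/2$. This is the key observation: a partition with no reversed edge on the closed walk $C$ must place all of $C$ in a single cluster. Indeed, if the indices of the parts along the walk are non-decreasing and the walk returns to its start, they are constant.

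With $\Theta(\log n)$ independent samples at this scale, with probability $1-n^{-\Omega(1)}$ some sample puts $u$ and $v$ in a common cluster. In $G'$ they are then joined through its center at cost $4d=O(\log^2 n)\cdot r$. A union bound over all $n^2$ pairs gives the w.h.p. statement. When $r=0$ (or $u=v$) the claim is trivial; if needed, the smallest scale can be taken below the minimum positive length, and zero-length cycles handled by contracting SCCs of zero-length edges first.

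The main obstacle is the clean use of the reversal guarantee. I need that \Cref{lem:LDD} bounds the probability per edge, so that the bound sums over the walk. The weight $w(e)$ there is to be read as $\ell(e)$; edges repeated on $C$ only help. I also need that ``no reversed edge on a closed walk'' forces a single cluster, which uses that the partition is ordered. Both follow directly from the statement as given.
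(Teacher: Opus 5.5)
Your proof is correct and follows essentially the same route as the paper: independent LDD samples at every dyadic scale, a star on each cluster, and a union bound over the edges of the two shortest paths at scale $d=\Theta(\log^2 n\cdot r)$, with $\Theta(\log n)$ repetitions for high probability. Using star edges of length $2d$ (the paper uses $d$) and arguing that cluster indices along a reversal-free closed walk must be constant both make explicit two points the paper glosses over; just let the dyadic scales run up to $\Theta(nN\log^2 n)$ rather than $nN$, so that the required scale always exists.
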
 \begin{proof} For each $d=2^i$ from $i=1$ to $i=C\cdot \log n$ for sufficiently large $C$, we construct $O(\log n)$ independent LDDs with parameter $d$ using \Cref{lem:LDD}. For each computed LDD $V_1,...,V_k$ with parameter $d$, and each $j\in[k]$, we add edges from every node in $V_j$ to a single node $w\in V_j$ with length $d$, which results in the final graph $G'=(V,E)$. Notice that constructing LDD makes $\tO{1}$ calls to exact directed SSSP oracle, and connecting the edges has complexity subsumed by the LDD computation. It remains to prove the correctness. \paragraph{Correctness.} Since for every two nodes $u,v\in V_j$, it is from the LDD property that $\dist_G(u,v)\le d$, we immediately get one side of the inequality $\Rdist_G(u,v)\le \dist_{G'}(u,v)$. For every $u,v\in V$, let us focus on the LDD with parameter $d=C\log^2n\cdot \Rdist(u,v)$ for sufficiently large constant $C$, denoted by $V_1,...,V_k$. By the choice of $d$, such $d$ must exist. Consider the shortest path $p_{u,v}$ from $u$ to $v$ and the shortest path $p_{v,u}$ from $v$ to $u$. We have that $\ell(p_{u,v})+\ell(p_{v,u})=\Rdist(u,v)$. According to the property of LDD \Cref{lem:LDD}, with probability at least $0.1$, no edges in either $p_{u,v}$ or $p_{v,u}$ will be a reversed edge for $V_1,...,V_k$, which implies that $u,v\in V_j$ for some $j\in[k]$. Thus, in $G'$, the distance between $u,v$ is $2\cdot d=O(\log^2n)\cdot \Rdist_G(u,v)$. Since we repeat the LDD for $O(\log n)$ independent times, one of them will satisfy this inequality w.h.p. Applying union bound over all pairs of vertices gives the final result. \end{proof}
\FatLDD*

We need the following lemma for undirected graphs.

\begin{lemma}[Theorem 5.32 of \cite{HaeuplerHT24}]\label{lem:undirfatLDD}
For any $h_{\cov}>0$, $s>1$, and $\eps\in(0,1)$, and any undirected graph $G=(V,E)$ with edge lengths $\ell$,
there is an algorithm that computes a neighborhood cover with respect to $\dist_G$ with
covering radius $h_{\cov}$, separation factor $s$,
diameter
\[
h_{\diam}=\frac{1}{\eps}\cdot O(s)^{O(1/\eps)}\cdot h_{\cov},
\]
and width
\[
\omega=n^{O(\eps)}\cdot \log n.
\]
The algorithm runs in $m\cdot h_{\diam}\cdot n^{O(\eps)}$ time.
\end{lemma}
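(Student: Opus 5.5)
The statement above is imported verbatim from \cite{HaeuplerHT24}, so in the paper it is simply invoked. If I had to reprove it, I would use multi-scale ball carving in which separation is paid for by a geometric gap between consecutive radii. Concretely, I would fix radii
\[
r_j:=(4s)^j\cdot h_{\cov}\qquad\text{for } j=0,1,\ldots,J,\qquad J:=\lceil 1/\eps\rceil+1 .
\]
For every vertex $v$, the ball sizes $|\ball_{\dist_G}(v,r_j)|$ are nondecreasing in $j$ and bounded by $n$. Hence, by pigeonhole over the $1/\eps$ ratios, there is an index $j(v)\le J-1$ with
\[
|\ball(v,r_{j(v)+2})|\le n^{O(\eps)}\,|\ball(v,r_{j(v)})| .
\]
This "slow growth" index is what will make both the width and the separation work. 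The final diameter is about $r_J=O(s)^{O(1/\eps)}h_{\cov}$, which matches the claimed $h_{\diam}$ up to the $1/\eps$ factor.

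Next I would build, for each level $j$, $n^{O(\eps)}\log n$ clusterings by random-shift ball carving in the style of Miller--Peng--Xu and Awerbuch--Peleg. Sample centers, give each center an exponentially distributed radius shift in $[r_{j+1},r_{j+2}]$, and assign each vertex to the center maximizing shifted proximity. A cluster is kept only if its $s\cdot r_{j+1}$-neighborhood is not contested by another center; the surviving $s\cdot(\text{diameter})/2$-neighborhoods then serve directly as the separation witnesses. Keeping only uncontested clusters is what produces disjoint fat witnesses, and the radius range $[r_{j+1},r_{j+2}]$, being a factor $4s$ wider than the covering radius, leaves room for the buffer. For a vertex $v$ at its slow-growth level $j(v)$, only $n^{O(\eps)}$ centers compete for it. The standard exponential-clock analysis then shows that, with probability $n^{-O(\eps)}$, the ball $\ball(v,h_{\cov})$ lies entirely in one uncontested cluster. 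Repeating $n^{O(\eps)}\log n$ times per level and taking the union over the $J$ levels gives width $\omega=n^{O(\eps)}\log n$ with high probability. The running time is near-linear per clustering via multi-source Dijkstra, times $\omega$ and $J$, which is within the stated $m\cdot h_{\diam}\cdot n^{O(\eps)}$.

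The main obstacle is the \emph{separation} requirement. I must ensure that two different kept clusters of the same clustering are at distance at least $s\,h_{\diam}$, while discarding clusters does not destroy coverage. I would handle this with the shell argument: a cluster is discarded only if a competing center's shifted region reaches its $s r_{j+1}$-buffer. Slow growth at level $j(v)$ bounds the expected number of such competitors near $v$ by $n^{O(\eps)}$, and the memorylessness of the exponential shifts bounds the probability of such a conflict by a constant. Within this paper, the lemma is then combined with \Cref{lem:apxroundtrip}: run it on the undirected graph $G'$, whose distances approximate $\Rdist_G$ within $O(\log^2 n)$. Coverage is requested at radius $O(\log^2 n)\,h_{\cov}$ and separation factor $O(\log^2 n)\,s$. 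These choices account for the $O(s\log^2 n)^{O(1/\eps)}$ factor appearing in \Cref{lem:directedlddwithseparationFatLDD}.
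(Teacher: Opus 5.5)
The paper does not prove this lemma; it imports it from \cite{HaeuplerHT24}. Your sketch therefore has to stand on its own, and its central step fails: separation via ``uncontested'' clusters, with conflicts controlled by memorylessness of the shifts, does not go through.

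First, the parameters do not match the requirement. At level $j$ your clusters have radius up to $r_{j+2}=4s\,r_{j+1}$, so the clustering has diameter up to $2r_{j+2}$. Separation factor $s$ then needs a buffer $B\ge 2s\,r_{j+2}=8s^2r_{j+1}$, not $s\,r_{j+1}$. With your buffer the kept clusters are not $s$-separated, and the $s\cdot(\text{diameter})/2$-neighborhoods need not be disjoint.

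Second, and more fundamentally, any admissible buffer exceeds the whole spread $r_{j+2}-r_{j+1}$ of the shifts, and then the shifts no longer decide survival.
\begin{itemize}
\item Every sampled center $c'$ has positive shifted proximity to itself, so it lies in some nonempty cluster.
\item If $c'\in\ball(v,B)$ for a vertex $v\in S_c$, then either $c'\in S_c\subseteq\ball(v,2r_{j+2})$, or $S_c$ is contested.
\item Hence $S_c$ can survive only if no center was sampled in $\ball(v,B)\setminus\ball(v,2r_{j+2})$. This event depends only on the sampling, and at rate $p$ its probability is at most $\exp(-p\,|\ball(v,B)\setminus\ball(v,2r_{j+2})|)$.
\end{itemize}
The Miller--Peng--Xu exponential-clock bound is useful only when the padding is a small fraction of the shift scale. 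Here the padding is $s$ times the cluster radius. With $n^{\Omega(\eps)}$ expected centers in that annulus, which your bound of ``$n^{O(\eps)}$ competitors'' permits, survival has probability $\exp(-n^{\Omega(\eps)})$, not a constant.

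Two ingredients are missing:
\begin{itemize}
\item A slow-growth window that reaches the buffer scale. Yours, $[r_j,r_{j+2}]$, ends at the cluster radius, while the conflict region extends $\Theta(s)$ times farther.
\item A sampling rate calibrated to the size of that outer ball. This size varies with $v$, so it must be guessed among $O(\log n)$ dyadic values.
\end{itemize}
A version that works is as follows.
\begin{itemize}
\item Fix the cluster radius at $r_{j+1}$; no shifts are needed.
\item Take $r_{j+1}/r_j=\Theta(s)$ large enough that $(3+2s)r_{j+1}\le r_{j+2}$.
\item Keep $\ball(c,r_{j+1})$ if and only if no other center lies within $(2+2s)r_{j+1}$ of $c$.
\item Sample every vertex independently with a global probability $p$. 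For the right dyadic guess, $p\in[1/(2|\ball(v,r_{j+2})|),\,1/|\ball(v,r_{j+2})|]$.
\end{itemize}
Kept clusters then have diameter $2r_{j+1}$ and are more than $2s\,r_{j+1}$ apart. With probability at least $|\ball(v,r_j)|/(2e\,|\ball(v,r_{j+2})|)=n^{-O(\eps)}$, exactly one center falls in $\ball(v,r_{j+1}-h_{\cov})$ and no other center falls in $\ball(v,r_{j+2})$. This places $\ball(v,h_{\cov})$ inside a kept cluster.

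Two smaller points. With $J=\lceil 1/\eps\rceil+1$, your index $j(v)+2$ can exceed $J$. Also, taking the union over $\Theta(1/\eps)$ levels (and $O(\log n)$ densities) multiplies the number of clusterings. So width $n^{O(\eps)}\log n$ only follows when, for example, $\eps\ge\log\log n/\log n$, which is the regime this paper uses.
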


\begin{proof}[Proof of \Cref{lem:directedlddwithseparationFatLDD}]
Let $\alpha=O(\log^2 n)$ denote the distortion in \Cref{lem:apxroundtrip}, so that with high probability
\[
\Rdist_G(u,v)\le \dist_{G'}(u,v)\le \alpha\cdot \Rdist_G(u,v)
\qquad\text{for all }u,v\in V.
\]
Condition on this event.

\paragraph{Algorithm.}
We construct the desired neighborhood cover in three steps.

\begin{enumerate}
    \item Apply \Cref{lem:apxroundtrip} to $G$ and obtain an undirected graph $G'$.

    \item Apply the hopset construction from \cite{Cohen00} to $G'$. This yields a set $H$ of
    $n^{1+o(1)}$ edges such that:
    \begin{itemize}
        \item every hopset edge $(x,y)\in H$ has length $\dist_{G'}(x,y)$, and hence
        \[
        \dist_{G'\cup H}(u,v)=\dist_{G'}(u,v)
        \qquad\text{for all }u,v\in V;
        \]
        \item for every pair $u,v\in V$, there exists a path in $G'\cup H$ from $u$ to $v$
        using at most $h=n^{o(1)}$ edges and having length at most
        $1.1\cdot \dist_{G'}(u,v)$.
    \end{itemize}

    \item Define
    \[
    \lambda:=\frac{\alpha h_{\cov}}{h}.
    \]
    Let $\widehat{G}$ be the graph on vertex set $V$ and edge set $E(G')\cup H$ in which
    every edge $e$ of original length $w(e)$ is assigned the new length
    \[
    \widehat{\ell}(e):=\left\lceil\frac{w(e)}{\lambda}\right\rceil.
    \]
    Since all edge lengths in $\widehat{G}$ are positive integers, \Cref{lem:undirfatLDD}
    applies to $\widehat{G}$.

    We invoke \Cref{lem:undirfatLDD} on $\widehat{G}$ with covering radius $10h$ and
    separation factor $40\alpha s$. Let
    \[
    \widehat{\cN}=\{\widehat{\cS}_1,\dots,\widehat{\cS}_\omega\}
    \]
    denote the resulting neighborhood cover. For each $i\in[\omega]$, let
    $\widehat h_i$ be the diameter parameter of the clustering $\widehat{\cS}_i$.
    Thus
    \[
    10h\le \widehat h_i\le \widehat h_{\diam}
    \qquad\text{for all }i\in[\omega],
    \]
    where
    \[
    \widehat h_{\diam}
    =\frac{1}{\eps}\cdot O(\alpha s)^{O(1/\eps)}\cdot h.
    \]

    For each clustering $\widehat{\cS}_i$, we output the same partition of vertices,
    now viewed as a clustering in the metric $\Rdist_G$, and assign it diameter
    \[
    h_i:=\lambda\widehat h_i.
    \]
    For each cluster $S\in\widehat{\cS}_i$, define its witness set by
    \[
    S':=\ball_{\widehat{G}}(S,10\alpha s\,\widehat h_i).
    \]
\end{enumerate}

We claim that the resulting family is the desired neighborhood cover in $G$.

\paragraph{Correctness analysis.}
We verify the covering property, the diameter bound, the separation guarantee,
and the existence of separation witnesses.

\smallskip
\noindent
\emph{Covering radius.}
Fix any vertex $u\in V$ and let $v\in \ball_{\Rdist_G}(u,h_{\cov})$.
Then
\[
\dist_{G'}(u,v)\le \alpha h_{\cov}.
\]
By the hopset guarantee, there exists a path in $G'\cup H$ from $u$ to $v$ with at most $h$
edges and total length at most $1.1\alpha h_{\cov}$.
After scaling and rounding, the length of this path in $\widehat{G}$ is at most
\[
\frac{1.1\alpha h_{\cov}}{\lambda}+h
=1.1h+h
<10h.
\]
Hence
\[
\ball_{\Rdist_G}(u,h_{\cov})\subseteq \ball_{\widehat{G}}(u,10h).
\]
Since $\widehat{\cN}$ is a neighborhood cover in $\widehat{G}$ with covering radius $10h$,
there exists a cluster in some $\widehat{\cS}_i$ containing
$\ball_{\widehat{G}}(u,10h)$, and therefore also containing
$\ball_{\Rdist_G}(u,h_{\cov})$.
Thus the output has covering radius $h_{\cov}$ with respect to $\Rdist_G$.

\smallskip
\noindent
\emph{Diameter bound.}
Fix any clustering $\widehat{\cS}_i$ and any two vertices $x,y$ in the same cluster of
$\widehat{\cS}_i$.
Since that clustering has diameter $\widehat h_i$ in $\widehat{G}$,
\[
\dist_{\widehat{G}}(x,y)\le \widehat h_i.
\]
By construction of $\widehat{G}$, every edge $e$ satisfies
\[
w(e)\le \lambda\cdot \widehat{\ell}(e).
\]
Therefore every path of length $L$ in $\widehat{G}$ corresponds to a path in $G'\cup H$
of length at most $\lambda L$, and hence
\[
\dist_{G'}(x,y)=\dist_{G'\cup H}(x,y)\le \lambda\cdot \dist_{\widehat{G}}(x,y)
\le \lambda\widehat h_i=h_i.
\]
Using $\Rdist_G(x,y)\le \dist_{G'}(x,y)$, we get
\[
\Rdist_G(x,y)\le h_i.
\]
Thus the $i$th output clustering has diameter $h_i$.

Moreover,
\[
h_i\le \lambda\widehat h_{\diam}
=\frac{\alpha h_{\cov}}{h}\cdot
\frac{1}{\eps}\cdot O(\alpha s)^{O(1/\eps)}\cdot h
=
\frac{1}{\eps}\cdot O(\alpha s)^{O(1/\eps)}\cdot \alpha h_{\cov}.
\]
Since $\alpha=O(\log^2 n)$, this is
\[
h_i\le h_{\diam}
:=\frac{1}{\eps}\cdot O(s\log^2 n)^{O(1/\eps)}\cdot h_{\cov}.
\]
Hence the output has diameter bound $h_{\diam}$.

\smallskip
\noindent
\emph{Separation factor.}
Fix a clustering $\widehat{\cS}_i$, and let $S,T$ be two distinct clusters in it.
We show that
\[
\Rdist_G(S,T)\ge s\,h_i.
\]

Suppose for contradiction that there exist $x\in S$ and $y\in T$ such that
\[
\Rdist_G(x,y)< s\,h_i=s\lambda\widehat h_i.
\]
Then
\[
\dist_{G'}(x,y)\le \alpha\Rdist_G(x,y)< \alpha s\lambda\widehat h_i.
\]
By the hopset guarantee, there exists a path in $G'\cup H$ from $x$ to $y$ using at most $h$
edges and having total length at most
\[
1.1\alpha s\lambda\widehat h_i.
\]
After scaling and rounding, this yields
\[
\dist_{\widehat{G}}(x,y)\le 1.1\alpha s\,\widehat h_i+h.
\]
Since $\widehat h_i\ge 10h$, we have
\[
1.1\alpha s\,\widehat h_i+h
\le \left(1.1+\frac{1}{10}\right)\alpha s\,\widehat h_i
<40\alpha s\,\widehat h_i,
\]
contradicting the fact that distinct clusters in $\widehat{\cS}_i$ are at distance at least
\[
40\alpha s\,\widehat h_i
\]
in $\widehat{G}$.
Therefore
\[
\Rdist_G(S,T)\ge s\,h_i,
\]
and the output clustering has separation factor $s$.

\smallskip
\noindent
\emph{Separation witnesses.}
Fix a clustering $\widehat{\cS}_i$, and recall that for each cluster $S\in\widehat{\cS}_i$
we defined
\[
S':=\ball_{\widehat{G}}(S,10\alpha s\,\widehat h_i).
\]
We first show that these witness sets are pairwise disjoint.
Indeed, if $S\neq T$ and some vertex belonged to both $S'$ and $T'$, then by the triangle
inequality in $\widehat{G}$ we would have
\[
\dist_{\widehat{G}}(S,T)\le 20\alpha s\,\widehat h_i,
\]
contradicting the separation guarantee
\[
\dist_{\widehat{G}}(S,T)\ge 40\alpha s\,\widehat h_i.
\]

Now let $x$ be any vertex with
\[
\Rdist_G(x,S)\le \frac{s\,h_i}{2}
= \frac{s\lambda\widehat h_i}{2}.
\]
Then for some $y\in S$,
\[
\Rdist_G(x,y)\le \frac{s\lambda\widehat h_i}{2},
\qquad\text{hence}\qquad
\dist_{G'}(x,y)\le \frac{\alpha s\lambda\widehat h_i}{2}.
\]
Applying the hopset guarantee once more, there exists a path from $x$ to $y$ in $G'\cup H$
using at most $h$ edges and having total length at most
\[
1.1\cdot \frac{\alpha s\lambda\widehat h_i}{2}.
\]
After scaling and rounding, this gives
\[
\dist_{\widehat{G}}(x,S)\le 0.55\alpha s\,\widehat h_i+h.
\]
Since $\widehat h_i\ge 10h$, we obtain
\[
0.55\alpha s\,\widehat h_i+h<10\alpha s\,\widehat h_i,
\]
and therefore $x\in S'$.
Thus
\[
\ball_{\Rdist_G}\!\left(S,\frac{s\,h_i}{2}\right)\subseteq S'.
\]
So the sets $\{S':S\in\widehat{\cS}_i\}$ form a valid separation witness for the
$i$th clustering.

This proves that the output is a neighborhood cover with respect to $\Rdist_G$
with the claimed parameters, together with separation witnesses.

\paragraph{Running time analysis.}
The graph $\widehat{G}$ has
\[
|E(\widehat{G})|=|E(G')|+|H|=m+n^{1+o(1)}=\hO{m}
\]
edges.

The work of the call to \Cref{lem:undirfatLDD} on $\widehat{G}$ is
\[
|E(\widehat{G})|\cdot \widehat h_{\diam}\cdot n^{O(\eps)}.
\]
Since
\[
\widehat h_{\diam}
=
\frac{h}{\alpha h_{\cov}}\cdot h_{\diam},
\]
the assumption that both $h_{\diam}/h_{\cov}$ and $\omega=n^{O(\eps)}\log n$
are of order $n^{o(1)}$ implies that
\[
\widehat h_{\diam}=n^{o(1)}
\qquad\text{and}\qquad
n^{O(\eps)}=n^{o(1)}.
\]
Therefore the work of the FatLDD step is
\[
\hO{m}\cdot n^{o(1)}\cdot n^{o(1)}=\hO{m}.
\]
The preprocessing steps \Cref{lem:apxroundtrip} and the hopset construction also take
$\hO{m}$ time. Hence the total running time is $\hO{m}$.
\end{proof}

\subsection{The Cutmatch Algorithm}\label{sec:cutmatch}

In this section we will prove \Cref{lem:cutmatch}. 

We first prove two lemmas about using DAG projection in the way of length-bounded flow.

At first sight, it is not obvious how to use the dual part of the projection algorithm together with an oracle for $\OPDMCF{\kappa,\lambda}$ on DAGs. The following lemma is a convenient bridge.

\begin{lemma}\label{lem:MCFtoLCflow}
    There is an algorithm that takes as input a directed graph $G$ with positive polynomially bounded integer edge capacities and lengths, and an $s$-$t$ feasible demand $D$ on $G$, and in time proportional to one call to $\OPDMCF{\kappa,\lambda}$ on $G$, outputs
    \begin{itemize}
        \item a flow $f$ routing $D$ with congestion at most $\kappa$ and average cost $h=\cost(f)/\val(f)$, and
        \item a $h/\lambda$-length moving cut $C$ with size at most $\val(f)$ such that $\dist_{G-C}(s,t)\ge h/\lambda$.
    \end{itemize}
\end{lemma}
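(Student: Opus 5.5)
We call $\OPDMCF{\kappa,\lambda}$ once on $(G,D)$ and obtain a triple $(f,\phi,\tl)$. Write $\val(f)=|D|$ and $h:=\cost(f)/\val(f)$; this gives the primal part of the statement directly, since $f$ routes $D$ with congestion at most $\kappa$. Everything else is about converting the dual $(\phi,\tl)$ into an $h/\lambda$-length moving cut. This conversion mirrors the one carried out in the proof of \Cref{lem:weakLPboosting}.

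\textbf{Dual step.} Since $D$ is an $s$-$t$ demand, $\Dual(\phi,\tl)=|D|(\phi(t)-\phi(s))-\sum_e U(e)\tl(e)$. The oracle guarantee $\cost(f)\le\lambda\Dual(\phi,\tl)$ gives
\[
|D|\,(\phi(t)-\phi(s))-|\tl|\ \ge\ \frac{h\,|D|}{\lambda},
\]
where $|\tl|=\sum_e U(e)\tl(e)$. We may assume without loss of generality that $\phi$ is the shortest-path potential from $s$ in $G$ with lengths $\ell+\tl$. This only increases $\phi(t)-\phi(s)$, keeps feasibility, and leaves $\tl$ unchanged. Then $\dist_{G+\tl}(s,t)=\phi(t)-\phi(s)=:\Phi$. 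Set $h':=h/\lambda$. From the display we get
\[
\Phi\ \ge\ h'+\frac{|\tl|}{|D|}\ \ge\ h'
\qquad\text{and}\qquad
|\tl|\ \le\ |D|\,(\Phi-h').
\]

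\textbf{Scaling the cut.} If $\dist_G(s,t)\ge h'$, output $C=0$. Otherwise consider $\theta\mapsto\dist_{G+\theta\tl}(s,t)$ for $\theta\in[0,1]$. This function is continuous and nondecreasing, it is concave as a minimum of affine functions, it is below $h'$ at $\theta=0$, and it is at least $h'$ at $\theta=1$. Pick the smallest $\theta^*$ with value exactly $h'$, and let $C:=\theta^*\tl/h'$, viewed as an $h'$-length moving cut, so that $G-C=G+\theta^*\tl$ and $\dist_{G-C}(s,t)\ge h'$.

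\textbf{Size bound (main obstacle).} We need $|C|=\theta^*|\tl|/h'\le\val(f)=|D|$, i.e., $\theta^*|\tl|\le h'|D|$. Concavity gives $\dist_{G+\theta\tl}(s,t)\ge(1-\theta)\dist_G(s,t)+\theta\Phi\ge\theta\Phi$ since lengths are non-negative. Hence $\theta^*\le h'/\Phi$. Combining with $|\tl|\le|D|(\Phi-h')$,
\[
\theta^*|\tl|\ \le\ \frac{h'}{\Phi}\,|D|\,(\Phi-h')\ \le\ h'|D|.
\]
This is the needed bound. Care is needed here to use concavity and the lower bound $\Phi\ge h'+|\tl|/|D|$ together rather than the crude $\theta^*\le1$. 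Computing $\theta^*$ requires only shortest-path evaluations (or binary search), whose cost is subsumed by the oracle call.
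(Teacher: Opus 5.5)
Your proof is correct and is essentially the paper's argument: the paper also derives $h/\lambda\le L-|\tl|/\val(f)$ with $L=\dist_{G+\tl}(s,t)$ from the oracle guarantee. It then uses $\ell_G(P)+\theta\,\tl(P)\ge\theta\,(\ell_G(P)+\tl(P))$ for $\theta\le 1$, which is exactly your concavity bound $\dist_{G+\theta\tl}(s,t)\ge\theta\Phi$. The only difference is that the paper takes $\theta=h'/\Phi$ directly (i.e., $C=\tl/\Phi$ read as an $h'$-length cut); your own bound already shows this suffices, so the search for the minimal $\theta^*$ and the separate case $C=0$ are not needed.
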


\begin{proof}
Run $\OPDMCF{\kappa,\lambda}$ on $(G,D)$ and let $(f,\phi,\tl)$ be the output. By definition, $f$ routes $D$ with congestion at most $\kappa$; set $h=\cost(f)/\val(f)$.

Let $S=\sum_{e\in E}U(e)\tl(e)$, and let $\widetilde G$ be the graph obtained from $G$ by replacing each edge length $\ell(e)$ by $\ell(e)+\tl(e)$. Since $(f,\phi,\tl)$ is a $(\kappa,\lambda)$-\PDMCF{}, we have
\[
\cost(f)\le \lambda\bigl(\val(f)\cdot(\phi(t)-\phi(s))-S\bigr),
\]
so
\[
\frac{h}{\lambda}\le (\phi(t)-\phi(s))-\frac{S}{\val(f)}.
\]
By dual feasibility, $\phi(v)-\phi(u)\le \ell(e)+\tl(e)$ for every edge $e=(u,v)$, hence $\phi(t)-\phi(s)\le \dist_{\widetilde G}(s,t)$. Therefore
\[
\frac{h}{\lambda}\le \dist_{\widetilde G}(s,t)-\frac{S}{\val(f)}.
\]

Let $L:=\dist_{\widetilde G}(s,t)$ and define $C(e):=\tl(e)/L$ for every edge $e$. Viewed as an $L$-length moving cut, $G-C$ is exactly $\widetilde G$, so $\dist_{G-C}(s,t)=L$. Also,
\[
|C|=\sum_{e\in E}U(e)C(e)=\frac{S}{L}\le \val(f),
\]
where the last inequality follows from $h/\lambda \le L-S/\val(f)$.

It remains to show that the same cut values also form a $(h/\lambda)$-length moving cut with $\dist_{G-C}(s,t)\ge h/\lambda$. Since all edge lengths are positive and $h/\lambda\le L$, for every $s$-$t$ path $P$ we have
\[
\ell_G(P)+\frac{h/\lambda}{L}\tl(P)
\ge
\frac{h/\lambda}{L}\bigl(\ell_G(P)+\tl(P)\bigr).
\]
The right-hand side is at least $(h/\lambda)\cdot \dist_{\widetilde G}(s,t)/L = h/\lambda$. Hence every $s$-$t$ path in the graph with length function $\ell_G+(h/\lambda)\cdot C$ has length at least $h/\lambda$, i.e. $\dist_{G-C}(s,t)\ge h/\lambda$.
\end{proof}

Now we can combine this lemma with the projection algorithm.

\begin{corollary}\label{cor:PDMCFtoDAG}
    There is an algorithm that takes as input a directed graph $G$ with positive polynomially bounded integer edge capacities and lengths, an $s$-$t$ feasible demand $D$ on $G$, and a graph $G'$ associated with a $h$-length $(\kappa',\lambda')$-projection algorithm, and in time proportional to one call to $\OPDMCF{\kappa,\lambda}$ on a graph of size $O(|G'|)$ and one call to the projection algorithm, outputs
    \begin{itemize}
        \item a flow $f$ routing $D$ with congestion at most $\kappa\kappa'$ and average cost $a=\cost(f)/\val(f)$, and
        \item a $\min(h,a/(\lambda\lambda'))$-length moving cut $C$ with size at most $\val(f)$ such that
        \[
        \dist_{G-C}(s,t)\ge \min\!\left\{h,\frac{a}{\lambda\lambda'}\right\}.
        \]
    \end{itemize}
\end{corollary}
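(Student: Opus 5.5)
The plan is to run \Cref{lem:MCFtoLCflow} on the DAG $G'$ instead of on $G$, and then move both outputs back to $G$ using the two halves of the projection algorithm. Let $D'=\flpi(D)$ be the $\first{s}$--$\last{t}$ demand on $G'$. Call $\OPDMCF{\kappa,\lambda}$ on $(G',D')$ once. This is a single oracle call on a graph of size $O(|G'|)$.

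Before applying \Cref{lem:MCFtoLCflow} we need $D'$ to be feasible in $G'$. This comes from the second property of the projection in \Cref{lem:algorithmimpliesprojection}: it shows that $\flpi(D)$ is routable in $G'$ with congestion $1$ whenever $D$ is feasible in $G$. If only the $h$-length guarantee is available, I would instead work directly with the primal-dual triple returned by the oracle, which is all that \Cref{lem:MCFtoLCflow} actually uses. In that case feasibility is needed only to make the dual meaningful.

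\Cref{lem:MCFtoLCflow} then gives two objects on $G'$:
\begin{itemize}
\item a flow $f'$ routing $D'$ with congestion $\kappa$ and average cost $a=\cost(f')/\val(f')$;
\item an $a/\lambda$-length moving cut $C'$ with $|C'|\le\val(f')$ and $\dist_{G'-C'}(\first{s},\last{t})\ge a/\lambda$.
\end{itemize}

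For the primal side, the projection algorithm is stated only for congestion-$1$ flows. So I scale $f'/\kappa$, project it, and scale the result back up by $\kappa$. This produces a flow $f$ on $G$ with demand $\pi(D')=D$, the same cost and value (hence the same average cost $a$), and congestion at most $\kappa\kappa'$.

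For the dual side, feed $C'$ to the dual part of the projection algorithm. It returns a cut $C$ on $G$ of the same size, so $|C|\le\val(f)$, and of length $a/(\lambda\lambda')$. It also guarantees that whenever $\dist_{G-C}(s,t)\le h$, we have $\dist_{G'-C'}(\first{s},\last{t})\le\lambda'\dist_{G-C}(s,t)$.

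The distance bound then splits into two cases:
\begin{itemize}
\item If $\dist_{G-C}(s,t)>h$, the bound $\ge\min\{h,a/(\lambda\lambda')\}$ already holds.
\item Otherwise, $a/\lambda\le\dist_{G'-C'}\le\lambda'\dist_{G-C}(s,t)$, which gives $\dist_{G-C}(s,t)\ge a/(\lambda\lambda')$.
\end{itemize}

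The main obstacle is a length mismatch. The statement asks for a cut of length $\min(h,a/(\lambda\lambda'))$, while projection produces one of length $a/(\lambda\lambda')$. When $h<a/(\lambda\lambda')$, I would reinterpret $C$ as an $h$-length cut by keeping the cut values, so the size is unchanged, and using the length increase $hC\le (a/(\lambda\lambda'))C$. Distances can then only shrink, so I must check the bound $\ge h$ directly.

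To do that, note that $\ell+x C$ is monotone in $x$ and that $\ell\ge0$. If some $s$--$t$ path had length below $h$ under $\ell+hC$, then a scaling argument like the one at the end of \Cref{lem:MCFtoLCflow} would apply: $\ell(P)+hC(P)\ge (h/L)(\ell(P)+L\,C(P))$ with $L=a/(\lambda\lambda')$. Combined with the case analysis above, this yields $\ell(P)+hC(P)\ge(h/L)\cdot\min(h',L)$, where $h'$ is the distance under the $L$-length interpretation.

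If this scaling inequality comes out too lossy, the fallback is to apply the dual projection to $C'$ rescaled to length $\min(h\lambda',a/\lambda)$ before projecting. The proof of \Cref{lem:MCFtoLCflow} already shows that any length up to the true distance keeps the distance lower bound, so this rescaled cut still satisfies $\dist_{G'-C'}(\first{s},\last{t})\ge\min(h\lambda',a/\lambda)$. Then the $h$-length guarantee applies cleanly.
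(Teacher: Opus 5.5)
Your fallback is correct and is exactly the paper's proof, but it is not optional: the primary route fails when $h<a/(\lambda\lambda')$. Write $L:=a/(\lambda\lambda')$.

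\textbf{Why the primary route fails.} After projecting $C'$ at length $a/\lambda$, the $h$-length guarantee gives you, in this regime, only that $\dist_{G-C}(s,t)>h$ under $\ell+L\cdot C$. It says nothing about pairs farther apart than $h$. Reinterpreting $C$ at length $h$ and using $\ell\ge 0$ yields only
$\ell(P)+h\,C(P)\ge \tfrac{h}{L}\bigl(\ell(P)+L\,C(P)\bigr)>h^2/L$, which is strictly below $h$. This is exactly your bound $(h/L)\min(h',L)$ with only $h'>h$ known. The loss cannot be recovered from the information you have: a short $s$--$t$ path carrying cut mass just above $h/L$ is consistent with every guarantee available. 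The analogous rescaling in \Cref{lem:MCFtoLCflow} works only because there the target distance shrinks in proportion to the cut length. Here the threshold $h$ coming from the projection guarantee is fixed.

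\textbf{The fix, which is the paper's argument.} Rescale on the DAG side, before projecting:
\begin{itemize}
\item View the same values $C'$ as an $L'$-length cut with $L':=\min\{h\lambda',a/\lambda\}\le a/\lambda$. The scaling argument then gives $\dist_{G'-C'}(\first{s},\last{t})\ge L'$.
\item Project to get $C$ of length $L'/\lambda'=\min\{h,L\}\le h$ and size $|C'|\le\val(f)$.
\item If $\dist_{G-C}(s,t)<L'/\lambda'$, the $h$-length guarantee would force $\dist_{G'-C'}(\first{s},\last{t})<L'$, a contradiction.
\end{itemize}
Make this the main argument and drop the post-projection case analysis and reinterpretation.

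\textbf{Smaller points.} Your primal step, projecting $f'/\kappa$ and scaling the result back by $\kappa$, is correct. The paper leaves this detail implicit. On feasibility, \Cref{lem:algorithmimpliesprojection} uses the unrestricted dual guarantee, so it does not directly apply to an $h$-length projection algorithm. Working directly with the oracle's returned triple, as you suggest, is the right move; the paper does not address this point at all.
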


\begin{proof}
Let $D':=\flpi(D)$. Since $D$ is an $s$-$t$ demand, $D'$ is a $\first{s}$-$\last{t}$ demand on $G'$ with the same value.

Apply \Cref{lem:MCFtoLCflow} to $(G',D')$ using the oracle $\OPDMCF{\kappa,\lambda}$. We obtain a flow $\hat f$ routing $D'$ with congestion at most $\kappa$ and average cost $a=\cost(\hat f)/\val(\hat f)$, together with an $(a/\lambda)$-length moving cut $C'$ on $G'$ such that $|C'|\le \val(\hat f)$ and
\[
\dist_{G'-C'}(\first{s},\last{t})\ge \frac{a}{\lambda}.
\]

We first project the flow. Since $\Dem(\hat f)=D'=\flpi(D)$, the primal part of the projection algorithm returns a flow $f$ on $G$ routing $D$, with the same cost as $\hat f$ and congestion at most $\kappa'\kappa$. As $\val(f)=\val(\hat f)$, its average cost is also $a$.

For the cut, let $L':=\min\{a/\lambda,\;h\lambda'\}$. Since all edge lengths in $G'$ are positive and $L'\le a/\lambda$, the same cut values $C'$ may also be viewed as an $L'$-length moving cut, and the same scaling argument as in \Cref{lem:MCFtoLCflow} gives
\[
\dist_{G'-C'}(\first{s},\last{t})\ge L'.
\]

Now apply the dual part of the projection algorithm to this $L'$-length moving cut. We obtain a moving cut $C$ on $G$ with length $L'/\lambda'=\min\{h,a/(\lambda\lambda')\}$ and size $|C|=|C'|\le \val(\hat f)=\val(f)$.

It remains to prove the distance lower bound. Suppose for contradiction that
\[
\dist_{G-C}(s,t)< \frac{L'}{\lambda'}.
\]
Since $L'/\lambda' \le h$, the guarantee of the $h$-length projection algorithm applies to the pair $(s,t)$, giving
\[
\dist_{G'-C'}(\first{s},\last{t})
\le \lambda' \cdot \dist_{G-C}(s,t)
< L',
\]
contradicting $\dist_{G'-C'}(\first{s},\last{t})\ge L'$. Therefore
\[
\dist_{G-C}(s,t)\ge \frac{L'}{\lambda'}
=
\min\!\left\{h,\frac{a}{\lambda\lambda'}\right\}.
\]
This proves the corollary.
\end{proof}

We also need the following helpful lemma.

\begin{lemma}\label{lem:cutmatchoneiteration}
    Given a directed graph $G$ with positive integral edge capacities and lengths, an $h$-length single-commodity demand $D$, and parameters $\phi\in(0,0.1)$ and $s>1$, there is an algorithm making one call to $\OProj{hs,\kappa',\lambda'}$ and one call to $\OPDMCF{\kappa,\lambda}$, each on a graph of size $O(|G|)$, that either
    \begin{itemize}
        \item outputs an implicit flow-path decomposition of an $hs$-length flow routing some demand $D'\preceq D$ with congestion at most $\kappa\kappa'$ such that
        \[
        |D'|\ge \frac{\phi}{6}\cdot |D|,
        \]
        or
        \item outputs a $\frac{hs}{4\lambda\lambda'}$-length moving cut $C$ that is $(h,\frac{s}{4\lambda\lambda'})$-length $\phi$-sparse with respect to $A_D$ and has size at least
        \[
        \frac{\phi}{3}\cdot |D|.
        \]
    \end{itemize}
\end{lemma}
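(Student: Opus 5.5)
The plan is a single length-bounded MCF call with a threshold test. First reduce to an $s$--$t$ demand: add a super source $s$ with edges $(s,u)$ of capacity $D(u)$ and length $0$ for each $u$ with $D(u)>0$, and a super sink $t$ with edges $(v,t)$ of capacity $-D(v)$ and length $0$ for each $v$ with $D(v)<0$. Call the resulting graph $\widehat G$; it has size $O(|G|)$. Next, make the problem feasible regardless of what $G$ can route. Add a direct edge $(s,t)$ of capacity $|D|$ and length $L:=hs/2$. Routing $|D|$ units from $s$ to $t$ is then always feasible. A unit sent over the bypass costs exactly $hs/2$, while a unit routed through $G$ along short paths costs less.

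Invoke $\OProj{hs,\kappa',\lambda'}$ on $\widehat G$, and feed the result into \Cref{cor:PDMCFtoDAG}, which uses one call to $\OPDMCF{\kappa,\lambda}$. This produces a flow $f$ with congestion $\kappa\kappa'$ and average cost $a$. It also produces a moving cut $C$ of length $\min\{hs,a/(\lambda\lambda')\}$ and size at most $|D|$, satisfying $\dist_{\widehat G-C}(s,t)\ge\min\{hs,a/(\lambda\lambda')\}$. Decompose $f$ with \Cref{thm:flowpathdecomposition}, and discard the bypass paths and all paths of length more than $hs$. The remaining paths form the candidate $D'$.

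\textbf{Case 1: $|D'|\ge\frac{\phi}{6}|D|$.} Output the implicit decomposition of the remaining paths. These paths are $hs$-length and carry congestion at most $\kappa\kappa'$.

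\textbf{Case 2: $|D'|<\frac{\phi}{6}|D|$.} Then almost all flow either uses the bypass or has length more than $hs$. Both kinds of path have length at least $hs/2$, so $a\ge(1-\phi/6)hs/2\ge hs/3$. The cut $C$ therefore has length at least $hs/(3\lambda\lambda')\ge hs/(4\lambda\lambda')$, and it separates $s$ from $t$.

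Restrict $C$ to the edges of $G$, and rescale it as an $hs/(4\lambda\lambda')$-length cut. In the dual of MCF, the total dual value is $|D|\cdot\dist-|\text{cut}|$. Comparing this with the primal cost $\approx|D|\,hs/2$ shows that the cut mass on the bypass and terminal edges is small. Hence cutting $G$ must separate, at distance $>hs/(4\lambda\lambda')$, a demand $\widehat D$ of value $\Omega(|D|)$ between sources and sinks. This $\widehat D$ is $A_D$-respecting and $h$-length, since $D$ is $h$-length.

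To obtain the size and sparsity bounds, scale the cut. Sparsity $\le\phi$ is arranged by taking $\phi$ times the cut. If its size falls below $\frac{\phi}{3}|D|$, the separated demand is larger relative to the cut, so the cut is even sparser. In that case inflate the cut up to size $\frac{\phi}{3}|D|$, which keeps sparsity at most $\phi$, since $\sep(C,\widehat D)\ge|D|/3$ remains. Symmetry of $\widehat D$ is obtained by taking pairs symmetrically within distance $h$, as in \Cref{def:sparsity}.

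\textbf{Main obstacle.} The hard step is Case 2: certifying that the returned cut genuinely separates $\Omega(|D|)$ demand pairs inside $G$, rather than spending its size on the auxiliary terminal edges. I would handle this by a charging argument through the identity of \Cref{lem:pd-gap-identity}. The terminal edges carry total capacity $|D|$, so cut mass placed on them directly reduces the dual objective. Approximate optimality ($\lambda\lambda'\le 4$) then forces at least a third of the terminal capacity to be uncut. The corresponding source--sink pairs are exactly those separated in $G-C$.
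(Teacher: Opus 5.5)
There is a genuine gap in Case 2, and it is quantitative. You call \Cref{cor:PDMCFtoDAG} with the full demand value $|D|$, so the only size bound you get on the returned cut is $|C|\le\val(f)=|D|$. Any $A_D$-respecting demand has size at most $2|D|$, so such a cut can have sparsity $\Omega(1)$ rather than $\phi$.

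Your remedy, ``taking $\phi$ times the cut'', is not legitimate. The cut's length parameter is fixed at $\frac{hs}{4\lambda\lambda'}$, so replacing $C$ by $\phi C$ shrinks every length increase by a factor $\phi$. The source--sink pairs are then no longer $\frac{hs}{4\lambda\lambda'}$-separated.

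The charging argument in your last paragraph cannot push the cut size down to $O(\phi|D|)$ either. The oracle only guarantees $\Dual\ge\cost(f)/(\lambda\lambda')$ with $\lambda\lambda'$ a constant larger than $1$. This leaves additive slack $(1-\frac{1}{\lambda\lambda'})\cost(f)=\Theta(|D|\,hs)$ for cut mass, whether on the terminal edges or inside $G$. For example, suppose $G$ can route $|D|/4$ along paths of length at most $h$. Then the all-bypass flow of cost $|D|hs/2$ is within a factor $\frac{4s}{3s+2}<\frac43$ of optimal, so a $\frac43$-approximate oracle may return it, which puts you in Case 2 with $D'=0$. It may pair this flow with an (even optimal) dual that cuts exactly those short routes, i.e.\ a cut of size about $|D|/4$. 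Your argument would only go through if $\lambda\lambda'=1+O(\phi)$.

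The missing idea is to shrink the demand, not the cut. Route only $\phi|D|/3$ units from the super source to the super sink. The bound $|\widehat C|\le\val(f)$ from \Cref{cor:PDMCFtoDAG} holds regardless of the approximation factors; those only shrink the cut's length to $h'=\min\{hs,a/(\lambda\lambda')\}$. So you now get $|\widehat C|\le\phi|D|/3$ for free. Then split on the average cost $a$:
\begin{itemize}
\item If $a\le hs/2$, Markov's inequality puts at least half the flow, i.e.\ $\phi|D|/6$, on paths of length at most $hs$.
\item If $a>hs/2$, then $h'\ge\frac{hs}{2\lambda\lambda'}$. Vertices whose terminal edge receives cut value more than $1/4$ carry at most $4|\widehat C|\le\frac{4\phi}{3}|D|<|D|/6$ demand. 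Every remaining source $u$ and sink $v$ satisfy $\dist_{H-\widehat C}(u,v)\ge h'/2$. Hence twice the restriction of $\widehat C$ to $G$, read as an $\frac{hs}{4\lambda\lambda'}$-length cut, separates at least $\frac56|D|$ demand with size at most $\frac{2\phi}{3}|D|$. This gives sparsity at most $\frac45\phi$, and padding up to size $\frac{\phi}{3}|D|$ yields the required lower bound.
\end{itemize}
Your bypass edge, added to make the oracle call feasible, is orthogonal to this issue.
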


\begin{proof}
Let \(\Delta:=|D|\). Construct an auxiliary graph \(H\) from \(G\) by adding a super-source \(s\) and a super-sink \(t\), where:
\begin{itemize}
    \item for every vertex \(v\) with \(D(v)>0\), we add an edge \((s,v)\) of capacity \(D(v)\) and length \(0\);
    \item for every vertex \(v\) with \(D(v)<0\), we add an edge \((v,t)\) of capacity \(-D(v)\) and length \(0\).
\end{itemize}
Clearly, \(H\) has size \(O(|G|)\).

Now consider the \(s\)-\(t\) demand of value \(\phi\Delta/3\) in \(H\), and apply \Cref{cor:PDMCFtoDAG}. This returns
\begin{itemize}
    \item a flow \(f\) routing value \(\phi\Delta/3\) with congestion at most \(\kappa\kappa'\) and average cost
    \[
    a:=\frac{\cost(f)}{\val(f)},
    \]
    and
    \item a moving cut \(\widehat C\) of length
    \[
    h':=\min\!\left\{hs,\frac{a}{\lambda\lambda'}\right\}
    \]
    and size at most
    \[
    |\widehat C|\le \val(f)=\frac{\phi\Delta}{3},
    \]
    such that
    \[
    \dist_{H-\widehat C}(s,t)\ge h'.
    \]
\end{itemize}

We distinguish two cases.

\medskip
\noindent
\textbf{Flow case: \(a\le hs/2\).}
Apply \Cref{thm:flowpathdecomposition} to \(f\), obtaining a decomposition
\[
\{(L_i,\lambda_i,s_i,t_i)\}_{i\in[k]}.
\]
Keep only those triples with \(L_i\le hs\), and output the resulting collection
\[
\{(\lambda_i,s_i,t_i)\}_{i\in[k']}.
\]
This is an implicit flow-path decomposition of an \(hs\)-length flow.

It remains to lower bound its value. If the retained flow had value less than \(\val(f)/2\), then more than \(\val(f)/2\) units of flow would be routed on paths of length strictly larger than \(hs\). This would imply
\[
\cost(f)> \frac{\val(f)}{2}\cdot hs,
\]
contradicting the assumption \(a=\cost(f)/\val(f)\le hs/2\). Therefore, the retained flow has value at least
\[
\frac{1}{2}\val(f)=\frac{\phi\Delta}{6}.
\]
Hence it routes some demand \(D'\preceq D\) of value at least \(\phi|D|/6\), as desired.

\medskip
\noindent
\textbf{Cut case: \(a> hs/2\).}
In this case,
\[
h'=\min\!\left\{hs,\frac{a}{\lambda\lambda'}\right\}\ge \frac{hs}{2\lambda\lambda'}.
\]

Let \(C^\star\) be the restriction of \(\widehat C\) to the edges of \(G\), multiplied by \(2\). We interpret \(C^\star\) as a \(\frac{hs}{4\lambda\lambda'}\)-length moving cut on \(G\). Thus, for every \(e\in E(G)\),
\[
\ell_{C^\star}(e)=\frac{hs}{4\lambda\lambda'}\cdot C^\star(e)
=\frac{hs}{2\lambda\lambda'}\cdot \widehat C(e).
\]

Call a vertex \(u\in V(G)\) \emph{heavy} if the auxiliary edge incident to \(u\) in \(H\) receives cut value more than \(1/4\) under \(\widehat C\), and \emph{light} otherwise. Since
\[
|\widehat C|\le \frac{\phi\Delta}{3},
\]
the total demand on heavy vertices is at most
\[
4|\widehat C|\le \frac{4\phi\Delta}{3}<\frac{\Delta}{6},
\]
where the last inequality uses \(\phi<0.1\). Hence the total demand on light vertices is at least
\[
\frac{5\Delta}{6}.
\]

Now let \(u,v\in V(G)\) be light vertices with \(D(u)>0\) and \(D(v)<0\). Since the auxiliary edges \((s,u)\) and \((v,t)\) each contribute less than \(h'/4\) in \(H-\widehat C\), we must have
\[
\dist_{H-\widehat C}(u,v)\ge \dist_{H-\widehat C}(s,t)-\frac{h'}{2}\ge \frac{h'}{2}.
\]

We claim that
\[
\dist_{G-C^\star}(u,v)\ge \frac{hs}{4\lambda\lambda'}.
\]
Indeed, for any \(u\)-to-\(v\) path \(P\) in \(G\),
\[
\ell_{G-C^\star}(P)
=
\ell_G(P)+\frac{hs}{2\lambda\lambda'}\widehat C(P).
\]
Since \(\frac{hs}{2\lambda\lambda'}\le h'\) and \(\ell_G(P)\ge 0\), we have
\[
\ell_G(P)+\frac{hs}{2\lambda\lambda'}\widehat C(P)
\ge
\frac{hs}{2\lambda\lambda' h'}\bigl(\ell_G(P)+h'\widehat C(P)\bigr).
\]
Taking the minimum over all such paths \(P\), we obtain
\[
\dist_{G-C^\star}(u,v)
\ge
\frac{hs}{2\lambda\lambda' h'}\cdot \dist_{H-\widehat C}(u,v)
\ge
\frac{hs}{2\lambda\lambda' h'}\cdot \frac{h'}{2}
=
\frac{hs}{4\lambda\lambda'}.
\]

Therefore every light positive vertex and every light negative vertex are \(\frac{hs}{4\lambda\lambda'}\)-separated by \(C^\star\). Since the total demand on light vertices is at least \(5\Delta/6\), it follows that \(C^\star\) is \((h,\frac{s}{4\lambda\lambda'})\)-length \(\phi\)-sparse with respect to \(A_D\) after padding if necessary.

Finally, note that
\[
|C^\star|=2|\widehat C|\le \frac{2\phi\Delta}{3}.
\]
If \(|C^\star|<\phi\Delta/3\), we pad \(C^\star\) until its size becomes exactly \(\phi\Delta/3\). Since padding only increases edge lengths, the separation property is preserved. Moreover, after padding the cut still has size at most \(2\phi\Delta/3\), while it separates at least \(5\Delta/6\) units of demand, so its sparsity remains at most
\[
\frac{(2\phi\Delta/3)}{(5\Delta/6)}=\frac{4}{5}\phi<\phi.
\]
Thus the resulting cut is still \((h,\frac{s}{4\lambda\lambda'})\)-length \(\phi\)-sparse with respect to \(A_D\), and has size at least \(\phi|D|/3\).
\end{proof}

Now we are ready to prove \Cref{lem:cutmatch}.

\cutmatch*
\begin{proof}
Let \(D^{(0)}:=D\), and let \(\Delta^{(0)}:=|D|\). We repeatedly apply \Cref{lem:cutmatchoneiteration} to the current residual demand.

At iteration \(r\), suppose the current residual demand is \(D^{(r)}\). If
\[
|D^{(r)}|<\eps |D|,
\]
then we stop and output the sum of all previously returned flows. Otherwise, apply \Cref{lem:cutmatchoneiteration} to \(D^{(r)}\).

If the lemma returns a cut, then this cut has size at least
\[
\frac{\phi}{3}\,|D^{(r)}|
\ge
\frac{\phi\eps}{3}\,|D|,
\]
and we output it.

Otherwise, the lemma returns an implicit flow-path decomposition of an \(hs\)-length flow routing some demand \(E^{(r)}\preceq D^{(r)}\) with
\[
|E^{(r)}|\ge \frac{\phi}{6}\,|D^{(r)}|.
\]
We then update the residual demand by
\[
D^{(r+1)}:=D^{(r)}-E^{(r)}.
\]
Hence
\[
|D^{(r+1)}|
\le
\left(1-\frac{\phi}{6}\right)|D^{(r)}|.
\]
By induction,
\[
|D^{(r)}|
\le
\left(1-\frac{\phi}{6}\right)^r |D|.
\]
Therefore, after
\[
T=O\!\left(\frac{\log(1/\eps)}{\phi}\right)
\]
iterations, we must have
\[
|D^{(T)}|<\eps |D|.
\]

If the algorithm terminates in this way, then the sum of all returned flows routes a demand \(D'\preceq D\) with
\[
|D'|\ge (1-\eps)|D|.
\]
Each iteration contributes congestion at most \(\kappa\kappa'\), so the total congestion is at most
\[
T\cdot \kappa\kappa'
=
O\!\left(\frac{\kappa\kappa'\log(1/\eps)}{\phi}\right).
\]
Since each constituent flow is \(hs\)-length bounded, their sum is also \(hs\)-length bounded, and concatenating all implicit flow-path decompositions gives the required output.

This proves the lemma.
\end{proof}

\newcommand{\PC}{\textsc{PathCover}}
\newcommand{\widthPC}{\mathsf{width}}
\newcommand{\muw}{\mu}
\newcommand{\mass}[1]{\muw(#1)}
\newcommand{\rhoPC}{\rho}
\newcommand{\gammaPC}{\gamma}
\newcommand{\tauPC}{\tau}
\newcommand{\RingOut}{R^{\mathrm{out}}}
\newcommand{\RingIn}{R^{\mathrm{in}}}
\newcommand{\Active}{\mathcal{F}}

\newcommand{\Ball}{\mathrm{Ball}}
\newcommand{\Bo}{\mathrm{Ball}^{\mathrm{out}}}
\newcommand{\Bi}{\mathrm{Ball}^{\mathrm{in}}}
\newcommand{\Start}{\mathrm{Start}}
\newcommand{\End}{\mathrm{End}}

\newcommand{\To}{T^{\mathrm{out}}}
\newcommand{\tTo}{\tilde{T}^{\mathrm{out}}}
\newcommand{\Ti}{T^{\mathrm{in}}}
\newcommand{\tTi}{\tilde{T}^{\mathrm{in}}}
\newcommand{\Ii}{i^{\mathrm{in}}}
\newcommand{\Io}{i^{\mathrm{out}}}
\newcommand{\Ho}{H^{\mathrm{out}}}
\newcommand{\Hi}{H^{\mathrm{in}}}
\newcommand{\Hm}{H^{\mathrm{mid}}}
\newcommand{\bHo}{\bar{H}^{\mathrm{out}}}
\newcommand{\tHo}{\tilde{H}^{\mathrm{out}}}
\newcommand{\bHi}{\bar{H}^{\mathrm{in}}}
\newcommand{\iBo}{B^{\mathrm{out}}}
\newcommand{\bBo}{\bar{B}^{\mathrm{out}}}
\newcommand{\iBi}{B^{\mathrm{in}}}
\newcommand{\bBi}{\bar{B}^{\mathrm{in}}}
\newcommand{\GC}{G'_{\mathrm{SCC}}}
\section{Path Cover with Bounded Width}\label{sec:pathcover}

In this section, we will prove \Cref{lem:pathcover}.

\newcommand{\Chain}{\mathsf{Chain}}
\newcommand{\degtot}{\deg}

\pathcover*
Throughout the section, we write $\rhoPC=2^{1/\eps}$ for convenience.

\subsection{The Algorithm}\label{subsec:pathcoveralg-width}

The algorithm is recursive.  A recursive call carries a positive integer weight on every
vertex.

\begin{definition}[Loaded recursive instance]
A \emph{loaded instance} is a pair $(A,\muw)$ where $A\subseteq V$ and
\[
\muw : A \to \{1,\rhoPC,\rhoPC^2,\ldots\}.
\]
For every $S\subseteq A$, define
\[
\mass{S}:=\sum_{v\in S}\muw(v).
\]
\end{definition}

For $S\subseteq A$, define the reweighting operator
\[
(\rhoPC\uparrow_S \muw)(v):=
\begin{cases}
\rhoPC\cdot \muw(v), & v\in S,\\
\muw(v), & v\notin S.
\end{cases}
\]

The recursive procedure is denoted by
\[
\PC(A,\muw),
\]
and the initial call is
\[
\PC(V,\mathbf{1}),
\]
where $\mathbf{1}(v)=1$ for every $v\in V$.

\paragraph{Base case.}
If $|A|=1$, return $G[A]$ with the trivial projection map and representative.

\paragraph{Recursive step.}
Fix a loaded instance $(A,\muw)$ with $|A|>1$, and choose an arbitrary root vertex
$u\in A$.

We run a forward Dijkstra search in $G[A]$ from $u$
and a backward Dijkstra search in $G[A]$ toward $u$.  
For $i\ge 0$, define
\[
B_i^{\mathrm{out}} := \Bo_{G[A]}(u,i\cdot d),
\qquad
B_i^{\mathrm{in}} := \Bi_{G[A]}(u,i\cdot d).
\]

For either direction, let $i$ be the smallest positive integer such that
\[
\mass{B_i} \le (1+\tauPC)\mass{B_{i-1}},
\]
where
\[
\tauPC := \frac{1}{1000\rhoPC^3}.
\]

Let $\Io$ and $\Ii$ be the forward and backward stopping indices.  Define
\begin{align*}
\iBo &= \Bo_{G[A]}(u,\Io\cdot d), &
\bBo &= A-\Bo_{G[A]}(u,(\Io-1)\cdot d),\\
\iBi &= \Bi_{G[A]}(u,\Ii\cdot d), &
\bBi &= A-\Bi_{G[A]}(u,(\Ii-1)\cdot d).
\end{align*}
Also define the forward and backward rings
\begin{align*}
\RingOut &:= \iBo\cap \bBo
= \Bo_{G[A]}(u,\Io\cdot d)\setminus \Bo_{G[A]}(u,(\Io-1)\cdot d),\\
\RingIn &:= \iBi\cap \bBi
= \Bi_{G[A]}(u,\Ii\cdot d)\setminus \Bi_{G[A]}(u,(\Ii-1)\cdot d).
\end{align*}

\emph{Simultaneous forward and backward ball growing:} For running time analysis, we do the ball growing forward and backward simultaneously using Dijkstra's algorithm. Specifically, we interleave Dijkstra's algorithm for the forward and backward ball growing in the following way: let $A$ contain all the nodes $v$ that have been extracted from the heap (i.e., have been relaxed and contribute $\deg(v)$ to the running time) in the forward Dijkstra's algorithm, and $B$ be the backward version, if $\mu(A)\ge \mu(B)$ then we continue the forward Dijkstra's algorithm (extract the next node and do relaxation, update $A$), otherwise we continue the backward Dijkstra's algorithm. If one of them finds a thin layer, both procedures stop.

Without loss of generality, assume the forward search stops first.
Fix
\[
\gammaPC := \frac{1}{100\rhoPC}.
\]

We distinguish two cases.

\begin{enumerate}
    \item \textbf{Case 1: $\mass{\iBo}\le (1-\gammaPC)\mass{A}$.}

    Recurse on the two overlapping pieces, and multiply the weights on the forward ring
    in both recursive children:
    \[
    \Ho \leftarrow \PC\!\left(\iBo,(\rhoPC\uparrow_{\RingOut}\muw)\big|_{\iBo}\right),
    \qquad
    \bHo \leftarrow \PC\!\left(\bBo,(\rhoPC\uparrow_{\RingOut}\muw)\big|_{\bBo}\right).
    \]
    Let $G'$ be the layered projection induced from $(\bHo,\Ho)$.
    Return $G'$.

    \item \textbf{Case 2: $\mass{\iBo}>(1-\gammaPC)\mass{A}$.}

    Continue the backward search until it reaches its own stopping index $\Ii$.

    Let $\To$ be the shortest-path tree in $G[\iBo]$ rooted at $u$, and let $\Ti$ be the
    reversed shortest-path tree in $G[\iBi]$ rooted at $u$.

    Let
    \[
    M := \iBo\cap \iBi.
    \]
    Let $\widetilde{\To}$ be the subtree of $\To$ containing all vertices that can reach $M$
    in $\To$, and let $\widetilde{\Ti}$ be the subtree of $\Ti$ containing all vertices that
    can be reached from $M$ in $\Ti$.
    Define
    \[
    \Hm := G[V(\widetilde{\To})\cup V(\widetilde{\Ti})].
    \]
    As before, $\Hm$ is a subgraph of $G$, so the projection map and representatives are
    defined trivially.

    Now recurse on the two recursive pieces, multiplying \emph{all} weights in each child by
    $\rhoPC$:
    \[
    \tHo \leftarrow \PC\!\left(\bBo\cap \iBi,(\rhoPC\muw)\big|_{\bBo\cap \iBi}\right),
    \qquad
    \bHi \leftarrow \PC\!\left(\bBi,(\rhoPC\muw)\big|_{\bBi}\right),
    \]
    where
    \[
    (\rhoPC\muw)(v):=\rhoPC\muw(v)\qquad\text{for all }v\in A.
    \]

    Let $G'$ be the layered projection induced from $(\tHo,\Hm,\bHi)$.
    Return $G'$.
\end{enumerate}

\subsection{The Global Potential}

We show that a global $\ell_2^2$ potential over the currently active recursive
calls never increases.

\begin{definition}[Active frontier and potential]
At any moment during the recursion, let $\Active$ be the set of loaded instances that have
been created but not yet expanded (i.e., all the leaves of the current recursion tree).  Define
\[
\Phi(\Active):=\sum_{(A,\muw)\in \Active} \mass{A}^2.
\]
\end{definition}

Initially $\Active=\{(V,\mathbf{1})\}$, so
\[
\Phi(\Active)=n^2.
\]

Finally, $\Active$ becomes all the leaves of the final recursion tree. Our goal is to show that in the end, we still have $\Phi(\Active)\le n^2$.

We now show that every recursive expansion never increases $\Phi$.

\begin{lemma}[Thin rings]\label{lem:thinrings}
We have
\[
\mass{\RingOut}\le \tauPC\mass{\iBo},
\qquad
\mass{\RingIn}\le \tauPC\mass{\iBi}.
\]
\end{lemma}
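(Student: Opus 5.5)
This should follow essentially directly from the stopping rule of the ball-growing procedure; I would carry it out for the forward direction, and the backward direction is identical with $\Bi$ in place of $\Bo$. The key fact is that the balls are nested, $B^{\mathrm{out}}_{\Io-1}\subseteq B^{\mathrm{out}}_{\Io}=\iBo$, and that $\muw$ takes positive values. Hence $\RingOut=\iBo\setminus B^{\mathrm{out}}_{\Io-1}$ and additivity of $\mu$ gives
\[
\mass{\RingOut}=\mass{\iBo}-\mass{B^{\mathrm{out}}_{\Io-1}}.
\]

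By the choice of $\Io$ as the first index with $\mass{B_{\Io}}\le(1+\tauPC)\mass{B_{\Io-1}}$, we get $\mass{B^{\mathrm{out}}_{\Io-1}}\ge \mass{\iBo}/(1+\tauPC)$. Substituting,
\[
\mass{\RingOut}\le \Bigl(1-\tfrac{1}{1+\tauPC}\Bigr)\mass{\iBo}=\tfrac{\tauPC}{1+\tauPC}\mass{\iBo}\le \tauPC\,\mass{\iBo}.
\]

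The only points needing care concern well-definedness of the stopping index, not the inequality itself. First, $i$ is a positive integer, so $B_{i-1}$ always exists; $B_0=\{u\}$ with $\mass{B_0}=\muw(u)\ge 1$, so no division-by-zero issue arises. Second, the index $\Io$ is finite: mass cannot grow by a factor of $(1+\tauPC)$ forever, since the balls are contained in $A$, which has finite mass. Once the balls stop growing, the condition holds trivially.

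The interleaved simultaneous search also needs a remark. Even if the backward search in Case 2 is continued later, or if only the forward search stops and the backward one is halted, the lemma is only asserted for the sets $\RingIn,\iBi$ defined via the actual stopping index $\Ii$, which is again defined by the same rule. So the argument applies verbatim. I expect no real obstacle here; the only thing to state carefully is that interleaving does not change how the stopping indices are defined.
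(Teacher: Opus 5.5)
Your proof is correct and is essentially the paper's argument: both use the stopping rule $\mass{B^{\mathrm{out}}_{\Io}}\le(1+\tauPC)\,\mass{B^{\mathrm{out}}_{\Io-1}}$ together with $\RingOut=\iBo\setminus B^{\mathrm{out}}_{\Io-1}$. The paper concludes $\mass{\RingOut}\le\tauPC\,\mass{\iBo\setminus\RingOut}\le\tauPC\,\mass{\iBo}$, whereas you get the marginally sharper factor $\tauPC/(1+\tauPC)$. One minor correction: with zero-length edges, $B^{\mathrm{out}}_0$ may strictly contain $\{u\}$, which is harmless since you only need $\mass{B^{\mathrm{out}}_0}\ge 1$.
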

\begin{proof}
By definition of the stopping index,
\[
\mass{\iBo}\le (1+\tauPC)\mass{\iBo\setminus \RingOut}.
\]
Hence
\[
\mass{\RingOut}
=
\mass{\iBo}-\mass{\iBo\setminus \RingOut}
\le
\tauPC\mass{\iBo\setminus \RingOut}
\le
\tauPC\mass{\iBo}.
\]
The proof for $\RingIn$ is identical.
\end{proof}

\begin{lemma}[Case 1: the ring is tiny compared to the smaller side]\label{lem:ring-smaller-side}
Assume Case~1. Then
\[
\mass{\RingOut}\le \frac{1}{10\rhoPC^2}\cdot
\min\{\mass{\iBo},\mass{\bBo}\}.
\]
\end{lemma}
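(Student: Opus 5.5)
We bound $\mass{\RingOut}$ against each side separately and take the minimum. Both bounds start from \Cref{lem:thinrings}, which gives $\mass{\RingOut}\le \tauPC\mass{\iBo}$ with $\tauPC=\frac{1}{1000\rhoPC^3}$.

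\emph{The side $\iBo$.} This is immediate: $\tauPC\le \frac{1}{10\rhoPC^2}$ because $\rhoPC\ge 1$, so $\mass{\RingOut}\le \frac{1}{10\rhoPC^2}\mass{\iBo}$.

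\emph{The side $\bBo$.} This needs the Case~1 hypothesis $\mass{\iBo}\le (1-\gammaPC)\mass{A}$ with $\gammaPC=\frac{1}{100\rhoPC}$.
\begin{itemize}
\item Since $\bBo = A\setminus \Bo_{G[A]}(u,(\Io-1)d)$ and $\iBo\setminus\RingOut = \Bo_{G[A]}(u,(\Io-1)d)$, we have $\bBo\supseteq A\setminus\iBo$.
\item Hence $\mass{\bBo}\ge \mass{A}-\mass{\iBo}\ge \gammaPC\mass{A}\ge\gammaPC\mass{\iBo}$.
\item Combining with the thin-ring bound,
\[
\mass{\RingOut}\le \tauPC\mass{\iBo}\le \frac{\tauPC}{\gammaPC}\mass{\bBo}
= \frac{100\rhoPC}{1000\rhoPC^3}\mass{\bBo}=\frac{1}{10\rhoPC^2}\mass{\bBo}.
\]
\end{itemize}

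Both inequalities hold, so $\mass{\RingOut}$ is at most $\frac{1}{10\rhoPC^2}$ times the minimum of the two sides, which is the claim.

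The only point needing care is confirming that $\bBo$ contains $A\setminus\iBo$, which follows directly from the definitions as a set-complement. The step I would watch is the bookkeeping of which weighting $\muw$ is used. All masses here are with respect to the current $\muw$, before the reweighting by $\rhoPC\uparrow_{\RingOut}$, matching the statement of \Cref{lem:thinrings}.
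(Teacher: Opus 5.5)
Your proof is correct and uses the same ingredients as the paper: the thin-ring bound, the containment $A\setminus\iBo\subseteq\bBo$ combined with the Case~1 hypothesis, and the ratio $\tauPC/\gammaPC=1/(10\rhoPC^2)$. The only difference is that you prove both bounds unconditionally, whereas the paper first splits on which side is smaller; that split is unnecessary, so your version is slightly cleaner.
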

\begin{proof}
If $\mass{\iBo}\le \mass{\bBo}$, then by \Cref{lem:thinrings},
\[
\mass{\RingOut}\le \tauPC\mass{\iBo}
\le \frac{1}{10\rhoPC^2}\mass{\iBo},
\]
since $\tauPC\le 1/(10\rhoPC^2)$.

Now assume $\mass{\bBo}<\mass{\iBo}$.
Let
\[
Y:=A\setminus \iBo.
\]
Then $\bBo=Y\cup \RingOut$, and since we are in Case~1,
\[
\mass{Y}=\mass{A}-\mass{\iBo}\ge \gammaPC\mass{A}.
\]
Also, by \Cref{lem:thinrings},
\[
\mass{\RingOut}\le \tauPC\mass{\iBo}\le \tauPC\mass{A}
\le \frac{\tauPC}{\gammaPC}\mass{Y}
\le \frac{1}{10\rhoPC^2}\mass{Y}
\le \frac{1}{10\rhoPC^2}\mass{\bBo},
\]
because $\tauPC/\gammaPC = 1/(10\rhoPC^2)$.
\end{proof}

\begin{lemma}[Case 2: both complements are small]\label{lem:largemid}
Assume Case~2. Then
\[
\mass{\bBo}\le 2\gammaPC\mass{A},
\qquad
\mass{\bBi}\le 2\gammaPC\mass{A}.
\]
\end{lemma}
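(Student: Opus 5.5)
The plan is to handle the two bounds separately, since they come from different mechanisms. The forward bound comes directly from the Case~2 hypothesis. The backward bound needs the interleaved (simultaneous) ball growing together with the assumption that the forward search stopped first.

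\textbf{Forward bound.} I would write $\bBo = (A\setminus\iBo)\cup\RingOut$. The Case~2 hypothesis $\mass{\iBo}>(1-\gammaPC)\mass{A}$ gives $\mass{A\setminus\iBo}<\gammaPC\mass{A}$. By \Cref{lem:thinrings}, $\mass{\RingOut}\le\tauPC\mass{\iBo}\le\tauPC\mass{A}$. Since $\tauPC=1/(1000\rhoPC^3)\le\gammaPC$, the sum is at most $2\gammaPC\mass{A}$.

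\textbf{Backward bound.} This is the main obstacle. I would use the interleaving rule, under which the forward search is advanced only while the forward extracted mass is at least the backward extracted mass, so the two extracted masses stay within one vertex weight of each other. The key claim is that at the moment the forward search certifies its thin layer, it has extracted about $\mass{\iBo}$, up to the last vertex. By the interleaving rule, the backward search has then extracted comparable mass, so its ball has mass at least roughly $(1-\gammaPC)\mass{A}$.

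Because the backward search had not yet found a thin layer, each completed backward layer grew by a factor of at least $(1+\tauPC)$. Once the backward search is continued to its stopping index $\Ii$, the ball $\Bi_{G[A]}(u,(\Ii-1)d)$ is a superset of the backward ball reached at the forward stopping time, up to the partially completed layer. The intended conclusion is
\[
\mass{\Bi_{G[A]}(u,(\Ii-1)d)}\ge (1-\gammaPC)\mass{A}-(\text{boundary error}),
\]
which gives $\mass{\bBi}\le 2\gammaPC\mass{A}$.

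The delicate points are these:
\begin{enumerate}
\item The forward extracted mass must be at least $\mass{\iBo}$, or at least $\mass{\Bo(u,(\Io-1)d)}\ge\mass{\iBo}/(1+\tauPC)$. This requires arguing that a thin layer is detected only once the next layer's vertices have all been extracted.
\item The discrepancy of one vertex weight, and of one partial backward layer, must be absorbed into the slack between $\gammaPC$ and $2\gammaPC$.
\end{enumerate}

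For the first point I would use the factor $1/(1+\tauPC)\ge 1-\tauPC$. For the second, if a single vertex's weight is not negligible, I would invoke the fact that the backward ball's last full layer is not thin (it grew by a $(1+\tauPC)$ factor), so the partial-layer loss is at most a $\tauPC$-fraction-type term. If that fails, the fallback is to note that the ball at the next radius contains everything extracted so far.

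Combining these, $\mass{\bBi}\le\gammaPC\mass{A}+O(\tauPC)\mass{A}\le2\gammaPC\mass{A}$.
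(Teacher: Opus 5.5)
Your forward bound is exactly the paper's. Your outline for the backward bound also follows the paper's route: balanced interleaving gives the backward search large settled mass, continuing it only enlarges its ball, and the stopping rule finishes.

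The gap is in your second delicate point, where you let the backward settled mass trail the forward one by one vertex weight and plan to absorb this into the slack between $\gammaPC$ and $2\gammaPC$. This is harmless when $\RingOut\neq\emptyset$: the lagging vertex is then the last one settled, lies in $\RingOut$, and has mass at most $\tauPC\mass{\iBo}$. In general it fails, because weights are powers of $\rhoPC$ and one vertex can carry almost all of $\mass{A}$. Suppose $\iBo=\{u,x\}$, where $x$ is within distance $d$ of $u$, cannot reach $u$, and has $\mu(x)>(1-\gammaPC)\mass{A}$. If the forward search could stop right after settling $x$ while the backward search had settled only $u$, then Case~2 holds but $x\in\bBi$, so $\mass{\bBi}\ge\mu(x)$.

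Your proposed repair does not help. Non-thinness of the last completed backward layer is a lower bound on growth and cannot upper-bound a loss. Also, the balancing rule must advance the search with the \emph{smaller} settled mass. With the inequality as you state it, the forward search reaches its stopping index before the backward search settles anything, so the two masses are not within one vertex weight.

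The missing idea is that the comparison is exact, with no error term. Let $X_B$ be the backward settled set when the forward search stops.
\begin{itemize}
\item The forward search can certify that layer $\Io$ is thin only after all of $\iBo$ is settled. The certifying step inspects the next heap key (or finds the heap empty), and under one-access-at-a-time interleaving that inspection is itself a forward step.
\item The balancing rule takes that step only when the forward settled mass, already at least $\mass{\iBo}$, is at most $\mass{X_B}$. Hence $\mass{X_B}\ge\mass{\iBo}>(1-\gammaPC)\mass{A}$, which is precisely the paper's claim.
\item The backward search has certified no thin layer, so $\Ii$ is at least the layer index of every vertex of $X_B$. Therefore $\iBi\supseteq X_B$.
\item Then $\mass{\bBi}=\mass{A\setminus\iBi}+\mass{\RingIn}<(\gammaPC+\tauPC)\mass{A}\le 2\gammaPC\mass{A}$, exactly as in the forward case.
\end{itemize}
Your finish via $\mass{\Bi_{G[A]}(u,(\Ii-1)d)}\ge\mass{\iBi}/(1+\tauPC)$ is equivalent to the last step.
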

\begin{proof}
Since we are in Case~2,
\[
\mass{\iBo}>(1-\gammaPC)\mass{A}.
\]

Because the searches are interleaved one virtual access at a time and the forward search
stops first, by the time the forward search reaches $\Io$, the backward search has already
performed at least as many virtual accesses.  Therefore the backward search has already
settled weighted mass at least $\mass{\iBo}$.  Continuing it until its own stopping index can
only increase that mass.  Hence
\[
\mass{\iBi}\ge \mass{\iBo}>(1-\gammaPC)\mass{A}.
\]

Now
\[
\bBo=(A\setminus \iBo)\cup \RingOut,
\]
so by Case~2 and \Cref{lem:thinrings},
\[
\mass{\bBo}
\le
\gammaPC\mass{A}+\tauPC\mass{\iBo}
\le
(\gammaPC+\tauPC)\mass{A}
\le 2\gammaPC\mass{A}.
\]
The proof for $\bBi$ is identical.
\end{proof}

\begin{lemma}[The potential never increases]\label{lem:l2potential}
Every recursive expansion never increases $\Phi(\Active)$.
Consequently,
\[
\Phi(\Active)\le n^2
\qquad\text{throughout the entire recursion.}
\]
\end{lemma}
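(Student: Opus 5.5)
We fix one expansion of a loaded instance $(A,\muw)$ and compare $\mass{A}^2$ with the sum of squared masses of its children, which are computed with respect to the new weights. The potential of the other members of $\Active$ is unchanged, so it suffices to show that this sum is at most $\mass{A}^2$. Since $\Phi=n^2$ at the start, the global bound $\Phi(\Active)\le n^2$ then follows by induction over expansions. The base case $|A|=1$ creates no children and only removes a term.

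\textbf{Case 1.} Write $a:=\mass{\iBo}$, $b:=\mass{\bBo}$ and $r:=\mass{\RingOut}$, all under $\muw$. The ring is exactly $\iBo\cap\bBo$ and the two pieces cover $A$, so $a+b=\mass{A}+r$. Under the reweighting, the ring mass grows from $r$ to $\rhoPC r$, so the children have masses $a+(\rhoPC-1)r$ and $b+(\rhoPC-1)r$. Let $m:=\min\{a,b\}$. By \Cref{lem:ring-smaller-side}, $r\le m/(10\rhoPC^2)$, hence the extra mass $(\rhoPC-1)r$ is at most $m/(10\rhoPC)$. We need
\[
(a+\rhoPC r)^2+(b+\rhoPC r)^2\le (a+b-r)^2 .
\]
The right-hand side equals $a^2+b^2+2ab-2r(a+b)+r^2$. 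The left-hand side is $a^2+b^2+2\rhoPC r(a+b)+2\rhoPC^2r^2$. So it suffices that
\[
2ab\ge 2(\rhoPC+1)r(a+b)+2\rhoPC^2r^2 .
\]
Use $ab\ge m\cdot\max\{a,b\}\ge m(a+b)/2$ and $r\le m/(10\rhoPC^2)$. The first term on the right is then at most $2(\rhoPC+1)m(a+b)/(10\rhoPC^2)\le 0.3\,m(a+b)$. The second is at most $2m^2/(100\rhoPC^2)\le 0.02\,m(a+b)$. Both together are well below $m(a+b)\le 2ab$, which closes Case 1. (In the inequality above, $\rhoPC$ appears where $\rhoPC-1$ would suffice; this slack is harmless.)

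\textbf{Case 2.} Here $\Hm$ is an explicit subgraph rather than a recursive call, so it contributes nothing to $\Active$. The only children are the loaded instances on $\bBo\cap\iBi\subseteq\bBo$ and on $\bBi$, with all weights multiplied by $\rhoPC$. By \Cref{lem:largemid}, each has new mass at most $\rhoPC\cdot 2\gammaPC\mass{A}=\mass{A}/50$. Their squared masses therefore sum to at most $2\mass{A}^2/2500<\mass{A}^2$.

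The main point to get right is Case 1. There the inflation factor $\rhoPC$ is large, so the argument depends on the ring being tiny relative to the \emph{smaller} side, which is why \Cref{lem:ring-smaller-side} is needed rather than merely \Cref{lem:thinrings}. I would also double-check that the two children's vertex sets overlap exactly in the ring, so that $a+b=\mass{A}+r$ holds. This follows because $\iBo\cup\bBo=A$ and $\iBo\cap\bBo=\RingOut$ by definition.
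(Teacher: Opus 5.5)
Your proof is correct and follows the paper's argument. In Case~1 you expand the squared child masses directly and close the inequality with \Cref{lem:ring-smaller-side}, and in Case~2 you use \Cref{lem:largemid} to bound each child mass by $\mu(A)/50$; the only difference is cosmetic, since you parametrize by the overlapping piece masses $a,b$ and use $ab\ge m(a+b)/2$, whereas the paper uses the disjoint decomposition $x,r,y$ and reduces to the extreme case $y=x$ after assuming $x\le y$.
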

\begin{proof}
Fix an active instance $(A,\muw)$ and write
\[
s:=\mass{A}.
\]

\paragraph{Case 1.}
Write $A$ as a disjoint union
\[
A = X \,\dot\cup\, \RingOut \,\dot\cup\, Y,
\]
where
\[
X:=\Bo_{G[A]}(u,(\Io-1)\cdot d),
\qquad
Y:=A\setminus \iBo.
\]
Let
\[
x:=\mass{X},
\qquad
r:=\mass{\RingOut},
\qquad
y:=\mass{Y}.
\]
The parent mass is
\[
s=x+r+y.
\]
The two child masses are
\[
s_1=x+\rhoPC r,
\qquad
s_2=y+\rhoPC r.
\]

Assume without loss of generality that $x\le y$, i.e., $\iBo$ is the smaller side.
By \Cref{lem:ring-smaller-side},
\[
r\le \frac{x+r}{10\rhoPC^2},
\]
which implies
\[
x\ge (10\rhoPC^2-1)r.
\]

Now
\[
s_1^2+s_2^2-s^2
=
-2xy + 2(\rhoPC-1)r(x+y) + (2\rhoPC^2-1)r^2.
\]
Since $y\ge x$, this is at most
\[
-2x^2 + 4(\rhoPC-1)rx + (2\rhoPC^2-1)r^2.
\]
Also $x\ge (10\rhoPC^2-1)r\ge 4\rhoPC r$, so
\[
4(\rhoPC-1)rx \le x^2.
\]
Therefore
\[
s_1^2+s_2^2-s^2
\le
-x^2 + (2\rhoPC^2-1)r^2 < 0.
\]

\paragraph{Case 2.}
By \Cref{lem:largemid},
\[
\mass{\bBo\cap \iBi}\le \mass{\bBo}\le 2\gammaPC s,
\qquad
\mass{\bBi}\le 2\gammaPC s.
\]
After multiplying all weights in these recursive children by $\rhoPC$, their masses satisfy
\[
s_1 \le \rhoPC\cdot 2\gammaPC s,
\qquad
s_2 \le \rhoPC\cdot 2\gammaPC s.
\]
Since $\gammaPC=1/(100\rhoPC)$, we obtain
\[
s_1,s_2\le \frac{s}{50}.
\]
Hence
\[
s_1^2+s_2^2 \le 2\left(\frac{s}{50}\right)^2 < s^2.
\]

In both cases the total potential of the active frontier does not increase.
Initially the potential is $n^2$, so the claim follows.
\end{proof}

\begin{corollary}\label{cor:active-mass-at-most-n}
For every active recursive instance $(A,\muw)$,
\[
\mass{A}\le n.
\]
\end{corollary}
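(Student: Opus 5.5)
The plan is to derive the corollary directly from \Cref{lem:l2potential}. That lemma says that at every moment of the recursion the frontier potential satisfies
\[
\Phi(\Active)=\sum_{(A',\muw')\in\Active}\mass{A'}^2\le n^2 .
\]
Fix an active recursive instance $(A,\muw)$. By definition it is an element of $\Active$ at the moment it has been created and not yet expanded. All terms of $\Phi(\Active)$ are nonnegative, so isolating the single term for $(A,\muw)$ gives
\[
\mass{A}^2\le \Phi(\Active)\le n^2 .
\]
Since $\mass{A}\ge 0$ (all weights are positive integers), taking square roots yields $\mass{A}\le n$.

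There is one point to check. The claim ``throughout the entire recursion'' in \Cref{lem:l2potential} must apply at the instant when $(A,\muw)$ lies in the frontier. This holds by induction on the number of expansions performed. Initially $\Phi=n^2$, and each expansion replaces one frontier element by its children without increasing $\Phi$, as shown in both cases of \Cref{lem:l2potential}. Hence the bound holds at every intermediate moment, and in particular whenever $(A,\muw)$ is active.

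The argument has no real obstacle. The only subtlety is interpretive: the corollary bounds the mass of each instance at the time it is active, not after later reweightings. Each child receives its own reweighted copy of $\muw$, so a later reweighting does not change the instance $(A,\muw)$ itself.
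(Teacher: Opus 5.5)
Your proposal is correct and is essentially the paper's own argument: the paper's proof is the one-line observation that $\mass{A}^2$ is a single nonnegative term of $\Phi(\Active)\le n^2$, which holds by \Cref{lem:l2potential}. Your extra induction over the expansions only spells out the ``consequently'' clause of that lemma, which is fine.
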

\begin{proof}
$\mass{A}^2$ is one term of $\Phi(\Active)\le n^2$.
\end{proof}

\subsection{Correctness: Path Covering}

The path-covering property is proved the same as in the original proof in \cite{abs-2511-08551}, because the recursive
domains are still
\[
(\bBo,\iBo)
\qquad\text{or}\qquad
(\bBo\cap \iBi,\Hm,\bBi),
\]
and the final graph is still obtained by the same layered-projection mechanism.  In this section, we only mark several changes to the proof that is affected by our definition of thin layer, which is defined by the $\mu$ mass and several parameter changes.

\begin{lemma}[Number of layers]\label{lem:numberlayers}
Both $\Io$ and $\Ii$ are at most $O(\rhoPC^3\log n)$.
\end{lemma}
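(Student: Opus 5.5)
The plan is a standard ball-growing argument, run in the $\mu$-weighted setting and using \Cref{cor:active-mass-at-most-n} to cap the total mass. I treat $\Io$; the argument for $\Ii$ is identical with in-balls replacing out-balls.

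\textbf{Failure inequality.} By definition of the stopping index, every $i<\Io$ with $i\ge 1$ fails the thin-layer test, so
\[
\mass{B_i^{\mathrm{out}}}>(1+\tauPC)\,\mass{B_{i-1}^{\mathrm{out}}}.
\]

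\textbf{Lower bound at the start.} The ball $B_0^{\mathrm{out}}=\Bo_{G[A]}(u,0)$ contains $u$. Every weight is at least $1$ (weights lie in $\{1,\rhoPC,\rhoPC^2,\dots\}$), so $\mass{B_0^{\mathrm{out}}}\ge 1$.

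\textbf{Upper bound at the end.} All balls lie inside $A$. The instance $(A,\muw)$ is active when it is expanded, so \Cref{cor:active-mass-at-most-n} gives
\[
\mass{B_{\Io-1}^{\mathrm{out}}}\le \mass{A}\le n.
\]

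\textbf{Chaining.} Iterating the failure inequality yields $n\ge(1+\tauPC)^{\Io-1}$. Since $\tauPC=1/(1000\rhoPC^3)\le 1$, we have $\ln(1+\tauPC)\ge\tauPC/2$. Hence
\[
\Io\le 1+\frac{2\ln n}{\tauPC}=O(\rhoPC^3\log n).
\]

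\textbf{Termination.} The stopping index exists. Once $i\cdot d$ exceeds every finite distance in $G[A]$, the balls stabilize, and the test $\mass{B_i}\le(1+\tauPC)\mass{B_{i-1}}$ then holds trivially. The growth argument above bounds when this first happens.

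\textbf{Interleaving subtlety.} The main point to check is that the bound is unaffected by the interleaved search. In Case 2 the backward search is continued until it reaches its own stopping index $\Ii$. The mass bound uses only $\mass{A}\le n$, which holds for any subset of $A$, so the same chain argument applies to $\Ii$ no matter when that search halts. The one thing to confirm is that the weights used throughout are those of the current instance, which is exactly the instance to which \Cref{cor:active-mass-at-most-n} applies.
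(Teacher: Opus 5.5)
Your proof is correct and matches the paper's argument: the stopping rule fails for every $i<\Io$, which gives growth by a factor $(1+\tauPC)$ per step; the starting ball has mass at least $1$ because $u$ has weight at least $1$; and \Cref{cor:active-mass-at-most-n} caps every ball at mass $n$, so $\Io=O(\log n/\tauPC)=O(\rhoPC^3\log n)$, with the same argument for $\Ii$. Your remarks on why the stopping index exists and on the interleaved search are sound additions that the paper leaves implicit.
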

\begin{proof}
For every $i<\Io$, the forward stopping rule fails, so
\[
\mass{\Bo_{G[A]}(u,i\cdot d)}
>
(1+\tauPC)\mass{\Bo_{G[A]}(u,(i-1)\cdot d)}.
\]
Multiplying over all $i<\Io$ gives
\[
\mass{\iBo} > (1+\tauPC)^{\Io-1}\mass{\{u\}}.
\]
Since $\mass{\{u\}}\ge 1$ and \Cref{cor:active-mass-at-most-n} gives $\mass{\iBo}\le n$, we get
\[
(1+\tauPC)^{\Io-1}\le n,
\]
and hence
\[
\Io = O\!\left(\frac{\log n}{\tauPC}\right)=O(\rhoPC^3\log n).
\]
The proof for $\Ii$ is identical.
\end{proof}

\begin{lemma}[Diameter bound]\label{lem:diameterbound}
Every SCC of the graph returned by $\PC(A,\muw)$ has diameter
\[
O(\rhoPC^3\log n)\cdot d.
\]
\end{lemma}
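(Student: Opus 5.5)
We prove the diameter bound by induction on the recursion tree of $\PC$, following the structure of the argument in \cite{abs-2511-08551}. The layered projection glues the children's outputs by edges oriented in one direction between consecutive pieces: $\bHo$ before $\Ho$ in Case~1, and $\tHo$, $\Hm$, $\bHi$ in that order in Case~2. Hence no directed cycle of the returned graph crosses between different pieces. Every SCC of the returned graph $G'$ is then either an SCC of a recursive child or an SCC contained in the non-recursive middle piece $\Hm$. For SCCs of recursive children, the induction hypothesis applies directly; in the base case $|A|=1$ the single vertex has diameter $0$. So the only new SCCs to control are those inside $\Hm$. Their diameter does not accumulate over recursion depth, because each SCC lives entirely within one level.

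The first key step is to bound the diameter of $\Hm=G[V(\widetilde{\To})\cup V(\widetilde{\Ti})]$. Every vertex $x\in V(\widetilde{\To})$ lies on a tree path of $\To$ from $u$ to some vertex of $M\subseteq\iBo\cap\iBi$. Therefore $\dist_{\Hm}(u,x)\le \Io d$. Continuing from $x$ along the same tree path to a vertex $m\in M$, and then along the $\Ti$-path from $m$ back to $u$ (which lies in $\widetilde{\Ti}$), gives $\dist_{\Hm}(x,u)\le \Io d+\Ii d$. The symmetric argument works for $x\in V(\widetilde{\Ti})$. Thus every vertex of $\Hm$ has round-trip distance $O((\Io+\Ii)d)$ to $u$ inside $\Hm$. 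In particular $\Hm$ is strongly connected with diameter $O((\Io+\Ii)d)$, and hence so is any SCC of $G'$ contained in it.

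The second step is to invoke \Cref{lem:numberlayers}, which gives $\Io,\Ii=O(\rhoPC^3\log n)$. This yields the bound $O(\rhoPC^3\log n)\cdot d$ for SCCs of $\Hm$. Combined with the induction, the same bound holds for every SCC of the returned graph.

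The main obstacle is the first structural claim, namely that the layered-projection edges never merge SCCs across pieces. This requires recalling precisely how the layered projection of \cite{abs-2511-08551} is defined. The edges it adds go only from earlier layers to later layers, or between distinct lifted copies. Every cycle therefore stays within a single layer's graph, and there it is either a child's SCC or lies in $\Hm$, which is an induced subgraph of $G$. Since the recursive domains and the gluing mechanism are unchanged from \cite{abs-2511-08551}, I would cite their acyclicity-of-layers argument verbatim. Only the parameters entering through \Cref{lem:numberlayers} change, where $\tauPC$ now depends on $\rhoPC$.
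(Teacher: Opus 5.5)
Your proposal is correct and follows essentially the same route as the paper. The paper defers to the diameter argument of \cite{abs-2511-08551}: no SCC crosses layers of the layered projection, so only $\Hm$ contributes new SCCs. It then substitutes the bound $\Io,\Ii=O(\rhoPC^3\log n)$ from \Cref{lem:numberlayers}. Your explicit $O((\Io+\Ii)d)$ round-trip bound for $\Hm$, routed through $u$ and $M$ via the two shortest-path trees, fills in the details of that cited argument correctly.
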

\begin{proof}
The proof is identical to the original proof in \cite{abs-2511-08551} of the diameter bound, replacing the old bound
on $\Io,\Ii$ by \Cref{lem:numberlayers}.
\end{proof}

\begin{lemma}[Path Covering]\label{lem:anchored-cover}
For every loaded instance $(A,\muw)$, the graph returned by $\PC(A,\muw)$ is a $d$-path
cover of $G[A]$.  Moreover, every path $p$ in $G[A]$ of length at most
$d$ has a lift $p'$.
\end{lemma}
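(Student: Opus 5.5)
We prove \Cref{lem:anchored-cover} by induction on the recursion tree of $\PC$, strong induction on $|A|$ (every child is a proper subset of $A$, since each ring or complement excludes $u$ or a nonempty part of $A$; the weights $\muw$ only affect stopping indices and never the covering argument). The base case $|A|=1$ is trivial: a length-$\le d$ path in $G[A]$ is the single vertex. For the inductive step we fix a path $p$ in $G[A]$ with $\ell(p)\le d$. We show that $p$ lies entirely inside one of the recursive domains, or inside $\Hm$, or crosses between consecutive layers in the allowed order. The layered projection mechanism of \cite{abs-2511-08551} joins last copies in one layer to first copies in the next, so a path that is covered piecewise in this way lifts.

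\emph{Case 1.} Let $X=\Bo_{G[A]}(u,(\Io-1)d)$. If $p$ avoids $X$, then $p\subseteq \bBo$ and the induction hypothesis on $\bHo$ lifts it. Otherwise some vertex $x\in X$ lies on $p$. Every vertex of $p$ after $x$ is within distance $(\Io-1)d+d=\Io d$ of $u$, so the suffix lies in $\iBo$. Every vertex before $x$ lies either in $X\subseteq\iBo$ or outside it. We argue the prefix can only enter $\iBo$ once in the direction $\bBo\to\iBo$: if a prefix vertex $y$ lies outside $X$ it lies in $\bBo$, and once $p$ enters $X$ everything after stays in $\iBo$. Splitting $p$ at its first vertex of $X$ gives a prefix in $\bBo$ followed by a single edge into $X$ and a suffix in $\iBo$. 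This matches the layer order $(\bHo,\Ho)$, and the crossing edge joins the two layers. The ring $\RingOut$ belongs to both domains, which absorbs the boundary.

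\emph{Case 2.} We apply the same argument twice. A path hitting $\Bi(u,(\Ii-1)d)$ has its prefix inside $\iBi$, and a path hitting $\Bo(u,(\Io-1)d)$ has its suffix inside $\iBo$. Classify $p$ accordingly.
\begin{itemize}
\item If $p$ misses $\Bi(u,(\Ii-1)d)$, then $p\subseteq\bBi$ and we recurse into $\bHi$.
\item If $p$ misses $\Bo(u,(\Io-1)d)$, then either $p\subseteq\bBo\cap\iBi$ (handled by $\tHo$), or it splits as an initial segment in $\bBo\cap\iBi$ followed by a segment in $\bBi$ (layer order $\tHo,\bHi$).
\item If $p$ hits both inner balls, the part of $p$ lying in $M=\iBo\cap\iBi$ must be routed through $\Hm$. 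Here we use the tree construction: a vertex of $p$ reachable from $M$ within $\iBi$ lies in $\widetilde{\Ti}$, and one reaching $M$ lies in $\widetilde{\To}$. This is the step where we quote the argument of \cite{abs-2511-08551} verbatim. The portion before $M$ falls into $\tHo$, the middle into $\Hm$, and the remainder into $\bHi$, consistent with the order $(\tHo,\Hm,\bHi)$.
\end{itemize}

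The main obstacle is this last subcase: verifying that every vertex of $p$ between its first and last visit to $M$ is in $V(\widetilde{\To})\cup V(\widetilde{\Ti})$, and that the prefix/suffix outside $M$ lands in the correct child domain. Since our changes (mass-based stopping indices, reweighting) do not alter the domains, we would reproduce the original proof's distance bookkeeping here and check only that it uses ball radii $\Io d,\Ii d$ and thin ring width $d$, which remain unchanged.
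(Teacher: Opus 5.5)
Your overall plan is the intended one: induct along the recursion, cut $p$ according to the layer order, and in Case~1 cut at the first vertex of $\Bo_{G[A]}(u,(\Io-1)d)$. The paper itself proves the lemma only by appeal to the original argument.

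\textbf{Case~2, third subcase.} The claim you single out as the main obstacle is false. Membership in $\widetilde{\Ti}$ means lying on a \emph{tree} path of $\Ti$ from a vertex of $M$ to $u$. It does not mean being reachable from $M$ inside $G[\iBi]$, and the same applies to $\widetilde{\To}$. So the vertices of $p$ between its first and last visits to $M$ need not lie in $\Hm$.

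For instance, $p$ may step from a vertex of $M\cap\RingOut$ to a vertex $z$ with $\dist_{G[A]}(u,z)>\Io d$ and $\dist_{G[A]}(z,u)>\Ii d$, and then into a vertex of $M\cap\RingIn$. Such a $z$ lies in neither $\iBo$ nor $\iBi$, hence in neither tree. No distance bookkeeping from the original proof will place it in $\Hm$.

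The remedy is to cut at different points. Let $a$ be the first vertex of $p$ in $\Bo_{G[A]}(u,(\Io-1)d)$, and $b$ the last vertex of $p$ in $\Bi_{G[A]}(u,(\Ii-1)d)$. Your own two prefix/suffix observations then give the following.
\begin{itemize}
\item If $a$ does not come after $b$: the part of $p$ before $a$ lies in $\bBo\cap\iBi$, the segment $p[a,b]$ lies in $\iBo\cap\iBi=M$, and the part after $b$ lies in $\bBi$.
\item If $b$ comes before $a$: the part up to $b$ lies in $\bBo\cap\iBi$, and the rest lies in $\bBi$.
\end{itemize}
Since $M\subseteq V(\Hm)$ and $\Hm$ is an induced subgraph of $G$ with the identity projection, the middle segment is literally a path of $\Hm$. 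The trees $\widetilde{\To},\widetilde{\Ti}$ are needed only for the strong connectivity and diameter of $\Hm$, not for covering.

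\textbf{The gluing step.} This does not follow from the hypothesis you induct on. Lifting each piece separately and adding the crossing edge works only if the layered projection joins every copy of the tail to every copy of the head.

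In this paper the connector edge goes to the \emph{representative} of the head in the next layer. The proof of \Cref{lem:q-amplification} uses precisely that the lift of each short piece starts at that representative.

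So the statement to carry through the induction is the anchored one: every path of length at most $d$ in $G[A]$ has a lift starting at the representative of its first vertex. In each case you then lift the piece after a crossing edge from the representative of its first vertex. You must also check that the glued lift starts at the representative assigned by the layered projection.

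Your description that last copies are joined to first copies is borrowed from DAG projections. It is not even defined for a path cover whose SCCs are nontrivial.
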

\begin{proof}
This is identical to the original path-covering proof in \cite{abs-2511-08551}.
\end{proof}

\subsection{Width, Size and Running time}

We now use the potential to bound the width.

\begin{lemma}[Width bound]\label{lem:width-bound}
Let $H$ be the final graph returned by $\PC(V,\mathbf{1})$.
Then the width is at most $n^{O(\eps)}$.
\end{lemma}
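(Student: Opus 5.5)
The plan is to bound the width by the weight, using the invariant behind \Cref{cor:active-mass-at-most-n}: every active instance has $\mass{A}\le n$, hence $\muw(v)\le n$ for every $v\in A$. Since weights are powers of $\rhoPC=2^{1/\eps}$, a vertex's weight can be multiplied by $\rhoPC$ at most $\log_{\rhoPC} n=\eps\log n$ times along any root-to-leaf path of the recursion tree. The width should therefore be controlled by how many recursive children a vertex lands in, counted against how often its weight gets multiplied.

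I would fix a vertex $v$ and count the leaves of the recursion tree that contain $v$ (or a copy of $v$), since the copies of $v$ in the final layered projection are bounded by this count. I would also check the layered-projection mechanism of \cite{abs-2511-08551}: the middle piece $\Hm$ is a subgraph of $G$, so it contributes at most one copy per node, and copies are otherwise created only through the recursive children. At each node of the recursion tree containing $v$, there are three cases.
\begin{enumerate}
\item In Case~1 with $v\notin\RingOut$, $v$ goes to exactly one child and keeps its weight.
\item In Case~1 with $v\in\RingOut$, $v$ goes to both children and its weight is multiplied by $\rhoPC$ in both.
\item In Case~2, $v$ goes to at most two recursive children ($\bBo\cap\iBi$ and $\bBi$), plus possibly $\Hm$, which is a leaf. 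Its weight is multiplied by $\rhoPC$ in every recursive child.
\end{enumerate}
So every branching step (where $v$ enters more than one recursive child) multiplies $v$'s weight by $\rhoPC$ in all the children it enters. Restricting the recursion tree to nodes containing $v$, each node has branching at most $3$ (two recursive children plus the $\Hm$ leaf), and every branching edge increases $\log_{\rhoPC}\muw(v)$ by one. Non-branching edges keep $v$ on a single path and do not increase the count.

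Hence the restricted tree has at most $\eps\log n$ branching levels along any path, and its number of leaves is at most $3^{\eps\log n}=n^{O(\eps)}$. I would then check the weight bound at leaves: \Cref{cor:active-mass-at-most-n} applies to every instance ever active, so $\muw(v)\le n$ holds at every node. Applying this for all $v$ gives width $n^{O(\eps)}$.

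The main obstacle I expect is confirming that the layered projection of \cite{abs-2511-08551} creates no copies beyond one per recursive leaf containing $v$. If the construction instead duplicates pieces, for example by connecting layers, the count picks up only a constant factor per branching level. That would still give $O(1)^{\eps\log n}=n^{O(\eps)}$, so I would handle it by bounding the per-level duplication by a constant.
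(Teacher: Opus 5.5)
Your proposal is correct and is essentially the paper's own argument. The paper also caps the number of reweightings of a copy of $v$ at $\log_{\rhoPC} n$ using \Cref{cor:active-mass-at-most-n}, and it bounds branching by the recurrence $L(t)\le 2L(t-1)+1$ (two reweighted recursive children plus one terminal copy in $\Hm$), which is your branching-factor-$3$ count written as a recurrence; one small nit is that a final branch into the $\Hm$ leaf does not raise the weight, so your exponent should be $\eps\log n+1$, which does not affect the bound.
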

\begin{proof}
Fix a vertex $v\in V$ and consider the rooted tree $T_v$ of all copies of $v$ created by the
recursion.

In Case~1, a copy of $v$ can create at most two recursive children.
In Case~2, it can create at most two recursive children, plus possibly one additional terminal
copy inside $\Hm$.

Let $L(t)$ be the maximum number of final copies obtainable from a current copy that can still
be reweighted at most $t$ more times.
Then
\[
L(0)=1,
\qquad
L(t)\le 2L(t-1)+1.
\]
Hence $L(t)\le 2^{t+1}$.

Every time a reweight happens on $v$, the weight of $v$'s copy is multiplied by
$\rhoPC$.
Thus, after $t$ such events, that active copy has weight at least $\rhoPC^t$.
On the other hand, by \Cref{cor:active-mass-at-most-n}, every active recursive instance has
total mass at most $n$, so no active copy can ever have weight larger than $n$.
Hence $\rhoPC^t \le n$ and $t=O(\log _{\rhoPC}n)$.

Given that $\rhoPC=\exp(1/\eps)$, the number of final copies of $v$ is at most $n^{O(\eps)}$.
\end{proof}

\begin{lemma}[Size bound]\label{lem:size-bound}
Let $H$ be the final graph returned by $\PC(V,\mathbf{1})$. Then $|H|=m^{1+O(\eps)}$.
\end{lemma}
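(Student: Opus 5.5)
The bound will come from charging the edges of $H$ to copies of vertices, combined with the width bound of \Cref{lem:width-bound}. Every edge of $H$ is a lifted copy of some edge $(x,y)$ of $G$, and both endpoints are copies of $x$ and $y$ respectively. It therefore suffices to bound, for each original edge $(x,y)$, how many lifted copies of it can appear in $H$.

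The plan has three steps, in this order.

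\textbf{Step 1: trace where lifted edges come from.} Each lifted edge either lives inside a leaf piece, namely an induced subgraph $G[A]$ or $\Hm$, or is one of the inter-layer edges added by the layered-projection mechanism of \cite{abs-2511-08551}. I expect the inter-layer edges to connect a copy of $x$ in one recursive child to a copy of $y$ in another child of the same call. So I will check that in every recursive call, each edge of $G[A]$ yields only $O(1)$ lifted edges per pair of children, and that no other edges are created.

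\textbf{Step 2: bound copies per original edge.} A lifted edge is determined by its edge $(x,y)$ together with the recursion node where it is created. At that node, both the copy of $x$ and the copy of $y$ are among their respective final or intermediate copies. Using the recursion tree $T_x$ from the proof of \Cref{lem:width-bound}, I will bound the number of relevant recursion nodes for $(x,y)$ by the number of nodes of $T_x$. Since $T_x$ is a binary-ish tree of depth $O(\log_\rhoPC n)$, it has at most $2^{O(\log_\rhoPC n)} = n^{O(\eps)}$ nodes. With $O(1)$ edges created per node, each original edge contributes $n^{O(\eps)}$ lifted edges.

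\textbf{Step 3: sum.} Summing over the $m$ edges gives $|H|\le m\cdot n^{O(\eps)} = m^{1+O(\eps)}$, using $m\ge n$ (without loss of generality, e.g.\ after discarding isolated vertices).

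\textbf{Main obstacle.} The hardest part is Step 1. I need to confirm that the layered projection does not create edges between every pair of copies of $x$ and $y$. If it did, the bound would become width squared per node, which is still $n^{O(\eps)}$ and would suffice. I will fall back on that cruder estimate: at most $\omega^2 = n^{O(\eps)}$ copies of each edge per layer. Combined with the $O(\log_\rhoPC n)$ recursion depth along $x$'s copies, the total remains $n^{O(\eps)}$ per edge.
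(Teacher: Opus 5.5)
Your fallback is the paper's proof. Since the path cover is induced, every edge of $H$ is a lift of some $(x,y)\in E(G)$ joining a copy of $x$ to a copy of $y$. \Cref{lem:width-bound} bounds the number of copies of each vertex in all of $H$ by $n^{O(\eps)}$, so each original edge has at most $n^{O(\eps)}$ lifts and $|H|\le m\cdot n^{O(\eps)}=m^{1+O(\eps)}$.

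You should drop Steps 1--2 and the depth factor, for two reasons.

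\textbf{The depth factor is unnecessary.} The width is a global count of final copies, not a per-layer count. So $\omega^2$ already bounds all lifts of $(x,y)$, and no factor for recursion depth should be multiplied in.

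\textbf{The depth bound itself is not justified.} You use it both in Step 2 and in the fallback. The argument of \Cref{lem:width-bound} bounds only the number of \emph{reweighting} (branching) events along a root-to-leaf path of $T_x$ by $O(\log_\rho n)$. In Case~1, however, a copy of $x$ outside the ring passes to a single child without being reweighted. Nothing bounds how often this happens by $O(\log_\rho n)$: each such step may strip off only a small ball around the root. So $T_x$ can contain long non-branching chains and need not have $n^{O(\eps)}$ nodes, and the node count in Step 2 does not go through.
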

\begin{proof}
Since the path cover is induced, which means $(u',v')\in E(H)$ implies $(\pi(u),\pi(v))\in E(G)$, according to \Cref{lem:width-bound}, each node can have at most $n^{O(\eps)}$ copies, so the number of copies for each edge is at most $n^{O(\eps)}$ and the lemma follows.
\end{proof}

\begin{lemma}
    The algorithm takes time $m^{1+O(\eps)}$.
\end{lemma}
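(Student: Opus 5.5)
The plan is a charging argument over the recursion tree of $\PC(V,\mathbf{1})$: bound the work of one recursive call by the weighted mass its Dijkstra searches touch, then show that every unit of work lands on a vertex copy whose total charge is bounded using the width bound of \Cref{lem:width-bound}.

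First, I bound the work of a single call $\PC(A,\muw)$. The interleaving rule always advances the side whose extracted weighted mass is smaller, and both searches stop as soon as one finds a thin layer. So, up to one extra extracted vertex and a $\log n$ heap factor, the work is dominated by twice the volume explored by the side that stops first. In Case~2 the backward search must be run on to its own stopping index. This extra work is bounded by $\mass{\iBi}\le\mass{A}$ in mass. The sets $\tTo,\Hm$ built from shortest-path trees cost time linear in the explored edges. In both cases the work of the call is therefore $\tO{\mathrm{vol}(X)}$ for an explored set $X\subseteq A$, where $\mathrm{vol}$ counts edges of $G[A]$ incident to $X$.

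Next, I charge this volume. In Case~1, write $A=X\dot\cup\RingOut\dot\cup Y$ as in the proof of \Cref{lem:l2potential}. The explored region is $X\cup\RingOut=\iBo$. I charge each edge incident to a vertex $v\in\iBo$ to the copy of $v$ passed to the child $\Ho$. Every vertex in $\iBo$ survives in the child $\Ho$, and $\Ho$ is a recursive instance, so every charge lands on a copy that still exists in the recursion tree.

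In Case~2, $\iBo$ is huge, so this charging to children fails for the explored mass near $\Hm$. There I charge the explored vertices to the copies they leave inside $\Hm$ or in the two children. Since $\Hm\cup\tHo\cup\bHi$ covers $A$, every explored vertex is in one of these.

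Each vertex copy is charged only by its parent call. Its charge is its degree in $G[A]$, so a copy is charged $O(\deg_G(v))$ once. The total work is then $\tO{\sum_v \deg_G(v)\cdot(\text{number of copies of }v\text{ ever created})}$. Every copy ever created is an internal or leaf node of the tree $T_v$ from \Cref{lem:width-bound}, and that tree has $2^{O(\log_\rhoPC n)}=n^{O(\eps)}$ nodes. This gives $\tO{m\cdot n^{O(\eps)}}=m^{1+O(\eps)}$.

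The main obstacle is the first step: making sure the explored volume is charged to vertices actually retained, rather than to unexplored vertices on the ``big'' side. This is where the interleaving matters. I would use it to argue that when the forward search stops first, the backward search has done at most as much work as the forward one plus one extraction, so it suffices to charge the ball of the stopping side. The one exception is the Case~2 continuation, which is handled by the covering $\Hm\cup\tHo\cup\bHi\supseteq A$. If a degree-weighted version of this balance is needed (the interleaving compares $\mu$-mass, not volume), I would fall back to charging by $\mu$-mass and paying a factor $\max_v\muw(v)\le n^{O(\eps)}$, still within budget.
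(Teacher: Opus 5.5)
Your reduction of the running time to $\widetilde O\bigl(\sum_v \deg_G(v)\cdot(\text{number of copies of } v \text{ ever created})\bigr)$ is fine, and it needs no interleaving. In Case~1 we have $B^{\mathrm{out}}\cup\bar B^{\mathrm{out}}=A$, and in Case~2 we have $H^{\mathrm{mid}}\cup(\bar B^{\mathrm{out}}\cap B^{\mathrm{in}})\cup\bar B^{\mathrm{in}}\supseteq A$, so every vertex of $A$, explored or not, has a copy one level down. The step that fails is the final count. \Cref{lem:width-bound} bounds the number of \emph{leaves} of $T_v$ (the final copies) by $2^{O(\log_{\rho} n)}$, not its number of nodes. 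In Case~1 every vertex outside $R^{\mathrm{out}}$ goes to exactly one child with unchanged weight. Hence $T_v$ contains unary chains whose length is limited only by the recursion depth, and that depth can be $\Theta(n)$ (when $\bar B^{\mathrm{out}}$ is $A$ minus $O(1)$ vertices). The paper instead charges to the final copies, $\sum_{v'\in V(H)}\deg_G(\pi(v'))$, which the width bound does control. For any such charge, though, you must show that along a unary chain a vertex's degree is paid only $n^{O(\eps)}$ times. For example, it should be paid only when the vertex lies in the explored ball $B^{\mathrm{out}}$ (whose child mass is at most $(1-\Omega(1/\rho))\mu(A)$), is reweighted, or ends in $H^{\mathrm{mid}}$. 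Case~2 satisfies this automatically; Case~1 does not.

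The balancing step you flagged, which would supply exactly this, is false, and your fallback does not repair it. The searches are balanced in settled $\mu$-mass, which with $\mu\equiv\mathbf 1$ is blind to degrees. Only the number $t$ of reweights is $O(\log_{\rho}n)$; the weight $\mu(v)=\rho^{t}$ itself is bounded only by $n$, and $\mu$-mass does not bound volume in any case. So the backward search can settle a high-degree vertex of $A\setminus B^{\mathrm{out}}$, which goes to $\bar B^{\mathrm{out}}$ unweighted and is settled again in the next call.

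Concretely, take a hub $w$ with length-$1$ edges $\ell_k\to w$ and $w\to u_j$ ($k,j\le N$), and private paths $u_j\to p_j\to p_j'$ with edges of length $d\ge 2$. Use roots $u_1,u_2,\dots$ in turn and the balancing you describe (advance the side with smaller settled mass). Call $j$ settles $u_j,p_j,p_j'$ forward and $u_j,w$ backward, relaxing all $N$ in-edges of $w$, before the forward thin layer is found. It is in Case~1 with $R^{\mathrm{out}}=\emptyset$ and peels off only $\{u_j,p_j,p_j'\}$. Thus $w$ sits in $N$ nested calls without ever being reweighted. The total work is $\Theta(N^2)$ on a graph with $O(N)$ vertices and edges, far above $m^{1+O(\eps)}$ for small $\eps$.

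So the gap cannot be closed by a finer charging alone. It needs a change to the algorithm as specified (for instance, to how the two searches are balanced) together with a smaller-side argument. The paper's one-line charge to $V(H)$ does not address this point either.
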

\begin{proof}
    The running time of Dijkstra and the modification of the graph can be charged into $\sum_{v'\in V(H)}\deg_G(\pi(v'))$, according to \Cref{lem:width-bound}, the running time is at most $m^{1+O(\eps)}$.
\end{proof}
\subsection{Amplifying the covered length by concatenation}

We now show how to cover paths of length $qd$ for an arbitrary integer $q\ge 1$ while keeping
the SCC diameter at $O(d\cdot \rhoPC^3\log n)$.

\begin{definition}[Consecutive $q$-chain of a path cover]\label{def:qchain}
Let $H$ be a projection onto $G$.
For an integer $q\ge 1$, define $\Chain_q(H\to G)$ as follows.

Take $q$ disjoint copies
\[
H_1,H_2,\ldots,H_q
\]
of $H$, preserving the same projection map on every layer.
For every $j\in\{1,\ldots,q-1\}$ and every two vertices
\[
x\in V(H_j),\qquad y\in V(H_{j+1}),
\]
if
\[
(\pi(x),\pi(y))\in E(G)
\]
then add the edge $(x,y)$
with weight $w(\pi(x),\pi(y))$.
\end{definition}

\begin{lemma}[Amplifying the covered length]\label{lem:q-amplification}
Suppose $H$ is a $d$-path cover of $G$ with representatives, and every SCC of $H$ has diameter
at most $\lambda d$.
Then for every integer $q\ge 1$, the graph
\[
G^{(q)}:=\Chain_q(H\to G)
\]
satisfies:
\begin{enumerate}
    \item $G^{(q)}$ covers every path in $G$ of length at most $qd$,
    \item every SCC of $G^{(q)}$ has diameter at most $\lambda d$,
    \item
    \[
    \widthPC(G^{(q)}) \le q\cdot \widthPC(H),
    \]
    \item
    \[
    |E(G^{(q)})| \le q\cdot |E(H)| + (q-1)\sum_{x\in V(H)} \degtot_G(\pi(x))
    = O\!\left(q\cdot |E(H)|\right).
    \]
\end{enumerate}
\end{lemma}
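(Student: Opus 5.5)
We prove the four items of \Cref{lem:q-amplification} in order. The projection map on $G^{(q)}$ is the layer-wise one, and every edge of $G^{(q)}$ maps to an edge of $G$ with the same length: internal edges of each $H_j$ do so because $H$ is a path cover, and chain edges do so by construction. All chain edges go from layer $j$ to layer $j+1$, so no directed cycle uses a chain edge. Hence every SCC of $G^{(q)}$ lies inside a single copy $H_j$ and is a copy of an SCC of $H$. This gives item~2 directly: the diameter is at most $\lambda d$, and SCCs still project to induced subgraphs if $H$ is induced.

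For item~3, each $v\in V(G)$ has exactly $|\pi^{-1}(v)|$ copies in each of the $q$ layers, so its width is multiplied by exactly $q$. For item~4, there are $q|E(H)|$ internal edges. A chain edge out of $x\in V(H_j)$ is determined by an out-neighbor $w$ of $\pi(x)$ in $G$ together with a copy of $w$ in $H_{j+1}$. Charging per source vertex gives $\deg_G(\pi(x))\cdot\widthPC(H)$ edges in the crude bound. The statement's cleaner $\sum_x\deg_G(\pi(x))$ form should be read via the bound $|E(H)|\ge$ (number of lifted edges) in the induced setting, so I would state the charging carefully and absorb the width factor into $O(q|E(H)|)$, or restrict chain edges to one canonical copy (e.g.\ the representative) of $\pi(y)$. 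I will commit to the representative version: connect $x$ only to the representative of $w$ in $H_{j+1}$. This gives exactly the displayed count, and covering still works because the lift in each layer can start at a representative.

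Item~1 is the main step and mirrors the greedy splitting in the proof of \Cref{lem:concatenating}. Take a path $p$ in $G$ with $\ell(p)\le qd$. Greedily write
\[
p=Q_1\circ e_1\circ Q_2\circ\cdots\circ e_{r-1}\circ Q_r,
\]
where each $Q_j$ is a maximal prefix of the remainder with length at most $d$ and each $e_j$ is the next edge. Since $\ell(Q_j)+\ell(e_j)>d$ for $j<r$, we get $r\le q$. This needs positive lengths; with zero-length edges the maximal prefix absorbs them, so the inequality still holds. Lift $Q_j$ inside $H_j$ using the $d$-path cover property of $H$, starting from the representative of its first vertex (this is the ``with representatives'' hypothesis). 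Then use the chain edge corresponding to $e_j$ to move from the end of the lift in $H_j$ to the representative of the first vertex of $Q_{j+1}$ in $H_{j+1}$. The concatenation projects exactly onto $p$.

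The obstacle is precisely this representative-anchoring: the lift of $Q_{j+1}$ must begin at the vertex the chain edge lands on. I expect to invoke the anchored form of the covering property from \Cref{lem:anchored-cover}, namely that each short path has a lift starting at the representative of its start vertex. If only the unanchored form is available, I would instead keep all-pairs chain edges between copies, as in \Cref{def:qchain}, and accept the weaker size bound.
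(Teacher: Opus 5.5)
Your proposal is correct and is essentially the paper's proof: you greedily split $p$ into at most $q$ pieces of length at most $d$ separated by single edges, lift the $j$-th piece inside layer $H_j$ starting at a representative, join consecutive lifts by chain edges landing on representatives, note that chain edges only go forward so every SCC lives in one layer, and count edges layer by layer. Your $Q_1\circ e_1\circ\cdots\circ Q_r$ splitting and your representative-headed chain edges are exactly what the paper's argument implicitly relies on (its size count says the head of each connector edge is the unique representative in the next layer, although \Cref{def:qchain} is phrased with all-pairs edges); the only caveat is that neither argument justifies the trailing $=O(q\cdot|E(H)|)$, which would need $\sum_{x\in V(H)}\deg_G(\pi(x))=O(|E(H)|)$ and fails, e.g., when every edge of $G$ is longer than $d$ and $H$ is edgeless.
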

\begin{proof}
Fix any path $p$ of length at most $qd$ in $G$.
Because all edge weights are non-negative, we can greedily partition $p$ into at most $q$
consecutive subpaths
\[
p=p_1\oplus p_2\oplus \cdots \oplus p_t,
\qquad t\le q,
\]
such that each $p_i$ has length at most $d$.

For every $i\in [t]$, since $H$ is a $d$-path cover, $p_i$ has a lift $p_i'$ inside layer $H_i$.
Now the last vertex of $p_i'$ projects to the last vertex of $p_i$, and the first vertex of
$p_{i+1}'$ is the representative in layer $H_{i+1}$ of the first vertex of $p_{i+1}$.
Since the last edge of $p_i$ and the first edge of $p_{i+1}$ are consecutive on the original
path $p$, the definition of $\Chain_q(H\to G)$ adds the necessary connector edge from
$\End(p_i')$ to $\Start(p_{i+1}')$.
Thus
\[
p_1'\oplus p_2'\oplus \cdots \oplus p_t'
\]
is a lift of $p$ in $G^{(q)}$.

All inter-layer edges go strictly forward from layer $j$ to layer $j+1$, so no SCC can use
two different layers.  Hence every SCC of $G^{(q)}$ is contained in a single copy of $H$ and
still has diameter at most $\lambda d$.

The width bound is immediate because every original vertex has at most $\widthPC(H)$ copies
per layer and there are $q$ layers.

For the size bound, each layer contributes $|E(H)|$ internal edges.  Between two consecutive
layers, every vertex copy $x$ adds at most $\degtot_G(\pi(x))$ outgoing connector edges,
because the head is always the unique representative in the next layer.  Summing over the
$q-1$ consecutive pairs of layers gives the stated bound.
\end{proof}

\bibliographystyle{alpha}
\bibliography{refs}
\paragraph{AI Disclosure.}
The authors used ChatGPT to assist in drafting parts of some proofs from author-provided descriptions of the main proof ideas. The proof was subsequently refined, revised, and verified by the authors. The tool materially affected \Cref{sec:boosting,sec:LCED,sec:expandertoembedding,sec:LCEDmissing,sec:pathcover}. The authors verified the correctness and originality of all content including references.
\end{document}